\newenvironment{proofof}[1]{\noindent{\bf Proof of #1.}}%
        {\hspace*{\fill}$\Box$\par\vspace{4mm}}
\newcommand{\height}{\operatorname{height}}
\newcommand{\leafBST}{{\sc LeafBST}\xspace}
\newcommand{\recBST}{{\sc RecursiveBST}\xspace}
\newcommand{\topset}{\operatorname{top}}
\newcommand{\ceil}[1]{\ensuremath{\left\lceil#1\right\rceil}}
\newcommand{\floor}[1]{\ensuremath{\left\lfloor#1\right\rfloor}}
\newcommand{\paren}[1]{\left ( #1 \right ) }
\newcommand{\NP}{\mbox{\sf NP}}
\newcommand{\DTIME}{\mbox{\sf DTIME}}
\newcommand{\opt}{\mbox{\sf OPT}}
\newcommand{\minsat}{\mbox{\sf Min-Sat}\xspace}
\newcommand{\set}[1]{\left\{ #1 \right\}}
\newcommand{\be}{\begin{enumerate}}
\newcommand{\ee}{\end{enumerate}}
\newcommand{\bd}{\begin{description}}
\newcommand{\ed}{\end{description}}
\newcommand{\bi}{\begin{itemize}}
\newcommand{\ei}{\end{itemize}}
\definecolor{ForestGreen}{rgb}{0.1333,0.5451,0.1333}
\definecolor{DarkRed}{rgb}{0.8,0,0}
\definecolor{Red}{rgb}{1,0,0}
\declaretheorem[numberwithin=section]{theorem}
\declaretheorem[numberlike=theorem]{lemma}
\declaretheorem[numberlike=theorem]{question}
\declaretheorem[numberlike=theorem]{corollary}
\declaretheorem[numberlike=theorem]{definition}
\declaretheorem[numberlike=theorem]{claim}
\declaretheorem[numberlike=theorem]{observation}
\declaretheorem[numberlike=theorem]{problem}
\newenvironment{proof}{\par \smallskip{\bf Proof:}}{\hfill\stopproof}
\def\stopproof{\square}
\def\square{\vbox{\hrule height.2pt\hbox{\vrule width.2pt height5pt \kern5pt
\vrule width.2pt} \hrule height.2pt}}
\newenvironment{prog}[1]{
\begin{minipage}{5.8 in}
{\sc\bf #1}
\begin{enumerate}}
{
\end{enumerate}
\end{minipage}
}
\newcommand{\program}[2]{\vspace{2mm}\fbox{\vspace{2mm}\begin{prog}{#1} #2 \end{prog}\vspace{2mm}}\vspace{2mm}}
\renewcommand{\phi}{\varphi}
\newcommand{\half}{\ensuremath{\frac{1}{2}}}
\newcommand{\poly}{\operatorname{poly}}
\newcommand{\Z}{\ensuremath{\mathbb Z}}
\newcommand{\prob}[1]{\text{\bf Pr}\left [#1\right]}
\newcommand{\pset}{{\mathcal P}}
\newcommand{\cset}{{\mathcal C}} 
\newcommand{\lset}{{\mathcal L}} 
\newcommand{\rset}{{\mathcal R}}
\newcommand{\tset}{{\mathcal T}}
\newcommand{\iset}{{\mathcal{I}}}
\newcommand{\bset}{{\mathcal{B}}}
\newcommand{\xset}{{\mathcal{X}}}
\newcommand{\hX}{\hat X}
\newcommand{\hF}{\hat F}
\newcommand{\tX}{\tilde X}
\newcommand{\tcset}{\tilde{\mathcal{C}}}
\newcommand{\tlset}{\tilde{\mathcal{L}}}
\newcommand{\hlset}{\hat{\mathcal{L}}}
\newcommand{\tsigma}{\tilde{\sigma}}
\newcommand{\hsigma}{\hat{\sigma}}
\newcommand{\cGB}{\mathsf{cGB}}
\newcommand{\GB}{\mathsf{GB}}
\newcommand{\XXWB}{\mathsf{GB}}
\newcommand{\XWB}{\mathsf{cGB}}
\newcommand{\WB}{\mathsf{WB}}
\newcommand{\BRS}{\ensuremath{\mathsf{BRS}}}
\newcommand{\ESBRS}{\mbox{\sf{ES-BRS}}}
\newcommand{\cost}{\mathsf{cost}}
\newcommand{\size}{N}
\newcommand{\cro}{\operatorname{Cr}}
\newcommand{\rect}{\Box}
\newcommand{\width}{\operatorname{width}}
\newcommand{\event}{{\cal{E}}}
\newcommand{\patrascu}{P\v{a}tra\c{s}cu\xspace}
\newcommand{\WBone}{WB-1 bound\xspace}
\global\long\def\alt{\mathrm{alt}}
\global\long\def\WBtwo{\mathsf{WB}^{(2)}}
\global\long\def\funnel{\mathrm{funnel}}
\global\long\def\Xhat{\hat{X}}
\global\long\def\Xstar{X^{*}}
\def \ShowComment{true}
\def\julia#1{\marginpar{$\leftarrow$\fbox{J}}\footnote{$\Rightarrow$~{\sf #1 --Julia}}}
\def\parinya#1{\marginpar{$\leftarrow$\fbox{P}}\footnote{$\Rightarrow$~{\sf #1 --Parinya}}}
\def\thatchaphol#1{\marginpar{$\leftarrow$\fbox{T}}\footnote{$\Rightarrow$~{\sf #1 --Thatchaphol}}}
\def\julia#1{\marginpar{$\leftarrow$\fbox{J}}\footnote{$\Rightarrow$~{\sf #1 --Julia}}}
\newcommand{\mynote}[2][red]{\textcolor{red}{\sc\bf{[#2]}}}
\def\julia#1{} 
\def\parinya#1{} 
\def\thatchaphol#1{} 
\def\julia#1{} 
\newcommand{\mynote}[2][red]{}
\title{Pinning Down the Strong Wilber 1 Bound for Binary Search Trees}
\author{Parinya Chalermsook\thanks{Aalto University, Finland. Email: {\tt chalermsook@gmail.com}. Supported by European Research Council (ERC) under the European Union’s Horizon 2020 research and innovation programme (grant agreement No. 759557) and by Academy of Finland Research Fellows, under grant No. 310415 }\and Julia Chuzhoy\thanks{Toyota Technological Institute at Chicago. Email: {\tt cjulia@ttic.edu}. Part of the work was done while the author was a Weston visiting professor at the Department of Computer Science and Applied Mathematics, Weizmann Institute of Science. Supported in part by NSF grant CCF-1616584.}  \and Thatchaphol Saranurak \thanks{Toyota Technological Institute at Chicago. Email: {\tt saranurak@ttic.edu}.} 
}
\begin{document}

\maketitle

\pagenumbering{gobble}

\begin{abstract} 
The dynamic optimality conjecture, postulating the existence  of an $O(1)$-competitive online algorithm for binary search trees (BSTs),  is among the most fundamental open problems in dynamic data structures. Despite extensive work and some notable progress, including, for example, the Tango Trees (Demaine et al., FOCS 2004), that give the best currently known $O(\log \log n)$-competitive algorithm, the conjecture remains widely open.  
One of the main hurdles towards settling the conjecture is that we currently do not have approximation algorithms achieving better than an $O(\log \log n)$-approximation, even in the offline setting. All known non-trivial algorithms for BST's so far rely on comparing the algorithm's cost with the so-called Wilber's first bound (WB-1). Therefore, establishing the worst-case relationship between this bound and the optimal solution cost appears crucial for further progress, and it is an interesting open question in its own right.

Our contribution is two-fold. First, we show that the gap between the WB-1 bound and the optimal solution value can be as large as $\Omega(\log \log n/ \log \log \log n)$; in fact, the gap holds even for several stronger variants of the bound.  
Second, we provide a simple algorithm, that, given an integer $D>0$, obtains an $O(D)$-approximation in time $\exp\left(O\left (n^{1/2^{\Omega(D)}}\log n\right )\right )$. In particular, this gives a constant-factor approximation sub-exponential time algorithm. %
Moreover, we obtain a simpler and cleaner efficient $O(\log \log n)$-approximation algorithm that can be used in an online setting. %
Finally, we suggest a new bound, that we call {\em Guillotine Bound}, that is stronger than WB, while maintaining its algorithm-friendly nature, that we hope will lead to better algorithms.  
All our results use the geometric interpretation of the problem, leading to cleaner and simpler analysis.
\end{abstract}

\newpage
\tableofcontents{}
\newpage
\pagenumbering{arabic}
\newcommand{\temptext}[1]{\textcolor{blue}{#1} }

\section{Introduction}

Binary search trees (BST's) are a fundamental data structure that has been extensively studied for many decades.
Informally, suppose we are given as input an {\bf online} access sequence $X=\{x_{1},\dots,x_{m}\}$ of keys from $\set{1,\ldots,n}$, and our goal is to maintain a binary search tree $T$ over the set $\set{1,\ldots,n}$ of keys. 
The algorithm is allowed to modify the tree $T$ after each access; the tree obtained after the $i$th access is denoted by $T_{i+1}$. 
Each such modification involves a sequence of \emph{rotation} operations that transform the current tree $T_i$ into a new tree $T_{i+1}$. 
The cost of the transformation is the total number of rotations performed plus the depth of the key $x_i$ in the tree $T_i$. 
The total cost of the algorithm is the total cost of all transformations performed as the sequence $X$ is processed. 
We denote by $\opt(X)$ the smallest cost of any algorithm for maintaining a BST for the access sequence $X$, when  the whole sequence $X$ is known to the algorithm in advance.

Several algorithms for BST's, whose costs are guaranteed to be $O(m\log n)$ for any access sequence, such as AVL-trees \cite{avl} and red-black trees \cite{Bayer1972}, are known since the 60's. Moreover, it is well known that there are length-$m$ access sequences $X$ on $n$ keys, for which $\opt(X)=\Omega(m\log n)$.
However, such optimal worst-case  guarantees are often unsatisfactory from both practical and theoretical perspectives, as one can often obtain better results for ``structured'' inputs. 
Arguably, a better notion of the algorithm's performance to consider is {\em instance optimality}, where the algorithm's performance is compared to the optimal cost $\opt(X)$ for the specific input access sequence $X$. This notion is naturally captured by the algorithm's {\em competitive ratio}:  
We say that an algorithm for BST's is \emph{$\alpha$-competitive}, if, for every online input access sequence
$X$, the cost of the algorithm's execution on $X$ is at most $\alpha \cdot \opt(X)$.
Since for every length-$m$ access sequence $X$, $\opt(X)\geq m$, the above-mentioned algorithms that provide worst-case $O(m\log n)$-cost guarantees are also $O(\log n)$-competitive. 
However, there are many known important special cases, in which the value of the optimal solution is $O(m)$, and for which the existence of an $O(1)$-competitive algorithm would lead to a much better performance, including some interesting applications, such as, for example, adaptive sorting
\cite{Tar85,chaudhuri,split_Luc91,deque_Sun92,deque_Elm04,deque_Pet08,DS09,finger1,finger2,BoseDIL14,FOCS15,our_wads}. 
A striking conjecture of Sleator and Tarjan~\cite{ST85} from 1985, called the \emph{dynamic optimality conjecture}, asserts that the \emph{Splay Trees} provide an $O(1)$-competitive algorithm for BST's. 
This conjecture has sparked a long line of research, but despite the continuing effort, and the seeming simplicity of BST's, it remains widely open.
In a breakthrough result, Demaine, Harmon, Iacono and Patrascu \cite{tango} proposed Tango Trees algorithm, that achieves an $O(\log\log n)$-competitive ratio, and has remained the best known algorithm for the problem, for over 15 years.   
A natural avenue for overcoming this barrier is to first consider the ``easier'' task of designing (offline) approximation algorithms, whose approximation factor is below $O(\log \log n)$. Designing better approximation algorithms is often a precursor to obtaining better online algorithms, and it is a natural stepping stone towards this goal. 

The main obstacle towards designing better algorithms, both in the online and the offline settings, is  obtaining tight lower bounds on the value $\opt(X)$, that can be used in algorithm design.  If the input access sequence $X$ has length
$m$, and it contains $n$ keys, then it is easy to see that $\opt(X)\ge m$,  and, by using any balanced BST's, such as AVL-trees, one can show that $\opt(X)=O(m\log n)$. This trivially implies an
$O(\log n)$-approximation for both offline and online settings. However, in order to obtain better approximation, these simple bounds do not seem sufficient. Wilber \cite{wilber} proposed two new bounds, that we refer to as the first Wilber Bound (WB-1) and the second Wilber
Bound (WB-2). He proved that, for every input sequence $X$, the values of both these bounds on $X$ are at most $\opt(X)$. %
The breakthrough result of 
Demaine, Harmon, Iacono and \patrascu \cite{tango}, that gives
an $O(\log\log n)$-competitive online algorithm, relies on WB-1. In particular, they show that the cost of the solution produced by their algorithm is within an $O(\log\log n)$-factor from the
WB-1 bound on the given input sequence $X$, and hence from $\opt(X)$. This in turn implies that, for every input sequence $X$, the value of the WB-1 bound is within an $O(\log\log n)$ factor from $\opt(X)$. Follow-up work \cite{multisplay,chain_splay} improved several
aspects of Tango Trees, but it did not improve the approximation factor. Additional lower bounds on $\opt$, that subsume both the WB-1 and the WB-2 bounds, were suggested in \cite{DHIKP09,DSW}, but unfortunately it is not clear how to exploit them in algorithm design. %
To this day, the only method we have for designing non-trivial online or offline approximation algorithms for BST's is by relying on the WB-1 bound, and this seems to be the most promising approach for obtaining better algorithms.
In order to make further progress on both online and offline approximation algorithms for BST, it therefore appears crucial that we better understand the relationship between the WB-1 bound and the optimal solution cost. 

Informally, the WB-1 bound relies on recursive partitioning of the input key sequence, that  can be represented by a partitioning tree. The standard WB-1 bound (that we refer to as the \emph{weak} WB-1 bound)  only considers a single such partitioning tree. It is well-known (see e.g. \cite{tango,multisplay,in_pursuit}), that the gap between $\opt(X)$
and the weak WB-1 bound for an access sequence $X$ may be as large as  $\Omega(\log\log n)$.  
However, the ``bad'' access sequence $X$ used to obtain this gap is highly dependent on the fixed partitioning tree $T$. It is then natural to consider a stronger variant of WB-1, that we refer to as \emph{strong} WB-1 and denote by $\WB(X)$, that maximizes the weak WB-1 bound over all such partitioning trees.
As suggested by  Iacono \cite{in_pursuit}, and by Kozma \cite{Kozma16_thesis}, this gives a promising approach for improving the $O(\log\log n)$-approximation factor. 

In this paper, we show that, even for this strong variant of Wilber Bound, the gap between $\opt(X)$ and $\WB(X)$ may be as large as $\Omega(\log \log n/ \log \log \log n)$.  This negative result extends to an even stronger variant of the Wilber Bound, where both vertical and horizontal cuts are allowed (in the geometric view of the problem that we describe below). We then propose  an even stronger variant of WB-1, that we call the {\em Guillotine Bound}, to which our negative results do not extend. This new bound seems to maintain the  algorithm-friendly nature of WB-1, and in particular it naturally fits into the algorithmic framework that we propose. We hope that this bound can lead to improved algorithms, both in the offline and the online settings.

Our second set of results is algorithmic.
We show an (offline) algorithm that, given an input sequence $X$ and a positive integer $D$, obtains an $O(D)$-approximation, in time $\poly(m)\cdot \exp\left(n^{1/2^{\Omega(D)}}\log n\right )$. 
When $D$ is constant, the algorithm obtains an $O(1)$-approximation in sub-exponential time. When $D$ is $\Theta(\log \log n)$, it matches the best current efficient $O(\log \log n)$-approximation algorithm. In the latter case, we can also adapt the algorithm to the online setting, obtaining an $O(\log\log n)$-competitive online algorithm.

All our results use the geometric interpretation of the problem, introduced by Demaine et~al.~\cite{DHIKP09}, leading to clean divide-and-conquer-style arguments that avoid, for example, the notion of pointers and rotations.
We feel that this approach, in addition to providing a cleaner and simpler view of the problem, is more natural to work with in the context of approximation algorithms. The area of approximation algorithms offers a wealth of powerful techniques, that appear to be more suitable to the geometric interpretation of the problem. The new Guillotine Bound that we introduce also fits in naturally with our algorithmic techniques, and we hope that this approach will lead to better approximation algorithms, and eventually online algorithms.

We note that several other lower bounds on $\opt(X)$ are known, that subsume the WB-1 bound, including the second Wilber Bound (WB-2)~\cite{wilber}, the Independent Rectangles Bound~\cite{DHIKP09}, and the Cut Bound~\cite{Harmon}; the latter two are known to be equivalent~\cite{Harmon}. Unfortunately, so far it was unclear how to use these bounds in algorithm design.

\paragraph{Independent Work.}
Independently from our work, Lecomte and Weinstein \cite{LW19} showed that Wilber's Funnel bound (that we call WB-2 bound and discuss below) dominates the WB-1 bound, and moreover, they show an access sequence $X$ for which the two bounds have a gap of $\Omega(\log\log n)$. In particular, their result implies that the gap between $\WB(X)$ and $\opt(X)$ is $\Omega(\log\log n)$ for that access sequence. We note that the access sequence $X$ that is used in our negative results also provides a gap of $\Omega(\log\log n/\log\log\log n)$ between the WB-2 and the WB-1 bounds, although we only realized this after hearing the statement of the results of \cite{LW19}. Additionally, Lecomte and Weinstein show that the WB-2 bound is invariant under rotations, and use this to show that, when the WB-2 bound is constant, then the Independent Rectangle bound of \cite{DHIKP09} is linear, thus making progress on establishing the relationship between the two bounds.

We now provide a more detailed description of our results.

\subsection*{Our Results and Techniques}

We use the geometric interpretation of the
problem, introduced by Demaine et~al.~\cite{DHIKP09}, that we refer to as the \minsat problem. 
Let $P$ be any set of points in the plane. We say that two points $p,q\in P$ are \emph{collinear} iff either their $x$-coordinates are equal, or their $y$-coordinates are equal. If $p$ and $q$ are non-collinear, then we let $\rect_{p,q}$ be the smallest closed rectangle containing both $p$ and $q$; notice that $p$ and $q$ must be diagonally opposite corners of this rectangle. We say that the pair $(p,q)$ of points is \emph{satisfied} in $P$ iff there is some additional point $r\neq p,q$ in $P$ that lies in $\rect_{p,q}$.
Lastly, we say that the set $P$ of points is satisfied iff for every pair $p,q\in P$ of distinct points, either $p$ and $q$ are collinear, or they are satisfied in $P$.

In the \minsat problem, the input is a set $P$ of points in the plane with integral $x$- and $y$-coordinates; we assume that all $x$-coordinates are between $1$ and $n$, and all $y$-coordinates are between $1$ and $m$ and distinct from each other, and that $|P|=m$. The goal is to find a minimum-cardinality set $Y$ of points, such that the set $P\cup Y$ of points is satisfied.

An access sequence $X$ over keys $\set{1,\ldots,n}$ can be represented by a set $P$ of points in the plane as follows:
if a key $x$ is accessed at time $y$, then add the point $(x,y)$ to $P$.
Demaine et~al.~\cite{DHIKP09} showed that, for every access sequence $X$, if we denote by $P$ the corresponding set of points in the plane, then the value of the optimal solution to the \minsat problem on $P$ is $\Theta(\opt(X))$. Therefore, in order to obtain an $O(\alpha)$-approximation algorithm for BST's, it is sufficient to obtain an $\alpha$-approximation algorithm for the \minsat problem. In the online version of the \minsat problem, at every time step $t$, we discover the unique input point whose $y$-coordinate is $t$, and we need to decide which points with $y$-coordinate $t$ to add to the solution. Demaine et~al.~\cite{DHIKP09} also showed that an $\alpha$-competitive online algorithm for \minsat implies an $O(\alpha)$-competitive online algorithm for BST's.
For convenience, we do not distinguish between the input access sequence $X$ and the corresponding set of points in the plane, that we also denote by $X$.

\subsection{Negative Results for WB-1 and Its Extensions}

We say that an input access sequence $X$ is a \emph{permutation} if each key in $\set{1,\ldots,n}$ is accessed exactly
once. Equivalently, in the geometric view, every column with an integral $x$-coordinate contains exactly one input point.

Informally, the WB-1 bound for an input sequence $X$ is defined as follows. Let $B$ be the bounding box containing all points of $X$, and consider any vertical line $L$ drawn across $B$, that partitions it into two vertical strips, separating the points of $X$ into two subsets $X_{1}$
and $X_{2}$. Assume that the points of $X$ are ordered by their $y$-coordinates from smallest to largest. We say that a pair $(x,x')\in X$ of points \emph{cross} the line $L$, iff $x$ and $x'$ are consecutive points of $X$, and they lie on different sides of $L$. Let $C(L)$ be the number of all pairs of points in $X$ that cross $L$. We then continue this process recursively with $X_1$ and $X_2$, with the final value of the WB-1 bound being the sum of the two resulting bounds obtained for $X_1$ and $X_2$, and $C(L)$.
This recursive partitioning process can be represented by a binary tree $T$ that
we call a \emph{partitioning tree} (we note that the partitioning tree is not related to the BST tree that the BST algorithm maintains). 
Every vertex $v$ of the partitioning tree is associated with a vertical strip $S(v)$, where for the root vertex $r$, $S(r)=B$. If the partitioning algorithm uses a vertical line $L$ to partition the strip $S(v)$ into two sub-strips $S_1$ and $S_2$, then vertex $v$ has two children, whose corresponding strips are $S_1$ and $S_2$. Note that every sequence of vertical lines used in the recursive partitioning procedure corresponds to a unique partitioning tree and vice versa. 
Given a set  $X$ of points and a partitioning tree $T$, we denote  by $\WB_{T}(X)$ the WB-1 bound obtained for $X$ while following the partitioning scheme defined by $T$. Wilber \cite{wilber}
showed that, for every partitioning tree $T$, $\opt(X)\geq \Omega(\WB_{T}(X))$ holds. Moreover,  Demaine et al. \cite{tango} showed that, if $T$ is a balanced tree, then $\opt(X)\le O(\log\log n)\cdot \WB_{T}(X)$. These two bounds are used to obtain the $O(\log\log n)$-competitive algorithm of \cite{tango}. We call this variant of WB-1, that is defined with respect to a fixed tree $T$, the \emph{weak} WB-1 bound.

Unfortunately, it is well-known (see e.g. \cite{tango,multisplay,in_pursuit}), that the gap between $\opt(X)$
and the weak WB-1 bound on an input $X$ may be as large as  $\Omega(\log\log n)$. In other words, for any fixed partitioning tree $T$, there exists an input $X$ (that depends on $T$),  with $\WB_{T}(X)\leq O(\opt(X)/\log\log n)$ (see Appendix \ref{sec: Iacono} for details). 
However, the construction of this ``bad'' input $X$  depends on the fixed partitioning tree $T$. We consider a stronger variant of WB-1, that we refer to as \emph{strong} WB-1 bound and denote by $\WB(X)$, that maximizes the weak WB-1 bound over all such partitioning trees, that is, $\WB(X)=\max_{T}\{\WB_{T}(X)\}$. Using this stronger bound as an alternative to weak WB-1 in order to obtain better approximation algorithms was suggested by  Iacono \cite{in_pursuit}, and by Kozma \cite{Kozma16_thesis}.

Our first result rules out this approach: we show that, even for the strong WB-1 bound, the gap between $\WB(X)$ and $\opt(X)$ may be as large as $\Omega(\log\log n/\log\log\log n)$, even if the input $X$ is a permutation. %

\begin{theorem}
\label{thm:intro_WB}
For every integer $n'$, there is an integer $n\geq n'$, and an access sequence $X$ on $n$ keys with $|X|=n$, such that $X$ is a permutation, $\opt(X)\geq \Omega(n\log\log n)$, but $\WB(X)\leq O(n\log\log\log n)$. In other words, for every partitioning tree $T$, $\frac{\opt(X)}{\WB_{T}(X)}\geq \Omega\left(\frac{\log\log n}{\log\log\log n}\right )$.
\end{theorem}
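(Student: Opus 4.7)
The plan is to build, for each integer $k \geq 1$, a permutation $X = X_k$ on $n = n_k$ keys with $n_k$ doubly exponential in $k$ (so that $k = \Theta(\log\log n_k)$), such that $\opt(X_k) \geq \Omega(n_k \cdot k)$ while $\WB_T(X_k) \leq O(n_k \log k)$ for every partitioning tree $T$. Substituting $k = \Theta(\log\log n)$ then yields $\opt(X)/\WB_T(X) \geq \Omega(\log\log n / \log\log\log n)$ for every $T$, which is exactly the statement of the theorem.

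I would define $X_k$ recursively in the geometric (\minsat) view. Partition the column range $[n_k]$ into $f = n_{k-1}$ blocks of width $n_{k-1}$; install a rescaled copy of $X_{k-1}$ inside each block; and choose the global $y$-ordering so that the $f$ blocks are interleaved according to a rigid ``globally alternating'' schedule (for example, driven by a bit-reversal on $\{1,\dots,f\}$). This interleaving ensures that every vertical line separating two blocks witnesses many alternations between points on its two sides. Setting $n_k = n_{k-1}^{2}$ yields $n_k = 2^{2^{\Theta(k)}}$, the desired doubly exponential growth.

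For the lower bound $\opt(X_k) \geq \Omega(n_k k)$, I would invoke the Wilber funnel bound (WB-2), which is known to satisfy $\mathsf{WB}\text{-}2(X) \leq O(\opt(X))$. The recursive self-similarity of the construction makes the funnel bound scale-additive: each of the $k$ recursive levels contributes $\Omega(n_k)$ funnel weight, since at each scale every block internally looks like the (already-analysed) $X_{k-1}$ pattern, and its own funnels do not overlap with those of coarser scales. Summing over the $k$ levels yields $\mathsf{WB}\text{-}2(X_k) \geq \Omega(n_k k)$. Alternatively, the Independent Rectangle bound of \cite{DHIKP09} can be invoked by exhibiting $\Omega(n_k k)$ mutually non-dominating rectangles, one batch at each scale.

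The main obstacle is the uniform upper bound $\WB_T(X_k) \leq O(n_k \log k)$ over \emph{every} partitioning tree $T$. The natural approach is to classify each internal node $v$ of $T$ by the coarsest scale $i(v) \in \{1,\dots,k\}$ whose block boundary is strictly separated by the vertical cut at $v$, and to bound the aggregate crossings at each scale using the self-similarity. A naive summation over the $k$ scales gives $O(n_k k)$ and no gap; the $O(\log k)$ savings will have to come from the observation that, thanks to the geometric growth of block sizes, any fixed access point is ``actively charged'' only by $O(\log k)$ of the $k$ scales, while the remaining contributions are absorbed into the recursive $\WB$ estimates inside smaller scale-$i$ blocks. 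Formalising this charging so that it holds uniformly for the worst-case adversarial $T$, while tightly tracking the scale-dependent recursion, is the crux of the proof and the step I expect to be the hardest; the right recursive parameters $f$ and the right definition of the alternating schedule must be tuned jointly to make this argument go through.
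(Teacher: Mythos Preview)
Your proposal has a genuine gap at exactly the step you flag as hardest, and the obstacle is structural rather than merely technical: a self-similar block construction of the kind you describe cannot by itself manufacture a growing $\opt/\WB$ gap.

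Consider the adversarial partitioning tree $T$ that respects your block structure: it first separates the $f=n_{k-1}$ top-level blocks using some tree $T^c$ on the block-columns, then recurses optimally inside each block. The crossings charged to the upper part of $T$ are exactly $\WB_{T^c}(\tilde X)$, where $\tilde X$ is the compressed instance obtained by collapsing each block to a single column; the rest is $\sum_i \WB_{T_i}(X_{k-1})$. Maximising over $T^c$ and the $T_i$'s gives
\[
\WB(X_k)\;\ge\;\WB(\tilde X)\;+\;f\cdot\WB(X_{k-1}).
\]
On the other side, combining feasible solutions for $\tilde X$ and for the $f$ strip copies (project every point to its block boundaries) gives $\opt(X_k)\le \opt(\tilde X)+f\cdot\opt(X_{k-1})+O(n_k)$. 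Unrolling both recursions, the only way to have $\opt(X_k)=\Omega(n_k k)$ while $\WB(X_k)=O(n_k\log k)$ is for the compressed instances $\tilde X_j$ at each scale $j$ to satisfy $\opt(\tilde X_j)=\Omega(n_j)$ but $\WB(\tilde X_j)=O(n_j/j)$, i.e.\ to already exhibit an $\Omega(\log\log n_j)$ gap between $\opt$ and $\WB$. That is precisely the theorem you are trying to prove, so the recursion is circular. Your ``only $O(\log k)$ scales actively charge each point'' heuristic must fail for the block-aligned $T$, and nothing in the proposal prevents the adversary from choosing it. (Concretely, with the bit-reversal interleaving you suggest, $\tilde X$ is essentially a repeated BRS, for which $\WB(\tilde X)=\Theta(\opt(\tilde X))$ and no gap appears at all.)

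The paper's construction sidesteps this obstruction with a completely different idea. It starts from a single bit-reversal sequence on $\log N$ points, \emph{exponentially spaces} it across $N$ columns, takes all $N$ cyclic horizontal shifts $X^0,\dots,X^{N-1}$, and stacks them vertically into one instance $\hat X$ of size $N^*=N\log N$. Each shift $X^s$ individually has a good partitioning tree $T^s$ with $\WB_{T^s}(X^s)=\Theta(\opt(X^s))$, but the good trees for different shifts are incompatible, so no single $T$ aligns with all of them. The bound $\WB_T(\hat X)\le O(N^*\log\log\log N^*)$ is then proved by averaging: for any fixed $T$, one shows that the expected contribution of a uniformly random shift is $O(\log N\cdot\log\log\log N)$ (the key step being that, because of the exponential spacing, a random non-seam $T$-strip contains at most $O(\log\log N)$ points with overwhelming probability), and summing over all $N$ shifts converts this into a deterministic bound. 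The exponential spacing plus cyclic-shift averaging is what defeats the adversary's alignment strategy; your recursive construction has no analogue of it.
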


We note that it is well known  (see e.g. \cite{FOCS15}), that any $c$-approximation algorithm for permutation input can be turned into an $O(c)$-approximation algorithm for any input sequence. However, the known instances that achieve an $\Omega(\log\log n)$-gap between the weak WB-1 bound and $\opt$ are not permutations (see \Cref{sec: Iacono}). Therefore, our result is the first one providing a super-constant gap between the Wilber Bound and $\opt$ for permutations, even for the case of weak WB-1.

\paragraph{Further Extensions of the WB-1 Bound.}
We then turn to consider several generalizations of the WB-1 bound: namely, the \emph{consistent Guillotine Bound}, where the partitioning lines $L$ are allowed to be vertical or horizontal, and an even stronger \emph{Guillotine Bound}. While our negative results extend to the former bound, they do not hold for the latter bound. Further, the general algorithmic approach of Theorem \ref{thm:intro_alg} can be naturally adapted to work with the Guillotine Bound. We now discuss both bounds in more detail.

The Guillotine bound
$\GB(X)$ extends $\WB(X)$ by allowing both vertical and horizontal partitioning lines. Specifically, given the bounding box $B$, we let $L$ be any vertical or horizontal line crossing $B$, that separates  $X$ into two
subsets $X_{1}$ and $X_{2}$. We define the number of crossings of $L$ exactly as before, and then recurse on both sides of $L$ as before. This partitioning scheme can be represented by a binary tree
$T$, where every vertex of the tree is associated with a rectangular region of the plane. We denote the resulting bound obtained by using the partitioning tree $T$ by $\GB_{T}(X)$, and we define
$\GB(X)=\max_{T}\GB_{T}(X)$. 

The Consistent Guillotine bound restricts
the Guillotine bound by maximizing only over partitioning schemes that are
``consistent'' in the following sense: suppose that the current partition of the bounding box $B$, that we have obtained using previous partitioning lines, is  a collection $\{R_{1},\dots,R_{k}\}$ of rectangular regions. Once we choose a vertical or a horizontal line $L$, then for
{\bf every} rectangular region $R_i$ that intersects $L$, we must partition $R_i$ into two sub-regions using the line $L$, and then count the number of consecutive pairs of points in $X\cap R_i$ that cross the line $L$. In other words, we must partition all rectangles $R_{1},\dots,R_{k}$
consistently with respect to the line $L$. In contrast, in the Guillotine bound, we
are allowed to partition each area $R_{i}$ independently. From the definitions,
the value of the Guillotine bound is always at least as large as the value of the Consistent Guillotine
bound on any input sequence $X$, which is at least as large as $\WB(X)$.
We generalize our negative result to show a gap between $\opt$ and the
Consistent Guillotine bound:
\begin{theorem}
	For every integer $n'$, there is an integer $n\geq n'$, and an access sequence $X$ on $n$ keys with $|X|=n$, such that $X$ is a permutation, $\opt(X)=\Omega(n\log\log n)$,
	but $\XWB(X)=O(n\log\log\log n)$.
\end{theorem}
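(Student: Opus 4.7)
We reuse the permutation $X$ on $n$ keys from the proof of Theorem \ref{thm:intro_WB}, so that $\opt(X) = \Omega(n \log\log n)$ and $\WB(X) = O(n \log\log\log n)$. Since $\XWB(X) \ge \WB(X)$ is known, it suffices to prove the matching upper bound $\XWB(X) = O(n \log\log\log n)$; concretely, the plan is to show that $\XWB_T(X) \le \WB(X) + O(n)$ for every consistent guillotine partitioning tree $T$. Fix any such $T$ and decompose the cost $\XWB_T(X) = C_H(T) + C_V(T)$, where $C_H(T)$ (resp.\ $C_V(T)$) collects the crossings attributed to horizontal (resp.\ vertical) lines of $T$.

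I first bound $C_H(T) = O(n)$. For any rectangle $R$ and any horizontal line $L_h$ cutting it, the points of $X \cap R$ listed in increasing $y$-order split into a prefix lying strictly below $L_h$ and a suffix lying strictly above, so at most one consecutive pair in $X \cap R$ crosses $L_h$. Thus each (horizontal line, cut rectangle) pair contributes at most $1$ to $C_H(T)$. Since any cut event whose rectangle has points of $X$ on both sides of the line strictly refines the partition of the $n$ points into non-empty sub-clusters, and since we start with one cluster and may end with at most $n$ of them, the total number of such ``non-trivial'' cut events (over both orientations) is at most $n-1$; in particular $C_H(T) \le n-1$.

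To bound $C_V(T)$, I compare $T$ against the WB-1 tree $T'$ obtained by deleting the horizontal cuts of $T$ and retaining only the vertical ones in their original order. By consistency, every vertical cut in $T$ is a global vertical line spanning the whole $y$-range of the bounding box, so $T'$ is a well-defined WB-1 partitioning tree with $\WB_{T'}(X) \le \WB(X)$. Consider any vertical line $L$ of $T$, and let $S$ be the vertical strip containing $L$ in $T'$ at the moment $L$ is applied; this is the maximal $x$-interval around $L$ determined by the vertical cuts preceding $L$. At the moment $L$ is applied in $T$, the rectangles intersected by $L$ are precisely the sub-rectangles $R_1, \ldots, R_k$ of $S$ carved out by the horizontal cuts of $T$ processed so far (each has $x$-range equal to that of $S$). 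Because each $X \cap R_j$ is a $y$-contiguous block of $X \cap S$, every consecutive pair in $X \cap R_j$ is also a consecutive pair in $X \cap S$, and the pairs arising in different $R_j$'s are disjoint. Hence the total crossings of $L$ counted in $T$ (a sum over the $R_j$'s) are at most the crossings of $L$ counted in $T'$ (in the single strip $S$). Summing over all vertical lines of $T$ yields $C_V(T) \le \WB_{T'}(X) \le \WB(X) = O(n \log\log\log n)$.

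Combining the two pieces gives $\XWB_T(X) \le O(n) + O(n \log\log\log n) = O(n \log\log\log n)$ for every consistent guillotine tree $T$, and therefore $\XWB(X) = O(n \log\log\log n)$. Together with $\opt(X) = \Omega(n \log\log n)$ this produces the claimed gap of $\Omega(\log\log n/\log\log\log n)$. The main conceptual point is that a horizontal cut can only \emph{destroy} a single consecutive pair inside the vertical strip it is applied to, never create new consecutive pairs, so horizontal cuts give no additional power beyond what is already captured by the $\WB$ bound; the sole nontrivial technical input is therefore inherited from the construction and lower bound of Theorem \ref{thm:intro_WB}.
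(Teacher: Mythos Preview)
Your argument for $C_V(T)\le\WB(X)$ is fine and is essentially the content of the paper's Lemma~\ref{lem: extended WB from vertical and horizontal}. The problem is your bound on $C_H(T)$. You write that for a horizontal line $L_h$ cutting a rectangle $R$, ``the points of $X\cap R$ listed in increasing $y$-order split into a prefix lying strictly below $L_h$ and a suffix lying strictly above, so at most one consecutive pair in $X\cap R$ crosses $L_h$.'' But this is not the definition of a crossing for a horizontal line: by the paper's definition (Section~\ref{subsec: defs}), for a horizontal cut the points of $X\cap R$ are ordered by their \emph{$x$-coordinates}, and a consecutive pair (in $x$) is a crossing if the two points lie on opposite sides of $L_h$. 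With the correct definition a single horizontal cut can create $\Theta(|X\cap R|)$ crossings (think of cutting a bit-reversal sequence through the middle horizontally), so your $C_H(T)\le n-1$ claim collapses.

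Once $C_H(T)=O(n)$ fails, the whole strategy of reusing the one-dimensional instance from Theorem~\ref{thm:intro_WB} breaks down. Correcting the definition, the symmetric version of your $C_V$ argument only gives $C_H(T)\le\WB'(X)$, the horizontal Wilber bound; so you would need $\WB'(X^*)=O(n\log\log\log n)$ for the Theorem~\ref{thm:intro_WB} instance, and there is no reason to expect this --- that construction has exponential spacing and cyclic shifts only in the $x$-direction, not the $y$-direction. This is exactly why the paper does \emph{not} reuse that instance: it builds a genuinely new, two-dimensionally symmetric instance (Section~\ref{subsec: Instance construction for extended}) with exponential spacing and cyclic shifts in \emph{both} coordinates, so that the analysis bounding $\WB(X^*)$ can be rerun verbatim on the $90^\circ$ rotation to bound $\WB'(X^*)$ as well.
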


We note that our negative results do not extend to the general $\GB$. 
We leave open an interesting question of establishing the worst-case gap between the value of
$\opt$ and that of the Guillotine bound, and we hope that combining the Guillotine bound with our algorithmic approach that we discuss below will eventually lead to better online and offline approximation algorithms.

\paragraph{Separating the Two Wilber Bounds.} We note that the sequence $X$ given by Theorem~\ref{thm:intro_WB} not only provides a separation between WB-1 and $\opt$, but it also provides a separation between the WB-1 bound and the WB-2 bound (also called the \emph{funnel} bound). The latter can be defined in the geometric view as follows (see Section \ref{subsec: separating two wilbers} for a formal definition). Recall that, for a pair of points $x,y\in X$, $\rect_{x,y}$ is the smallest closed rectangle containing both $x$ nd $y$. For a point $x$ in the access sequence $X$, the \emph{funnel} of $x$ is the set of all points $y\in X$, for which $\rect_{x,y}$ does not contain any point of $X\setminus\set{x,y}$, and $\operatorname{alt}(x)$ is the number of alterations between the left of $x$ and the right of $x$ in the {funnel} of $x$. The second Wilber Bound for sequence $X$ is then defined as: $\WB^{(2)}(X)=|X|+\sum_{x\in X}\operatorname{alt}(x)$. 
We show that, for the sequence $X$ given by Theorem~\ref{thm:intro_WB}, $\WB^{(2)}(X) \geq \Omega(n \log \log n)$ holds, and therefore $\WB^{(2)}(X)/\WB(X) \geq \Omega(\log \log n/ \log \log \log n)$ for that sequence, implying that  the gap between $\WB(X)$ and $\WB^{(2)}(X)$ may be as large as $\Omega(\log \log n/ \log \log \log n)$. 
We note that we only realized that our results provide this stronger separation between the two Wilber bounds after hearing the statements of the results from the independent work of Lecomte and Weinstein~\cite{LW19}  mentioned above.

\subsection{Algorithmic Results}

We provide new simple approximation algorithms for the problem, that rely on its geometric interpretation, namely the \minsat problem.

\begin{theorem}
\label{thm:intro_alg} There  is
an offline algorithm for \minsat, that, given any integral parameter $D\ge1$, and an access sequence $X$ to $n$ keys of length
$m$, has cost at most $O(D\cdot\opt(X))$ and has running
time at most $\poly(m) \cdot \exp\left(O\left (n^{1/2^{O(D)}}\log n\right )\right )$.  
When
$D=O(\log\log n)$, the algorithm's running time is polynomial in $n$ and $m$, and it can be adapted to the online setting, achieving an $O(\log\log n)$-competitive ratio.
\end{theorem}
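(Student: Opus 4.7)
The plan is to design a recursive divide-and-conquer algorithm acting directly on the geometric \minsat formulation. Given a vertical strip $R$ (initially the whole bounding box) containing a point set $X_R$ with $k$ keys, if $k \leq n^{1/2^{O(D)}}$, solve \minsat on $X_R$ by brute force; otherwise, choose $\sqrt{k}-1$ vertical lines $L_1,\ldots,L_{\sqrt{k}-1}$ that split the $k$ keys of $R$ into $\sqrt{k}$ groups of size $\sqrt{k}$. For each $L_j$ and each consecutive pair $(p,q)$ of $X_R$ (in $y$-order) that crosses $L_j$, add one satisfier point on $L_j$ lying in $\rect_{p,q}$; then recurse on each of the $\sqrt{k}$ resulting sub-strips (treating the boundary satisfier points as part of that sub-strip's input). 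After $D$ rounds of $\sqrt{\cdot}$-refinement, every leaf strip contains at most $n^{1/2^{O(D)}}$ keys.

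For the approximation bound I would argue one recursion level at a time. The vertical lines introduced at level $i$ form a common refinement of the bounding box, and the total number of consecutive pairs of $X$ that cross these lines is exactly a (weak) Wilber-1 quantity for the induced partitioning tree, which by Wilber's inequality is at most $O(\opt(X))$. Because we charge only $O(1)$ satisfier points per consecutive crossing, the points we add at level $i$ number $O(\opt(X))$, and summing over $D$ levels gives $O(D\cdot\opt(X))$. The leaves contribute an additional $\sum_{\ell}\opt(X_\ell) \leq \opt(X)$, since the restriction of any global optimum is feasible on each leaf. Together this yields the claimed $O(D)$-approximation.

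For the running time, non-leaf processing at each recursion node is $\poly(m)$, and the recursion tree has $O(n)$ nodes. Each leaf has at most $n^{1/2^{O(D)}}$ keys and at most $m$ points; since any \minsat solution there can be taken of polynomial size, brute-force enumeration over candidate solution sets runs in time $\exp\!\bigl(O(n^{1/2^{O(D)}}\log n)\bigr)$. When $D = \Theta(\log\log n)$, the leaf key count drops to $O(1)$, making the overall running time polynomial. For the online adaptation in this regime, the recursion structure depends only on the fixed key set $\{1,\ldots,n\}$, so it can be built in advance. When a new point $(x,t)$ arrives, we walk from the root to the leaf along the path determined by $x$; at each node we compare with the most recent point seen in the current sub-strip, and if they lie on opposite sides of the relevant line we immediately output the appropriate satisfier point on that line at time $t$. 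Constant-size leaves can be handled by any straightforward online rule without affecting the $O(\log\log n)$-competitive ratio.

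The hard part will be the accounting that justifies the per-level $O(\opt(X))$ bound once we recognize that satisfier points added at earlier levels are themselves part of the input at deeper levels and could in principle generate extra crossings. The key observation I would exploit is that every such added point lies on a line from some earlier level and hence on the \emph{boundary} rather than the interior of the deeper sub-strips it inhabits; one can then charge any deeper-level crossing it triggers back to the original crossing that produced it, so the $D$ per-level budgets remain additive rather than multiplicative and the strong WB-1 bound governs the total cost.
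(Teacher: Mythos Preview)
Your per-level accounting is where the plan breaks. You add, for \emph{each} of the $\sqrt{k}-1$ lines $L_j$ and \emph{each} consecutive pair of $X_R$ straddling $L_j$, one satisfier on $L_j$; you then claim this total is a weak Wilber-1 quantity. It is not: in $\WB_T$, the crossings of a line are counted only inside the sub-strip that line subdivides, whereas you count crossings of every level-1 line against the whole of $X_R$. Concretely, for a uniformly random permutation on $n$ keys, a consecutive pair $(p_i,p_{i+1})$ crosses in expectation $\Theta(\sqrt{n})$ of your level-1 lines, so you add $\Theta(n^{3/2})$ satisfiers at level~1, while $\opt(X)=\Theta(n\log n)$. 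Thus the ``$O(\opt(X))$ per level'' claim already fails at the first level, before any of the ``hard part'' cascading you worry about. The paper avoids this by a different mechanism: it introduces a \emph{compressed instance} $X^c$ (collapse each strip to one column) and recurses on it in addition to the strips, proving the decomposition $\opt(X^c)+\sum_v\opt(X^s_v)\le\opt(X)$; the glue set $Z$ is then simply the projection of \emph{every} input point to its two strip boundaries, of size $2|X|$, which is $O(\opt(X))$ once redundant points are removed. Without the compressed instance your satisfiers must simultaneously play the role of $\hat Y$ and $Z$, and there is no reason their number is $O(\opt)$. You also never argue feasibility of the combined output, which in the paper requires a nontrivial case analysis using $\hat Y$.

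There is a second gap at the leaves. A leaf strip has $k\le n^{1/2^{O(D)}}$ active columns but may still have up to $m$ rows, so a canonical solution can have $\Theta(mk)$ candidate points; brute-force enumeration of subsets costs $2^{\Theta(mk)}$, not $\exp(O(k\log n))$. The paper obtains the stated bound via a dynamic program whose state is a ``height profile'' of the current top points, one per column, giving $k^{O(k)}$ states and total time $m\cdot k^{O(k)}$. This DP is not brute force and is an essential ingredient for the running-time claim.
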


Our results show that the problem of obtaining a constant-factor approximation for \minsat cannot be NP-hard, unless $\mathsf{NP}\subseteq\mathsf{SUBEXP}$,
where $\mathsf{SUBEXP}=\bigcap_{\epsilon>0}\text{DTime}[2^{n^{\epsilon}}]$. 
This, in turn provides a positive evidence towards the
dynamic optimality conjecture, as one natural avenue to disproving it is to show that obtaining a constant-factor approximation for BST's is NP-hard. Our results rule out this possibility, unless $\mathsf{NP}\subseteq\mathsf{SUBEXP}$, and in particular the Exponential Time Hypothesis is false.
While the $O(\log\log n)$-approximation factor achieved by our algorithm in time $\poly(mn)$ is similar to that achieved by other known algorithms  \cite{tango,chain_splay,multisplay}, this is the first algorithm that relies solely on the geometric formulation of the problem, which is arguably cleaner, simpler, and better suited for exploiting the rich toolkit of algorithmic techniques developed in the areas of online and approximation algorithms. Our algorithmic framework can be naturally adapted to work with both vertical and horizontal cuts, and so a natural direction for improving our algorithms is to try to combine them with the $\GB$ bound.

\subsection{Technical Overview}
As already mentioned, both our positive and negative results use the geometric interpretation of the problem, namely the \minsat problem. This formulation allows one to ignore all the gory details of pointer moving and tree rotations, and instead to focus on a clean combinatorial problem that is easy to state and analyze. Given an input set of points $X$, we denote by $\opt(X)$ the value of the optimal solution to the \minsat problem on $X$.

\paragraph{Nevative results.}
For our main negative result -- the proof of Theorem \ref{thm:intro_WB}, we start with the standard bit-reversal sequence $X$ on $n$ keys of length $n$. It is well known that $\opt(X)=\Theta(n\log n)$ for such an instance $X$. Again, we view $X$ as a set of points in the plane. Next, we transform this set of points by introducing \emph{exponential spacing}: if we denote the original set of points by $X=(p_1,\ldots,p_n)$, where for all $1\leq i\leq n$, the $x$-coordinate of $p_i$ is $i$, then we map each such point $p_i$ to a new point $p'_i$, whose $y$-coordinate remains the same, and $x$-coordinate becomes $2^i$. Let $\tilde X$ denote the resulting set of points. It is well known that, if $T$ is a {\bf balanced} partitioning tree (that is, every vertical strip is always partitioned by a vertical line crossing it in the middle), then $\WB_T(\tilde X)\leq n$, that is, the gap between the WB-1 bound defined with respect to the tree $T$ and $\opt(X)$ is at least $\Omega(\log n)$. However, it is easy to see that there is another tree $T'$, for which $\WB_{T'}(\tilde X)=\Theta(n\log n)=\Theta(\opt(\tilde X))$. It would be convenient for us to denote by $N=2^n$ the number of columns with integral $x$-coordinates for instance $\tilde X$.

Our next step is to define {\bf circularly shifted} instances: for each $0\leq s\leq N-1$, we obtain an instance $X^s$ from $\tilde X$ by circularly shifting it to the right by $s$ units. In other words, we move the last $s$ columns (with integral $x$-coordinates) of $\tilde X$ to the beginning of the instance.

Our final instance is obtained by stacking the instances $X^0,X^1,\ldots,X^{N-1}$ on top of each other in this order. Intuitively, for each individual instance $X^s$, it is easy to find a partitioning tree $T^s$, such that $\WB_{T^s}(X^s)$ is close to $\opt(X^s)$. However, the trees $T^s$ for different values of $s$ look differently. In particular, we show that no single partitioning tree $T$ works well for all instances $X^s$ simultaneously. This observation is key for showing the $\Omega(\log\log n/\log\log\log n)$ gap between the strong $\WB$ bound of the resulting instance and the value of the optimal solution for it.

In order to extend this result to the $\cGB$ bound, we perform the exponential spacing both vertically and horizontally. For every pair $1\leq s,s'\leq N-1$ of integers, we then define a new instance $X^{s,s'}$, obtained by circularly shifting the exponentially spaced instance horizontally by $s$ units and vertically by $s'$ units. The final instance is obtained by combining all resulting instances $X^{s,s'}$ for all $1\leq s,s'\leq N-1$. We then turn this instance into a permutation using a straightforward transformation. 

\paragraph{Geometric decomposition of instances.}
We employ geometric decompositions of instances in both our negative results and in our algorithms. Assume that we are given a set $X$ of input points, with $|X|=n$, such that each point in $X$ has integral $x$- and $y$-coordinates between $1$ and $n$, and all points in $X$ have distinct $y$-coordinates. Let $B$ be a bounding box containing all points. Consider now a collection $\lset=\set{L_1,\ldots,L_{k-1}}$ of vertical lines, that partition the bounding box into vertical strips $S_1,\ldots,S_k$, for some integer $k>1$. We assume that all lines in $\lset$ have half-integral $x$-coordinates, so no point of $X$ lies on any such line. This partition of the bounding box naturally defines $k$ new instances $X_1,\ldots,X_k$, that we refer to as \emph{strip instances}, where instance $X_i$ contains all input points of $X$ that lie in strip $S_i$. Additionally, we define a \emph{contracted instance} $\tilde X$, obtained as follows. For each $1\leq i\leq k$, we collapse all columns lying in the strip $S_i$ into a single column, where the points of $X_i$ are now placed.

While it is easy to see that $\sum_{i=1}^k \opt(X_i) \leq \opt(X)$, by observing that a feasible solution to instance $X$ naturally defines feasible solutions to instances $X_1,\ldots,X_k$, we prove a somewhat stronger result, that $\opt(\tilde X)+\sum_{i=1}^k \opt(X_i) \leq \opt(X)$. This result plays a key role in our algorithm, that follows a divide-and-conquer approach.

We then relate the Wilber Bounds of the resulting instances, by showing that:

	\[\WB(X)\leq O\left (\WB(\tilde X)+\sum_{i=1}^k\WB(X_i)+|X|\right ).   \]

The latter result is perhaps somewhat surprising. One can think of the expression $\WB(\tilde X)+\sum_{i=1}^k\WB(X_i)$ as a Wilber bound obtained by first partitioning along the vertical lines in $\lset$, and then continuing the partition within each strip. However, $\WB(X)$ is defined to be the largest bound over all partitioning schemes, including those where we start by cutting inside the strip instances, and only cut along the lines in $\lset$ later. This result is a convenient tool for analyzing our lower-bound constructions.

\paragraph{Algorithms.}
Our algorithmic framework is based on a simple divide-and-conquer technique. 
Let $X$ be an input point set, such that $|X|=n$, and for every pair of points in $X$, their $x$- and $y$-coordinates are distinct; it is well known that it is sufficient to solve this special case in order to obtain an algorithm for the whole problem. Let $B$ be a bounding box containing all points of $X$. By an active column we mean a vertical line that contains at least one point of $X$. Suppose we choose an integer $k>1$, and consider a set $\lset$ of $k-1$ vertical lines, such that each of the resulting vertical strips $S_1,\ldots,S_k$ contains roughly the same number of active columns. The idea of the algorithm is then to solve each of the resulting strip instances $X_1,\ldots,X_k$, and the contracted instance $\tilde X$ recursively. From the decomposition result stated above, we are guaranteed that  $\opt(\tilde X)+\sum_{i=1}^k \opt(X_i) \leq \opt(X)$. For each $1\leq i\leq k$, let $Y_i$ be the resulting solution to instance $X_i$, and let $\tilde Y$ be the resulting solution to instance $\tilde X$. Unfortunately, it is not hard to see that, if we let $Y'=\tilde Y\cup \left(\bigcup_{i=1}^kY_i\right )$, then $Y'$ is not necessarily a feasible solution to instance $X$. However, we show that there is a collection $Z$ of $O(|X|)$ points, such that for {any} feasible solutions $Y_1,\ldots,Y_k,\tilde Y$ to instances $X_1,\ldots,X_k,\tilde X$ respectively, the set of points $Z\cup \tilde Y\cup \left(\bigcup_{i=1}^kY\right )$ is a feasible solution to instance $X$.
We now immediately obtain a simple recursive algorithm for the \minsat problem. We show that the approximation factor achieved by this algorithm is proportional to the number of the recursive levels, which is optimized if we choose the parameter $k$ -- the number of strips in the partition -- to be $\sqrt{n}$. A simple calculation shows that the algorithm achieves an approximation factor of $O(\log\log n)$, and it is easy to see that the algorithm is efficient.

Finally, to obtain a tradeoff between the running time and the approximation factor, notice that, if we terminate the recursion at recursive depth $D$ and solve each resulting problem optimally using an exponential-time algorithm, we obtain an $O(D)$-approximation. 
Each subproblem at recursion depth $D$ contains $n^{1/2^D}$ active columns. 
We present an algorithm whose running time is only exponential in the number of active columns to obtain the desired result: an exact algorithm, that, for any instance $X$, runs in time $c(X)^{O(c(X))} \poly (|X|)$, where $c(X)$ is the number of active columns.
By using this algorithm to solve the ``base cases'' of the recursion, we obtain an $O(D)$-approximation algorithm with running time $\poly(m)\cdot \exp\left(n^{1/2^{\Omega(D)}}\log n\right )$.

\subsection{Organization}
We start with preliminaries in Section \ref{sec: prelims}, and we provide our results for the geometric decompositions into strip instances and a compressed instance in Section \ref{sec: decomp}. We prove our negative results in Sections \ref{sec: negative} and \ref{sec: extension to extended}; Section \ref{sec: negative} contains the proof of Theorem \ref{thm:intro_WB}, while Section \ref{sec: extension to extended} discusses extensions of the WB-1 bound, and extends the result of Theorem  \ref{thm:intro_WB} to the $\cGB$ bound. Our algorithmic results appear in Section \ref{sec: alg}.

\section{Preliminaries}\label{sec: prelims}

All our results only use the geometric interpretation of the problem, that we refer to as the \minsat problem, and define below. For completeness, we include the formal definition of algorithms for BST's and formally state their equivalence to \minsat in \Cref{sec: equivalence}.

	\subsection{The \minsat Problem}
	\label{sec: problem def}

	For a point $p \in {\mathbb R}^2$ in the plane, we denote by $p.x$ and $p.y$ its $x$- and $y$-coordinates, respectively. 
	Given any pair $p$, $p'$ of points, we say that they are \emph{collinear} if $p.x=p'.x$ or $p.y=p'.y$. If $p$ and $p'$ are not collinear, then we let $\rect_{p,p'}$ be the smallest closed rectangle containing both $p$ and $p'$; note that $p$ and $p'$ must be diagonally opposite corners of the rectangle.

	\begin{definition}
	We say that a non-collinear pair $p,p'$ of points is \emph{satisfied by a point $p''$} if $p''$ is distinct from $p$ and $p'$ and $p''\in \rect_{p,p'}$.  
	We say that a set $S$ of points is \emph{satisfied} iff for every non-collinear pair $p,p'\in S$ of points, there is some point $p''\in S$ that satisfies this pair.
	\end{definition}

	We refer to horizontal and vertical lines as \emph{rows} and \emph{columns} respectively. 
	For a collection of points $X$, the \textit{active rows} of $X$ are the rows that contain at least one point in $X$. We define the notion of \textit{active columns} analogously. 
	We denote by $r(X)$ and $c(X)$ the number of active rows and active columns of the point set $X$, respectively.
	We say that a point set $X$ is a {\em semi-permutation} if every active row contains exactly one point of $X$.
	Note that, if $X$ is a semi-permutation, then $c(X) \le r(X)$. We say that $X$ is a {\em permutation} if it is a semi-permutation, and additionally, every active column contains exactly one point of $X$.  
	Clearly,  if $X$ is a permutation, then $c(X)=r(X)=|X|$. 
	We denote by  $B$ the smallest closed rectangle containing all points of $X$, and call $B$ the \emph{bounding box}.

	We are now ready to define the  \minsat problem. 
The input to the problem is a set $X$ of points that is a semi-permutation, and the goal is to compute a minimum-cardinality set $Y$ of points, such that $X\cup Y$ is satisfied.  
	We say that a set $Y$ of points is a \emph{feasible solution} for $X$ if $X\cup Y$ is satisfied. We denote by $\opt(X)$ the minimum value $|Y|$ of any feasible solution $Y$ for $X$.\footnote{We remark that in the original paper that introduced this problem~\cite{DHIKP09}, the value of the solution is defined as $|X \cup Y|$, while our solution value is $|Y|$. It is easy to see that
		for any semi-permutation $X$ and solution $Y$ for $X$, $|Y|\geq \Omega(|X|)$ must hold, so the two definitions are equivalent to within factor $2$.}
In the online version of the \minsat problem, at every time step $t$, we discover the unique input point from $X$ whose $y$-coordinate is $t$, and we need to decide which points with $y$-coordinate $t$ to add to the solution $Y$.
The \minsat problem is equivalent to the BST problem, in the following sense (see \Cref{sec: equivalence} for more details): 

\begin{theorem}[\cite{DHIKP09}] 
Any efficient $\alpha$-approximation algorithm for \minsat can be transformed into an efficient $O(\alpha)$-approximation algorithm for BST's, and similarly any online $\alpha$-competitive algorithm for \minsat can be transformed into an online $O(\alpha)$-competitive algorithm for BST's. 
\end{theorem}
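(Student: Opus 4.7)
The plan is to establish the equivalence via the point-set encoding $P_X = \{(x_t,t) : t \in [m]\}$, which is a semi-permutation since each access has a distinct timestamp. We must show two reductions: (i) any BST execution of cost $C$ on $X$ yields a feasible \minsat solution on $P_X$ of size $O(C)$, and (ii) any feasible \minsat solution on $P_X$ of size $K$ yields a BST execution on $X$ of cost $O(K)$. Moreover, each reduction must preserve timing: the objects produced at ``time $t$'' depend only on what is known at time $t$ (yielding the online statement), and the transformations must be computable in polynomial time (yielding the offline approximation statement with the same $\alpha$ up to constants). Since $\opt(X) \ge m$ and $\opt(P_X) \ge |P_X| = m$, once both inequalities hold up to constants, a $\beta$-approximation on one side composes with the reductions to give an $O(\beta)$-approximation on the other.

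For direction (i), fix a BST execution producing trees $T_1, T_2, \ldots, T_{m+1}$ with touched-node sets $U_t \subseteq \{1,\ldots,n\}$ (the search path for $x_t$ plus the nodes participating in the rotations that transform $T_t$ into $T_{t+1}$). The total touched cost $\sum_t |U_t|$ is within a constant factor of the BST cost. Set $Y = \{(k,t) : k \in U_t\} \setminus P_X$. I would verify the arborally-satisfied property: take any non-collinear pair $(a,s),(b,t)$ with $s < t$ in $P_X \cup Y$. Consider the tree $T_{s+1}$ (right after accessing time $s$) and follow the node $a$ forward in time; until some time $s' \le t$, either the LCA $c$ of $a$ and $b$ lies strictly between $a$ and $b$ on the key axis, in which case touching $c$ at some intermediate time yields a point of $Y$ inside $\rect_{(a,s),(b,t)}$; or one of $a,b$ is an ancestor of the other, in which case the ancestor was touched on the path to the descendant, again producing a witness point. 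This is the standard ``treap'' argument, and it shows $P_X \cup Y$ is satisfied. This reduction is clearly online, as $U_t$ is determined by the algorithm's behavior through step $t$.

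For direction (ii), which is the main obstacle, I would invoke (and re-prove if needed) the classical ``Greedy ASS'' simulation of Demaine et al.: given a satisfied set $P_X \cup Y$, run a left-to-right/right-to-left sweep at each time $t$ over the points with $y$-coordinate $t$, and show that there is a BST execution that, at step $t$, touches exactly $\{k : (k,t) \in P_X \cup Y\}$. The key lemma is that, because the set is satisfied, the subset of keys touched at times $\le t$ can be organized into a tree $T_{t+1}$ such that the transition $T_t \to T_{t+1}$ uses only rotations among the touched nodes, giving cost $O(|U_t|)$. Verifying this invariant -- that the touched keys at each time step form a ``convex'' fragment of some BST and that $O(|U_t|)$ rotations suffice -- is the technical heart, and I would handle it by induction on $t$, using the axis-aligned rectangle property to produce the rotation witnesses. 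Since the construction of $T_{t+1}$ only uses $Y \cap \{(*,s) : s \le t\}$, it is online, completing the proof of both the offline and online equivalences.
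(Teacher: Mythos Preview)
The paper does not supply its own proof of this theorem; it quotes the result from \cite{DHIKP09}, and the appendix merely restates it as a consequence of Lemmas~2.1--2.3 of that reference together with the definitions of the BST model. So there is no in-paper argument to compare against, and your plan is essentially a sketch of the original DHIKP09 proof.

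Your direction (i) is the right idea, though the LCA argument as written is garbled. The clean version is: if no third point lies in $\rect_{(a,s),(b,t)}$, then neither $a$ nor any key strictly between $a$ and $b$ is touched during times $s+1,\ldots,t$; hence in $T_t$ the lowest common ancestor of $a$ and $b$ is $a$ itself, so $a$ lies on the access path of $b$ and is touched at time $t$, producing the point $(a,t)\in\rect_{(a,s),(b,t)}$ --- a contradiction.

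Direction (ii) has a genuine gap in the online claim. The construction you are alluding to (from \cite{DHIKP09}) maintains a \emph{treap} whose heap priority of key $k$ is the \emph{next} time $k$ appears in $P_X\cup Y$; one then shows that the transition $T_t\to T_{t+1}$ touches exactly the keys whose priority equals $t$. This construction is inherently offline: assigning priorities requires looking at future rows of $Y$. Your assertion that ``the construction of $T_{t+1}$ only uses $Y\cap\{(\ast,s):s\le t\}$'' is precisely what fails here. In \cite{DHIKP09} this is repaired by a separate and non-trivial lemma (the split-tree simulation) showing that \emph{any} offline BST execution can be converted to an online one with $O(1)$ amortized overhead per touched node. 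Without invoking or reproving that step, your online reduction is incomplete. As a side remark, labeling the treap construction ``Greedy ASS'' is a misnomer: the greedy algorithm is a particular \minsat algorithm, not the geometry-to-BST transducer you need here.
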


	\subsection{Basic Geometric Properties }
	The following observation is well known (see, e.g.~Observation 2.1 from \cite{DHIKP09}). We include the proof Section \ref{subsec: proof of Obs: alligned point} of the Appendix for completeness. %
\begin{observation}
        \label{obs: aligned point} 
        Let $Z$ be  any satisfied point set. Then for every pair $p,q \in Z$ of distinct points, there is a point $r \in \Box_{p,q} \setminus \{p,q\}$ such that $r.x = p.x$ or $r.y = p.y$. 
\end{observation}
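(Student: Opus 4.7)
The plan is to fix a non-collinear pair $p,q \in Z$ (the collinear case is vacuous since $\rect_{p,q}$ is only defined in the non-collinear case) and to run an extremal/infinite-descent argument inside the rectangle $\rect_{p,q}$. Since $Z$ is satisfied, the set $T=(Z\cap \rect_{p,q})\setminus\{p,q\}$ is non-empty. If some $r\in T$ already satisfies $r.x=p.x$ or $r.y=p.y$, the observation holds with this $r$. Otherwise, every point of $T$ is non-collinear with $p$, and I will derive a contradiction by choosing $r\in T$ that minimizes the area of $\rect_{r,p}$ (which is well defined because $Z$ is finite and every such $\rect_{r,p}$ has positive area).

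To derive the contradiction, I would apply the satisfaction property of $Z$ to the non-collinear pair $(p,r)$, producing some point $r'\in (Z\cap \rect_{r,p})\setminus\{r,p\}$. The key verifications are: (i) $r'\neq q$, because $r\neq q$ forces $\rect_{r,p}$ to be a proper sub-rectangle of $\rect_{p,q}$ that omits the opposite corner $q$; (ii) consequently $r'\in T$; and (iii) if $r'$ were again non-collinear with $p$, then $\rect_{r',p}\subsetneq \rect_{r,p}$ strictly, contradicting the minimality of $r$. Hence $r'$ must be collinear with $p$, and $r'$ is the desired witness in $\rect_{p,q}\setminus\{p,q\}$ aligned with $p$.

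The only delicate step is verifying the strict containment $\rect_{r',p}\subsetneq \rect_{r,p}$ in case (iii). This follows because $r'\in \rect_{r,p}\setminus\{r\}$ forces at least one coordinate of $r'$ to lie strictly between the corresponding coordinates of $p$ and $r$, which strictly shrinks the corresponding side of the rectangle while the other side can only shrink or stay equal; so $\mathrm{area}(\rect_{r',p})<\mathrm{area}(\rect_{r,p})$, contradicting the choice of $r$. The main obstacle, such as it is, is simply being careful with this geometric bookkeeping; the logical structure of the argument is a clean minimum-area extremal argument enabled by finiteness of $Z$.
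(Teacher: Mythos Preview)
Your argument is correct and follows essentially the same extremal strategy as the paper's proof. The paper picks $r\in(\rect_{p,q}\cap Z)\setminus\{p,q\}$ minimizing the $\ell_1$-distance to $p$ (rather than the area of $\rect_{p,r}$) and observes directly that if $r$ were non-collinear with $p$, then $\rect_{p,r}\setminus\{p,r\}$ would contain no point of $Z$ by minimality, contradicting that $Z$ is satisfied; this avoids the extra step of naming $r'$ and re-invoking minimality, but the underlying idea is identical.
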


\paragraph{Collapsing Sets of Columns or Rows.}
Assume that we are given any set $X$ of points, and any collection $\cset$ of consecutive active columns for $X$. 
In order to collapse the set $\cset$ of columns, we replace $\cset$ with a single representative column $C$ (for concreteness, we use the column of $\cset$ with minimum $x$-coordinate). 
For every point $p\in X$ that lies on a column of $\cset$, we replace $p$ with a new point, lying on the column $C$, whose $y$-coordinate remains the same.
Formally, we replace point $p$ with point $(x, p.y)$, where $x$ is the $x$-coordinate of the column $C$.  
We denote by $X_{|\cset}$ the resulting new set of points. We can similarly define collapsing set of rows. The following useful observation is easy to verify; the proof appears in Section \ref{subsec: proof of obs: collapsing col} of Appendix. 

\begin{observation}\label{obs: collapsing columns}
Let $S$ be any set of points, and let $\cset$ be any collection of consecutive active columns (or rows) with respect to $S$. If $S$ is a satisfied set of points, then so is $S_{|\cset}$.
\end{observation}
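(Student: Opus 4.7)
I plan to treat the collapse operation as a projection and verify satisfiedness of every non-collinear pair in $S_{|\cset}$. Let $c_1 < \cdots < c_k$ be the $x$-coordinates of the active columns in $\cset$, and assume the representative column is at $x = c_1$. Define $\phi \colon \reals^2 \to \reals^2$ by $\phi(x, y) = (c_1, y)$ if $x \in \{c_1, \ldots, c_k\}$, and $\phi(x, y) = (x, y)$ otherwise; then $S_{|\cset} = \phi(S)$. I only treat the column case, since the row case follows by an entirely symmetric argument after swapping the two coordinates.

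Fix a non-collinear pair $p', q' \in \phi(S)$ and pick preimages $p, q \in S$ with $\phi(p) = p'$ and $\phi(q) = q'$. Because $\phi$ preserves $y$-coordinates, $p.y = p'.y \neq q'.y = q.y$. If both $p$ and $q$ lay in $\cset$ then $p'.x = q'.x = c_1$ would contradict non-collinearity of $(p', q')$, so $p.x \neq q.x$, meaning $(p, q)$ is non-collinear in $S$, and at most one of $p, q$ belongs to $\cset$.

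If neither of $p, q$ lies in $\cset$, I take any $r \in \rect_{p, q} \cap S \setminus \{p, q\}$ (which exists by satisfiedness of $S$) and verify $\phi(r) \in \rect_{p', q'} \setminus \{p', q'\}$: if $r \notin \cset$ then $\phi(r) = r$ and the claim is immediate, while if $r \in \cset$ then $r.x \in [c_1, c_k]$ together with $r \in \rect_{p,q}$ forces $p$ and $q$ to lie on opposite sides of $[c_1, c_k]$, so $c_1$ belongs to the $x$-range of $\rect_{p', q'}$ and $\phi(r) = (c_1, r.y)$ is as required. In the remaining case, without loss of generality $p \in \cset$ and $q \notin \cset$. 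Here I invoke Observation~\ref{obs: aligned point} applied to the pair $(q, p)$ in $S$, obtaining a point $r \in \rect_{p, q} \cap S \setminus \{p, q\}$ with $r.x = q.x$ or $r.y = q.y$. Splitting on these two subcases (and further on whether or not $r \in \cset$) gives $\phi(r) \in \rect_{p', q'} \setminus \{p', q'\}$ by routine coordinate checks: in each of the four sub-situations $\phi(r)$ either equals $(q.x, r.y)$ or $(c_1, q.y)$ or $(r.x, q.y)$, and one verifies it lies in $\rect_{p',q'}$ and differs from $p'$ and $q'$ in at least one coordinate.

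The only genuine difficulty lies in the asymmetric case where exactly one of $p, q$ (say $p$) belongs to $\cset$: a na\"{i}vely chosen satisfying point $r$ of the pair $(p, q)$ may collapse onto $p'$, which happens precisely when $r.y = p.y$ and $r \in \cset$. The remedy is to specifically select $r$ to be aligned with the endpoint \emph{outside} $\cset$, which is exactly what Observation~\ref{obs: aligned point} delivers when applied with the roles of its two arguments swapped; this is the only place where a little care beyond direct case analysis is needed.
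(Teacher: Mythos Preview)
Your proof is correct. The approach differs from the paper's in one structural respect: the paper reduces to the case $|\cset|=2$ (collapsing two adjacent active columns into one) and then iterates, whereas you handle the full collapse in a single pass. In exchange, you invoke Observation~\ref{obs: aligned point} to pick the satisfying point $r$ aligned with the endpoint $q$ that lies \emph{outside} $\cset$; this is exactly what prevents $\phi(r)$ from landing on $p'$, and you identify this as the one genuine subtlety. The paper's two-column reduction keeps each step more elementary (no appeal to Observation~\ref{obs: aligned point}) but needs the iterative wrapper, and in fact its write-up glosses over the same pitfall you flag---a naively chosen $r$ on the collapsed pair of columns can map to $p$. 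Your use of the aligned-point observation is the cleaner fix; the paper's version would need one more sentence to close the same gap.
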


\paragraph{Canonical Solutions.} We say that a solution $Y$ for input $X$ is \emph{canonical} iff every point $p\in Y$ lies on an active row and an active column of $X$. %
It is easy to see that \emph{any} solution can be transformed into a canonical solution, without increasing its cost. The proof of the following observation appears in Section \ref{subsec: proof of obs: canonical solutions} of Appendix. %

\begin{observation} \label{obs: canonical solutions} There is an efficient algorithm, that, given an instance $X$ of \minsat and
	 any feasible solution $Y$ for $X$, computes a feasible canonical solution $\hat{Y}$ for $X$ with $|\hat{Y}| \leq |Y|$.  %
\end{observation}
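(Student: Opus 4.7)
The plan is to prove Observation~\ref{obs: canonical solutions} by two successive alignment phases on $Y$, one for columns and one for rows, each driven by a few invocations of Observation~\ref{obs: collapsing columns}. Throughout, let $S = X \cup Y$, which is satisfied by assumption, and let $c_1 < c_2 < \cdots < c_k$ denote the $x$-coordinates of the active columns of $X$. Define the open vertical strips $S_0 = (-\infty, c_1)$, $S_i = (c_i, c_{i+1})$ for $1 \leq i < k$, and $S_k = (c_k, +\infty)$; none of these strips contains an active column of $X$, so any active column of $S$ lying in a strip $S_i$ must be contributed by a point of $Y$.

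First I would process each strip $S_i$ independently. The active columns of $S$ lying in $S_i$ form a (possibly empty) maximal block of consecutive active columns of $S$, since $X$ contributes no column strictly between $c_i$ and $c_{i+1}$. By Observation~\ref{obs: collapsing columns}, collapsing this entire block into a single column $c_i' \in S_i$ yields a still-satisfied point set in which every $Y$-point lying in $S_i$ now sits on the single column $c_i'$. I apply this collapse for every non-empty strip $S_i$ in turn.

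Next I would align each such $c_i'$ with an active column of $X$. Set $c_i^* = c_1$ if $i = 0$ and $c_i^* = c_i$ otherwise, so that $c_i^*$ is the active column of $X$ immediately bordering $S_i$. After the first phase, $\{c_i^*, c_i'\}$ is exactly a pair of consecutive active columns of the current satisfied set, since $X$ has no columns inside $S_i$ and all $Y$-points of $S_i$ have been moved to $c_i'$. Applying Observation~\ref{obs: collapsing columns} once more collapses this pair into $c_i^*$, moving every remaining $Y$-point of $S_i$ onto the active column $c_i^*$ of $X$, while keeping the whole set satisfied. After processing all non-empty strips, every point of $Y$ lies on an active column of $X$.

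I would then run an identical two-stage argument using horizontal strips between consecutive active rows of $X$; since collapsing rows only changes $y$-coordinates, this preserves the column alignment achieved above. Each collapse operation simply shifts points, so it never increases cardinality; if two distinct points of $Y$ happen to land at the same location, one copy may be deleted without affecting satisfaction, giving $|\hat Y| \leq |Y|$. The only mild subtlety to check carefully in a full write-up is that each invocation of Observation~\ref{obs: collapsing columns} is applied to a legitimate block of \emph{consecutive} active columns (or rows) of the current satisfied set; this is immediate from the observation that $X$ contributes no active column inside any strip $S_i$ (and symmetrically for rows), so no other active column of $S$ can lie between the ones being collapsed.
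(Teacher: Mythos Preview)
Your proposal is correct and follows essentially the same approach as the paper: repeatedly invoke Observation~\ref{obs: collapsing columns} to push all $Y$-points lying between two consecutive active columns of $X$ onto one of those active columns, then repeat for rows. The only cosmetic difference is that the paper does each strip in a single collapse (it takes the block consisting of the active column $C$ together with all $Y$-columns between $C$ and the next active column $C'$, and collapses them directly into $C$), whereas you split this into two collapses; your version also treats the extremal strips $S_0$ and $S_k$ explicitly, which the paper's sketch glosses over.
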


\subsection{Partitioning Trees}

We now turn to define partitioning trees, that are central to both defining the \WBone and to describing our algorithm.

Let $X$ be the a  set of points that is a semi-permutation.
We can assume without loss of generality that every column with an integral $x$-coordinate between $1$ and $c(X)$ inclusive contains at least one point of $X$.
Let $B$
be the bounding box of $X$. Assume that the set of active columns
is $\{C_1,\dots,C_a\}$, where $a=c(X)$, and that for all $1\leq i\leq a$, the $x$-coordinate of column $C_i$ is $i$.
Let $\lset$ be the set of all vertical lines with half-integral $x$-coordinates
between $1+1/2$ and $a-1/2$ (inclusive). Throughout, we refer to the vertical lines in $\lset$ as \emph{auxiliary columns}. Let $\sigma$ be an arbitrary ordering
of the lines of $\lset$ and denote $\sigma=(L_{1},L_{2},\dots,L_{a-1})$.
We define a hierarchical partition of the bounding box $B$ into vertical strips using $\sigma$, as follows.
We perform $a-1$ iterations. In the first iteration, we partition the bounding box $B$, using the line $L_1$, into two vertical strips, $S_L$ and $S_B$. For $1<i\leq a-1$, in iteration $i$ we consider the line $L_i$, and we let $S$ be the unique vertical strip in the current partition that contains the line $L_i$. We then partition $S$ into two vertical sub-strips by the line $L_i$.
When the partitioning algorithm terminates, every vertical strip contains exactly one active column.

\begin{figure}
	\centering
	\includegraphics[width=0.3\textwidth]{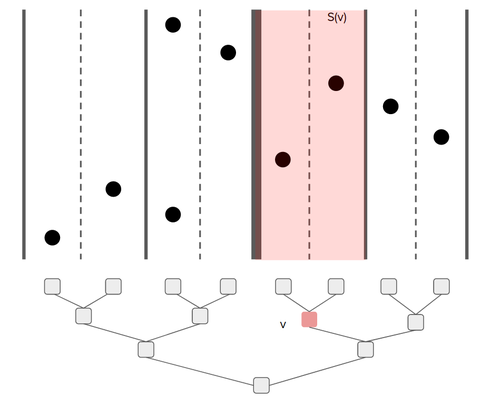}
	\caption{Input point set $X$ and a partitioning tree $T$.}
	\label{fig:Partition}
\end{figure}

This partitioning process can be naturally described by a binary tree $T=T(\sigma)$, that
 we call a \emph{partitioning tree} associated with the ordering
$\sigma$ (see Figure~\ref{fig:Partition}). 
Each node $v\in V(T)$ is associated with a vertical strip
$S(v)$ of the bounding box $B$. The strip $S(r)$ of the root vertex $r$ of $T$ is
the bounding box $B$. For every inner vertex $v\in V(T)$, if $S=S(v)$ is the vertical strip associated with $v$, and if $L\in \lset$ is the first line in $\sigma$ that lies strictly in $S$, then line $L$ partitions $S$ into two sub-strips, that we denote by $S_L$ and $S_R$. Vertex $v$ then has two children, whose corresponding strips are $S_L$ and $S_R$ respectively. We say that  $v$ \emph{owns }the line $L$, and we denote $L=L(v)$. For each leaf
node $v$, the corresponding strip $S(v)$ contains exactly one active column of $X$, and $v$ does not
own any line of $\lset$.
For each vertex $v \in V(T)$, let $\size(v)=|X \cap S(v)|$ be the number
of points from $X$ that lie in $S(v)$, and let $\width(v)$ be the width
of the strip $S(v)$.

Given a partition tree $T$ for point set $X$, we refer to the vertical 
strips in $\{S(v)\}_{v\in T}$ as $T$-strips.

We can use the partitioning trees in order to show the following well known bound on $\opt(X)$\footnote{The algorithm in the claim in fact corresponds to searching in a balanced static BST.}:
\begin{claim}\label{claim: upper bound on OPT static}
	For any semi-permutation $X$, 
	$\opt(X)\leq O(r(X) \log c(X))$, where $r(X)$ and $c(X)$ are the number of active rows and columns of $X$ respectively.
\end{claim}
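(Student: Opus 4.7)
The plan is to exhibit an explicit feasible solution $Y$ of size $O(r(X)\log c(X))$, built from a static balanced binary search tree over the active columns of $X$. Set $c=c(X)$ and $r=r(X)$. First, I would construct a balanced rooted binary search tree $T$ whose keys are the active columns $C_1,\dots,C_c$ of $X$ ordered from left to right, so that every node $v\in T$ has an associated key $k(v)\in\{1,\dots,c\}$, and the depth of $T$ is at most $\lceil\log_2 c\rceil$. Let $v(k)\in T$ denote the node with key $k$.

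Second, for each active row $y$, let $(k_y,y)$ be the unique point of $X$ on this row (which exists and is unique since $X$ is a semi-permutation). I would simulate a BST search in $T$ for the key $k_y$, visiting the root-to-$v(k_y)$ path, and, for every node $v$ on this path, add the point $(k(v),y)$ to $Y$. Since each row contributes at most $\lceil\log_2 c\rceil+1$ points, we get $|Y|\le r\cdot O(\log c)=O(r(X)\log c(X))$, matching the desired bound.

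The main content is then to verify that $X\cup Y$ is satisfied. The key observation is that, on every active row $y$, the points of $X\cup Y$ lie exactly on the columns $\{k(v):v \text{ is an ancestor of } v(k_y) \text{ in } T\}$ (including $v(k_y)$ itself). Given any non-collinear pair $p=(x_1,y_1),p'=(x_2,y_2)\in X\cup Y$ with $x_1<x_2$, I would argue by cases on the tree relationship between $v(x_1)$ and $v(x_2)$. (i) If $v(x_1)$ is an ancestor of $v(x_2)$ in $T$, then since $v(x_2)$ is itself an ancestor of $v(k_{y_2})$, so is $v(x_1)$, which means $(x_1,y_2)\in Y$; this is a corner of $\rect_{p,p'}$ distinct from $p$ and $p'$ and hence satisfies the pair. (ii) The case $v(x_2)$ ancestor of $v(x_1)$ is symmetric and yields $(x_2,y_1)\in Y$. (iii) Otherwise, the LCA $u$ of $v(x_1)$ and $v(x_2)$ satisfies $x_1<k(u)<x_2$ (since $v(x_1)$ lies in the left subtree of $u$ and $v(x_2)$ in the right), and as $u$ is an ancestor of $v(x_1)$ and hence of $v(k_{y_1})$, the point $(k(u),y_1)\in Y$ lies in $\rect_{p,p'}$ and is distinct from both $p$ and $p'$.

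No step requires a delicate estimate; the main obstacle is simply to carry out the case analysis carefully and to notice the clean characterization of row-$y$ points in $X\cup Y$ in terms of BST ancestor relations. The size bound then follows immediately from the standard balanced-BST depth bound of $O(\log c)$, and applying \Cref{obs: canonical solutions} (if needed) shows $Y$ may be taken canonical without loss.
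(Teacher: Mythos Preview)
Your proof is correct and takes essentially the same approach as the paper: both build the solution from a static balanced search structure over the active columns and place, on each active row, one point per node on the root-to-leaf search path, yielding $O(\log c(X))$ points per row. The only cosmetic difference is that the paper works with its partitioning-tree framework and places points on the half-integral auxiliary lines $L(v)$, whereas you work with a classical BST on the active columns and place points directly on those columns (so your solution is already canonical); the paper itself remarks in a footnote that its construction corresponds to searching in a balanced static BST.
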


\subsection{The WB-1 Bound}\label{sec: WB def}

The \WBone\footnote{Also called Interleaving bound \cite{tango}, the first Wilber bound, ``interleave lower bound''~\cite{wilber}, or alternation bound~\cite{in_pursuit}} is defined with respect to an ordering (or a permutation) $\sigma$ of the auxiliary columns, or,  equivalently, with respect to the partitioning tree $T(\sigma)$. 
It will be helpful to keep both these views in mind. 
In this paper, we will make a clear distinction between a weak variant of the \WBone, as defined by Wilber himself in \cite{wilber} and a strong variant, as mentioned in \cite{in_pursuit}.  

Let $X$ be a semi-permutation, and let $\lset$ be the corresponding set of auxiliary columns. 
Consider an arbitrary fixed ordering $\sigma$ of columns in $\lset$ and its corresponding partition tree $T = T(\sigma)$.   
For each inner node $v \in V(T)$, consider the set $X' = X \cap S(v)$ of input points that lie in the strip $S(v)$, and let $L(v)\in \lset$ be the line that $v$ owns.
We denote $X'=\set{p_1,p_2,\ldots,p_k}$, where the points are ordered in the increasing order of their $y$-coordinates; since $X$ is a semi-permutation, no two points of $X$ may have the same $y$-coordinate. 
For $1\leq j<k$, we say that the ordered pair $(p_j,p_{j+1})$ of points form a \emph{crossing} of $L(v)$ iff $p_j,p_{j+1}$ lie on the opposite sides of the line $L(v)$. 
We let $\cost(v)$ be the total number of crossings of $L(v)$ by the points of $X \cap S(v)$. When $L = L(v)$, we also write $\cost(L)$ to denote $\cost(v)$. If $v$ is a leaf vertex, then its cost is set to $0$.

We note a simple observation, that the cost can be bounded by the number of points on the smaller side.
 
\begin{observation}\label{obs: number of crossings at most min of both sides}
	Let $X$ be a semi-permutation, $\sigma$ an ordering of the auxiliary columns in $\lset$, and let $T=T_{\sigma}$ be the corresponding partitioning tree. Let $v\in V(T)$ be any inner vertex of the tree, whose two child vertices are denoted by $v_1$ and $v_2$. 
Then $\cost(v) \leq 2 \min \{|X \cap S(v_1)|, |X \cap S(v_2)| \}$. 
\end{observation}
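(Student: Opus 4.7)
The plan is to reduce the claim to a simple combinatorial fact about alternations in a binary sequence. Let $v$ be the inner node, let $L = L(v)$ be the line it owns, and enumerate $X \cap S(v) = \{p_1,\ldots,p_k\}$ in increasing order of $y$-coordinate, where $k = a + b$ with $a = |X \cap S(v_1)|$ and $b = |X \cap S(v_2)|$. Associate with this list the binary string $\sigma \in \{L,R\}^k$ obtained by recording, for each $p_j$, whether it lies in $S(v_1)$ or $S(v_2)$. By definition, $\cost(v)$ equals the number of indices $j$ with $\sigma_j \ne \sigma_{j+1}$, i.e.\ the number of \emph{alternations} of $\sigma$. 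So it suffices to show that any binary string with $a$ occurrences of $L$ and $b$ occurrences of $R$ has at most $2\min(a,b)$ alternations.

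The cleanest way is a direct charging argument. Without loss of generality assume $a \le b$, so that the smaller side is $S(v_1)$. Every crossing $(p_j, p_{j+1})$ has one endpoint in $S(v_1)$ and one in $S(v_2)$, so every crossing is ``incident'' to exactly one point of $X \cap S(v_1)$. Conversely, a fixed point $p \in X \cap S(v_1)$ can appear in at most two crossings: at most once as the earlier element $p_j$ of a crossing pair (with its successor on the opposite side), and at most once as the later element $p_{j+1}$ (with its predecessor on the opposite side). Summing over points of $X \cap S(v_1)$ gives $\cost(v) \le 2a = 2\min(a,b)$.

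There is essentially no obstacle here; the only thing to be careful about is that ``crossing'' is defined as an \emph{ordered} consecutive pair, so each $p_j$ for $1 < j < k$ participates in at most two consecutive pairs ($\{p_{j-1},p_j\}$ and $\{p_j,p_{j+1}\}$), and the boundary points $p_1, p_k$ in at most one. Hence the ``at most two crossings per point'' bound is tight and yields the stated inequality. Since we used the symmetric count via the smaller side, the bound $2\min(a,b)$ follows immediately, completing the proof.
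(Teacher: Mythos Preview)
Your proof is correct and is essentially the same as the paper's: assume w.l.o.g.\ the smaller side is $S(v_1)$, observe that every crossing contains exactly one point from $X\cap S(v_1)$, and each such point can participate in at most two consecutive pairs. The binary-string reformulation you set up is harmless but unnecessary, since you immediately revert to the direct charging argument anyway.
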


\begin{definition}[\WBone] 
For any semi-permutation $X$, an ordering $\sigma$ of the auxiliary columns in $\lset$, and the corresponding partitioning tree $T=T_{\sigma}$,
the (weak) \WBone of $X$ with respect to $\sigma$ is: 
\[\WB_{\sigma}(X) =  \WB_{T}(X) = \sum_{v \in V(T)} \cost(v). \]  
The strong \WBone of $X$ is $\WB(X) = \max_{\sigma} \WB_{\sigma}(X)$, where the maximum is taken over all permutations $\sigma$ of the lines in $\lset$.  
\end{definition}

It is well known that the \WBone is a lower bound on the optimal solution cost:

\begin{claim}\label{claim: bounding WB by OPT}
	For any semi-permutation $X$, $\WB(X)\leq 2 \cdot \opt(X)$.
\end{claim}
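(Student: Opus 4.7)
The plan is to show the stronger statement $\WB_\sigma(X)\le 2\opt(X)$ for every ordering $\sigma$ of the auxiliary columns; since $\WB(X)=\max_\sigma \WB_\sigma(X)$, this yields the claim. Fix such a $\sigma$ with partition tree $T=T(\sigma)$, and by Observation~\ref{obs: canonical solutions} let $Y$ be a canonical optimal feasible solution, so $|Y|=\opt(X)$ and $Z:=X\cup Y$ is satisfied. The strategy is a charging scheme: distribute each unit of $\cost(v)$ among points of $Y$, and argue that every $y\in Y$ absorbs total charge at most $2$.

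Fix an internal node $v\in V(T)$ with line $L=L(v)$ and strip $S=S(v)$, and list the points of $X\cap S$ in $y$-order as $p_1,\ldots,p_k$. For each crossing index $j$, apply Observation~\ref{obs: aligned point} twice inside $Z$: once to produce a witness $w$ with $w.x=p_j.x$ or $w.y=p_j.y$, and once (by symmetry) for a witness $w'$ aligned with $p_{j+1}$. The first substep is to verify that $w,w'\in Y$. If $w\in X$ and $w.y=p_j.y$, the semi-permutation property forces $w=p_j$, contradicting $w\neq p_j$. If instead $w\in X$ and $w.x=p_j.x$, then $w$ lies on $p_j$'s column, hence inside $S$, with row in $(p_j.y,p_{j+1}.y]$; either $w.y=p_{j+1}.y$, impossible because the only $X$-point on that row is $p_{j+1}$ (which has a different column), or $w.y\in(p_j.y,p_{j+1}.y)$, contradicting the consecutiveness of $p_j,p_{j+1}$ in the $y$-order of $X\cap S$. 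Hence $w,w'\in Y$, and I charge $\tfrac12$ of the crossing's cost to each. Summed over all crossings at all internal $v$, the total charge equals $\WB_\sigma(X)$.

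The main obstacle is proving that each $y\in Y$ absorbs total charge at most $2$; a naive bound gives only $O(\log n)$, since $y$ lies in $O(\log n)$ nested strips $S(v)$. To obtain a constant, I decompose the charges on $y$ by alignment type. If $y$ serves as a column-aligned witness for a pair $(p_j,p_{j+1})$ at some $v$, then $p_j$ is pinned down as the unique $X$-point on $y$'s column whose $y$-coordinate lies immediately below $y.y$, since any other choice would violate consecutiveness within $X\cap S(v)$ once $L(v)$ separates $p_j$ from $p_{j+1}$. As $v$ ascends $T$, the strip $S(v)$ only grows, so the next-in-$y$-order point in $X\cap S(v)$ can only move closer to $p_j$, shrinking the allowed interval $(p_j.y,p_{j+1}.y]$ containing $y.y$; a monotonicity argument then restricts column-aligned selections of $y$ to at most one node on each side of $L(v)$. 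A symmetric argument using the semi-permutation property handles row-aligned witnesses. Combining, each $y\in Y$ is selected at most a constant number of times with charge $\tfrac12$ each, yielding total charge at most $2$. Summing gives $\WB_\sigma(X)=\sum_v \cost(v)\le 2|Y|=2\opt(X)$, completing the proof.
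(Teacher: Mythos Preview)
Your setup is correct: for each crossing $(p_j,p_{j+1})$ at a node $v$, the aligned witnesses $w,w'$ produced by Observation~\ref{obs: aligned point} indeed lie in $Y$, by the semi-permutation and consecutiveness arguments you give. The gap is in the ``main obstacle'' step: the claimed monotonicity does \emph{not} restrict column-aligned selections of a fixed $y$ to a single node.

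Take $X=\{(1,1),(3,5),(2,10),(4,15)\}$ with the balanced tree (root splits $\{1,2\}\mid\{3,4\}$), and the canonical optimal solution $Y=\{(1,5),(2,5),(2,15),(3,15)\}$. The point $y=(1,5)$ is the $\ell_1$-closest $Z$-point to $p_j=(1,1)$ both in $\rect_{(1,1),(2,10)}=[1,2]\times[1,10]$ (crossing at the left child) and in $\rect_{(1,1),(3,5)}=[1,3]\times[1,5]$ (crossing at the root), so it is selected as the column-aligned witness $w$ for $p_j=(1,1)$ at \emph{two} distinct nodes. Likewise $y=(2,5)$ is selected as the column-aligned witness $w'$ for the upper endpoint $(2,10)$ at both the root and the left child. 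Your shrinking-interval observation is correct---$q_v.y$ is non-increasing as $v$ ascends, so past some threshold $y$ leaves the rectangle---but below that threshold $y$ remains eligible at every node where a crossing occurs, and the crossing rectangle's $x$-extent changes arbitrarily, so nothing forces a different witness to be chosen. With a deeper tree one can stack $\Theta(\log n)$ such crossings sharing the same lower endpoint and the same closest aligned point, so the argument as written does not yield any constant charge bound per $y$.

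The paper's proof avoids per-point charging entirely. It proceeds by induction on the height of $T$: at the root $v$ it proves $\opt(X)\ge\opt(X^L)+\opt(X^R)+\cost(v)/2$ by showing that at least $\cost(v)/2$ rows carry a $Y$-point on the side of $L(v)$ where they have no $X$-point, and that collapsing each such row into an adjacent active row shrinks $Y^L$ (resp.\ $Y^R$) by one while preserving feasibility for $X^L$ (resp.\ $X^R$). This row-based accounting telescopes cleanly down the tree and sidesteps the multiplicity issue above.
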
 

The original proof of this fact is due to Wilber~\cite{wilber}, which was later presented in the geometric view by Demaine et al.~\cite{DHIKP09}, via the notion of {\em independent rectangles}. 
In \Cref{sec: WB proof}, we give a simple self-contained proof of this claim.

Combining Claims~\ref{claim: upper bound on OPT static} and \ref{claim: bounding WB by OPT}, we obtain the following immediate corollary.

\begin{corollary}\label{cor: upper bound on WB}
	For any semi-permutation $X$, 
	$\WB(X)\leq O(r(X) \log c(X))$.
\end{corollary}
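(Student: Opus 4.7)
The plan is straightforward: this is an immediate two-line combination of the two preceding claims, so my proof proposal is just to chain them together. By Claim~\ref{claim: bounding WB by OPT}, we have $\WB(X) \leq 2 \cdot \opt(X)$ for any semi-permutation $X$. By Claim~\ref{claim: upper bound on OPT static}, $\opt(X) \leq O(r(X) \log c(X))$. Substituting the second bound into the first gives $\WB(X) \leq 2 \cdot \opt(X) \leq O(r(X) \log c(X))$, which is exactly the claimed inequality.

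Since there is no real obstacle here, the only thing worth double-checking is that Claim~\ref{claim: upper bound on OPT static} indeed applies to every semi-permutation (not just permutations), which it does by its stated hypothesis, and that Claim~\ref{claim: bounding WB by OPT} is also stated for arbitrary semi-permutations, which it is. No additional setup, case analysis, or invocation of partitioning trees is required at this point, since all the work has been carried out in the preceding two claims. Accordingly, the proof in the paper should consist of nothing more than a single sentence citing both claims.
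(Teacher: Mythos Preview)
Your proposal is correct and matches the paper's approach exactly: the paper states the corollary as an immediate consequence of combining Claim~\ref{claim: upper bound on OPT static} and Claim~\ref{claim: bounding WB by OPT}, with no additional argument given.
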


\section{Geometric Decomposition Theorems} 
\label{sec: decomp}

In this section, we develop several technical tools that will allow us to decompose a given instance into a number of sub-instances. We then analyze the optimal solution costs and the Wilber bound values for the resulting subinstances.
We start by introducing a \emph{split operation} of a given instance into subinstances in \Cref{sec:split instance}. 
In \Cref{sec:decompose opt}, we present a decomposition property of the optimal solution cost, which will be used in our algorithms.  
Finally, in Section \ref{thm: wb}, we discuss the decomposition of the strong \WBone, that our negative results rely on.

\subsection{Split Instances}
\label{sec:split instance} 

Consider a semi-permutation $X$ and its partitioning tree $T$. Let $U\subseteq V(T)$ be a collection of vertices of the tree $T$,  such that the strips $\{S(v)\}_{v \in U}$ partition the bounding box. In other words, every root-to-leaf path in $T$ must contain exactly one vertex of $U$. For instance, $U$ may contain all vertices whose distance from the root of the tree $T$ is the same. We now define splitting an instance $X$ via the set $U$ of vertices of $T$.

\begin{definition}[A Split]
	A \emph{split} of $(X,T)$ at $U$ is a collection of instances $\{X^c, \{X^s_v\}_{v \in U} \}$, defined as follows.
	\begin{itemize}
		\item For each vertex $v \in U$, instance $X^s_v$ is called a {\bf strip instance}, and it contains all points of $X$ that lie in the interior of the strip $S(v)$.

		\item Instance $X^c$ is called a {\bf compressed instance}, and it is obtained from $X$ by collapsing, for every vertex $v \in U$, all active columns in the strip $S(v)$ into a single column.  
	\end{itemize}
\end{definition}

We also partition the tree $T$ into sub-trees that correspond to the new instances: for every vertex $v \in U$,  we let $T_v$ be the sub-tree of $T$ rooted at $v$. Observe that $T_v$ is a partitioning tree for instance $X^s_v$. 
The tree $T^c$ is obtained from $T$ by deleting from it, for all $v\in U$, all vertices of $V(T_v)\setminus \set{v}$. It is easy to verify that $T^c$ is a valid partitioning tree for instance $X^c$. The following lemma lists several basic properties of a split. Recall that, given an instance $X$, $r(X)$ and $c(X)$ denote the number of active rows and active columns in $X$, respectively.

\begin{observation}
\label{obs: trivial decomp}
        If $X$ is a semi-permutation, then the following properties hold for any $(X,T)$-split at $U$: 
        \begin{itemize}
                \item $\sum_{v \in U} r(X^s_v) = r(X)$ 
                \item $\sum_{v \in U} c(X^s_v) = c(X)$ 
                \item $c(X^c) \leq |U|$

		\item $\sum_{v \in U} \WB_{T_v}(X^s_v) + \WB_{T^c}(X^c) = \WB_T(X)$. 
        \end{itemize}
\end{observation}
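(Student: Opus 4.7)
The plan is to treat the four claims separately; the first three are essentially book-keeping, and the fourth requires a short case analysis on the internal vertices of $T$.

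For items~1 and~2, I would use that the strips $\{S(v)\}_{v\in U}$ partition the bounding box into internally disjoint vertical strips. Since $X$ is a semi-permutation, each active row of $X$ hosts exactly one point, and that point lies in exactly one strip $S(v)$, so the active rows of $X$ partition among the strip instances $X^s_v$. Similarly, every active column of $X$ is a vertical line lying entirely inside exactly one such strip (the strip boundaries are half-integral lines drawn from $\lset$, which avoid every active column), so the active columns also partition among the strip instances. Item~3 then follows directly from the definition of $X^c$: for every $v\in U$, the construction collapses all active columns inside $S(v)$ into a single representative column, so $X^c$ has at most $|U|$ active columns.

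For item~4, the plan is to partition the internal vertices of $T$ into (a) proper ancestors of some $v\in U$, and (b) vertices that lie in $V(T_v)$ for some $v\in U$ and are internal in $T_v$ (equivalently, $v$ itself when $v$ is internal in $T$, or a proper descendant of $v$ in $T$). Since every root-to-leaf path of $T$ meets $U$ exactly once, this is a genuine partition. By construction, the internal vertices of $T^c$ are exactly the vertices of type~(a), and the internal vertices of $T_v$ are exactly the vertices of type~(b) associated with $v$. It then suffices to verify that $\cost(w)$ computed with respect to $(X,T)$ agrees with $\cost(w)$ computed with respect to the relevant sub-instance.

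For a type~(b) vertex $w$ sitting inside $T_v$, the line $L(w)$ lies entirely inside $S(v)$ and $S(w)\subseteq S(v)$, so $X\cap S(w)=X^s_v\cap S(w)$ as point sets with identical $y$-coordinates, and the crossings of $L(w)$ counted in $(X,T)$ coincide with those counted in $(X^s_v,T_v)$. For a type~(a) vertex $w$, the line $L(w)$ avoids the interior of every $S(v')$ with $v'\in U$ (each such $S(v')$ lies entirely on one side of $L(w)$), so compressing each $S(v')$ to a single representative column never moves a point across $L(w)$; moreover, since compression preserves $y$-coordinates, the vertical ordering of points inside $S_T(w)$ coincides with that of their images inside $S_{T^c}(w)$. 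This puts the crossings of $L(w)$ in $(X,T)$ in bijection with those in $(X^c,T^c)$, so the costs agree. Summing $\cost(w)$ over the two parts of the partition yields the desired identity. The only mildly delicate point is the compression argument for type~(a) vertices, but the distinct-$y$-coordinate property of a semi-permutation makes the bijection on crossings clean.
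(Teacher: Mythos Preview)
Your proposal is correct and follows essentially the same approach as the paper's own argument. The paper gives only a two-sentence sketch---assign each vertex of $T$ to the unique subtree in $\{T^c\}\cup\{T_v\}_{v\in U}$ where it is non-leaf (sending $v\in U$ to $T_v$), and observe that its cost there equals its cost in $T$---while you have spelled out the details of why the cost is preserved in the two cases, which is exactly what is needed.
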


The first property holds since $X$ is a semi-permutation.
In order to establish the last property, consider any vertex $x\in V(T)$, and let $T'\in \set{T^c}\cup \set{T_v}_{v\in U}$ be the new tree to which $v$ belongs; if $x\in U$, then we set $T'=T_x$. It is easy to see that the cost of $v$ in tree $T'$ is the same as its cost in the tree $T$ (recall that the cost of a leaf vertex is $0$).

The last property can be viewed as a ``perfect decomposition'' property of the weak \WBone. 
In \Cref{thm: wb}, we will show an (approximate) decomposition property of strong \WBone.

\paragraph{Splitting by Lines.} 
We can also define the splitting with respect to any subset $\lset'\subseteq \lset$ of the auxiliary columns for $X$, analogously:
Notice that the lines in $\lset'$ partition the bounding box $B$ into a collection of internally disjoint strips, that we denote by $\set{S'_1,\ldots,S'_k}$.
We can then define the strip instances $X^s_i$ as containing all vertices of $X\cap S_i$ for all $1\leq i\leq k$, and the compressed instance $X^c$, that is obtained by collapsing, for each $1\leq i\leq k$, all active columns that lie in strip $S_i$, into a single column.
We also call these resulting instances \emph{a split of $X$ by $\lset'$}. 

We can also consider an arbitrary ordering $\sigma$ of the lines in $\lset$, such that the lines of $\lset'$ appear at the beginning of $\sigma$, and let $U \subseteq V(T(\sigma))$ contain all vertices $u$ for which the strip $S(u)$ is in $\set{S_i}_{1\leq i\leq k}$.
If we perform a split of $(X,T)$ at $U$, we obtain exactly the same strip instances $X^s_1,\ldots,X^s_k$, and the same compressed  instance $X^c$.

\subsection{Decomposition Theorem for \opt}
\label{sec:decompose opt}  
The following theorem gives a crucial decomposition property of $\opt$. The theorem is used in our algorithm for \minsat.

\begin{theorem}
        \label{thm: opt decomposes} 
        Let $X$ be a semi-permutation, let $T$ be any partitioning tree for $X$, let $U\subseteq V(T)$ be a subset of vertices of $T$ such that the strips in $\set{S(v)\mid v\in U}$ partition the bounding box, and let $\{X^c, \{X^s_v\}_{v \in U}\}$ be an $(X,T)$-split at $U$. Then:
        
         $$\sum_{v \in U}\opt(X^s_v) + \opt(X^c) \leq \opt(X).$$
\end{theorem}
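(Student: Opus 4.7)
The strategy is to start from an optimal canonical solution $Y^*$ for $X$ (which exists by Observation~\ref{obs: canonical solutions}) and to extract from it feasible solutions $\tilde Y^s_v$ for each $X^s_v$ and $\tilde Y^c$ for $X^c$ whose total cardinality is at most $|Y^*| = \opt(X)$. Since $Y^*$ is canonical, all of its points lie on active (integer) columns of $X$, whereas the strip boundaries have half-integral $x$-coordinates; hence $Y^*$ is disjoint from every strip boundary and decomposes cleanly as $Y^* = \bigsqcup_{v\in U} Y^*_v$ with $Y^*_v = Y^* \cap S(v)$.

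First I would check that each $Y^*_v$ is itself a feasible solution for $X^s_v$. Indeed, for any non-collinear pair $p, q \in X^s_v \cup Y^*_v$, the rectangle $\rect_{p,q}$ is entirely contained in the strip $S(v)$, so the satisfier guaranteed by the satisfied set $X \cup Y^*$ must itself lie inside $X^s_v \cup Y^*_v$. This gives $\opt(X^s_v) \le |Y^*_v|$ for every $v$. For the compressed instance, let $\pi$ denote the column-compression map that collapses every strip $S(v)$ onto its representative column; applying Observation~\ref{obs: collapsing columns} to $X \cup Y^*$ shows that $\pi(X \cup Y^*)$ is satisfied, so $\pi(Y^*)$ is already a feasible solution for $X^c$.

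The crux of the argument is a tight accounting that avoids a factor-$2$ loss when combining the two bounds above. The plan is: for each $v$, pick a subset $\tilde Y^s_v \subseteq Y^*_v$ of size exactly $\opt(X^s_v)$ that is still feasible for $X^s_v$, and let $R := Y^* \setminus \bigcup_v \tilde Y^s_v$ be the residual set (with $|R| = |Y^*| - \sum_v \opt(X^s_v)$). One then sets $\tilde Y^c := \pi(R)$ and shows that it is still feasible for $X^c$. Summing cardinalities then gives
\[
\sum_{v \in U}|\tilde Y^s_v| + |\tilde Y^c| \;\le\; \sum_{v \in U}\opt(X^s_v) + |R| \;=\; |Y^*| \;=\; \opt(X),
\]
which is exactly the statement of the theorem.

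The hard part will be justifying the two non-trivial claims in this construction: (a) for every $v$, some subset of $Y^*_v$ of size exactly $\opt(X^s_v)$ is still feasible for $X^s_v$, and (b) $\pi(R) \cup X^c$ is satisfied. Claim (a) should follow from a local exchange inside $Y^*_v$ that exploits the fact that $Y^*_v$ is feasible and that \minsat solutions have a monotone-type structure under removing ``redundant'' points. Claim (b), the heart of the proof, should follow from a geometric exchange: a point used as an intra-strip satisfier inside some $\tilde Y^s_v$ can only witness pairs whose rectangle lies within a single strip, while every cross-strip pair in $X^c$ has a witnessing rectangle spanning multiple strips, so any such pair's satisfier in $X \cup Y^*$ either already lies in $X$ (hence in $X^c$) or in $R$ (hence in $\pi(R)$). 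I expect (b) to be the main technical obstacle and to rely on a careful case analysis of where satisfiers live relative to the strip decomposition.
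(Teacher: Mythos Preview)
Your overall plan has the right shape, but claim (a) is \emph{false} as stated, and with it your entire accounting collapses. It is simply not true that a feasible solution $Y^*_v$ for $X^s_v$ must contain a feasible \emph{subset} of size $\opt(X^s_v)$. Here is a tiny counterexample: take $X^s_v=\{(1,1),(2,10)\}$, so $\opt(X^s_v)=1$ (e.g.\ $(1,10)$ works). Suppose row $5$ is active for $X$ but not for $X^s_v$, and that $Y^*_v=\{(1,5),(2,5)\}$. One checks that $X^s_v\cup Y^*_v$ is satisfied, yet neither singleton subset of $Y^*_v$ is feasible: $\{(1,5)\}$ leaves the pair $((1,5),(2,10))$ unsatisfied, and $\{(2,5)\}$ leaves $((1,1),(2,5))$ unsatisfied. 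So you cannot in general peel $Y^*_v$ down to size $\opt(X^s_v)$ by taking subsets, and your residual set $R$ is not well-defined. Your sketch for (b) is also circular: the points you placed into $\tilde Y^s_v$ were never characterised as ``intra-strip satisfiers'', so nothing prevents a cross-strip pair in $X^c$ from needing precisely one of them.

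The paper's proof avoids this trap by not taking subsets at all. The key quantity is, for each $v$, the number $m_v$ of rows that are active for $X$ but \emph{inactive} for $X^s_v$ and yet carry a point of $Y^*_v$. Two claims are then proved: (i) projecting onto the collapsed columns only those points of $Y^*_v$ that lie on such rows already gives a feasible solution to $X^c$ of size at most $\sum_v m_v$; and (ii) $\opt(X^s_v)\le |Y^*_v|-m_v$, established not by deleting points but by \emph{collapsing} each such inactive row onto an adjacent $X^s_v$-active row, which (by Observation~\ref{obs: collapsing columns} plus a short argument that at least one collision occurs) shrinks the solution by one per collapse. Summing $(|Y^*_v|-m_v)+\sum_v m_v=|Y^*|$ gives the theorem. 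In your counterexample this is exactly what happens: $m_v=1$, collapsing row $5$ onto row $1$ (or $10$) merges the two points of $Y^*_v$ into one, and the single projected point on row $5$ suffices for the compressed instance. The missing idea in your plan is this row-collapsing trick; the subset-extraction route does not work.
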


\begin{proof}
        Let $Y$ be an optimal canonical solution for $X$, so that every point of $Y$ lies on an active row and an active column for $X$. For each vertex $v \in U$, let $Y_v$ denote the set of points of $Y$ that lie in the strip $S(v)$; recall that these points must lie in the interior of the strip. Therefore, $Y = \bigcup_{v \in U} Y_v$.

        For every vertex $v\in U$, let $\rset(v)$ denote the set of all rows $R$, such that: (i) $R$ contains a point of $X$; (ii) $R$ contains no point of $X_v^s$; and (iii) at least one point of $Y_v$ lies on $R$. We let $m_v=|\rset(v)|$.
        We need the following claim.
         
        \begin{claim}
                There is a feasible solution $\hat{Y}$ to instance $X^c$, containing at most $\sum_{v \in U} m_v$ points. 
        \end{claim}
        \begin{proof}
                We construct the solution $\hat{Y}$ for $X^c$ as follows. 
                Consider a vertex $v\in U$. Let $C_v$ be the unique column into which the columns lying in the strip $S(v)$ were collapsed. For every point $p\in Y_v$ that lies on a row $R\in\rset(v)$, we add a new point $\phi(p)$ on the intersection of row $R$ and column $C_v$ to the solution $\hat Y$. Once we process all vertices $v\in U$, we obtain a final set of points $\hat Y$. It is easy to verify that  $|\hat{Y}| = \sum_v m_v$. 
In order to see that $\hat Y$ is a feasible solution to instance $X^c$, it is enough to show that the set $X^c\cup \hat Y$ of points is satisfied. 
Notice that set $X\cup Y$ of points is satisfied, and set $X^c\cup \hat Y$ is obtained from $X\cup Y$ by collapsing sets of active columns lying in each strip $S(v)$ for $v\in U$. 
From Observation~\ref{obs: collapsing columns}, the point set $X^c\cup \hat Y$ is satisfied.  
\end{proof}
        
        We now consider the strip instances $X^s_v$ and prove the following claim, that will complete the proof of the lemma. 
        
        \begin{claim}
                For each vertex $v \in U$, $\opt(X^s_v) \leq |Y_v| - m_v$. 
        \end{claim}
        \begin{proof}
                Notice first that the point set $X^s_v \cup Y_v$ must be satisfied. 
                We will modify point set $Y_v$, to obtain another set $Y'_v$, so that $Y'_v$ remains a feasible solution for $X^s_v$, and $|Y'_v| \le |Y_v| - m_v$.
                
                In order to do so, we perform $m_v$ iterations. In each iteration, we will decrease the size of $Y_v$ by at least one, while also decreasing the cardinality of the set $\rset(v)$ of rows by exactly $1$, and maintaining the feasibility of the solution $Y_v$ for $X_v^s$.  
                
                In every iteration, we select two arbitrary rows $R$ and $R'$, such that: (i) $R\in \rset(v)$; (ii) $R'$ is an active row for instance $X^s_v$, and (iii) no point of $Y_v\cup X^s_v$ lies strictly between rows $R$ and $R'$. 
                We collapse the rows $R$ and $R'$ into the row $R'$. 
From Observation \ref{obs: collapsing columns}, the resulting new set $Y_v$ of points remains a feasible solution for instance $X^s_v$. 
We claim that $|Y_v|$ decreases by at least $1$. In order to show this, it is enough to show that there are two points $p,p'\in X^s_v\cup Y_v$, with $p\in R$, $p'\in R'$, such that the $x$-coordinates of $p$ and $p'$ are the same; in this case, after we collapse the rows, $x$ and $x'$ are mapped to the same point. Assume for contradiction that no such two points exist. Let $p\in R\cap (X_v^s\cup Y_v)$, $p'\in R'\cap Y_v$ be a pair of points with smallest horizontal distance. Such points must exist since $R$ contains a point of $X_v^s$ and $R'$ contains a point of $Y_v$. But then no other point of  $X^s_v\cup Y_v$ lies in $\rect_{p,p'}$, so the pair $(p,p')$ is not satisfied in $X^s_v\cup Y_v$, a contradiction.
        \end{proof}
        
\end{proof}

\subsection{Decomposition Theorem for the Strong \WBone}
\label{thm: wb}

In this section we prove the following theorem about the strong \WBone, that we use several times in our negative result.

\begin{theorem}\label{thm: WB for corner and strip instances}
	Let $X$ be a semi-permutation and $T$ be a partitioning tree for $X$. Let $U\subseteq V(T)$ be a set of vertices of $T$ such that the strips in $\set{S(v)\mid v\in U}$ partition the bounding box.
Let  $\{X^c, \{X^s_v\}_{v\in U} \}$ be  the split of $(X,T)$ at $U$.
	Then:
	
	\[\WB(X)\leq 4\WB(X^c)+8\sum_{v\in U}\WB(X^s_v)+O(|X|).   \]
\end{theorem}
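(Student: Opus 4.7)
The plan is to pick a partitioning tree $T^*$ achieving $\WB(X) = \WB_{T^*}(X)$ and use it to induce partitioning trees $T^c$ for $X^c$ and $T^s_v$ for each $X^s_v$. Specifically, $T^c$ will be defined by taking the order in which $T^*$ cuts the boundary lines (the lines in $\lset$ separating two adjacent strips in $\set{S(v)}_{v \in U}$), and $T^s_v$ by taking the order in which $T^*$ cuts the auxiliary columns lying strictly inside $S(v)$. Since $\WB(X^c) \ge \WB_{T^c}(X^c)$ and $\WB(X^s_v) \ge \WB_{T^s_v}(X^s_v)$ by definition of the strong Wilber bound, it will suffice to show
\[ \WB_{T^*}(X) \le 4 \WB_{T^c}(X^c) + 8 \sum_{v \in U} \WB_{T^s_v}(X^s_v) + O(|X|). \]

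I would analyze this bound line by line. For each auxiliary column $L$ of $X$, let $S^*(L)$ be the $T^*$-strip containing $L$ at the time it is cut, and $c^*(L)$ its cost. For a boundary line $L_b$, the strip $S^*(L_b)$ is contained in the analogous strip in $T^c$, because $T^*$ may have additionally cut internal lines before $L_b$, and such cuts only restrict the strip further. Since collapsing strips in forming $X^c$ preserves the left/right side of every point relative to $L_b$, the time-ordered sequence of $X \cap S^*(L_b)$, colored by side of $L_b$, is a subsequence of the corresponding sequence for $X^c$; a standard argument then shows that subsequencing cannot increase the number of left/right alternations, yielding $c^*(L_b) \le c_{T^c}(L_b)$. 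For an internal line $L$ inside some strip $S(v)$, the direction reverses: $S^*(L)$ may extend beyond $S(v)$ into neighboring strips, so the sequence for $X \cap S^*(L)$ is a supersequence of the one for $X^s_v \cap S^s_v(L) = X \cap (S(v) \cap S^*(L))$. Since inserting $k$ points into a two-colored sequence can increase alternations by at most $2k$, this would give
\[ c^*(L) \le c_{T^s_v}(L) + 2 \cdot |X \cap (S^*(L) \setminus S(v))|. \]

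Summing over all auxiliary columns reduces the theorem to bounding the total overflow
\[ \Sigma := \sum_{L \text{ internal}} |X \cap (S^*(L) \setminus S(v_L))|, \]
where $v_L$ is the strip containing $L$. The hard part will be controlling $\Sigma$ by a constant multiple of $\WB_{T^c}(X^c) + \sum_{v} \WB_{T^s_v}(X^s_v) + |X|$. The key structural observation is that for each pair $(L, p)$ counted in $\Sigma$, with $L$ internal in $S(v_L)$ and $p \in S^*(L)$ lying in a different strip $S(v_p)$, the boundary line $L_b$ adjacent to $S(v_L)$ on $p$'s side must be cut in $T^*$ after $L$ --- otherwise $L_b$ would already have separated $p$ from $L$ before $L$'s cut. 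My plan is to charge each such pair either to a crossing counted in $c_{T^c}(L_b)$, or, when several pairs would overload the same $L_b$, to a crossing at an auxiliary column lying between $S(v_L)$ and $S(v_p)$ whose cost is counted in some $\WB_{T^s_{v'}}(X^s_{v'})$, with a per-point $O(1)$ slack absorbing boundary effects in $O(|X|)$. Calibrating the multiplicities of this charging, so that each target crossing receives $O(1)$ charge, is the technical crux and is where the constants $4$ and $8$ in the final inequality come from.
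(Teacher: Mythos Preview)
Your setup through Step~4 is fine: the bound $c^*(L_b)\le c_{T^c}(L_b)$ for boundary lines follows from the subsequence argument you give, and for an internal line $L$ inside $S(v)$ the identity $S^*(L)\cap S(v)=$ (the $T^s_v$-strip at $L$) holds, so the overflow bound $c^*(L)\le c_{T^s_v}(L)+2\,|X\cap(S^*(L)\setminus S(v))|$ is correct.

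The genuine gap is Step~5: you cannot bound $\Sigma$ in the way you propose. Take two adjacent strips $S(v_1),S(v_2)$, each with $m$ active columns, and let $T^*$ cut the internal lines $L_1,\dots,L_{m-1}$ of $S(v_1)$ left-to-right, then the boundary $L_b$, then the internals of $S(v_2)$. For every $j$, the strip $S^*(L_j)$ contains all of $S(v_2)$ (since $L_b$ is still uncut), so $|X\cap(S^*(L_j)\setminus S(v_1))|=m$ and hence $\Sigma\ge m(m-1)=\Theta(|X|^2)$. Meanwhile $\WB_{T^c}(X^c)\le 2m$, each $\WB_{T^s_{v_i}}(X^s_{v_i})\le O(m\log m)$, and $|X|=2m$, so the target budget is $O(m\log m)$. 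No $O(1)$-multiplicity charging can absorb $\Theta(m^2)$ pairs, and there are no auxiliary columns strictly between $S(v_1)$ and $S(v_2)$ to divert to. Note that the theorem's inequality \emph{does} hold in this example --- the actual alternation increase at each $L_j$ is only $O(1)$, not $2m$ --- so the failure is that $\Sigma$ is far too coarse an overcount, not that the strategy is close to working.

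The paper's proof takes a very different route and avoids $\Sigma$ altogether. It keeps the maximizing permutation $\sigma$ but introduces, for each block $\bset_i$, a random bit $b_i$ that designates the points of $X_i$ on one side of the first-cut line $L_i^*$ as \emph{forbidden}; crossings touching a forbidden point are discarded. A short expectation argument shows $\WB^F_\sigma(X)\ge \WB(X)/4-\sum_i\WB(X_i)$. With the forbidden structure in place, each block's lines can be consolidated (moving the first-cut line to the block boundary and pushing the rest to the position of the last-cut line) while losing at most $\WB(X_i)+O(|X_i|)$ per block --- the forbidden points are exactly what makes the per-line case analysis go through. The consolidated permutation then maps cleanly to a permutation of the auxiliary columns of $X^c$. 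So rather than charging overflow pairs, the paper pre-filters crossings to make the transformation lossless up to the stated constants; this is the idea your outline is missing.
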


This result is somewhat surprising. 
One can think of the expression $\WB(X^c)+\sum_{v\in U}\WB(X^s_v)$ as a WB-1 bound obtained by first cutting along the lines that serve as boundaries of the strips $S(v)$ for $v\in U$, and then cutting the individual strips. However, $\WB(X)$ is the maximum of $\WB_T(X)$ obtained over all trees $T$, including those that do not obey this partitioning order.

The remainder of this section is dedicated to the proof of \Cref{thm: WB for corner and strip instances}.
For convenience, we denote the instances $\{X^s_v\}_{v\in U}$ by $\set{X_1,\ldots,X_N}$, where the instances are indexed in the natural left-to-right order of their corresponding strips, and we denote the instance $X^c$ by $\tX$.
For each $1\leq i\leq N$, we denote by $\bset_i$ be the set of consecutive active columns containing the points of $X_i$, and we refer to it as a \emph{block}. For brevity, we also say ``Wilber bound'' to mean the strong \WBone in this section.

\paragraph{Forbidden Points.}
For the sake of the proof, we need the notion of forbidden points.
Let $\hX$ be some semi-permutation and $\hlset$ be the set of auxiliary columns for $\hX$.
Let $F\subseteq \hX$ be a set of points that we refer to as \emph{forbidden points}.
We now define the strong \WBone with respect to the forbidden points, $\WB^F(\hX)$.

Consider any permutation $\hsigma$ of the lines in $\hlset$. 
Intuitively, $\WB_{\hsigma}^F(\hX)$ counts all the crossings contributed to $\WB_{\hsigma}(\hX)$ but excludes all crossing pairs $(p,p')$ where at least one of $p,p'$ lie in $F$.
Similar to $\WB(X)$, we define $\WB^F(\hX) = \max_{\hsigma} \WB_{\hsigma}^F(\hX)$, where the maximum is over all permutations $\hsigma$ of the lines in $\hlset$.

Next, we define $\WB_{\hsigma}^F(\hX)$ more formally. 
Let $T=T(\hsigma)$ be the partitioning tree associated with $\hsigma$.
For each vertex $v \in V(T)$, let $L = L(v)$ be the line that belongs to $v$, and 
let $\cro_{\hsigma}(L)$ be the set of all crossings $(p,p')$ that contribute to $\cost(L)$; 
that is, $p$ and $p'$ are two points that lie in the strip $S(v)$ on two opposite sides of $L$, and no other point of $\hX\cap S(v)$ lies between the row of $p$ and the row of $p'$.
Let $\cro_{\hsigma} = \bigcup_{L\in \hlset}\cro_{\hsigma}(L)$.
Observe that $\WB_{\hsigma}(X) = |\cro_{\hsigma}|$ by definition.
We say that a crossing $(p,p')\in \cro_{\hsigma}(L)$ is \emph{forbidden} iff at least one of $p,p'$ lie in $F$; otherwise the crossing is \emph{allowed}. We let $\cro^F_{\hsigma}(L)$ be the set of crossings obtained from $\cro_{\hsigma}(L)$ by discarding all forbidden crossings. We then let $\cro^F_{\hsigma}=\bigcup_{L\in \hlset}\cro^F_{\hsigma}(L)$, and $\WB^F_{\hsigma}(\hX)=|\cro^F_{\hsigma}|$.

We emphasize that $\WB^F(\hX)$ is not necessarily the same as $\WB(\hX\setminus F)$, as some crossings of the instance $\hX\setminus F$ may not correspond to allowed crossings of instance $\hX$.

\paragraph{Proof Overview and Notation.}

Consider first the compressed instance $\tX$, that is a semi-permutation. We denote its set of active columns by $\tcset=\set{C_1,\ldots,C_N}$, where the columns are indexed in their natural left-to-right order. Therefore, $C_i$ is the column that was obtained by collapsing all active columns in strip $S_i$. %
It would be convenient for us to slightly modify the instance $\tX$ by simply multiplying all $x$-coordinates of the points in $\tX$ and of the columns in $\tcset$ by factor $2$. Note that this does not affect the value of the optimal solution or of the Wilber bound, but it ensures that every consecutive pair of columns in $\tcset$ is separated by a column with an integral $x$-coordinate. We let $\tlset$ be the set of all vertical lines with half-integral coordinates in the resulting instance $\tX$.

Similarly, we modify the original instance $X$, by inserting, for every consecutive pair $\bset_i,\bset_{i+1}$ of blocks, a new column with an integral coordinate that lies between the columns of $\bset_{i}$ and the columns of $\bset_{{i+1}}$. This transformation does not affect the optimal solution cost or the value of the Wilber bound. For all $1\leq i\leq N$, we denote $q_i=|\bset_i|$.
We denote by $\lset$ the set of all vertical lines with half-integral coordinates in the resulting instance $X$.

Consider any block $\bset_i$. We denote by $\lset_i=\set{L_i^1,\ldots,L_i^{q_i+1}}$ the set of $q_i+1$ consecutive vertical lines in $\lset$, where $L_i^1$ appears immediately before the first column of $\bset_i$, and $L_i^{q_i+1}$ appears immediately after the last column of $\bset_i$. Notice that $\lset=\bigcup_{i=1}^N\lset_i$.

Recall that our goal is to show that $\WB(X)\leq 4\WB(\tX)+8\sum_{i=1}^N\WB(X_i)+O(|X|)$. In order to do so, we fix a permutation $\sigma$ of $\lset$ that maximizes $\WB_{\sigma}(X)$, so that $\WB(X)=\WB_{\sigma}(X)$. We then gradually transform it into a permutation $\tsigma$ of $\tlset$, such that $\WB_{\tsigma}(\tX)\geq \WB_{\sigma}(X)/4-2\sum_{i=1}^N\WB(X_i)-O(|X|)$. This will prove that $\WB(X)\leq 4\WB(\tX)+8\sum_{i=1}^N\WB(X_i)+O(|X|)$.

In order to perform this transformation, we will process every block $\bset_i$ one-by-one. When block $\bset_i$ is processed, we will ``consolidate'' all lines of $\lset_i$, so that they will appear almost consecutively in the permutation $\sigma$, and we will show that this process does not increase the Wilber bound by too much. The final permutation that we obtain after processing every block $\bset_i$ can then be naturally transformed into a permutation $\tsigma$ of $\tlset$, whose Wilber bound cost is similar. The main challenge is to analyze the increase in the Wilber bound in every iteration. In order to facilitate the analysis, we will work with the Wilber bound with respect to forbidden points. Specifically, we will define a set $F\subseteq X$ of forbidden points, such that $\WB_{\sigma}^F(X)\geq \WB_{\sigma}(X)/4-\sum_{i=1}^N\WB(X_i)$. For every block $\bset_i$, we will also define a bit $b_i\in \set{0,1}$, that will eventually guide the way in which the lines of $\lset_i$ are consolidated. As the algorithm progresses, we will modify the set $F$ of forbidden points by discarding some points from it, and we will show that the increase in the Wilber bound with respect to the new set $F$ is small relatively to the original Wilber bound with respect to the old set $F$. We start by defining the set $F$ of forbidden points, and the bits $b_i$ for the blocks $\bset_i$. We then show how to use these bits in order to transform permutation $\sigma$ of $\lset$ into a new permutation $\sigma'$ of $\lset$, which will in turn be transformed into a permutation $\tsigma$ of $\tlset$.

From now on we assume that the permutation $\sigma$ of the lines in $\lset$ is fixed.

\paragraph{Defining the Set $F$ of Forbidden Points.}
Consider any block $\bset_i$, for $1\leq i\leq N$. We denote by $L^*_i\in \lset_i$ the vertical line that appears first in the permutation $\sigma$ among all lines of $\lset_i$, and we denote by $L^{**}_i\in \lset_i$ the line that appears last in $\sigma$ among all lines of $\lset_i$. 

We perform $N$ iteration. In iteration $i$, for $1\leq i\leq N$, we consider the block $\bset_i$. We let $b_i\in \set{0,1}$ be a bit chosen uniformly at random, independently from all other random bits. If $b_i=0$, then all points of $X_i$ that lie to the left of $L^*_i$ are added to the set $F$ of forbidden points; otherwise, all points of $X_i$ that lie to the right of $L^*_i$ are added to the set $F$ of forbidden points. We show that the expected number of the remaining crossings is large.

\begin{claim}\label{claim: many expected crossings}
	The expectation, over the choice of the bits $b_i$, of $|\cro^F_{\sigma}|$ is at least $|\WB(X)|/4-\sum_{i=1}^N\WB(X_i)$.
\end{claim}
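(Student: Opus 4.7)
The plan is to apply linearity of expectation,
\[
E\bigl[|\cro^F_\sigma|\bigr] \;=\; \sum_{(p,p')\in \cro_\sigma}\Pr\bigl[(p,p')\text{ is allowed}\bigr],
\]
and to lower-bound each term according to whether the two endpoints of the crossing lie in the same block or in different blocks. Call a crossing $(p,p')$ \emph{inter-block} when $p\in X_j$ and $p'\in X_{j'}$ with $j\neq j'$, and \emph{intra-block} otherwise; let $A_j$ denote the set of intra-block crossings contained in block $\bset_j$. The goal will be to handle each inter-block crossing with probability exactly $1/4$ and to bound the total number of intra-block crossings by $\sum_{j}\WB(X_j)$.

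For the random-bit analysis, first observe that since $L^*_j$ has half-integral $x$-coordinate, every point $p\in X_j$ lies strictly on one side of $L^*_j$, and the event $p\in F$ is determined by whether $b_j$ matches that side; hence $\Pr[p\in F]=1/2$. For an inter-block crossing $(p,p')$ with $p\in X_j$ and $p'\in X_{j'}$, $j\neq j'$, the events $p\in F$ and $p'\in F$ depend on the independent bits $b_j$ and $b_{j'}$, so $\Pr[(p,p')\text{ allowed}] = \Pr[p\notin F]\,\Pr[p'\notin F]=1/4$. Intra-block crossings are discarded using the trivial bound $\Pr\geq 0$.

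The main obstacle is proving $|A_j| \leq \WB(X_j)$ for each block. To this end, I will let $\sigma_j$ be the restriction of $\sigma$ to $\{L_j^2,\ldots,L_j^{q_j}\}$, which is exactly the set of auxiliary columns of $X_j$, and exhibit an injection from $A_j$ into $\cro_{\sigma_j}$. Note that if $(p,p')\in A_j$ is a crossing of a line $L$, then $L$ must be an interior auxiliary column of $\bset_j$, so $L\in\sigma_j$. The key geometric claim is: for every such $L$, if $v$ owns $L$ in $T(\sigma)$ and $v'$ owns $L$ in $T(\sigma_j)$, then $S'(v') = S(v)\cap \bset_j$. I will establish this by a short case analysis on the nearest preceding line of $\sigma$ on each side of $L$, distinguishing whether it lies outside $\bset_j$, coincides with a block-boundary line ($L_j^1$ or $L_j^{q_j+1}$), or is an interior auxiliary column of $X_j$; in each case the corresponding boundary of $S'(v')$ matches that of $S(v)\cap \bset_j$. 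Since $X_j\subseteq \bset_j$, this gives $X_j\cap S'(v') = X_j\cap S(v)$, so a pair in $A_j$ that is consecutive in $X\cap S(v)$ and contained in $X_j$ is automatically consecutive in $X_j\cap S'(v')$, producing a crossing of $L$ in $\cro_{\sigma_j}$. As distinct elements of $A_j$ map to distinct crossings, this yields $|A_j| \leq \WB_{\sigma_j}(X_j) \leq \WB(X_j)$.

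Combining the three pieces,
\[
E\bigl[|\cro^F_\sigma|\bigr] \;\geq\; \tfrac{1}{4}\Bigl(|\cro_\sigma|-\sum_{j=1}^{N}|A_j|\Bigr) \;\geq\; \tfrac{\WB(X)}{4}-\tfrac{1}{4}\sum_{j=1}^{N}\WB(X_j) \;\geq\; \tfrac{\WB(X)}{4}-\sum_{j=1}^{N}\WB(X_j),
\]
which is exactly the bound asserted by the claim.
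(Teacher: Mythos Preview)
Your proof is correct and follows essentially the same approach as the paper: both arguments use linearity of expectation, both show that any crossing with $p,p'\in X_j$ corresponds to a crossing in $\WB_{\sigma_j}(X_j)$ (your injection argument spells out what the paper states in one sentence), and both observe that the remaining crossings survive with probability at least $1/4$. Your case split (intra-block vs.\ inter-block) is slightly coarser than the paper's (which separates only intra-block crossings straddling $L^*_i$), so you even get the sharper intermediate bound $\tfrac{\WB(X)}{4}-\tfrac{1}{4}\sum_j\WB(X_j)$ before relaxing to the claimed inequality.
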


\begin{proof}
	Consider any crossing $(p,p')\in \cro_{\sigma}$. We consider two cases. Assume first that there is some index $i$, such that both $p$ and $p'$ belong to $X_i$, and they lie on opposite sides of $L^*_i$. In this case, $(p,p')$ becomes a forbidden crossing with probability $1$. However, the total number of all such crossings is bounded by $\WB(X_i)$. Indeed, if we denote by $\hlset_i$ the set of all vertical lines with half-integral coordinates for instance $X_i$, then permutation $\sigma$ of $\lset$ naturally induces permutation $\sigma_i$ of $\hlset_i$. Moreover, any crossing $(p,p')\in\cro_{\sigma}$ with $p,p'\in X_i$ must also contribute to the cost of $\sigma_i$ in instance $X_i$. Since the cost of $\sigma_i$ is bounded by $WB(X_i)$, the number of crossings  $(p,p')\in \cro_{\sigma}$ with $p,p'\in X_i$ is bounded by $WB(X_i)$.
	
	Consider now any crossing $(p,p')\in \cro_{\sigma}$, and assume that there is no  index $i$, such that both $p$ and $p'$ belong to $X_i$, and they lie on opposite sides of $L^*_i$. Then with probability at least $1/4$, this crossing remains allowed. Therefore, the expectation of $|\cro^F_{\sigma}|$ is at least $|\cro_{\sigma}|/4-\sum_{i=1}^N\WB(X_i)=|\WB_{\sigma}(X)|/4-\sum_{i=1}^N\WB(X_i)=|\WB(X)|/4-\sum_{i=1}^N\WB(X_i) $.
\end{proof}

From the above claim, there is a choice of the bits $b_1,\ldots,b_N$, such that, if we define the set $F$ of forbidden points with respect to these bits as before, then $|\cro^F_{\sigma}|\geq \WB(X)/4-\sum_{i=1}^N\WB(X_i)$. From now on we assume that the values of the bits $b_1,\ldots,b_N$ are fixed, and that the resulting set $F$ of forbidden points satisfies that $|\cro^F_{\sigma}|\geq \WB(X)/4-\sum_{i=1}^N\WB(X_i)$.

\paragraph{Transforming $\sigma$ into $\sigma'$.}
We now show how to transform the original permutation $\sigma$ of $\lset$ into a new permutation $\sigma'$ of $\lset$, which we will later transform into a permutation $\tsigma$ of $\tlset$. We perform $N$ iterations. The input to the $i$th iteration is a permutation $\sigma_i$ of $\lset$ and a subset $F_i\subseteq F$ of forbidden points. The output of the iteration is a new permutation $\sigma_{i+1}$ of $\lset$, and a set $F_{i+1}\subseteq F_i$ of forbidden points. The final permutation is $\sigma'=\sigma_{N+1}$, and the final set $F_{N+1}$ of forbidden points will be empty. The input to the first iteration is $\sigma_1=\sigma$ and $F_1=F$. We now fix some $1\leq i\leq N$, and show how to execute the $i$th iteration. Intuitively, in the $i$th iteration, we consolidate the lines of $\lset_i$. Recall that we have denoted by $L^*_i,L^{**}_i\in \lset_i$ the first and the last lines of $\lset_i$, respectively, in the permutation $\sigma$. We only move the lines of $\lset_i$ in iteration $i$, so this ensures that, in permutation $\sigma_i$, the first line of $\lset_i$ that appears in the permutation is $L^*_i$, and the last line is $L^{**}_i$.

We now describe the $i$th iteration. Recall that we are given as input a permutation $\sigma_i$ of the lines of $\lset$, and a subset $F_i\subseteq F$ of forbidden points. 
We consider the block $\bset_i$ and the corresponding bit $b_i$.

\begin{figure}
	\begin{centering}
		\includegraphics[width=0.75\textwidth]{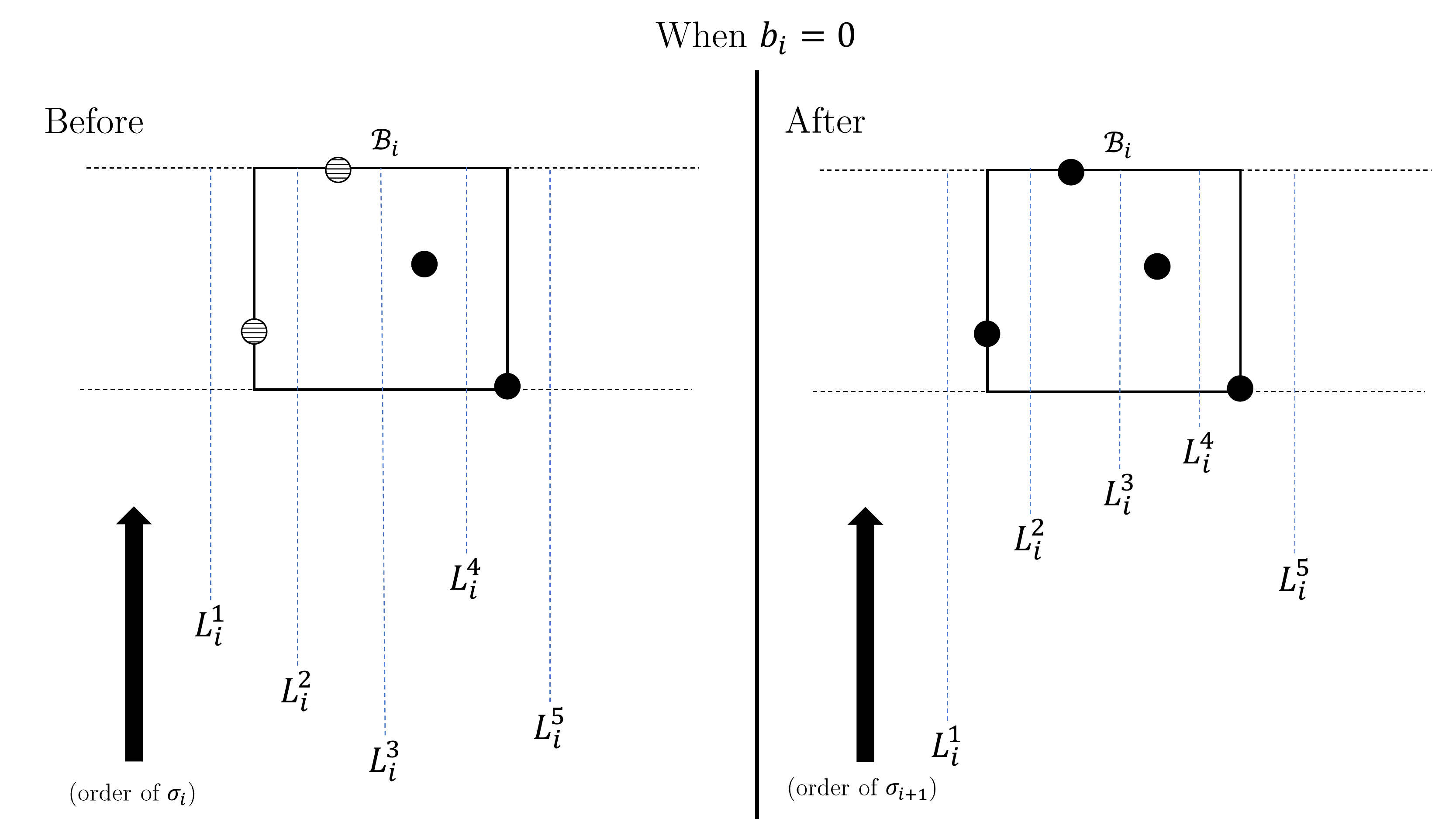}
		\par\end{centering}
	\begin{centering}
		\includegraphics[width=0.75\textwidth]{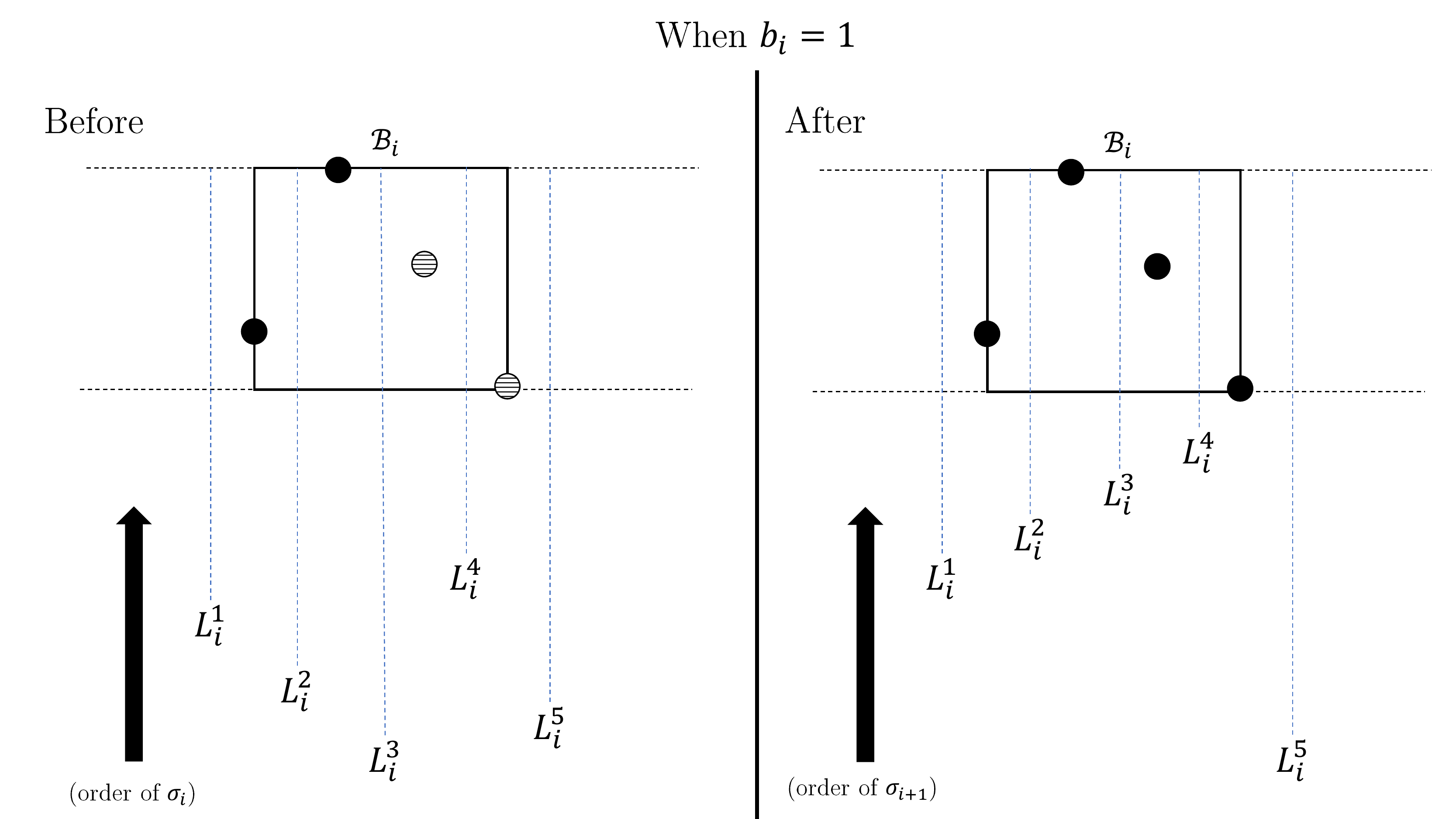}
		\par\end{centering}
	\caption{Modification from $\sigma_{i}$ to $\sigma_{i+1}$. In the figure,
		$\protect\lset_{i}=\{L_{i}^{1},\dots,L_{i}^{5}\}$, $L_{i}^{*}=L_{i}^{3}$
		and $L_{i}^{**}=L_{i}^{4}$. Points with horizontal strips are forbidden.\label{fig:change sigma}}
\end{figure}

Assume first that $b_i=0$; recall that in this case, all points of $X$ that lie on the columns of $\bset_i$ to the left of $L^*_i$ are forbidden (see \Cref{fig:change sigma}). We start by switching the locations of $L^*_i$ and $L^1_i$ in the permutation $\sigma_i$ (recall that $L^1_i$ is the leftmost line in $\lset_i$). Therefore, $L^1_i$ becomes the first line of $\lset_i$ in the resulting permutation. Next, we
consider the location of line $L^{**}_i$ in $\sigma_i$, and we place the lines $L_i^{q_i+1},L_i^2,L_i^3,\ldots,L_i^{q_i}$ in that location, in this order. This defines the new permutation $\sigma_{i+1}$.

Assume now that $b_i=1$;  recall that in this case, all points of $X$ that lie on the columns of $\bset_i$ to the right of $L^*_i$ are forbidden  (see \Cref{fig:change sigma}). We start by switching the locations of $L^*_i$ and $L_i^{q_i+1}$ in the permutation $\sigma_i$ (recall that $L_i^{q_i+1}$ is the rightmost line in $\lset_i$). Therefore, $L_i^{q_i+1}$ becomes the first line of $\lset_i$ in the resulting permutation. Next, we
consider the location of line $L^{**}_i$ in $\sigma_i$, and we place the lines $L_i^1,L_i^2,L_i^3,\ldots,L_i^{q_i}$ in that location, in this order. This defines the new permutation $\sigma_{i+1}$.

Lastly, we discard from $F_i$ all points that lie on the columns of $\bset_i$, obtaining the new set $F_{i+1}$ of forbidden points.

Once every block $\bset_i$ is processed, we obtain a final permutation $\sigma_{N+1}$ that we denote by $\sigma'$, and the final set $F_{N+1}=\emptyset$ of forbidden lines. The following lemma is central to our analysis. It shows that the Wilber bound does not decrease by much after every iteration. The Wilber bound is defined with respect to the appropriate sets of forbidden points.

\begin{lemma}\label{lemma: small change}
	For all $1\leq i\leq N$,  $\WB^{F_{i+1}}_{\sigma_{i+1}}(X)\geq \WB^{F_i}_{\sigma_i}(X)-\WB(X_i)-O(|X_i|)$.
\end{lemma}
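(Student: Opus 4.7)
The plan is to exhibit, for the case $b_i=0$ (the case $b_i=1$ being symmetric, with $L_i^{q_i+1}$ and the right half of $\bset_i$ playing the roles of $L^1_i$ and the forbidden left half), an injection from the allowed contributing pairs of $(\sigma_i,F_i)$ into the allowed contributing pairs of $(\sigma_{i+1},F_{i+1})$, with deficit at most $\WB(X_i)+O(|X_i|)$. Since $F_{i+1}\subseteq F_i$, the ``allowed'' predicate is monotone between the two sides, so it suffices to analyze only the contribution condition: that $(p,p')$ is consecutive in $y$-order in its strip just before the first separator. I would classify each contributing pair by its relation to $\bset_i$---internal (both in $\bset_i$), external (neither), mixed (exactly one)---and handle the three classes in turn.

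For internal pairs the bound is $\WB(X_i)$ by the following observation: no line of $\lset\setminus\lset_i$ meets $\bset_i$, so the restriction of the partition to $\bset_i$ at any moment in $\sigma_i$ depends only on the placed subset of $\lset_i$. Consequently $\sigma_i$ induces a permutation of the interior auxiliary columns $\hlset_i$ of $X_i$, and every internal contributing pair of $\sigma_i$ is a contributing pair of $X_i$ under this induced permutation, hence bounded in total by $\WB(X_i)$. For external pairs and for the ``benign'' mixed pairs, I would verify preservation by case analysis on the first separator $L$ of $(p,p')$ and on the relative positions of $L$, $\sigma_i^{-1}(L^*_i)$, and $\sigma_i^{-1}(L^{**}_i)$. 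The structural feature that drives the preservation is that in $\sigma_{i+1}$ exactly one $\lset_i$-line, namely $L^1_i$, is placed early (at the old position of $L^*_i$), while all other $\lset_i$-lines are consolidated at the late position of $L^{**}_i$. When $(p,p')$ lie on the same side of $\bset_i$, the strip containing $\{p,p'\}$ in $\sigma_{i+1}$ just before $L$ is contained in the corresponding strip in $\sigma_i$ (swapping $L^*_i$ out for $L^1_i$ only moves the relevant strip boundary closer to the pair), so every witness that was ruled out in $\sigma_i$ remains ruled out in $\sigma_{i+1}$. When $(p,p')$ straddle $\bset_i$, either the first separator is a line of $\lset\setminus\lset_i$ placed before $\sigma_i^{-1}(L^*_i)$ (so no $\lset_i$-line is placed just before $L$ in either permutation, and the strips agree), or it is $L^*_i$ in $\sigma_i$ and $L^1_i$ in $\sigma_{i+1}$ at the identical position, again with identical pre-states.

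The delicate case is mixed pairs where the outside endpoint lies on the opposite side of $\bset_i$ from the forbidden half. Here $L^*_i$ is not a separator of $(p,p')$, so the first separator in $\sigma_i$ can be a line of $\lset\setminus\lset_i$ placed strictly after $\sigma_i^{-1}(L^*_i)$. In $\sigma_{i+1}$ the pre-state of this same line contains $L^1_i$ but not $L^*_i$, so the strip can extend strictly to the left, potentially admitting a forbidden point $w\in\bset_i$ left of $L^*_i$ with $p.y<w.y<p'.y$ as a new witness that breaks the consecutivity condition. I would bound the resulting losses via a charging argument that assigns each lost pair to its $\bset_i$-endpoint $p''$ and shows that each $p''$ is charged at most $O(1)$ times---the new witnesses admitted in $\sigma_{i+1}$'s enlarged strip come from a single column range (left of $L^*_i$ inside $\bset_i$), and a standard consecutivity argument on their $y$-coordinates caps the multiplicity of charges to any fixed $p''$. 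This yields a total loss of $O(|X_i|)$. The main technical obstacle is making this charging fully precise, especially when the first separator in $\sigma_{i+1}$ falls inside the consolidated block, which requires carefully tracking the strip-evolution through the internal order $L_i^{q_i+1},L_i^2,\dots,L_i^{q_i}$; the very choice of the forbidden set $F_i$ via $b_i$ is what isolates the extra witnesses to a single side of $L^*_i$ inside $\bset_i$, which makes the charging feasible. Summing the three classes then yields the claimed bound.
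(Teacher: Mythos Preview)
Your high-level decomposition into internal, external, and mixed pairs is reasonable, and your bound on internal pairs by $\WB(X_i)$ via the induced permutation on $\hlset_i$ matches the paper. However, the preservation claim for external same-side pairs is incorrect in a crucial case, and this gap cannot be patched by your charging scheme.

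Take $b_i=0$ and an external pair $(p,p')$ with both points to the \emph{right} of $\bset_i$, whose first separator $L\in\lset\setminus\lset_i$ appears in $\sigma_i$ between $L^*_i$ and $L^{**}_i$, and whose strip $S$ just before $L$ has its left boundary in $\lset_i$. In $\sigma_i$ that left boundary is some $\lset_i$-line at or right of $L^*_i$; in $\sigma_{i+1}$ the only $\lset_i$-line placed before $L$ is $L^1_i$, which is strictly to the \emph{left} of $L^*_i$. So the strip \emph{expands} leftward, not contracts---your claim that ``swapping $L^*_i$ out for $L^1_i$ only moves the relevant strip boundary closer to the pair'' is false here. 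A point $w$ of $X_i$ lying in this newly-added region with $y$-coordinate between $p.y$ and $p'.y$ can now destroy the consecutivity of $(p,p')$. Since neither $p$ nor $p'$ lies in $\bset_i$, your charging-to-the-$\bset_i$-endpoint argument does not apply, and there is no evident way to bound the number of such lost external pairs by $O(|X_i|)$.

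The paper closes exactly this gap with an idea your proposal does not exploit: the points $w$ admitted into the expanded strip all lie in $X_i$, and \emph{all} of $X_i$ has been removed from $F_{i+1}$. Hence each such $w$ is allowed under $F_{i+1}$, and one can injectively map the old crossing $(p_1,p_2)$ (with $p_1$ on the left of $L$) to a new allowed crossing $(a,p_2)$ in $\sigma_{i+1}$, where $a$ is the newly-admitted point closest in $y$-coordinate to $p_2$ (or $a=p_1$ if none exists). Thus the paper shows \emph{no loss at all} for separators $L\notin\lset_i$---including your ``delicate'' mixed pairs---and confines the entire deficit to the lines of $\lset_i\setminus\{L^*_i\}$, which it then bounds by $\WB(X_i)+O(|X_i|)$ via a clean type-1/type-2 split. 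Your monotonicity remark ``$F_{i+1}\subseteq F_i$ so allowed stays allowed'' is correct but runs in the wrong direction for this step: it is precisely the \emph{un}-forbidding of $X_i$ that makes the injection work when the strip grows. Organizing the analysis line-by-line (as the paper does) rather than pair-by-pair is what surfaces this mechanism cleanly.
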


Assume first that the lemma is correct. Recall that we have ensured that $\WB^{F_1}_{\sigma_1}(X)=\WB^F_{\sigma}(X)\geq \WB(X)/4-\sum_{i=1}^N\WB(X_i)$. Since $F_{N+1}=\emptyset$, this will ensure that:

\[\WB_{\sigma'}(X)\geq \WB^F_{\sigma}(X)-\sum_i\WB(X_i)-O(|X|)\geq \WB(X)/4-2\sum_i \WB(X_i)-O(|X|).\]

We now focus on the proof of the lemma.

\begin{proof}
	In order to simplify the notation, we denote $\sigma_i$ by $\hsigma$, $\sigma_{i+1}$ by $\hsigma'$. We also denote $F_i$ by $\hF$, and $F_{i+1}$ by $\hF'$.

	Consider a line $L\in \lset$. Recall that $\cro_{\hsigma}(L)$ is the set of all crossings that are charged to the line $L$ in permutation $\hsigma$. Recall that $\cro^{\hF}_{\hsigma}(L)\subseteq \cro_{\hsigma}(L)$ is obtained from the set  $\cro_{\hsigma}(L)$ of crossings, by discarding all crossings $(p,p')$ where $p\in \hF$ or $p'\in \hF$ holds. The set $\cro^{\hF'}_{\hsigma'}(L)$ of crossings is defined similarly.
	
	We start by showing that for every line $L\in \lset$ that does not lie in $\lset_i$, the number of crossings charged to it does not decrease, that is, $\cro^{\hF'}_{\hsigma'}(L)\geq\cro^{\hF}_{\hsigma}(L)$.
	
	\begin{claim}\label{claim: lines outside of block do not change}
		For every line $L\in \lset\setminus \lset_i$, $\cro^{\hF'}_{\hsigma'}(L)\geq\cro^{\hF}_{\hsigma}(L)$.
	\end{claim}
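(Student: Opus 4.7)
The plan is to fix any line $L\in\lset\setminus\lset_i$ and to compare the strip $S(v)$ owned by $L$ in $T(\hsigma)$ with the strip $S(v')$ owned by $L$ in $T(\hsigma')$. Since only the lines of $\lset_i$ are repositioned in passing from $\hsigma$ to $\hsigma'$ and the lines outside $\lset_i$ keep their relative order, the boundary of the strip on the side of $L$ opposite to $\bset_i$ is unchanged; only the boundary on the side facing $\bset_i$ can shift. Without loss of generality assume that $L$ lies strictly to the left of $\bset_i$ (the case when $L$ is to the right is symmetric, with the roles of $b_i=0$ and $b_i=1$ interchanged). In $\hsigma'$ the lines of $\lset_i$ occupy only two slots: a single-line \emph{first slot} at the former position of $L^*_i$, holding $L^1_i$ when $b_i=0$ and $L^{q_i+1}_i$ when $b_i=1$; and a consolidated \emph{second slot} at the former position of $L^{**}_i$, holding the remaining lines of $\lset_i$. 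The only non-trivial case is when the first slot lies before $L$ and the second slot lies after $L$, since otherwise the same subset of $\lset_i$ precedes $L$ in $\hsigma$ and in $\hsigma'$, giving $S(v)=S(v')$.

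When $b_i=0$, the first slot holds $L^1_i$, which sits at the left boundary of $\bset_i$, so the right boundary of $S(v')$ is at $L^1_i$ or further to the left, and $S(v')$ contains no column of $\bset_i$. In contrast, the right boundary of $S(v)$ lies at the closest-to-$L$ line of $\lset_i$ appearing before $L$ in $\hsigma$, call it $L^{j^*}_i$; since $L^*_i$ itself appears before $L$ in $\hsigma$, we have $j^*\le\alpha$, where $\alpha$ denotes the index with $L^*_i=L^\alpha_i$. Therefore $S(v')\subseteq S(v)$, and $S(v)\setminus S(v')$ is contained in the union of columns of $\bset_i$ strictly to the left of $L^*_i$. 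By the definition of $\hF$ for $b_i=0$, every point of $X$ on such a column lies in $X_i$ and is forbidden in $\hF$. It follows that every $(p,p')\in\cro^{\hF}_\hsigma(L)$ already has both endpoints in $S(v')$; since $X\cap S(v')\subseteq X\cap S(v)$ preserves $y$-consecutivity and $\hF'\subseteq\hF$ preserves non-forbiddenness, we obtain the set containment $\cro^{\hF}_\hsigma(L)\subseteq\cro^{\hF'}_{\hsigma'}(L)$.

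When $b_i=1$, the first slot holds $L^{q_i+1}_i$, which sits at the right boundary of $\bset_i$, and the comparison reverses: $S(v)\subseteq S(v')$, and every point in $X\cap(S(v')\setminus S(v))$ lies on a column of $\bset_i$ and hence is excluded from $\hF'=\hF\setminus\{\text{points on columns of }\bset_i\}$. Here I will apply a switching argument. For any $(p,p')\in\cro^{\hF}_\hsigma(L)$, let $q_1,\ldots,q_k\in X\cap(S(v')\setminus S(v))$ be the new points whose $y$-coordinates lie strictly between those of $p$ and $p'$. In the sequence $p,q_1,\ldots,q_k,p'$ the endpoints $p$ and $p'$ lie on opposite sides of $L$, so at least one adjacent pair in this sequence lies on opposite sides of $L$, yielding a crossing in $\cro_{\hsigma'}(L)$; all endpoints involved are non-forbidden in $\hF'$, since $p,p'\notin\hF\supseteq\hF'$ and each $q_r$ lies on a column of $\bset_i$ and hence $q_r\notin\hF'$. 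Since the $y$-intervals between consecutive points of $X\cap S(v)$ are pairwise disjoint, assigning to each old crossing one such new crossing yields an injection $\cro^{\hF}_\hsigma(L)\hookrightarrow\cro^{\hF'}_{\hsigma'}(L)$, which gives the desired cardinality bound. The main subtlety throughout is to maintain the delicate coherence between the choice of $b_i$, the line placed into the first slot, and the side of $L^*_i$ declared forbidden, so that shrinking always removes only forbidden points while growing always adds only points excluded from $\hF'$.
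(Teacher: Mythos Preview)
Your proof is correct and follows essentially the same approach as the paper's: both establish that the strip owned by $L$ either stays the same, shrinks by a region containing only $\hF$-forbidden points (giving a set containment $\cro^{\hF}_{\hsigma}(L)\subseteq\cro^{\hF'}_{\hsigma'}(L)$), or grows by a region containing only points of $X_i$ (giving an injection via the switching/mapping argument). The paper organizes the case split by the temporal position of $L$ relative to $L^*_i,L^{**}_i$ and by which strip boundary lies in $\lset_i$ (five cases), whereas you split by the spatial side of $\bset_i$ on which $L$ lies and then by $b_i$; one small imprecision is your assertion that ``the right boundary of $S(v)$ lies at the closest-to-$L$ line of $\lset_i$'', which fails when that boundary is a non-$\lset_i$ line (the paper's Case~3), but in that sub-case $S(v)=S(v')$ and your subsequent conclusions $S(v')\subseteq S(v)$ and the description of $S(v)\setminus S(v')$ remain (vacuously) valid.
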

	\begin{proof}
		Consider any line $L\in \lset\setminus \lset_i$. Let $v\in V(T_{\hsigma})$ be the vertex of the partitioning tree $T_{\hsigma}$ corresponding to $\hsigma$ to which $L$ belongs, and let $S=S(v)$ be the corresponding strip. Similarly, we define $v'\in V(T_{\hsigma'})$ and $S'=S(v')$ with respect to $\hsigma'$.  
		Recall that $L^*_i$ is the first line of $\lset_i$ to appear in the permutation $\sigma$, and $L^{**}_i$ is the last such line. We now consider five cases.

		\paragraph{Case 1.}	The first case happens if $L$ appears before line $L^*_i$ in the permutation $\hsigma$. Notice that the prefixes of the permutations $\hsigma$ and $\hsigma'$ are identical up to the location in which $L^*_i$ appears in $\hsigma$. Therefore, $S=S'$, and $\cro_{\hsigma}(L)=\cro_{\hsigma'}(L)$. Since $\hF'\subseteq \hF$, every crossing that is forbidden in $\hsigma'$ was also forbidden in $\hsigma$. So $\cro_{\hsigma}^{\hF}(L)\subseteq \cro_{\hsigma'}^{\hF'}(L)$, and $\cro^{\hF'}_{\hsigma'}(L)\geq\cro^{\hF}_{\hsigma}(L)$.
		
		\paragraph{Case 2.} The second case happens if $L$ appears after $L^{**}_i$ in $\hsigma$. Notice that, if we denote by $\lset'\subseteq \lset$ the set of all lines of $\lset$ that lie before $L$ in $\hsigma$, and define $\lset''$ similarly for $\hsigma'$, then $ \lset'=\lset''$. Therefore, $S=S'$ holds. Using the same reasoning as in Case 1,  $\cro^{\hF'}_{\hsigma'}(L)\geq\cro^{\hF}_{\hsigma}(L)$.
		
		\paragraph{Case 3.}
		The third case is when $L$ appears between $L^*_i$ and $L^{**}_i$ in $\hsigma$, but neither boundary of the strip $S$ belongs to $\lset_i$. If we denote by $\lset'\subseteq \lset\setminus \lset_i$ the set of all lines of $\lset\setminus\lset_i$ that lie before $L$ in $\hsigma$, and define $\lset''\subseteq \lset\setminus\lset_i$ similarly for $\hsigma'$, then $ \lset'=\lset''$. Therefore, $S=S'$ holds. Using the same reasoning as in Cases 1 and 2,  $\cro^{\hF'}_{\hsigma'}(L)\geq\cro^{\hF}_{\hsigma}(L)$.
		
		\paragraph{Case 4.}
		The fourth case is when $L$  appears between $L^*_i$ and $L^{**}_i$ in the permutation $\hsigma$, and the left boundary of $S$ belongs to $\lset_i$. Notice that the left boundary of $S$ must either coincide with $L^*_i$, or appear to the right of it.
		
		Assume first that $b_i=0$, so we have replaced $L^*_i$ with the line $L_i^1$, that lies to the left of $L^*_i$. Since no other lines of $\lset_i$ appear in $\hsigma'$ until the original location of line $L^{**}_i$, it is easy to verify that the right boundary of $S'$ is the same as the right boundary of $S$, and its left boundary is the line $L_i^1$, that is, we have pushed the left boundary to the left. %
		In order to prove that
		$|\cro^{\hF'}_{\hsigma'}(L)|\geq|\cro^{\hF}_{\hsigma}(L)|$, we map every crossing $(p_1,p_2)\in \cro^{\hF}_{\hsigma}(L)$ to some crossing $(p_1',p_2')\in \cro^{\hF'}_{\hsigma'}(L)$, so that no two crossings of $\cro^{\hF}_{\hsigma}(L)$ are mapped to the same crossing of $\cro^{\hF'}_{\hsigma'}(L)$.
		
		\begin{figure}
			\begin{centering}
				\includegraphics[width=0.75\textwidth]{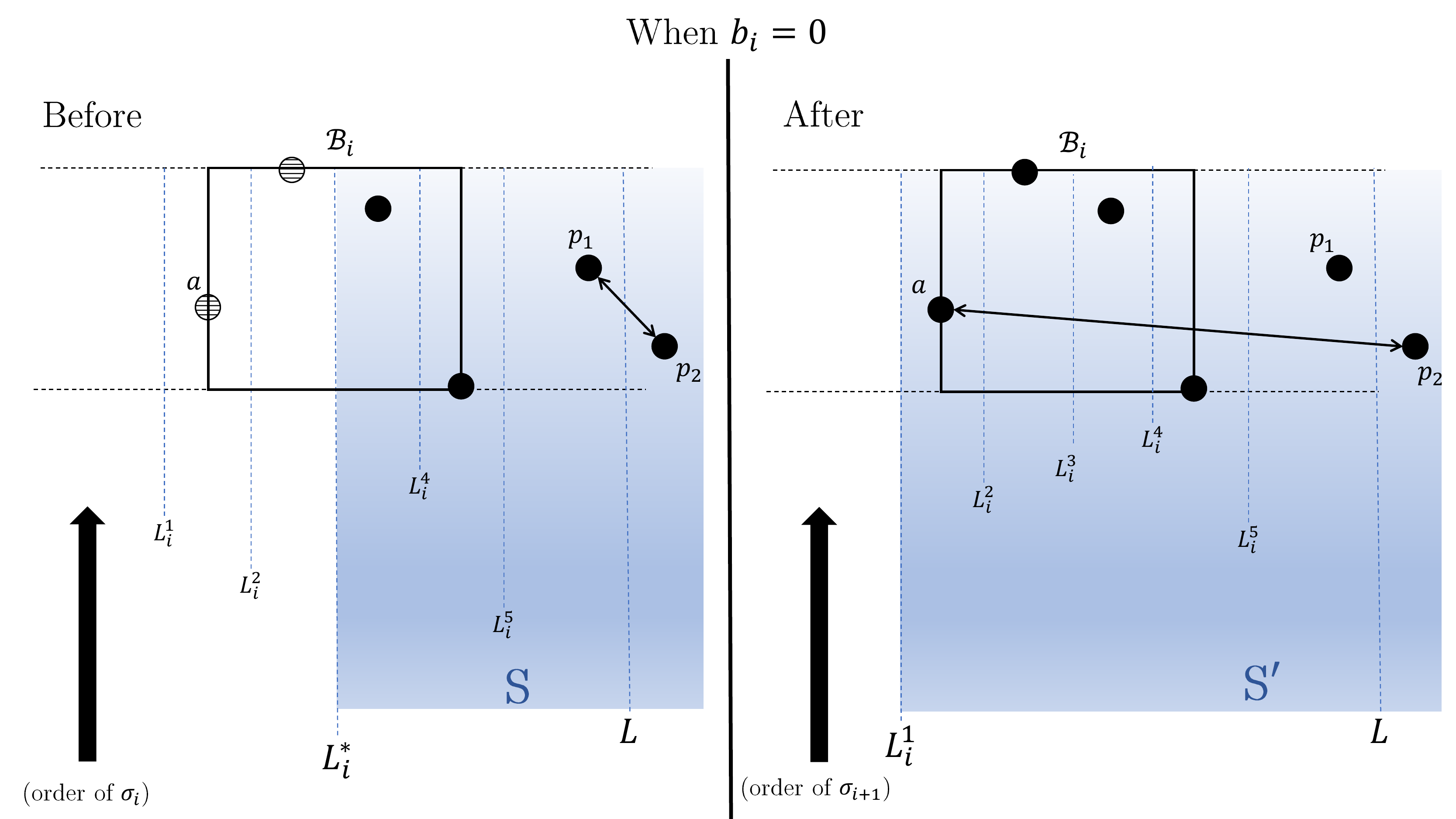}
				\par\end{centering}
			\caption{Illustration of the injective mapping of each crossing in $\cro^{\hF}_{\hsigma}(L)$ to a crossing in $\cro^{\hF'}_{\hsigma'}(L)$.
				Points with horizontal strips are forbidden points from $\hat{F}$.\label{fig:mapping}}
		\end{figure}
	
		Consider any crossing $(p_1,p_2)\in \cro^{\hF}_{\hsigma}(L)$ (see \Cref{fig:mapping}). We know that $p_1,p_2\in S$, and they lie on opposite sides of $L$. We assume w.l.o.g. that $p_1$ lies to the left of $L$. %
		Moreover, no point of $X\cap S$ lies between the row of $p_1$ and the row of $p_2$. 
		It is however possible that $(p_1,p_2)$ is not a crossing of $\cro_{\hsigma'}(L)$, since by moving the left boundary of $S$ to the left, we add more points to the strip, some of which may lie between the row of $p_1$ and the row of $p_2$. Let $A$ be the set of all points that lie between the row of $p_1$ and the row of $p_2$ in $S'$. Notice that the points of $A$ are not forbidden in $\hF'$. Let $a\in A$ be the point of $a$ whose row is closest to the row of $p_2$; if $A=\emptyset$, then we set $a=p_1$. Then $(a,p_2)$ defines a crossing in $\cro_{\hsigma'}(L)$, and, since neither point lies in $\hF'$, $(a,p_2)\in \cro^{\hF'}_{\hsigma'}(L)$. In this way, we map every crossing $(p_1,p_2)\in \cro^{\hF}_{\hsigma}(L)$ to some crossing $(p_1',p_2')\in \cro^{\hF'}_{\hsigma'}(L)$. It is easy to verify that no two crossings of $\cro^{\hF}_{\hsigma}(L)$ are mapped to the same crossing of $\cro^{\hF'}_{\hsigma'}(L)$. We conclude that $|\cro^{\hF'}_{\hsigma'}(L)|\geq|\cro^{\hF}_{\hsigma}(L)|$.

		Lastly, assume that $b_i=1$. Recall that the set of all points of $X$ lying between $L^*_i$ and $L_{i}^{q_i+1}$ is forbidden in $\hF$ but not in $\hF'$, and that we have replaced $L^*_i$ with the line $L_i^{q_i+1}$, that lies to the right of $L^*_i$. Therefore, the right boundary of $S$ remains the same, and the left boundary is pushed to the right. %
		In order to prove that
		$|\cro^{\hF'}_{\hsigma'}(L)|\geq|\cro^{\hF}_{\hsigma}(L)|$, we show that every crossing $(p_1,p_2)\in \cro^{\hF}_{\hsigma}(L)$ belongs to $\cro^{\hF'}_{\hsigma'}(L)$. Indeed, consider any crossing $(p_1,p_2)\in \cro^{\hF}_{\hsigma}(L)$. We know that $p_1,p_2\in S$, and they lie on opposite sides of $L$. We assume w.l.o.g. that $p_1$ lies to the left of $L$. Since $p_1$ cannot be a forbidden point, it must lie to the right of $L_i^{q_i+1}$. Moreover, no point of $X\cap S$ lies between the row of $p_1$ and the row of $p_2$. It is now easy to verify that $(p_1,p_2)$ is also a crossing in $\cro^{\hF'}_{\hsigma'}(L)$.
		
		\paragraph{Case 5.}
		The fifth case happens when $L$ appears between $L^*_i$ and $L^{**}_i$ in the permutation $\hsigma$, and the right boundary of $S$ belongs to $\lset_i$. This case is symmetric to the fourth case and is analyzed similarly.
	\end{proof}
	
	It now remains to analyze the crossings of the lines in $\lset_i$. We do so in the following two claims. The first claim shows that switching $L^*_i$ with $L^1_i$ or $L^{q_i+1}_i$ does not decrease the number of crossings.
	
	\begin{claim}
		If $b=0$, then	$|\cro^{\hF'}_{\hsigma'}(L_i^1)|\geq |\cro^{\hF}_{\hsigma}(L^*_i)|$; if $b=1$, then $|\cro^{\hF'}_{\hsigma'}(L_i^{q_i+1})|\geq |\cro^{\hF}_{\hsigma}(L^*_i)|$.
	\end{claim}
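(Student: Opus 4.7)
The plan is to prove the case $b_i=0$ in detail; the case $b_i=1$ is completely symmetric, with $L^{q_i+1}_i$ replacing $L^1_i$ and the roles of ``left'' and ``right'' interchanged. The goal is to exhibit the identity map as an injection from $\cro^{\hF}_{\hsigma}(L^*_i)$ into $\cro^{\hF'}_{\hsigma'}(L^1_i)$, which immediately yields the claimed inequality.

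First, I would record the structural fact that, by the construction of $\hsigma'$ from $\hsigma$, the line $L^1_i$ occupies in $\hsigma'$ precisely the slot that $L^*_i$ occupied in $\hsigma$, and the prefix of lines strictly preceding that slot is identical in the two permutations. Consequently, the strip $S$ in which $L^*_i$ is processed under $\hsigma$ coincides with the strip in which $L^1_i$ is processed under $\hsigma'$. Moreover, since no line of $\lset_i$ appears in $\hsigma$ before $L^*_i$, the entire block $\bset_i$ lies in $S$, and in particular $L^1_i\in S$.

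Next I would take an arbitrary crossing $(p_1,p_2)\in \cro^{\hF}_{\hsigma}(L^*_i)$, WLOG with $p_1$ to the left and $p_2$ to the right of $L^*_i$, and argue that $p_1$ lies strictly to the left of $L^1_i$. This is the crux, and it is exactly what the definition of $F$ was engineered to deliver: since $b_i=0$, every point of $X_i$ lying to the left of $L^*_i$ was placed into $F$; moreover, iterations $1,\ldots,i-1$ only remove from $F$ points of $\bset_1,\ldots,\bset_{i-1}$, so such points still lie in $\hF=F_i$. Allowedness of the crossing forces $p_1\notin\hF$, hence $p_1\notin X_i$. Since $p_1\in S$ lies to the left of $L^*_i$ but does not lie on a column of $\bset_i$, and since $L^1_i$ is the left boundary of $\bset_i$, the point $p_1$ must in fact lie strictly to the left of $L^1_i$ (either it belongs to some $\bset_j$ with $j<i$, or it lies on a column between two blocks to the left of $\bset_i$).

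Finally I would verify that the pair $(p_1,p_2)$ lies in $\cro^{\hF'}_{\hsigma'}(L^1_i)$. The two points sit on opposite sides of $L^1_i$ inside the common strip $S$; the ``no intervening point of $X\cap S$ between their rows'' condition transfers verbatim from $\hsigma$ since $S$ has not changed; and $\hF'\subseteq\hF$ guarantees that neither point is forbidden under $\hF'$. The identity map is trivially injective, yielding the desired inequality. I do not expect any obstacle beyond the key observation in the preceding paragraph, which is essentially the whole reason for introducing the forbidden set $F$ via the random bits $b_i$ in the first place.
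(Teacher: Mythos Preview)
Your proposal is correct and follows essentially the same approach as the paper: both arguments observe that the strips for $L^*_i$ in $\hsigma$ and $L^1_i$ in $\hsigma'$ coincide, then show that any allowed crossing $(p_1,p_2)$ of $L^*_i$ is also an allowed crossing of $L^1_i$ by using the fact that the non-forbidden point $p_1$ on the left must in fact lie to the left of $L^1_i$. Your phrasing via ``$p_1\notin X_i$ hence not on a column of $\bset_i$'' is a minor rewording of the paper's direct observation that the points between $L^1_i$ and $L^*_i$ are exactly the forbidden ones.
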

	
	\begin{proof}
		Assume first that $b=0$, so we have replaced $L^*_i$ with $L_i^1$ in the permutation. As before, we let $v\in V(T_{\hsigma})$ be the vertex to which $L^*_i$ belongs, and we let $S=S(v)$ be the corresponding strip. Similarly, we define $v'\in V(T_{\hsigma'})$ and $S'=S(v')$ with respect to line $L_i^1$ and permutation $\hsigma'$. Notice that, until the appearance of $L^*_i$ in $\hsigma$, the two permutations are identical. Therefore, $S=S'$ must hold. Recall also that all points of $X$ that lie between $L^*_i$ and $L_i^1$ are forbidden in $\hF$, but not in $\hF'$. In order to show that $|\cro^{\hF'}_{\hsigma'}(L_i^1)|\geq |\cro^{\hF}_{\hsigma}(L^*_i)|$, it is enough to show that every crossing $(p_1,p_2)\in \cro^{\hF}_{\hsigma}(L^*_i)$ also lies in $\cro^{\hF'}_{\hsigma'}(L_i^1)$. 
		
		Consider now some crossing $(p_1,p_2)\in \cro^{\hF}_{\hsigma}(L^*_i)$. Recall that one of $p_1,p_2$ must lie to the left of $L_i^*$ and the other to the right of it, with both points lying in $S$. Assume w.l.o.g. that $p_1$ lies to the left of $L_i^*$. Since $p_1\not\in \hF$, it must lie to the left of $L^1_i$. Moreover, no point of $X\cap S$ may lie between the row of $p_1$ and the row of $p_2$. It is then easy to verify that $(p_1,p_2)$ is also a crossing in $\cro^{\hF'}_{\hsigma'}(L_i^1)$, and so $|\cro^{\hF'}_{\hsigma'}(L_i^1)|\geq |\cro^{\hF}_{\hsigma}(L^*_i)|$.
		
		The second case, when $b=1$, is symmetric.
	\end{proof}
	
	Lastly, we show that for all lines $L\in \lset_i\setminus\set{L_i^*}$, their total contribution to $\cro^{\hF}_{\hsigma}$ is small, in the following claim.
	
	\begin{claim}\label{claim: lines in li contribute little}
		$\sum_{L\in \lset_i\setminus\set{L_i^*}}|\cro^{\hF}_{\hsigma}(L)|\leq \WB(X_i)+O(|X_i|)$.
	\end{claim}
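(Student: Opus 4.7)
The plan is to partition the allowed crossings contributed by $\lset_i\setminus\{L_i^*\}$ into \emph{intra-$X_i$} crossings (both endpoints in $X_i$) and \emph{mixed} crossings (exactly one endpoint in $X_i$); crossings with no endpoint in $X_i$ are impossible because any $L\in\lset_i$ lies strictly inside $\bset_i$ and $S(v_L)$ cannot contain non-$X_i$ points on both sides of it. I will bound the two contributions by $\WB(X_i)$ and $O(|X_i|)$ respectively, and treat only the case $b_i=0$; the case $b_i=1$ is symmetric.

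First, for any $L\in\lset_i\setminus\{L_i^*\}$ on the forbidden (left) side of $L_i^*$, the strip $S(v_L)$ lies entirely on the left side of $L_i^*$ (since $L_i^*$ precedes $L$ in $\hsigma$ and is thus an ancestor of $v_L$). Inside $S(v_L)$, every $X_i$-point is forbidden, and every non-$X_i$-point lies in some block $\bset_j$ with $j<i$ and hence strictly to the left of $L$; so $L$ admits no allowed crossing. For intra-$X_i$ crossings of lines $L\in\lset_i^{\text{int}}\setminus\{L_i^*\}$ on the allowed side, let $\tau$ be the restriction of $\hsigma$ to $\lset_i^{\text{int}}$ (the auxiliary columns of $X_i$) and let $T_i=T(\tau)$ with $w_L$ the vertex of $T_i$ owning $L$. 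The key observation is that within $\bset_i$ only lines of $\lset_i^{\text{int}}$ produce non-trivial splits: the boundary lines $L_i^1,L_i^{q_i+1}$ cut $\bset_i$ off from the outside without subdividing it, and lines of $\lset\setminus\lset_i$ avoid $\bset_i$'s column range entirely. Since $\lset_i^{\text{int}}$-lines appear in $\tau$ in the same order as in $\hsigma$, one has $S(v_L)\cap\bset_i=S_i(w_L)$. An intra-$X_i$ crossing $(p_1,p_2)\in\cro_{\hsigma}(L)$ is then consecutive in $X\cap S(v_L)$, a stricter condition than being consecutive in $X_i\cap S_i(w_L)$, so $(p_1,p_2)\in\cro_\tau(L)$; summing yields a total intra-$X_i$ bound of $\WB_\tau(X_i)\leq\WB(X_i)$. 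The boundary line $L_i^{q_i+1}$ contributes no intra-$X_i$ crossings, since one of its sides contains no $X_i$-point.

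The mixed crossings are bounded by a per-point charging argument. A mixed crossing $(p,q)$ has $p\in X_i$ on the allowed side and $q\notin X_i$. If $q$ lies to the left of $\bset_i$, then the earliest $\lset_i$-line between $p$ and $q$ in $\hsigma$ is $L_i^*$ itself (by definition of $L_i^*$ as the first $\lset_i$-line in $\hsigma$), so the unique line the pair can cross (namely, the first line of $\hsigma$ lying strictly between $p$ and $q$) is either $L_i^*$ or a line of $\lset\setminus\lset_i$; neither is in $\lset_i\setminus\{L_i^*\}$. Hence $q$ must lie to the right of $\bset_i$. Let $k$ be the column of $p$ in $\bset_i$ and let $L^{(p)}$ be the earliest line of $\hsigma$ in $\{L_i^{k+1},\dots,L_i^{q_i+1}\}$. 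The same ``first line'' reasoning forces the line crossed by $(p,q)$, whenever it lies in $\lset_i\setminus\{L_i^*\}$, to equal $L^{(p)}$. Thus every mixed crossing incident to $p$ is a crossing of this single line $L^{(p)}$, and since $(p,q)$ must be $y$-consecutive in $X\cap S(v_{L^{(p)}})$, $q$ is either the $y$-predecessor or the $y$-successor of $p$ in that set. So $p$ participates in at most two mixed crossings, which sums to $\leq 2|X_i|$ over all $p\in X_i$ on the allowed side. Combining the intra-$X_i$ and mixed bounds completes the proof. The main subtlety is collapsing all mixed crossings incident to a fixed $p$ onto the single line $L^{(p)}$; this relies on the fact that each pair of points is the crossing of at most one line (the first line of $\hsigma$ separating them), which is what makes the per-point charging argument go through.
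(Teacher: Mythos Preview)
Your proof is correct. The partition into intra-$X_i$ and mixed crossings matches the paper's type-2/type-1 split, and your intra-$X_i$ bound via the induced permutation $\tau$ is essentially the paper's argument (the paper is slightly looser, inducing on all of $\lset_i$ rather than just $\lset_i^{\text{int}}$, but the content is the same). The mixed bound is where you diverge: the paper observes that any $L\in\lset_i\setminus\{L_i^*\}$ splits an \emph{external} strip (exactly one boundary in $\lset_i$) into an internal child and an external child, bounds all crossings of $L$ by twice the number of $X_i$-points on the internal side via Observation~\ref{obs: number of crossings at most min of both sides}, and notes that each $X_i$-point lands in an internal child at most once. Your per-point charging---collapsing all mixed crossings at a fixed $p$ onto the single line $L^{(p)}$---is a different and equally valid route; it trades the internal/external strip bookkeeping for the ``first separating line'' characterization of which line a given pair crosses. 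You also exploit the forbidden structure up front (killing the forbidden-side lines outright), whereas the paper ignores forbiddenness, bounds all crossings, and only invokes $\cro^{\hF}_{\hsigma}(L)\subseteq\cro_{\hsigma}(L)$ at the very end; this makes your argument case-dependent on $b_i$ while the paper's is uniform in that respect.
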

	
	Assume first that the claim is correct. We have shown so far that the total contribution of all lines in $\lset_i\setminus\set{L_i^*}$ to $\cro^{\hF}_{\hsigma}$ is at most $\WB(X_i)+O(|X_i|)$; that the contribution of one of the lines $L_i^1,L_i^{q_i+1}$ to $\cro^{\hF'}_{\hsigma'}$ is at least as large as the contribution of $L_i^*$ to $\cro^{\hF}_{\hsigma}$; and that for every line $L\not\in \lset_i$, its contribution to $\cro^{\hF'}_{\hsigma'}$ is at least as large as its contribution to $\cro^{\hF}_{\hsigma}$. It then follows that $|\cro^{\hF'}_{\hsigma'}|\geq |\cro^{\hF}_{\hsigma}|-\WB(X_i)+O(|X_i|)$, and so
	$\WB^{F_{i+1}}_{\sigma_{i+1}}(X)\geq \WB^{F_i}_{\sigma_i}(X)-\WB(X_i)+O(|X_i|)$. Therefore, in order to prove Lemma \ref{lemma: small change}, it is now enough to prove Claim \ref{claim: lines in li contribute little}.
	
	\begin{proofof}{Claim \ref{claim: lines in li contribute little}}
		Consider some line $L\in \lset_i\setminus\set{L_i^*}$, and let $v\in V(T_{\hsigma})$ be the vertex to which $L$ belongs. Notice that $L$ appears in $\hsigma$ after $L_i^*$. Therefore, if $S=S(v)$ is the strip that $L$ partitioned, then at least one of the boundaries of $S$ lies in $\lset_i$. If exactly one boundary of $S$ lies in $\lset_i$, then we say that $S$ is an \emph{external strip}; otherwise, we say that $S$ is an \emph{internal} strip. Consider now some crossing $(p,p')\in \cro_{\hsigma}(S)$. Since $L\in \lset_i$, and at least one boundary of $S$ lies in $\lset_i$, at least one of the points $p,p'$ must belong to $X_i$. If exactly one of $p,p'$ lies in $X_i$, then we say that $(p,p')$ is a type-1 crossing; otherwise it is a type-2 crossing. Notice that, if $S$ is an internal strip, then only type-2 crossings of $L$ are possible. We now bound the total number of type-1 and type-2 crossings separately, in the following two observations.

		\begin{observation}\label{obs: type 2 crossings}
			The total number of type-2 crossings in $\bigcup_{L\in \lset_i\setminus\set{L_i^*}}\cro_{\hsigma}(L)$ is at most $\WB(X_i)$.
		\end{observation}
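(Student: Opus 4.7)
The plan is to show that type-2 crossings (those where both endpoints lie in $X_i$) arising from lines in $\lset_i$ inside the full permutation $\hsigma$ correspond, in a one-to-one fashion, to crossings of the instance $X_i$ under a suitable partitioning tree; the bound on $\WB(X_i)$ then immediately yields the claim.

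Concretely, I would first restrict $\hsigma$ to the lines in $\lset_i$, obtaining an ordering $\hsigma_i$ of $\lset_i$. Among the lines of $\lset_i$, the two boundary lines $L_i^1$ and $L_i^{q_i+1}$ lie entirely outside the convex hull of $X_i$, so every point of $X_i$ is on the same side of either of them; hence these two lines contribute zero type-2 crossings and may be discarded. The remaining lines $L_i^2,\ldots,L_i^{q_i}$ are precisely the auxiliary columns of the standalone instance $X_i$, and $\hsigma_i$ restricted to them defines a partitioning tree $T_i$ for $X_i$.

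The key observation I would then justify is the following: for any $L \in \lset_i \setminus \{L_i^1,L_i^{q_i+1}\}$, if $v$ is the vertex of $T_\hsigma$ that owns $L$ and $v_i$ is the vertex of $T_i$ that owns $L$, then $S(v) \cap \bset_i$ coincides with $S(v_i)$ when restricted to the bounding box of $X_i$. This is because the lines preceding $L$ in $\hsigma$ that actually cut the interior of $\bset_i$ are exactly the lines of $\lset_i$ preceding $L$ in $\hsigma_i$; lines of $\lset \setminus \lset_i$ do not intersect the interior of $\bset_i$. Consequently, the set of pairs of points from $X_i$ that cross $L$ within $S(v)$ in $T_\hsigma$ is identical to the set of pairs of points from $X_i$ that cross $L$ within $S(v_i)$ in $T_i$.

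Summing over all $L \in \lset_i\setminus\{L_i^1,L_i^{q_i+1}\}$ (which includes all of $\lset_i\setminus\{L_i^*\}$ contributing nonzero type-2 crossings) gives a total count of type-2 crossings equal to $\WB_{\hsigma_i}(X_i)$, which is at most $\WB(X_i)$ by definition of the strong \WBone. The main (minor) obstacle is formally verifying the identification of strips $S(v)\cap \bset_i = S(v_i)$; this amounts to a short induction on the position of $L$ in $\hsigma_i$, noting that lines outside $\lset_i$ cannot further subdivide any strip when intersected with $\bset_i$.
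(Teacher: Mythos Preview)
Your approach is the same as the paper's: restrict $\hsigma$ to $\lset_i$, observe that the interior lines $L_i^2,\dots,L_i^{q_i}$ are exactly the auxiliary columns of $X_i$, and charge each type-2 crossing of such a line in $T_{\hsigma}$ to the corresponding crossing in the induced tree $T_i$ on $X_i$, giving at most $\WB_{\hsigma_i}(X_i)\le\WB(X_i)$.

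One small correction: the two sets of crossings are \emph{not} identical, only one contained in the other. A pair $p,p'\in X_i$ that is consecutive in $X_i\cap S(v_i)$ need not be consecutive in $X\cap S(v)$, since $S(v)$ may extend beyond $\bset_i$ and pick up points of $X\setminus X_i$ lying between the rows of $p$ and $p'$. Fortunately you only need the inclusion in the other direction --- every type-2 crossing in $T_{\hsigma}$ is a crossing in $T_i$ --- and that holds because $X_i\cap S(v)=X_i\cap S(v_i)$ (your strip identification) and consecutiveness in the larger set $X\cap S(v)$ implies consecutiveness in the subset $X_i\cap S(v_i)$. So replace ``equal to $\WB_{\hsigma_i}(X_i)$'' by ``at most $\WB_{\hsigma_i}(X_i)$'' and the argument is complete.
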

		
		\begin{proof}
			Permutation $\hsigma$ of the lines in $\lset$ naturally induces a permutation $\hsigma_i$ of the lines in $\lset_i$. The number of type-2 crossings charged to all lines in $\lset_i$ is then at most $\WB_{\hsigma_i}(X_i)\leq \WB(X_i)$.%
		\end{proof}

		\begin{observation}\label{obs: type 2 crossings}
			The total number of type-1 crossings in $\bigcup_{L\in \lset_i\setminus\set{L_i^*}}\cro_{\hsigma}(L)\leq O(|X_i|)$.
		\end{observation}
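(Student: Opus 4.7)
My plan is to charge each type-1 crossing $(p,p')$ to its $X_i$-endpoint $p$ and prove that every $p \in X_i$ receives at most two charges, which yields the total bound $2|X_i| = O(|X_i|)$. Fix $p \in X_i$ in column $c_p$, and let $k_p$ be the index with $L_i^{k_p} < c_p < L_i^{k_p+1}$. Consider the ancestors $v_0 \supset v_1 \supset \cdots \supset v_h$ of $p$'s leaf in $T_\hsigma$, with nested strips $S(v_0) \supset \cdots \supset S(v_h)$. Since type-1 crossings can only occur at external strips, the plan is to trace how the ``external status'' of $S(v_j)$ evolves along this path.

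All lines processed before $L_i^*$ lie in $\lset \setminus \lset_i$ and are strictly outside the block's column range, so they contribute no boundary in $\lset_i$, and in particular $c_p$ and $L_i^*$ remain in the same strip up to that point. Consequently, $L_i^*$ is processed at some ancestor $v^*$ of $p$'s leaf, and the strip containing $p$ immediately becomes external: external-right if $L_i^* < c_p$, and external-left otherwise. An immediate case check on the possible cut types shows that subsequent cuts either preserve the external type of $p$'s strip or collapse it to an internal one irreversibly; in particular, $p$ inhabits a single external class throughout its external phase.

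The crux of the argument -- and the place where most of the geometric bookkeeping lives -- is the following dichotomy for external-left strips (external-right is symmetric). Suppose $S(v_j)$ is external-left with right boundary $L_i^{j'} \in \lset_i$, and $L(v_j) = L_i^\ell$ for some $\ell \neq *$. If $\ell > k_p$, then $p$ lies to the left of $L_i^\ell$, so any potential type-1 partner $p'$ would have to be on the right of $L_i^\ell$ and outside the block; yet every non-block point in $S(v_j)$ satisfies $p'.x < L_i^1 \leq L_i^\ell$, so no such $p'$ exists and no type-1 crossing at $L_i^\ell$ can involve $p$. If $\ell \leq k_p$, then the child strip containing $p$ has both its boundaries $L_i^\ell$ and $L_i^{j'}$ in $\lset_i$, hence becomes internal, and no subsequent ancestor of $p$ hosts any outside-block points.

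Combining these observations, type-1 crossings involving $p$ can arise only at a unique ``internalizing'' ancestor along $p$'s external phase, where at most two such crossings are possible -- one per row-neighbor of $p$ in $X \cap S(v_j)$, when that neighbor lies outside $X_i$ and on the opposite side of $L(v_j)$ from $p$. Summing over all $p \in X_i$ yields the claimed $O(|X_i|)$ bound.
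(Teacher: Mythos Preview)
Your proof is correct and follows essentially the same idea as the paper's: both arguments hinge on the observation that when a line $L\in\lset_i\setminus\{L_i^*\}$ splits an external strip, exactly one child strip becomes internal, and the $X_i$-points that land in this internal child can never again participate in a type-1 crossing. The paper phrases this slightly more aggregately---it bounds \emph{all} crossings at $L$ by $2|S'\cap X_i|$ via Observation~\ref{obs: number of crossings at most min of both sides} (where $S'$ is the internal child) and then notes that each point of $X_i$ enters an internal strip exactly once---whereas you trace each individual $p\in X_i$ down its root-to-leaf path and show by your $\ell\gtrless k_p$ dichotomy that the $X_i$-endpoint of any type-1 crossing must lie on the internal side of the cut. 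Your per-point case analysis makes explicit what the paper's min-side bound handles in one stroke, but the underlying charging (each $p$ pays at most $2$, at its unique internalizing step) is identical.
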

		\begin{proof}
			Consider a line $L\in \lset_i\setminus\set{L_i^*}$, and let $S$ be the strip that it splits. Recall that, if there are any type-1 crossings in $\cro_{\hsigma}(L)$, then $S$ must be an external strip. Line $L$ partitions $S$ into two new strips, that we denote by $S'$ and $S''$. Notice that exactly one of these strips (say $S'$) is an internal strip, and the other strip is external. Therefore, the points of $X_i\cap S'$ will never participate in type-$1$ crossings again. Recall that, from Observation \ref{obs: number of crossings at most min of both sides}, the total number of crossings in $\cro_{\hsigma}(L)$ is bounded by $2|S'\cap X_i|$. We say that the points of $S'\cap X_i$ pay for these crossings. Since every point of $X_i$ will pay for a type-1 crossing at most once, we conclude that the total number of type-1 crossings in $\bigcup_{L\in \lset_i\setminus\set{L_i^*}}\cro_{\hsigma}(L)$ is bounded by $2|X_i|$.
		\end{proof}
		
		We conclude that the total number of all crossings in $\bigcup_{L\in \lset_i\setminus\set{L_i^*}}\cro_{\hsigma}(L)$ is at most $\WB(X_i)+O(|X_i|)$. Since, for every line $L$, $\cro_{\hsigma}^{\hF}(L)\subseteq \cro_{\hsigma}(L)$, we get that $\sum_{L\in \lset_i\setminus\set{L_i^*}}|\cro^{\hF}_{\hsigma}(L)|\leq \WB(X_i)+O(|X_i|)$.
	\end{proofof}			
\end{proof}

To summarize, we have transformed a permutation $\sigma$ of $\lset$ into a permutation $\sigma'$ of $\lset$, and we have shown that  $\WB_{\sigma'}(X)\geq \WB(X)/4-2\sum_{i=1}^N\WB(X_i)-O(|X|)$.

\paragraph{Transforming $\sigma'$ into $\tsigma$.}
In this final step, we transform the permutation $\sigma'$ of $\lset$ into a permutation $\tsigma$ of $\tlset$, and we will show that $\WB_{\tsigma}(\tX)\geq \WB_{\sigma'}(X)-|X|$.

The transformation is straightforward. Consider some block $\bset_i$, and the corresponding set $\lset_i\subseteq \lset$ of lines.
Recall that the lines in $\lset_i$ are indexed $L^1_i,\ldots,L^{q_i+1}_i$ in this left-to-right order, where $L^1_i$ appears to the left of the first column of $\bset_i$, and $L^{q_i+1}_i$ appears to the right of the last column of $\bset_i$. Recall also that, in the current permutation $\sigma'$, one of the following happens: either (i) line $L^1_i$ appears in the permutation first, and lines $L^{q_i+1}_i,L^2_i,\ldots, L^q_i$ appear at some later point consecutively in this order; or (ii) line $L^{q_i+1}_i$ appears in the permutation first, and lines $L^1_i,L^2_i,\ldots,L^q_i$ appear somewhere later in the permutation consecutively in this order. Therefore, for all $2\leq j\leq q$, line $L_i^j$ separates a strip whose left boundary is $L_i^{j-1}$ and right boundary is $L_i^{q_i+1}$. It is easy to see that the cost of each such line $L_i^j$ in permutation $\sigma'$ is bounded by the number of points of $X$ that lie on the unique active column that appears between $L_i^{j-1}$ and $L_i^j$. The total cost of all such lines is then bounded by $|X_i|$.

Let $\tsigma^*$ be a sequence of lines obtained from $\sigma'$ by deleting, for all $1\leq i\leq N$, all lines $L_i^2,\ldots,L_i^q$ from it. Then $\tsigma^*$ naturally defines a permutation $\tsigma$ of the set $\tlset$ of vertical lines. Moreover, from the above discussion, the total contribution of all deleted lines to $\WB_{\sigma'}(X)$ is at most $|X|$, so $\WB_{\tsigma}(\tX)\geq \WB_{\sigma'}(X)-|X|\geq \WB(X)/4-2\sum_i\WB(X_i)-O(|X|)$. We conclude that $\WB(\tX)\geq \WB_{\tsigma}(\tX)\geq \WB(X)/4-2\sum_i\WB(X_i)-O(|X|)$, and $\WB(X)\leq 4\WB(\tX)+8\sum_i\WB(X_i)+O(|X|)$.

\section{Separation of $\opt$ and the Strong Wilber Bound}
\label{sec: negative}

In this section we present our negative results,  proving Theorem \ref{thm:intro_WB}.
We start with a useful claim that allows us to upper-bound WB-costs in Section~\ref{subsec: upper bound WB}.
We then define several important instances and analyze their properties in Section~\ref{sec: sequences}.
In Section~\ref{sec: Instance construction}, we present two structural tools that we use in analyzing our instance.
From Section~\ref{sec:construction} onward, we describe our construction and its analysis.

\subsection{Upper-Bounding WB costs} 
\label{subsec: upper bound WB}

Recall that for an input set $X$ of points, a partitioning tree $T$ of $X$, and a vertex $v\in V(T)$, we denoted by $N(v)$ the number of points of $X$ that lie in the strip $S(X)$.
We use the following claim for bounding the WB cost of subsets of vertices of $T$.

\begin{claim}\label{claim: bounds on WB for path and subtree}
Consider a set $X$ of points that is a semi-permutation, an ordering $\sigma$ of the auxiliary columns in $\lset$ and the corresponding partitioning tree $T=T(\sigma)$.  
 Let $v\in V(T)$ be any vertex of the tree. Then the following hold:
	\begin{itemize}
		\item Let $T_v$ be the sub-tree of $T$ rooted at $v$. Then $\sum_{u\in V(T_v)}\cost(u) = \WB_{T_v}(X \cap S(v)) \leq O(\size(v)\log(\size(v)))$.
		\item Let $u$ be any descendant vertex of $v$, and let $P$ be the unique path in $T$ connecting $u$ to $v$. Then $\sum_{z\in V(P)}\cost(z)\leq 2\size(v)$.
	\end{itemize}
\end{claim}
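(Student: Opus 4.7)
The equality $\sum_{u\in V(T_v)}\cost(u)=\WB_{T_v}(X\cap S(v))$ in the first bullet is immediate from the definition of $\WB_T$, so the content of the first bullet is the inequality. My plan is to reduce it to previously established bounds on $\opt$ and $\WB$. Since $X$ is a semi-permutation, so is the restriction $X\cap S(v)$, and it has at most $\size(v)$ active rows and at most $\size(v)$ active columns. Applying Corollary~\ref{cor: upper bound on WB} to $X\cap S(v)$ gives $\WB(X\cap S(v))\leq O(\size(v)\log\size(v))$. Since $T_v$ is a valid partitioning tree for $X\cap S(v)$, we have $\WB_{T_v}(X\cap S(v))\leq \WB(X\cap S(v))$, which yields the desired bound. (Alternatively one can invoke Claim~\ref{claim: upper bound on OPT static} together with Claim~\ref{claim: bounding WB by OPT} directly.)

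For the second bullet, let $P=(v=z_0,z_1,\dots,z_k=u)$ be the path from $v$ down to $u$. For each $0\le i<k$, let $z_i'$ denote the child of $z_i$ that is \emph{not} on $P$, so that $z_{i+1}$ and $z_i'$ are the two children of $z_i$. By Observation~\ref{obs: number of crossings at most min of both sides} applied at $z_i$,
\[
\cost(z_i)\;\le\;2\min\{\size(z_{i+1}),\size(z_i')\}\;\le\;2\,\size(z_i').
\]
For the endpoint, if $u$ is a leaf then $\cost(u)=0$; otherwise, letting $u_1,u_2$ be its children, the same observation gives $\cost(u)\le 2\min\{\size(u_1),\size(u_2)\}\le\size(u_1)+\size(u_2)=\size(u)$.

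The final step is a disjointness argument. Because $z_i'$ is a sibling of $z_{i+1}$ (for $i<k$), the strip $S(z_i')$ is disjoint from $S(z_{i+1})$, which contains $S(z_j')$ for every $j>i$ and also contains $S(u)$. Hence the strips $S(z_0'),S(z_1'),\dots,S(z_{k-1}'),S(u)$ are pairwise internally disjoint sub-strips of $S(v)$, so
\[
\sum_{i=0}^{k-1}\size(z_i')+\size(u)\;\le\;\size(v).
\]
Combining with the per-vertex bounds,
\[
\sum_{z\in V(P)}\cost(z)\;\le\;2\sum_{i=0}^{k-1}\size(z_i')+\size(u)\;\le\;2\Bigl(\sum_{i=0}^{k-1}\size(z_i')+\size(u)\Bigr)\;\le\;2\,\size(v),
\]
as required.

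Neither step looks subtle; the only thing to be careful about is bookkeeping — verifying that $X\cap S(v)$ is still a semi-permutation in the first bullet, and that the collection of ``off-path'' sub-strips together with $S(u)$ is genuinely disjoint and contained in $S(v)$ in the second. Both checks are routine from the definitions of partitioning trees and of strips associated with siblings.
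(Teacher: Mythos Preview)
Your proof is correct and follows essentially the same approach as the paper: the first bullet is dispatched via Corollary~\ref{cor: upper bound on WB}, and the second via Observation~\ref{obs: number of crossings at most min of both sides} together with the disjointness of the off-path sibling strips and $S(u)$ inside $S(v)$. Your indexing (defining $z_i'$ as the off-path child of $z_i$) is in fact slightly cleaner than the paper's, which indexes the siblings as $v_i'$ for $1<i\le k$ and has a minor off-by-one in the write-up, but the underlying argument is identical.
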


\begin{proof}
The first assertion follows from the definition of the weak \WBone and \Cref{cor: upper bound on WB}. We now prove the second assertion. Denote $P=(v=v_1,v_2,\ldots,v_k=u)$. For all $1<i\leq k$, we let $v'_i$ be the unique sibling of the vertex $v_i$ in the tree $T$. We also let $X_i$ be the set of points of $X$ that lie in the strip $S(v'_i)$, and we let $X'$ be the set of all points of $X$ that lie in the strip $S(v_k)$. 
It is easy to verify that $X_2,\ldots,X_k,X'$ are all mutually disjoint (since the strips $\{S(v'_i)\}_{i=2}^q$ and $S(v_k)$ are disjoint), and that they are contained in $X\cap S(v)$. 
Therefore, $\sum_{i=2}^q|X_i|+|X'|\leq \size(v)$.
	
	From Observation \ref{obs: number of crossings at most min of both sides}, for all $1\leq i<k$, $\cost(v_i)\leq 2\size(v'_i)=2|X_i|$, and $\cost(v_k)\leq 2\size(v_k)=2|X'|$. Therefore, $\sum_{z\in V(P)}\cost(z)\leq 2\sum_{i=2}^q|X_i|+2|X'|\leq 2\size(v)$.
\end{proof}

\subsection{Some Important Instances}
\label{sec: sequences}

 \paragraph{Monotonically Increasing Sequence.} We say that an input set $X$ of points is \emph{monotonically increasing} iff $X$ is a permutation, and moreover for every pair $p,p'\in X$ of points, if $p.x <p'.x$, then $p.y<p'.y$ must hold.
It is well known that the value of the optimal solution of monotonically increasing sequences is low, and we exploit this fact in our negative results.

\begin{observation}\label{obs: solution for monotone increasing set}
        If $X$ is a monotonically increasing set of points, then $\opt(X)\leq |X|-1$.
\end{observation}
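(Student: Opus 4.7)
The plan is to exhibit an explicit feasible solution $Y$ of size $|X|-1$. Index $X = \{p_1,\dots,p_n\}$ in increasing order of $x$-coordinate, so by monotonicity also $p_1.y < p_2.y < \cdots < p_n.y$. I will take
\[
Y \;=\; \{q_1,\dots,q_{n-1}\}, \qquad q_i \;:=\; (p_{i+1}.x,\; p_i.y),
\]
i.e.\ the ``lower-right corner'' of the rectangle $\rect_{p_i,p_{i+1}}$. Then $|Y|=|X|-1$, so it suffices to verify that $X\cup Y$ is satisfied, which by \Cref{obs: canonical solutions} (canonical solutions) and the definition of \opt yields $\opt(X)\le |X|-1$.

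I will verify satisfaction by splitting non-collinear pairs into three cases. First, for a pair $p_i,p_j\in X$ with $i<j$, the point $q_i$ lies in $\rect_{p_i,p_j}=[p_i.x,p_j.x]\times[p_i.y,p_j.y]$ because $p_i.x<p_{i+1}.x\le p_j.x$ and $q_i.y=p_i.y$; it is distinct from $p_i,p_j$ since $X$ is a permutation and the $y$-coordinates of $q_i$, $p_j$ differ. Second, for a pair $q_i,q_j\in Y$ with $i<j$, the enclosing rectangle is $[p_{i+1}.x,p_{j+1}.x]\times[p_i.y,p_j.y]$, and $p_{i+1}$ sits inside because $p_{i+1}.x\le p_j.x<p_{j+1}.x$ and $p_i.y<p_{i+1}.y\le p_j.y$, again distinct from $q_i,q_j$ by comparing $y$-coordinates.

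Third, and the main place where care is required, consider a mixed non-collinear pair $p_k, q_i$, which forces $k\notin\{i,i+1\}$. If $k<i$, monotonicity gives $p_k.x<p_{i+1}.x$ and $p_k.y<p_i.y$, and the point $p_{k+1}$ lies inside $\rect_{p_k,q_i}$ and is distinct from both endpoints (its $x$-coordinate is $p_{k+1}.x\neq p_{i+1}.x=q_i.x$). Symmetrically, if $k>i+1$, the point $p_{i+1}$ witnesses satisfaction, lying strictly inside $\rect_{q_i,p_k}=[p_{i+1}.x,p_k.x]\times[p_i.y,p_k.y]$ and differing from $q_i$ in $y$-coordinate and from $p_k$ in index.

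The only genuinely delicate point is making sure each claimed witness is \emph{distinct} from the two endpoints (otherwise it does not count as a satisfier), and this consistently uses that $X$ is a permutation together with the definition $q_i=(p_{i+1}.x,p_i.y)$, which places $q_i$ on the active column of $p_{i+1}$ and the active row of $p_i$ but on neither the same column nor same row as any other $p_k$ with $k\notin\{i,i+1\}$. With the three cases settled, $X\cup Y$ is satisfied and the bound $\opt(X)\le |X|-1$ follows.
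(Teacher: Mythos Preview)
Your proof is correct and takes essentially the same approach as the paper: an explicit staircase solution of size $|X|-1$. The paper uses the upper-left corners $q_i=(p_i.x,p_{i+1}.y)$ rather than your lower-right corners $q_i=(p_{i+1}.x,p_i.y)$, and simply asserts feasibility without the case analysis, but the two constructions are symmetric and your verification is sound (the reference to \Cref{obs: canonical solutions} is unnecessary, and in Case~2 when $i+1=j$ distinctness of $p_{i+1}$ from $q_j$ actually comes from the $x$-coordinate rather than the $y$-coordinate, but the conclusion stands).
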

\begin{proof}
We order points in $X$ based on their $x$-coordinates as $X = \{p_1,\ldots, p_m\}$ such that $p_1.x < p_2.x <\ldots < p_m.x$.
For each $i =1,\ldots, m-1$ we define $q_i = ((p_i).x, (p_{i+1}).y)$ and the set $Y = \{q_1,\ldots, q_{m-1}\}$. It is easy to verify that $Y$ is a feasible solution for $X$.   
\end{proof}

\paragraph{Bit Reversal Sequence (\BRS).} 
We use the geometric variant of \BRS, which is more intuitive and easier to argue about. 
Let $\rset \subseteq {\mathbb N}$ and $\cset \subseteq {\mathbb N}$ be sets of integers (which are supposed to represent sets of active rows and columns.)  
The instance ${\sf BRS}(i,\rset,\cset)$ is only defined when $|\rset|=|\cset|=2^i$. It contains $2^i$ points, and it is a permutation, whose sets of active rows and columns are exactly $\rset$ and $\cset$ respectively; so $|\rset| = |\cset| = 2^i$. 
We define the instance recursively. The base of the recursion is instance ${\sf BRS}(0,\{C\}, \{R\})$, containing a single point at the intersection of row $R$ and column $C$. 
Assume now that we have defined, for all $1\leq i'\leq i$, and any sets $\rset',\cset'$ of $2^{i'}$ integers, the corresponding instance ${\sf BRS}(i,\rset',\cset')$. We define  instance ${\sf BRS}(i+1,\rset,\cset)$, where $|\rset| = |\cset| = 2^{i+1}$, as follows. 

Consider the columns in $\cset$ in their natural left-to-right order, and define $\cset_{left}$ to be the first $2^i$ columns and $\cset_{right} = \cset \setminus \cset_{left}$. 
Denote $\rset= \{R_1,\ldots, R_{2^{i+1}}\}$, where the rows are indexed in their natural bottom to top order, and let $\rset_{even}=\{R_2,R_4,\ldots, R_{2^{i+1}}\}$ and $\rset_{odd}=\{R_1,R_3,\ldots, R_{2^{i+1}-1}\}$ be the sets of all even-indexed and all odd-indexed rows, respectively.
Notice that $|\cset_{left}| = |\cset_{right}| = |\rset_{even}| = |\rset_{odd}| = 2^i$. 
The instance ${\sf BRS}(i+1, \rset,\cset)$ is defined to be ${\sf BRS}(i,\rset_{odd},\cset_{left}) \cup {\sf BRS}(i,\rset_{even}, \cset_{right})$. 

For $n= 2^i$, we denote by ${\sf BRS}(n)$ the instance $\BRS(i,\cset,\rset)$, where $\cset$ contains all columns with integral $x$-coordinates from $1$ to $n$, and $\rset$ contains all rows with integral $y$-coordinates from $1$ to $n$;
see Figure~\ref{fig: brs} for an illustration.

\begin{figure}[h]
    \centering
    \includegraphics[width=0.7\textwidth]{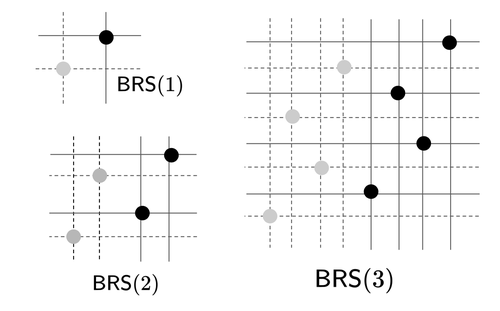}
    \caption{Bit reversal sequences ${\sf BRS}(i)$ for $i=1,2,3$. The sequence ${\sf BRS}(i)$ is constructed by composing two copies of ${\sf BRS}(i-1)$. The left copy is shown in gray, while the right copy is in black. The active rows and columns of the left copies are shown as dotted lines.}
    \label{fig: brs}
\end{figure}

It is well-known that, if $X$ is a bit-reversal sequence on $n$ points, then $\opt(X)\geq \Omega(n \log n)$. 

\begin{claim}
\label{claim:BRS complexity} 
Let $X= {\sf BRS}(i,\cset,\rset)$, for any $i\geq 0$ and any sets $\cset$ and $\rset$ of columns and rows, respectively, with $|\rset|=|\cset|=2^i$. Then $|X| = 2^i$, and $\opt(X) \geq \frac{\WB(X)}{2} \geq \frac{|X|(\log |X|-2)+1}{2} $.  
\end{claim}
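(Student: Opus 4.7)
The three parts of the claim separate cleanly. The equality $|X|=2^i$ is immediate by induction on $i$ from the recursive definition: the base instance has one point, and each recursive step takes the disjoint union of two size-$2^i$ sub-instances. The middle inequality $\opt(X)\geq\WB(X)/2$ is just Claim~\ref{claim: bounding WB by OPT} applied to $X$, so I will invoke it directly. All the real work is in the right-most inequality, which I prove by exhibiting one particular partitioning tree $T^*$ and using $\WB(X)=\max_\sigma\WB_\sigma(X)$.

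I will define $T^*$ recursively so that it mirrors the recursive structure of $\BRS$: at the root, the auxiliary column used is the one separating $\cset_{left}$ from $\cset_{right}$; the two subtrees are then the trees built recursively for $\BRS(i-1,\rset_{odd},\cset_{left})$ and $\BRS(i-1,\rset_{even},\cset_{right})$. The key observation is that if we sort the points of $X=\BRS(i,\rset,\cset)$ by $y$-coordinate, they lie alternately on odd rows (hence in the left half of $\cset$) and even rows (hence in the right half); this is immediate from the recursive definition. Hence every pair of consecutive points in $y$-order is a crossing of the root line, giving $\cost(\mathrm{root})=2^i-1$.

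Letting $f(i)=\WB_{T^*}(\BRS(i,\cdot,\cdot))$, the preceding observation yields the recurrence
\[
f(0)=0,\qquad f(i)=2f(i-1)+2^i-1.
\]
Dividing by $2^i$ and telescoping gives $f(i)/2^i=\sum_{j=1}^{i}(1-2^{-j})=i-1+2^{-i}$, so $f(i)=i\cdot 2^i-2^i+1=n\log n-n+1$, where $n=|X|=2^i$. Finally, a short arithmetic check shows $n\log n-n+1\geq (n(\log n-2)+1)/2$ for all $n\geq 1$, and combining this with $\WB(X)\geq\WB_{T^*}(X)=f(i)$ yields the desired bound.

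The main obstacle is the alternation claim driving $\cost(\mathrm{root})=2^i-1$. I will prove it by a small induction using the recursive construction: since $\BRS(i,\rset,\cset)$ is the disjoint union of a $\BRS$-sub-instance on $\rset_{odd}\times\cset_{left}$ and another on $\rset_{even}\times\cset_{right}$, every odd-indexed row of $\rset$ carries a point in the left strip and every even-indexed row carries a point in the right strip, so the $y$-sorted sequence alternates left/right exactly. Once this fact is in hand, the recurrence and the closed-form computation of $f(i)$ are routine, and the claim follows.
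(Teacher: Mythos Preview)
Your proof is correct and follows essentially the same approach as the paper: both exhibit the balanced partitioning tree that splits $\cset$ into $\cset_{left}$ and $\cset_{right}$ at the root, observe that consecutive points in $y$-order alternate sides (yielding $2^i-1$ crossings at the root), and then recurse. The only cosmetic difference is that you solve the recurrence exactly to get $\WB_{T^*}(X)=n\log n-n+1$, whereas the paper proves the weaker inductive bound $\WB_\sigma(X)\geq 2^i(i-2)+1$ directly; either suffices for the stated claim.
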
 

The original proof by Wilber~\cite{wilber} was given in the BST tree view. 
We provide a geometric proof here for completeness. 

\begin{proof} 
In order to prove the claim, it is sufficient to show that $\WB(X) \geq |X|(\log |X|-2)+1$.
We consider a balanced cut sequence $\sigma$ that always cuts a given strip into two sub-strips containing the same number of active columns. 
Notice that these cuts follow the inductive construction of the instance, in the reverse order.
We will prove by induction that,  for any $i\geq 1$, if $X={\sf BRS}(i,\cset,\rset)$, for any sets $\cset$ and $\rset$ of columns and rows, respectively, with $|\rset|=|\cset|=2^i$, then $\WB_{\sigma}(X)\geq 2^i\cdot(i-2)+1$. 

The base case, when $i = 0$, is obvious since the Wilber bound is always non-negative. 

We now consider some integer $i>0$, and we let  $X={\sf BRS}(i,\cset,\rset)$, for some sets $\cset$ and $\rset$ of columns and rows, respectively, with $|\rset|=|\cset|=2^i$.

Consider the first line $L$ in $\sigma$, that partitions the point set $X$ into sets $X'$ and $X''$. From the construction of $\BRS$, we get that $X'={\sf BRS}(i-1,\rset_{odd},\cset')$ and $X''={\sf BRS}(i-1,\rset_{even},\cset'')$ (as before, $\rset_{odd}$ and $\rset_{even}$ are the sets of all odd-indexed and all even-indexed rows, respectively; set $\cset'$ contains the leftmost $2^{i-1}$ active columns, and set $\cset''$ contains the remaining $2^{i-1}$ columns).
Moreover, from the construction, it is easy to see that every consecutive (with respect to their $y$-coordinates) pair of points in $X$ creates a crossing of $L$, so the number of crossings of $L$ is $2^i-1$. If we denote by $\sigma'$ and by $\sigma''$ the partitioning sequence induced by $\sigma$ for $X'$ and $X''$ respectively, then $\WB_{\sigma}(X)=\cost(L)+\WB_{\sigma'}(X')+\WB_{\sigma''}(X'')$. Moreover, by the induction hypothesis, $\WB_{\sigma'}(X'),\WB_{\sigma''}(X'')\geq (i-3)\cdot 2^{i-1}+1$. Altogether, we get that:
\[ \WB_{\sigma}(X)\geq (2^i-1)+2\cdot \left (2^{i-1}(i-3)+1\right ) \geq 2^i(i-2)+1.  \]
\end{proof}

\subsection{Two Additional Tools}\label{sec: Instance construction}
We present two additional technical tools that we use in our construction.

\paragraph{Tool 1: Exponentially Spaced Columns.} 
Recall that we defined the bit reversal instance ${\sf BRS}(\ell, \rset, \cset)$, where $\rset$ and $\cset$ are sets of $2^{\ell}$ rows and columns, respectively, that serve in the resulting instance as the sets of active rows and columns. The instance contains $n=2^{\ell}$ points. 
In the Exponentially-Spaced BRS instance $\ESBRS(\ell,\rset)$, we are still given a set $\rset$ of $2^{\ell}$ rows that will serve as active rows in the resulting instance, but we define the set $\cset$ of columns in a specific way. For an integer $i$, $C_i$ be the column whose $x$-coordinate is $i$. We then let $\cset$ contain, for each $0\leq  i< 2^{\ell}$, the column $C_{2^i}$. Denoting $N=2^n=2^{2^{\ell}}$, the $x$-coordinates of the columns in $\cset$ are $\set{1,2,4,8,\ldots,N/2}$. 
The instance is then defined to be $\BRS(\ell,\rset,\cset)$ for this specific set $\cset$ of columns.
Notice that the instance contains $n= \log N=  2^{\ell}$ input points.  

It is easy to see that any point set $X= \ESBRS(\ell, \rset)$ satisfies $\opt(X) = \Omega(n \log n)$.  
We remark that this idea of exponentially spaced columns is inspired by the instance used by Iacono~\cite{in_pursuit} to prove a gap between the weak WB-1 bound and $\opt(X)$ (see Appendix~\ref{sec: Iacono} for more details).  
However, Iacono's instance is tailored to specific partitioning tree $T$, and it is clear that there is another partitioning tree $T'$ with $\opt(X) = \Theta(WB_{T'}(X))$.
Therefore, this instance does not give a separation result for the strong WB-1 bound, and in fact it does not provide negative results for the weak WB-1 bound when the input point set is a permutation.

\paragraph{Tool 2: Cyclic Shift of Columns.}
Suppose we are given a point set $X$, and let $\cset'=\set{C_0,\ldots,C_{N-1}}$ be any set of columns indexed in their natural left-to-right order, such that all points of $X$ lie on columns of $\cset$ (but some columns may contain no points of $X$).
Let $0\leq s<N$ be any integer. We denote by $X^s$ a cyclic shift of $X$ by $s$ units, obtained as follows. 
For every point $p\in X$, we add a new point $p^s$ to $X^s$, whose $y$-coordinate is the same as that of $p$, and whose $x$-coordinate is $p.x + s \mod N$.
In other words, we shift the point $p$ by $s$ steps to the right in a circular manner. Equivalently, we move the last $s$ columns of $\cset'$ to the beginning of the instance.
The following claim will show that the value of the optimal solution does not decrease significantly in the shifted instance.

\begin{claim}\label{claim: shifting does not hurt opt}
Let $X$ be any point set that is a semi-permutation. Let $0\leq s<N$ be a shift value, and let $X'=X^s$ be the instance obtained from $X$ by a cyclic shift of its points by $s$ units to the right. 
Then $\opt(X')\geq \opt(X)-|X|$.
\end{claim}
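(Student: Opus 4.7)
The plan is to show the equivalent inequality $\opt(X) \le \opt(X') + |X|$: given an optimal canonical solution $Y'$ for $X'$, I will build a feasible solution $Y$ for $X$ of cardinality at most $|Y'|+|X|$. The solution will have two parts: the ``unshift'' $\hat Y$ of $Y'$, and a small repair set $Z$ of at most $|X|$ points that compensates for the damage caused by the circular wrap-around.

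First I partition the active columns of $X$ into the ``wrapping'' columns $A$ (those in $\{C_{N-s},\ldots,C_{N-1}\}$) and ``non-wrapping'' columns $B$ (those in $\{C_0,\ldots,C_{N-s-1}\}$). After shifting, the corresponding points of $X'$ occupy columns $\{0,\ldots,s-1\}$ and $\{s,\ldots,N-1\}$ respectively; in particular, the shift maps each of $A$ and $B$ to a single contiguous column block in $X'$. I then define $\hat Y = \{((r'.x - s)\bmod N,\ r'.y) : r'\in Y'\}$, fix a column $C^*$ at half-integer $x$-coordinate $N-s-\tfrac12$ (strictly between $B$ and $A$ in $X$), and let $Z$ contain the point $(C^*,R)$ for every active row $R$ of $X$. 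Then $|Y|=|\hat Y\cup Z|\le |Y'|+r(X)\le \opt(X')+|X|$.

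The core of the proof is verifying that $X\cup Y$ is satisfied. Partition $X\cup \hat Y$ into $P_A$ (points in the column range of $A$) and $P_B$ (points in the column range of $B$). For a non-collinear pair with both points in $P_A$, both shifts wrap identically by $+s-N$, so the shifted pair lies in $X'\cup Y'$ inside the column block $\{0,\ldots,s-1\}$; any witness supplied by satisfiability of $X'\cup Y'$ must also lie in this block (its $x$-coordinate is between those of the pair), and unshifting this witness returns it into the original bounding rectangle in $X\cup \hat Y$. The case for pairs in $P_B$ is symmetric with no wrap. For a mixed pair $(a,b)$ with $a\in P_A, b\in P_B$, the rectangle $\rect_{a,b}$ contains $C^*$ in its interior, so the $Z$-point on row $a.y$ witnesses the pair (distinct from $a$ since $C^*$ is non-integer while $a.x$ is integer, distinct from $b$ since $X$ is a semi-permutation and $a\neq b$). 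A non-collinear pair $(z,p)$ with $z\in Z$ and $p\in P_A\cup P_B$ is witnessed by the second $Z$-point on row $p.y$, and $Z$--$Z$ pairs are collinear.

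The main obstacle is the containment argument in the $P_A$--$P_A$ and $P_B$--$P_B$ cases: one must confirm that the bounding rectangle of the shifted pair does not cross the ``seam'' between columns $s-1$ and $s$ in $X'$, so that the witness unshifts into the correct rectangle; this is exactly what is ensured by the fact that $A$ (resp.\ $B$) is mapped to a contiguous column block under the shift. Once this is in hand, feasibility of $Y$ is immediate and the cardinality bound gives $\opt(X)\le \opt(X')+|X|$, as required.
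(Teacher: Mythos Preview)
Your proof is correct and follows essentially the same approach as the paper: take an optimal canonical solution $Y'$ for $X'$, unshift it back to obtain solutions on the two contiguous column blocks of $X$, and add one repair point per active row on a half-integral column separating the two blocks. The paper phrases this as first partitioning $Y'$ into $Y'_1,Y'_2$ and then unshifting each piece, while you unshift all of $Y'$ at once and then partition into $P_A,P_B$, but the resulting point set and the case analysis are the same. (One cosmetic slip: your stated reason that the $Z$-point on row $a.y$ is distinct from $b$ should simply be that $a.y\neq b.y$ since the pair is non-collinear; the conclusion is unaffected.)
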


\begin{proof}
Let $Y'$ be an optimal canonical solution to instance $X'$. 
We partition $Y'$ into two subsets: set $Y'_1$ consists of all points lying on the first $s$ columns with integral coordinates, and set $Y'_2$ consists of all points lying on the remaining columns. 
We also partition the points of $X'$ into two subsets $X'_1$ and $X'_2$ similarly.  Notice that $X'_1\cup Y_1'$ must be a satisfied set of points, and similarly, $X'_2\cup Y'_2$ is a satisfied set of points.
	
Next, we partition the set $X$ of points into two subsets: set $X_1$ contains all points lying on the last $s$ columns with integral coordinates, and set $X_2$ contains all points lying on the remaining columns. 
Since $X_1$ and $X_2$ are simply horizontal shifts of the sets $X'_1$ and $X'_2$ of points, we can define a set $Y_1$ of $|Y_1'|$ points such that $Y_1$ is a canonical feasible solution for $X_1$, and we can define a set $Y_2$ for $X_2$ similarly. 
Let $C$ be a column with a half-integral $x$-coordinate that separates $X_1$ from $X_2$ (that is, all points of $X_1$ lie to the right of $C$ while all points of $X_2$ lie to its left.) We construct a new set $Z$ of points, of cardinality $|X|$, such that $Y_1\cup Y_2\cup Z$ is a feasible solution to instance $X$. In order to construct the point set $Z$,
for each point $p \in X$, we add a point $p'$ with the same $y$-coordinate, that lies on column $C$, to $Z$. Notice that $|Z|=|X|$.

We claim that $Z \cup (Y_1 \cup Y_2)$ is a feasible solution  for $X$, and this will complete the proof.  
Consider any two points $p,q \in Z \cup (Y_1 \cup Y_2) \cup X$ that are not collinear.
Let $B_1$ and $B_2$ be the strips obtained from the bounding box $B$ by partitioning it with column $C$, so that $X_1\subseteq B_1$ and $X_2\subseteq B_2$. 
We consider two cases:  
\begin{itemize} 
\item The first case is when both $p$ and $q$ lie in the interior of the same strip, say $B_1$. But then $p,q\in X_1\cup Y_1$ must hold, and, since set $X_1\cup Y_1$ of points is satisfied, the pair $(p,q)$ of points is satisfied in $X\cup Y_1\cup Y_2\cup Z$.

\item The second case is when one of the points (say $p$) lies in the interior of one of the strips (say $B_1$), while the other point either lies on $C$, or in the interior of $B_2$. Then $p\in X_1\cup Y_1$ must hold. Moreover, since $Y_1$ is a canonical solution for $X_1$, point $p$ lies on a row that is active for $X_1$. Therefore, some point $p'\in X_1$ lies on the same row (where possibly $p'=p$). But then a copy of $p'$ that was added to the set $Z$ and lies on the column $C$ satisfies the pair $(p,q)$. 
\end{itemize} 
\end{proof}

\subsection{Construction of the Bad Instance} 
\label{sec:construction}
We construct two instances: instance $\hat X$ on $N^*$ points, that is a semi-permutation (but is somewhat easier to analyze), and instance $X^*$ in $N^*$ points, which is a permutation; the analysis of instance $X^*$ heavily relies on the analysis of instance $\hat X$. 
We will show that the optimal solution value of both instances is $\Omega(N^*\log\log N^*)$, but the cost of the Wilber Bound is at most $O(N^*\log\log\log N^*)$.
Our construction uses the following three parameters. We let $\ell\geq 1$ be an integer, and we set $n=2^{\ell}$ and $N = 2^n$.

\subsubsection{First Instance}
In this subsection we construct our first final instance $\hat X$, which is a semi-permutation containing $N$ columns. 
Intuitively, we create $N$ instances $X^0,X^1,\ldots,X^{N-1}$, where instance $X^s$ is an exponentially-spaced \BRS\ instance that is shifted by $s$ units. We then stack these instances on top of one another in this order.

Formally, for all $0\leq j\leq N-1$, we define a set $\rset_j$ of $n$ consecutive rows with integral coordinates, such that the rows of $\rset_0,\rset_1,\ldots,\rset_{N-1}$ appear in this bottom-to-top order. Specifically, set $\rset_j$ contains all rows whose $y$-coordinates are in $\set{jn+1,jn+2,\ldots,(j+1)n}$. 

For every integer $0\leq s\leq N-1$, we define a set of points $X^s$, which is a cyclic shift of instance $\ESBRS(\ell, \rset_s)$ by $s$ units.  
Recall that $|X^s| = 2^{\ell} = n$ and that the points in $X^s$ appear on the rows in $\rset_s$ and a set $\cset_s$ of columns, whose $x$-coordinates are in $\{\paren{2^j + s} \mod N: 1\leq j\leq n\}$.  
We then let our final instance be $\hat{X} = \bigcup_{s=0}^{N-1} X^s$.
From now on, we denote by $N^*=|\hX|$. Recall that $|N^*|=N\cdot n=N\log N$.

Observe that the number of active columns in $\hat{X}$ is $N$. Since the instance is symmetric and contains $N^* = N \log N$ points, every column contains exactly $\log N$ points. 
Each row contains exactly one point, so $\hat X$ is a semi-permutation.   
(See Figure~\ref{fig:hard-instance} for an illustration).   

\begin{figure}
	\centering
	\includegraphics[width=0.6\textwidth]{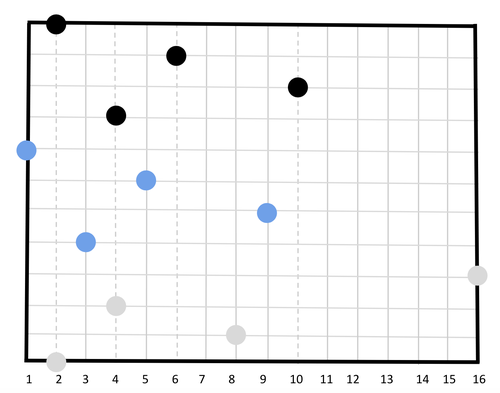}
	\caption{A small example of $X^0$, $X^1$, and $X^2$ (shown in different colors) placed from bottom to top, when the parameter $\ell=2$. In this case, each $X^s$ contains  $4$ points, and there are $N = 16$ copies of shifted sub-instances (but we show only $3$). 
The dotted columns are the active columns of $X^2$.}
	\label{fig:hard-instance}
\end{figure}

Lastly, we need the following bound on the value of the optimal solution of instance $\hat X$.

\begin{observation}\label{obs: optimal solution cost for semi perm instance}
$\opt(\hat{X}) = \Omega(N^* \log \log N^*)$ 
\end{observation}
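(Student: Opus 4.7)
The plan is to lower bound $\opt(\hat X)$ by summing the optimal solution values of the sub-instances $X^s$, exploiting the fact that the sub-instances occupy pairwise disjoint horizontal strips of the plane. Let $B_s$ denote the horizontal strip bounded by the topmost and bottommost rows in $\rset_s$, so that $X^s \subseteq B_s$ and the strips $B_s$ are pairwise disjoint (with empty horizontal gaps between consecutive strips). I will show that, with these definitions, $\opt(\hat X) \geq \sum_{s=0}^{N-1} \opt(X^s)$, while separately $\opt(X^s) = \Omega(n \log n)$ for every $s$.

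For the per-strip lower bound, observe that $\ESBRS(\ell, \rset_s)$ is precisely $\BRS(\ell, \rset_s, \cset)$ for a specific column set $\cset$ with $|\cset| = |\rset_s| = 2^\ell$, so Claim~\ref{claim:BRS complexity} gives $\opt(\ESBRS(\ell, \rset_s)) = \Omega(n \log n)$. Since $X^s$ is obtained from $\ESBRS(\ell, \rset_s)$ by a cyclic horizontal shift of $s$ units, Claim~\ref{claim: shifting does not hurt opt} implies $\opt(X^s) \geq \opt(\ESBRS(\ell, \rset_s)) - |X^s| = \Omega(n \log n) - n = \Omega(n \log n)$.

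For the decomposition, let $Y$ be an optimal canonical solution to $\hat X$, and let $Y_s = Y \cap B_s$. Because $Y$ is canonical, every point of $Y$ lies on an active row of $\hat X$; the active rows of $\hat X$ are exactly $\bigcup_s \rset_s \subseteq \bigcup_s B_s$, and the $B_s$'s are pairwise disjoint, so $|Y| = \sum_s |Y_s|$. I will verify that $Y_s$ is a feasible solution for $X^s$: for any non-collinear pair $p, q \in X^s \cup Y_s$, both points lie in $B_s$, hence $\rect_{p,q} \subseteq B_s$; since $\hat X \cup Y$ is satisfied, some point $r \in \hat X \cup Y$ distinct from $p, q$ lies in $\rect_{p,q} \subseteq B_s$, which forces $r \in (\hat X \cap B_s) \cup (Y \cap B_s) = X^s \cup Y_s$. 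Thus $X^s \cup Y_s$ is satisfied, so $|Y_s| \geq \opt(X^s)$, and summing over $s$ yields $\opt(\hat X) = |Y| \geq \sum_s \opt(X^s)$.

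Combining the two bounds gives $\opt(\hat X) \geq N \cdot \Omega(n \log n) = \Omega(N n \log n)$. Since $N^* = N n$, we have $\log N^* = \Theta(n)$ and $\log \log N^* = \Theta(\log n)$, so $N n \log n = \Theta(N^* \log \log N^*)$, yielding the required bound $\opt(\hat X) = \Omega(N^* \log \log N^*)$. The only delicate point, and essentially the whole content of the argument, is the strip-containment observation $\rect_{p,q} \subseteq B_s$ that lets the solution decompose horizontally without losing anything; invoking Theorem~\ref{thm: opt decomposes} would also work (via a horizontal analogue), but the direct argument above is self-contained and suffices here.
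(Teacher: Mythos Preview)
Your proof is correct and follows essentially the same approach as the paper: lower-bound each $\opt(X^s)$ via Claims~\ref{claim:BRS complexity} and~\ref{claim: shifting does not hurt opt}, then sum over $s$ using the horizontal disjointness of the sub-instances. The paper simply asserts the inequality $\opt(\hat X)\ge\sum_s\opt(X^s)$ without justification, whereas you supply the strip-containment argument explicitly; this is the only difference.
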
 
\begin{proof}
	From  Claims~\ref{claim:BRS complexity} and~\ref{claim: shifting does not hurt opt}, for each $0\leq s\leq N-1$, each sub-instance $X^s$ has $\opt(X^s) \geq \Omega(n \log n)=\Omega(\log N \log \log N)$. 
Therefore, $\opt(\hat{X}) \geq \sum_{s=0}^{N-1} \opt(X^s) =\Omega(N\log N\log\log N)= \Omega(N^* \log \log N^*)$ (we have used the fact that $N^*=N\log N$).  
\end{proof}

\subsubsection{Second Instance}

In this step we construct our second and final instance, $X^*$, that is a permutation. 
In order to do so, we start with the instance $\hat{X}$, and, for every active column $C$ of $\hat{X}$, we create $n= \log N$ new columns (that we view as copies of $C$), $C^1,\ldots,C^{\log N}$, which replace the column $C$. 
We denote this set of columns by $\bset(C)$, and we refer it as the \emph{block of columns representing $C$}. 
Recall that the original column $C$ contains $\log N$ input points of $\hat{X}$. 
We place each such input point on a distinct column of $\bset(C)$, so that the points form a monotonically increasing sequence (see the definition in Section~\ref{sec: sequences}). This completes the definition of the final instance $X^*$.
We use the following bound on the optimal solution cost of instance $X^*$.

\begin{claim}
	$\opt(X^*)\geq \opt(\hat{X})=\Omega(N^*\log\log N^*)$.
\end{claim}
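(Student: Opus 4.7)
The plan is to prove $\opt(X^*) \geq \opt(\hat X)$ directly, by showing that any feasible solution to $X^*$ can be transformed into a feasible solution to $\hat X$ of no larger cardinality. Combined with Observation \ref{obs: optimal solution cost for semi perm instance}, this yields the claimed bound. The main tool will be the column-collapsing lemma (Observation \ref{obs: collapsing columns}), applied block-by-block to undo the construction that produced $X^*$ from $\hat X$.

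More concretely, let $Y^*$ be an optimal canonical solution for $X^*$, so that $X^* \cup Y^*$ is satisfied and $|Y^*| = \opt(X^*)$. Recall that $X^*$ was obtained from $\hat X$ by replacing every active column $C$ of $\hat X$ with a block $\bset(C)$ of $\log N$ consecutive columns, placing the $\log N$ input points originally on $C$ onto distinct columns of $\bset(C)$ in a monotonically increasing pattern, while preserving all $y$-coordinates. The blocks $\{\bset(C)\}_C$ are pairwise disjoint sets of consecutive columns whose union is the set of all active columns of $X^*$. I would process the blocks one at a time, in any order: for each block $\bset(C)$, apply Observation \ref{obs: collapsing columns} to the current (satisfied) point set, collapsing all columns of $\bset(C)$ into a single representative column. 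Since each individual collapse preserves satisfaction and each block, at the moment it is processed, is still a set of consecutive columns in the current instance, the final set of points $S$ obtained after processing all blocks is satisfied.

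Next I would verify that $S = \hat X \cup \hat Y$ for some set $\hat Y$ with $|\hat Y| \leq |Y^*|$. For the image of $X^*$: within each block $\bset(C)$, the input points of $X^*$ have distinct $y$-coordinates (since $\hat X$ is a semi-permutation and $y$-coordinates are preserved by the construction), so collapsing $\bset(C)$ into a single column maps these points bijectively to the points of $\hat X$ lying on $C$. Doing this for every block recovers $\hat X$ exactly. For the image of $Y^*$: the collapse operation never creates new points and may only identify distinct points into one (when they share a $y$-coordinate), so the image $\hat Y$ of $Y^*$ under all the block collapses satisfies $|\hat Y| \leq |Y^*|$. Hence $\hat Y$ is a feasible solution for $\hat X$, giving $\opt(\hat X) \leq |\hat Y| \leq |Y^*| = \opt(X^*)$.

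The argument is essentially bookkeeping, and there is no serious obstacle; the only point that requires care is the observation that the block $\bset(C)$ remains a \emph{consecutive} set of active columns in the current instance at the time it is processed, which holds because the blocks are pairwise disjoint and each is consecutive to begin with, and because collapsing a different block does not interleave any new columns into $\bset(C)$. With this in hand, Observation \ref{obs: collapsing columns} applies at every step, and the chain of inequalities $\opt(X^*) \geq \opt(\hat X) = \Omega(N^* \log\log N^*)$ follows immediately from Observation \ref{obs: optimal solution cost for semi perm instance}.
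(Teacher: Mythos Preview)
Your proposal is correct and follows essentially the same argument as the paper: take an optimal canonical solution for $X^*$, collapse each block $\bset(C)$ using Observation~\ref{obs: collapsing columns} to recover a satisfied set containing $\hat X$, and bound the resulting solution size. The paper's accounting is phrased slightly differently (it bounds $|\hat Z \setminus \hat X|$ by $|Z| - |\hat X| = |Y|$, using $|X^*| = |\hat X|$), but the content is the same, and your added remark that each block remains a consecutive set of active columns at the time it is collapsed is a correct and helpful clarification.
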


\begin{proof}
Let $\cset$ be the set of active columns for $\hat{X}$.  
Notice that we can equivalently define $\hat{X}$ as the instance that is obtained from $X^*$ by collapsing, for every column $C\in \cset$, the columns of $\bset(C)$ into the column $C$.  
Consider an optimal canonical solution $Y$ for $X^*$ and the corresponding satisfied set $Z=Y \cup X^*$. 
From Observation~\ref{obs: collapsing columns}, for a column $C\in \cset$, point set $Z_{\mid \bset(C)}$ remains satisfied. 
We keep applying the collapse operations for all $C \in \cset$ to $Z$, and we denote the final resulting set of points by  $\hat{Z}$. 
Notice that $\hat{Z}$ contains every point in $\hat{X}$ and therefore the solution $\hat{Z} \setminus \hat{X}$ is a feasible solution for $\hat X$. Moreover, the cost of the solution is bounded by $|Z| - |\hat{X}| = |Y|$. 
Therefore, $ \opt(\hat{X})\leq \opt(X^*)$.  
\end{proof}

\subsection{Upper Bound for $\WB(\hat X)$}
The goal of this section is to prove the following theorem.

\begin{theorem}\label{thm: bound for first final instance}
$\WB(\hat X)\leq O(N^*\log\log\log N^*)$.
\end{theorem}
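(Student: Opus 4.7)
The plan is to bound $\WB_T(\hat X)$ for an arbitrary partitioning tree $T$ of $\hat X$, using the decomposition Theorem~\ref{thm: WB for corner and strip instances} applied recursively together with the exponential-spacing structure of the shifted sub-instances $X^s$. First I would fix a tree $T$ and apply Theorem~\ref{thm: WB for corner and strip instances} using a collection $\lset'$ of vertical lines that cut $\hat X$ into strips of some width $w$ (to be chosen), yielding strip instances $\{X^s_v\}_{v \in U}$ and a compressed instance $X^c$ of $\hat X$. The bound will then be controlled by (i) the Wilber cost of $X^c$, (ii) the sum of Wilber costs of the strip instances, and (iii) an additive $O(|\hat X|) = O(N^*)$ term.

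The key structural observation is that in every vertical strip of width $w$, each shifted sub-instance $X^s$ contributes only $O(\log w)$ points. This is because the active columns of $X^s$ are located at $\{(2^j + s) \bmod N : 0 \le j < n\}$, which is a (cyclically shifted) geometric progression of ratio $2$, so any window of width $w$ can contain at most $O(\log w)$ of them. Consequently, each strip instance $X^s_v$ of width $w$ contains at most $O(N \log w)$ input points, and moreover each shifted copy contributes a miniature $\ESBRS$-like pattern of size $O(\log w)$, so I can bound $\WB(X^s_v)$ by the trivial bound of Corollary~\ref{cor: upper bound on WB}. I would then analyze the compressed instance $X^c$ with care: for strips of width $w = n$, inside $X^c$ each $X^s$ loses only its leftmost cluster of $O(\log n)$ columns (which collapse onto a single compressed column), so $X^c$ still has essentially the same ``$\hat X$-type'' structure but with parameter $N' = N/n$. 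This lets me apply the bound recursively to $X^c$.

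Unrolling the recursion, the depth is governed by how fast the effective parameter shrinks from $N$ to a trivial size. With each level shrinking $n \to \log n$, the depth to reach a constant is $O(\log^* n) = O(\log \log \log N^*)$, and the additive $O(\cdot)$ terms at each level sum (over all instances at that level) to $O(N^*)$, which gives the desired overall bound $\WB_T(\hat X) \le O(N^* \log \log \log N^*)$. The main obstacle is to control the multiplicative constants $4$ and $8$ from Theorem~\ref{thm: WB for corner and strip instances}: a naive unrolling would produce an $8^{\text{depth}}$ blowup, so the recursion has to be set up so that the constants from successive levels are absorbed either into a polynomial factor in the already very small $\log \log \log N^*$ or into the additive $O(N^*)$ term. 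Concretely, I expect to use a ``non-uniform'' split at each recursion level, so that only $O(\log \log \log N^*)$ levels contribute a nontrivial compressed-instance cost, and the remaining levels only contribute through strip instances whose Wilber cost is directly bounded by the exponential-spacing property above.
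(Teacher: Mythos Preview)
Your approach via recursive application of Theorem~\ref{thm: WB for corner and strip instances} is fundamentally different from the paper's, which does not use that theorem here at all. The paper fixes an arbitrary ordering $\sigma$ and bounds $\WB_\sigma(\hat X)$ directly, splitting the crossings into type-1 (both endpoints in the same shifted sub-instance $X^s$) and type-2 (endpoints in different sub-instances). Type-1 crossings are handled by a probabilistic argument: choosing the shift $s$ uniformly at random, one shows (Claim~\ref{claim: prob bad strip}) that for any non-seam strip $S(v)$ the probability it contains more than $100\log\log N$ points of $X^s$ is at most $O(\width(S(v))/(N\log^{100}N))$; a union bound over one ``important'' vertex per width class then gives that with high probability no such strip is bad, and in that event a path-plus-subtrees decomposition of $T$ (Lemma~\ref{lem: no bad event low cost}) yields $\WB_\sigma(X^s)=O(\log N\cdot\log\log\log N)$. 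Type-2 crossings are then shown to be at most $O(\cost_1(v))+O(\log N)$ per tree vertex $v$.

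Your plan has a real gap at its core. The claim that the compressed instance $X^c$ ``has essentially the same $\hat X$-type structure but with parameter $N'=N/n$'' is not correct: with strips of width $w=n=\log N$, only the first $O(\log n)$ exponentially-spaced points of each $X^s$ land in a common strip and collapse, while the remaining $n-O(\log n)$ points lie in distinct compressed columns. So each compressed sub-instance still has $n-O(\log n)$ active columns, and the key parameter $n$ barely shrinks---there is no $n\to\log n$ recursion on $X^c$. (It is the \emph{strip} instances where each $X^s$ contributes $O(\log w)$ points, but those are not copies of $\hat X$ either: they carry $N$ stacked sub-instances on only $w$ columns, most contributing $0$ or $1$ point.) Even setting the structural issue aside, a depth-$d$ unrolling of Theorem~\ref{thm: WB for corner and strip instances} incurs a $4^d$ (respectively $8^d$) multiplicative factor, which for any $d=\omega(1)$ already swamps the target $\log\log\log N^*$; the ``non-uniform split'' you allude to is not concrete enough to see how it would avoid this. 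Finally, $\log^* n$ and $\log\log\log N^*$ are not of the same order. The exponential-spacing observation you make is exactly the engine of the paper's argument, but it is exploited \emph{inside a fixed tree} via the probabilistic width-class analysis rather than through the decomposition theorem.
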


Consider again the instance $\hat X$. Recall that it consists of $N$ instances $X^0,X^1,\ldots,X^{N-1}$ that are stacked on top of each other vertically in this order. 
We rename these instances as $X_1,X_2,\ldots,X_{N}$, so $X_j$ is exactly $\ESBRS(\log N)$, that is shifted by $(j-1)$ units to the right. 
Recall that $|\hat X|=N^*=N\log N$, and each instance $X_s$ contains exactly $\log N$ points.
We denote by $\cset$ the set of $N$ columns, whose $x$-coordinates are $1,2,\ldots,N$. 
All points of $\hat X$ lie on the columns of $\cset$. For convenience, for $1\leq j\leq N$, we denote by $C_j$ the column of $\cset$ whose $x$-coordinate is $j$.

\paragraph{Types of crossings:} 
Let $\sigma$ be any ordering of the auxiliary columns in $\lset$, and let $T=T_{\sigma}$ be the corresponding partitioning tree.
Our goal is to show that, for any such ordering $\sigma$, the value of $\WB_{\sigma}(\hat{X})$ is small (recall that we have also denoted this value by $\WB_T(\hat X)$). 
Recall that $WB_{\sigma}(\hat{X})$ is the sum, over all vertices $v\in V(T)$, of $\cost(v)$. The value of $\cost(v)$ is defined as follows. 
If $v$ is a leaf vertex, then $\cost(v)=0$. Otherwise, let $L=L(v)$ be the line of $\lset$ that $v$ owns.
Index the points in $X \cap S(v)$ by $q_1,\ldots, q_z$ in their bottom-to-top order.
A consecutive pair $(q_j, q_{j+1})$ of points is a \emph{crossing} iff they lie on different sides of $L(v)$.  
We distinguish between the two types of crossings that contribute towards $\cost(v)$. 
We say that the crossing $(q_j, q_{j+1})$ is of {\em type-$1$} if both $q_j$ and $q_{j+1}$ belong to the same shifted instance $X_s$ for some $0\leq s\leq N-1$.
Otherwise, they are of type-$2$. 
Note that, if $(q_j,q_{j+1})$ is a crossing of type $2$, with $q_j\in X_s$ and $q_{j+1}\in X_{s'}$, then $s,s'$ are not necessarily consecutive integers, as it is possible that for some indices $s''$, $X_{s''}$ has no points that lie in the strip $S(v)$.
We now let $\cost_1(v)$ be the total number of type-1 crossings of $L(v)$, and $\cost_2(v)$ the total number of type-2 crossings. Note that $\cost(v)=\cost_1(v)+\cost_2(v)$. 
We also define $\cost_1(\sigma) = \sum_{v \in V(T)} \cost_1(v)$ and $\cost_2(\sigma) = \sum_{v \in V(T)} \cost_2(v)$. 
Clearly, $\WB_{\sigma}(\hat X)=\cost_1(\sigma)+\cost_2(\sigma)$. In the next subsections, we bound $\cost_1(\sigma)$ and $\cost_2(\sigma)$ separately. We will show that each of these costs is bounded by $O(N^*\log\log\log N^*)$.

\subsubsection{Bounding Type-1 Crossings}\label{subsec: type 1 crossings}

The goal of this subsection is to prove the following theorem.

\begin{theorem}\label{thm: crossings inside instances}
	For every ordering $\sigma$ of the auxiliary columns in $\lset$, $\cost_1(\sigma)\leq O(N^*\log\log\log N^*)$.
\end{theorem}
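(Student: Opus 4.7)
The plan is to bound $\cost_1(\sigma)$ by decomposing it across the shifted sub-instances $X_0,X_1,\ldots,X_{N-1}$ and then exploiting the exponential spacing of their active columns.

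First, I note that the row blocks $\rset_0,\rset_1,\ldots,\rset_{N-1}$ partition the $y$-range into disjoint consecutive intervals, so no point of $X_{s'}$ can have $y$-coordinate strictly between two points of $X_s$ when $s'\neq s$. Therefore, two points of the same $X_s$ are consecutive in $X\cap S(v)$ (by $y$-order) if and only if they are consecutive in $X_s\cap S(v)$. This yields the clean decomposition $\cost_1(\sigma) = \sum_{s=0}^{N-1} \WB_T(X_s)$, where $\WB_T(X_s)$ denotes the total, over vertices $v$ of $T$, of the number of consecutive pairs of $X_s\cap S(v)$ that cross $L(v)$.

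Second, I establish a structural lemma for the exponentially spaced columns of $X_s$: for any shift $s$ and any strip $S$ of integer width $w<N/2$, $|X_s\cap S|\le O(\log w)$. The argument is that any two active columns of $X_s$ at indices $j_1<j_2$ lie at linear distance equal either to $2^{j_2}-2^{j_1}\ge 2^{j_2-1}$ (when both or neither column is wrapped by the shift) or to at least $N/2$ (when exactly one is wrapped, since $2^{j_2}-2^{j_1}<N/2$). Therefore, in a strip of width below $N/2$, all contained columns belong to one ``wrap class'', and their indices satisfy $2^{j_{\max}-1}\le w$, yielding $j_{\max}\le\log w+1$ and hence at most $O(\log w)$ columns.

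Third, combining Observation~\ref{obs: number of crossings at most min of both sides} with the structural lemma, I obtain $\cost^s(v)\le 2\min(|X_s\cap S(v_L)|,|X_s\cap S(v_R)|)\le O(\log w(v))$ at every vertex $v$. The naive per-shift estimate using the worst-case bound $\WB_T(X_s)\le O(n\log n)$ from Claim~\ref{claim: bounds on WB for path and subtree} gives only $\cost_1(\sigma)\le N\cdot O(n\log n)=O(N^*\log\log N^*)$, which is a factor of $\log\log N^*/\log\log\log N^*$ too weak. To recover the missing factor, I will partition the vertices of $T$ into dyadic width classes $W_k=\{v:\lfloor\log w(v)\rfloor=k\}$ and exploit simultaneously the per-shift structural bound $|X_s\cap S(v)|\le O(k)$ and the uniform coverage identity $\sum_s |X_s\cap S(v)|=n\cdot w(v)$, which holds because, as $s$ ranges over $\{0,\ldots,N-1\}$, each quantity $(2^j+s)\bmod N$ passes through each integer position exactly once. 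A balanced dyadic summation over $k$ then produces the $O(N^*\log\log\log N^*)$ aggregate.

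The main obstacle I anticipate is precisely this last amortization step, where one must quantitatively exploit the fact that a single partitioning tree $T$ cannot be simultaneously well-aligned with the exponential spacings of all $N$ cyclic shifts: most shifts must contribute only $O(n\log\log n)$ to the sum rather than the worst-case $O(n\log n)$, and converting this intuition into a rigorous dyadic charging argument is where all of the subtlety lies.
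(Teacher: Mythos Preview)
Your first two steps are correct and match the paper's setup: the decomposition $\cost_1(\sigma)=\sum_{s}\WB_T(X_s)$ is exactly the paper's identity $\cost_1(\sigma)=N\cdot\mu(\sigma)$, and your structural lemma $|X_s\cap S|\le O(\log w)$ is valid (indeed, it follows from the exponential spacing as you argue). The per-vertex bound $\cost^s(v)\le O(\log w(v))$ is also fine.

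The gap is precisely where you say it is, and your two stated ingredients are not enough to close it. The worst-case bound $|X_s\cap S|\le O(\log w)$ is \emph{tight at the seam}: a width-$w$ strip containing the position of $p_1$ genuinely holds $\Theta(\log w)$ points. So this bound alone does not distinguish the single seam path (which can legitimately carry $\Theta(n)$ points at large widths) from all the off-seam subtrees (which are in fact tiny). Likewise, the uniform-coverage identity $\sum_s|X_s\cap S(v)|=n\cdot w(v)$ only controls first moments; an adversary satisfying both of your constraints can still align the masses of $|X_s\cap S(v_L)|$ and $|X_s\cap S(v_R)|$ so that $\sum_s\min(\cdot,\cdot)\approx n\cdot w_{\min}(v)$, and summing this over a balanced tree already gives $\Theta(Nn^2)$, far above the target. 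No purely dyadic bookkeeping on width classes will rescue this without a sharper distributional fact.

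What the paper actually uses is a \emph{refinement} of your structural lemma into a tail bound: for a fixed strip $S$ of width $w$, the number of shifts $s$ for which $S$ is off the seam yet holds $\ge 100\log\log N$ points is at most $O(w/\log^{100}N)$ (equivalently, $\Pr_s[\text{bad}]\le O(w/(N\log^{100}N))$). This is strictly stronger than $|X_s\cap S|\le O(\log w)$: it says the $O(\log w)$ bound is essentially never attained away from the seam. Combined with the antichain structure of ``important'' vertices in each width class (at most $N/2^k$ of them in class $k$, with disjoint strips), a union bound shows that for all but an $O(1/\log^{99}N)$ fraction of shifts, \emph{every} important off-seam strip holds $O(\log\log N)$ points. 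For such good shifts, the paper decomposes $T$ along the seam path, roots each hanging subtree at an important vertex, and invokes the subtree bound $\sum_{u\in T_v}\cost^s(u)\le O(N(v)\log N(v))\le O(N(v)\log\log\log N)$; summing over the hanging subtrees gives $\WB_T(X_s)\le O(n\log\log\log N)$. The rare bad shifts are absorbed by the trivial $O(n\log n)$ bound. This seam-path decomposition together with the tail bound is the missing idea your proposal does not supply.
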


We prove this theorem by a probabilistic argument. 
Consider the following experiment. 
Fix the permutation $\sigma$ of $\lset$.  
Pick an integer $s \in \set{0,\ldots, N-1}$ uniformly at random, and let $X$ be the resulting instance $X_s$. 
This random process generates an input $X$ containing $n = \log N$ points.
Equivalently, let $p_1, p_2, \ldots, p_{\log N}$ be the points in $\BRS(\ell)$ ordered from left to right.  
Once we choose a random shift $s$, we move these points  to columns in $\cset_s = \{2^j + s \mod N \}$, where point $p_j$ would be moved to $x$-coordinate $2^j + s \mod N$. 
Therefore, in the analysis, we view the location of points $p_1,\ldots, p_{\log N}$ as random variables.
  
We denote by $\mu(\sigma)$ the expected value of $\WB_{\sigma}(X)$, over the choices of the shift $s$. The following observation is immediate, and follows from the fact that the final instance $\hat X$ contains every instance $X_s$ for all shifts $s\in \set{0,\ldots,N-1}$.

\begin{observation}
$\cost_1(\sigma) = N \cdot \mu(\sigma)$ 
\end{observation}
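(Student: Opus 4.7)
The plan is to prove this by straightforwardly unpacking the definitions and crucially exploiting the vertical stacking of the shifted sub-instances. The key structural fact to use is that the point sets $X_0, X_1, \ldots, X_{N-1}$ occupy pairwise disjoint horizontal bands of rows: $X_s$ only uses rows in $\rset_s = \{sn+1, \ldots, (s+1)n\}$. This disjointness is what makes the type-1/type-2 split clean.

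First, I will rewrite $\cost_1(\sigma) = \sum_{v \in V(T)} \cost_1(v)$, and at each inner vertex $v$ decompose by shift: $\cost_1(v) = \sum_{s=0}^{N-1} \cost_1^s(v)$, where $\cost_1^s(v)$ counts pairs $(q_j, q_{j+1})$ that are consecutive (in $y$-order) within $X \cap S(v)$, both belong to $X_s$, and lie on opposite sides of $L(v)$. The main step is the following identity: $\cost_1^s(v)$ equals the number of pairs of points consecutive in $X_s \cap S(v)$ (ordered by $y$) that lie on opposite sides of $L(v)$. The forward direction is immediate, since two consecutive points of $X \cap S(v)$ that both lie in $X_s$ have no point of $X \cap S(v)$ between them, hence in particular no point of $X_s \cap S(v)$. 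The reverse direction uses the disjoint-band property: if $q,q' \in X_s \cap S(v)$ are consecutive in $X_s \cap S(v)$, then any point of $X \cap S(v)$ strictly between them in $y$-order would have to belong to some $X_{s'}$ with $s' \neq s$, but such a point has $y$-coordinate outside the interval $[sn+1,(s+1)n]$ containing both $q$ and $q'$ — contradiction.

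Second, summing this identity over all $v \in V(T)$ gives
\[
\sum_{v \in V(T)} \cost_1^s(v) = \WB_\sigma(X_s),
\]
where $\WB_\sigma(X_s)$ is interpreted using the same tree $T = T(\sigma)$ applied to $X_s$. To make this interpretation rigorous, I would observe that any line of $\sigma$ that does not separate two active columns of $X_s$ contributes $0$ to the cost for $X_s$, so contracting these ``useless'' cuts in $T$ yields the genuine partitioning tree for $X_s$ induced by $\sigma$ (restricted to $X_s$'s auxiliary columns), and the sum of crossings is unchanged.

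Finally, swapping the order of summation and applying the definition $\mu(\sigma) = \mathbb{E}_s[\WB_\sigma(X_s)] = \frac{1}{N}\sum_{s=0}^{N-1}\WB_\sigma(X_s)$:
\[
\cost_1(\sigma) = \sum_{v \in V(T)} \sum_{s=0}^{N-1} \cost_1^s(v) = \sum_{s=0}^{N-1} \WB_\sigma(X_s) = N \cdot \mu(\sigma).
\]
There is no real obstacle here beyond carefully justifying the equivalence in the first step, which relies entirely on the disjoint vertical bands of the sub-instances; this is the only nontrivial point in what is otherwise a definition-chase.
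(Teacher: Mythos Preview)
Your proof is correct and spells out carefully what the paper merely asserts is ``immediate, and follows from the fact that the final instance $\hat X$ contains every instance $X_s$.'' Your key step---that two points of $X_s \cap S(v)$ are consecutive there if and only if they are consecutive in $\hat X \cap S(v)$, because the $X_s$ occupy disjoint horizontal bands---is exactly the content behind the paper's one-line justification, so the approaches coincide.
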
  

Therefore, in order to prove Theorem \ref{thm: crossings inside instances}, it is sufficient to show that, for every fixed permutation $\sigma$ of $\lset$, $\mu(\sigma)\leq  O(\log N\log\log\log N)$ (recall that $N^*=N\log N$).

We assume from now on that the permutation $\sigma$ (and the corresponding partitioning tree $T$) is fixed, and we analyze the expectation $\mu(\sigma)$.
Let $v \in V(T)$. We say that $S(v)$ is a {\em seam strip} iff point $p_1$ lies in the strip $S(v)$. 
We say that $S(v)$ is a \emph{bad} strip (or that $v$ is a bad node) if the following two conditions hold: (i) $S(v)$ is not a seam strip; and (ii) $S(v)$ contains at least $100\log\log N$ points of $X$. 
Let $\event(v)$ be the bad event that $S(v)$ is a bad strip.

\begin{claim}\label{claim: prob bad strip}
For every vertex $v\in V(T)$, $\prob{\event(v)}\leq \frac{8\width(S(v))}{N\log^{100} N}$.
\end{claim}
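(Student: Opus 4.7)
The plan is to bound directly the number of shifts $s \in \{0, \ldots, N-1\}$ for which $\event(v)$ occurs, by exploiting the exponential spacing of the columns of $\ESBRS(\ell)$. Let $w := \width(S(v))$ and let $S(v)$ have column range $[a, a+w]$ (a non-wrapping interval inside $[0, N]$). Since $p_j$ lies at column $(2^j + s) \bmod N$, we have $p_j \in S(v)$ iff $s \in A_j := [a - 2^j, a - 2^j + w] \pmod N$, an arc of length $w$ on $\mathbb{Z}/N\mathbb{Z}$. Writing $J(s) := \{j : s \in A_j\}$ and $k := 100 \log\log N$, the goal is $|\{s : 1 \notin J(s),\; |J(s)| \geq k\}| \leq 8w/\log^{100} N$.

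For each $s$ triggering $\event(v)$, I would classify by $j^{*}(s) := \max J(s)$. In the main case $j^{*}(s) < n$, no modular wrap-around occurs, and the positions $\{p_j(s)\}_{j \leq j^{*}}$ form an exponentially spaced increasing sequence $s+2, s+4, \ldots, s+2^{j^{*}}$. Setting $y := a - s$, the constraint $p_{j^{*}} \in S(v)$ confines $y$ to $[2^{j^{*}} - w, 2^{j^{*}}]$, and one verifies $J(s) \cap [j^{*}] = \{j : 2^j \geq y\}$. Then $|J(s)| \geq k$ forces the $k$-th largest power of two below $2^{j^{*}}$ to exceed $y$, hence $y \leq 2^{j^{*} - k + 1}$; and $1 \notin J(s)$ forces $y > 2$. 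For each fixed $j^{*}$ this confines $y$ to at most $2^{j^{*}-k+1}$ integers, and the interval is nonempty only when $j^{*} \in \{k+1, \ldots, \lfloor \log_2(2w) \rfloor\}$. Telescoping,
\[
\sum_{j^{*} = k+1}^{\lfloor \log_2(2w) \rfloor} 2^{j^{*} - k + 1} \;\leq\; 2^{\lfloor \log_2(2w)\rfloor - k + 2} \;\leq\; \frac{8w}{2^k} \;=\; \frac{8w}{\log^{100} N}.
\]

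The wrap case $j^{*}(s) = n$ contributes no shifts when $w \leq N/2$: the joint constraints $p_n = s \in S(v)$ and $p_1 = (s+2) \bmod N \notin S(v)$ pin $s$ to at most two values near the right boundary of $S(v)$, and for those values the exponential-spacing bound applied to the interval $[s, s + 2^{n-1}]$ confirms $J(s) = \{n\}$, so $|J(s)| = 1 < k$ and $\event(v)$ fails. When $w > N/2$, the same analysis applies symmetrically to the complement strip of width $N-w < N/2$: $\event(v)$ then asserts that $p_1$ lies in the complement together with at most $n - k$ other points, and the main-case counting on the complement yields the same bound $8w/\log^{100}N$, up to a constant that is absorbed by the generous threshold constant $100$ in the definition of ``bad''.

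The crux is the substitution $y := a - s$, which linearizes the exponentially spaced condition into an elementary geometric series whose telescoping pays exactly the factor $2^k = \log^{100} N$ the claim demands. The main obstacle I anticipate is tidying up the modular edge cases, particularly near $w \approx N/2$, but this reduces to the routine casework sketched above once the core exponential-spacing observation $|J(s)| \leq \log_2 w + O(1)$ is in hand.
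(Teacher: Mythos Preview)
Your approach is essentially the same as the paper's. Both arguments pass to the equivalent experiment of fixing the points at $2^1,2^2,\ldots$ and shifting the strip, then observe that the exponential spacing forces the left boundary---your $y=a-s$, the paper's $L$---into a short interval of length $O(w/\log^{100}N)$. The paper parametrizes by $p_j$, the rightmost point to the \emph{left} of $L$, and argues directly that $2^j\le 4w/\log^{100}N$, hence $L\le 2^{j+1}\le 8w/\log^{100}N$; you parametrize by $j^\ast=\max J(s)$, the rightmost point \emph{inside} $S(v)$, and telescope over $j^\ast$. These are dual bookkeepings of the same count (in the paper's notation $j^\ast=j+q$).

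One quibble: your sentence ``In the main case $j^\ast(s)<n$, no modular wrap-around occurs'' is not literally correct---wraparound for small $j$ depends on $s$, not on $j^\ast$. What you actually use is that with $y:=(a-s)\bmod N$, the condition $p_j\in S(v)$ reads $2^j\in[y,y+w]$ on $\mathbb{Z}/N\mathbb{Z}$, and your main-case computation is valid whenever this cyclic interval does not wrap, i.e.\ $y\le N-w$. The residual wrap case ($y>N-w$) together with $1\notin J(s)$ indeed pins $y$ to $O(1)$ values and contributes negligibly, as you sketch; the paper is equally informal here, simply invoking the ``equivalent experiment'' without addressing the cyclic boundary.
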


\begin{proof}
	Fix a vertex $v\in V(T)$.
	For convenience, we denote $S(v)$ by $S$. Let $s$ be the random integer chosen by the algorithm and let $X_s=X$ be the resulting point set.
	Assume that $S$ is a bad strip, and let $L$ be the vertical line that serves as the left boundary of $S$. Let $p_j$ be the point of $X_s$ that lies to the left of $L$, and among all such points, we take the one closest to $L$. Recall that for each $1\leq j< \log N$, there are $2^j-1$ columns of $\cset$ that lie between the column of $p_j$ and the column of $p_{j+1}$. %
	
	If $S$ is a bad strip, then it must contain points $p_{j+1},p_{j+2},\ldots,p_{j+q}$, where $q= 100\log\log N$. Therefore, the number of columns of $\cset$ in strip $S$ is at least $2^{j+q-2}$, or, equivalently, $\width(S)\geq 2^{j+q}/4\geq (2^j\log^{100}N)/4$. In particular, $2^j\leq 4\width(S)/\log^{100}N$. 
	
	Therefore, in order for $S$ to be a bad strip, the shift $s$ must be chosen in such a way that the point $p_j$, that is the rightmost point of $X_s$ lying to the left of $L$, has $2^j\leq 4\width(S)/\log^{100}N$. It is easy to verify that the total number of all such shifts $s$ is bounded by $8\width(S)/\log^{100}N$.
	
	In order to see this, consider an equivalent experiment, in which we keep the instance $X_1$ fixed, and instead choose a random shift $s\in \set{0,\ldots,N-1}$ for the line $L$. For the bad event $\event(v)$ to happen, the line $L$ must fall in the interval between $x$-coordinate $0$ and $x$-coordinate $8\width(S)/\log^{100}N$. Since every integral shift $s$ is chosen with the same probability $1/N$, the probability that $\event(v)$ happens is at most $\frac{8\width(S)}{N\log^{100} N}$.
\end{proof}

Consider now the partitioning tree $T$. We partition the vertices of $T$ into $\log N+1$ classes $Q_1,\ldots,Q_{\log N+1}$. A vertex $v\in V(T)$ lies in class $Q_i$ iff $2^i\leq \width(S(v))< 2^{i+1}$. Therefore, every vertex of $T$ belongs to exactly one class.

Consider now some vertex $v\in V(T)$, and assume that it lies in class $Q_i$. We say that $v$ is an \emph{important vertex} for class $Q_i$ iff no ancestor of $v$ in the tree $T$ belongs to class $Q_i$. Notice that, if $u$ is an ancestor of $v$, and $u\in Q_j$, then $j\geq i$ must hold.

For each $1\leq i\leq \log N+1$, let $U_i$ be the set of all important vertices of class $Q_i$. 

\begin{observation}
\label{obs: bounded U_i} 
For each $1\leq i\leq \log N+1$, $|U_i| \leq N/2^i$. 
\end{observation}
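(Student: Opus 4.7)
The strategy is to argue that the strips $\{S(v) : v \in U_i\}$ are pairwise internally disjoint, and then use a simple width accounting. The key point is that the important vertices of class $Q_i$ are by definition the \emph{topmost} occurrences of class $Q_i$ along every root-to-leaf path of $T$, so no two of them can lie in an ancestor-descendant relation in $T$.

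First, I would verify the disjointness claim. Consider two distinct $u,v \in U_i$. If $u$ were an ancestor of $v$ in $T$ (or vice versa), then $v$ would have an ancestor (namely $u$) in class $Q_i$, contradicting that $v$ is important. So $u$ and $v$ lie in different subtrees of their least common ancestor. For any partitioning tree, whenever two nodes are not in ancestor-descendant relation, their associated strips $S(u)$ and $S(v)$ are internally disjoint (each split strictly subdivides the parent strip into two subregions, and the two siblings' subtrees stay within those disjoint subregions forever). Hence the strips $\{S(v)\}_{v\in U_i}$ are pairwise internally disjoint.

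Second, I would do the width accounting. Every $v \in U_i$ satisfies $\width(S(v)) \geq 2^i$ by definition of $Q_i$. All strips $S(v)$ are contained in the bounding box $B$ of $\hat X$, which has width $N$ (its active columns are $C_1,\ldots,C_N$). Because the strips are internally disjoint subintervals of $B$ (horizontally), their widths sum to at most $\width(B) = N$, giving
\[
|U_i|\cdot 2^i \;\leq\; \sum_{v \in U_i} \width(S(v)) \;\leq\; N,
\]
and therefore $|U_i| \leq N/2^i$, as claimed.

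The main ``obstacle'' is really just the disjointness statement, and even that is standard for partitioning trees; once it is in hand, the bound is immediate from the width lower bound $2^i$ enjoyed by every member of $U_i$. No probabilistic argument or reference to the structure of $\hat X$ beyond its width is needed.
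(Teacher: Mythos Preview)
Your proof is correct and follows essentially the same approach as the paper's: you show that no two vertices of $U_i$ are in ancestor-descendant relation (hence their strips are internally disjoint), and then use width accounting with the lower bound $\width(S(v))\geq 2^i$ and the total width $N$ of the bounding box. The paper's proof is simply a terser version of the same argument.
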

\begin{proof}
Since no vertex of $U_i$ may be an ancestor of another vertex, the strips in $\set{S(v)\mid v\in U_i}$ are mutually disjoint, except for possibly sharing their boundaries. 
Since each strip has width at least $2^i$, and we have exactly $N$ columns, the number of such strips is bounded by $N/2^i$. 
\end{proof}

Let $\event$ be the bad event that there is some index $1\leq i\leq \log N+1$, and some important vertex $v\in U_i$ of class $Q_i$, for which  the event $\event(v)$ happens. 
Applying the Union Bound to all strips in $\set{S(v)\mid v\in U_i}$ and  all indices $1\leq i\leq \log N$, we obtain the following corollary of Claim \ref{claim: prob bad strip}.

\begin{corollary}\label{cor: no bad important strips}
	 $\prob{\event}\leq \frac{32}{\log^{99}N}$.
\end{corollary}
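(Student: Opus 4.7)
The plan is to apply the union bound in two stages, using the partition of vertices into width classes and the fact that important vertices within a class have disjoint strips.

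First, I would fix a class index $1 \leq i \leq \log N + 1$ and control the contribution of the important vertices in $U_i$. For any $v \in U_i$, the definition of $Q_i$ gives $\width(S(v)) < 2^{i+1}$, so Claim (the probability-of-bad-strip bound) yields $\prob{\event(v)} \leq \frac{8 \cdot 2^{i+1}}{N \log^{100} N} = \frac{16 \cdot 2^i}{N \log^{100} N}$. Combining this pointwise bound with $|U_i| \leq N/2^i$ from Observation (bounded $U_i$), a union bound over $v \in U_i$ gives
\[
\prob{\bigcup_{v \in U_i} \event(v)} \;\leq\; \frac{N}{2^i} \cdot \frac{16 \cdot 2^i}{N \log^{100} N} \;=\; \frac{16}{\log^{100} N}.
\]
Observe how the factors of $2^i$ cancel, so the bound is independent of $i$; this is the place where the careful accounting by width classes is essential.

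Second, I would union-bound over the $\log N + 1$ choices of $i$. Since $\event = \bigcup_i \bigcup_{v \in U_i} \event(v)$ by definition, we obtain
\[
\prob{\event} \;\leq\; (\log N + 1) \cdot \frac{16}{\log^{100} N} \;\leq\; \frac{32}{\log^{99} N},
\]
using $\log N + 1 \leq 2 \log N$ for $N \geq 2$. No step is really difficult here; the only subtlety is that the two-level union bound is balanced precisely because the bound on $|U_i|$ scales as $1/2^i$ while $\width(S(v))$ scales as $2^i$, so the per-class failure probability is uniform in $i$.
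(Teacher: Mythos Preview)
Your proof is correct and follows essentially the same approach as the paper: fix a class $Q_i$, use the width bound and Claim~\ref{claim: prob bad strip} to get a per-vertex probability of $\frac{16\cdot 2^i}{N\log^{100}N}$, combine with $|U_i|\leq N/2^i$ to get $\frac{16}{\log^{100}N}$ per class, then union-bound over the $\log N + 1$ classes. The cancellation of the $2^i$ factors you highlight is exactly the mechanism the paper relies on.
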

\begin{proof}
	Fix some index $1\leq i\leq \log N$. Recall that for every important vertex $v\in U_i$, the probability that the event $\event(v)$ happens is at most $\frac{8\width(S(v))}{N\log^{100} N}\leq \frac{8\cdot 2^{i+1}}{N\log^{100}N}=\frac{16\cdot 2^{i}}{N\log^{100}N}$. 
From Observation~\ref{obs: bounded U_i}, $|U_i|\leq N/2^i$. 
From the union bound, the probability that event $\event(v)$ happens for any $v\in U_i$ is bounded by $\frac{16}{\log^{100}N}$. 
Using the union bound over all  $1\leq i\leq \log N+1$, we conclude that  $\prob{\event}\leq \frac{32}{\log^{99}N}$.
\end{proof}

Lastly, we show that, if event $\event$ does not happen, then the cost of the Wilber Bound is sufficiently small.
	
\begin{lemma}\label{lem: no bad event low cost}
	Let $1\leq s\leq N$ be a shift for which $\event$ does not happen. Then:
	
	 $$WB_{\sigma}(X_s)\leq O(\log N\log\log\log N).$$
\end{lemma}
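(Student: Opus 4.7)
Fix a shift $s$ for which $\event$ does not occur, and let $P^{*}$ denote the root-to-leaf path in $T$ ending at the column of $p_{1}$. Since the root has $\size = \log N$, the path bound in \Cref{claim: bounds on WB for path and subtree} gives $\sum_{v\in P^{*}}\cost(v) \le 2\log N$, so it suffices to bound the total cost of vertices off $P^{*}$. These vertices decompose into disjoint subtrees $\{T_{w}\}_{w\in B}$, where $B$ is the set of \emph{branch roots}---children in $T$ of vertices on $P^{*}$ that themselves lie off $P^{*}$. Because $p_{1}\in S(u)$ for every $u\in P^{*}$, no branch root is a seam; and the strips $\{S(w)\}_{w\in B}$ are pairwise disjoint and cover every column not on $P^{*}$, so $\sum_{w\in B}\size(w)\le \log N$.

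The core of the proof is to establish the per-branch bound
\[
\sum_{v\in T_{w}}\cost(v)\le O(\size(w)\log\log\log N) \qquad \text{for every } w\in B,
\]
since summing over $B$ then yields $O(\log N\log\log\log N)$, which together with the $P^{*}$ contribution proves the lemma. I would split into two cases according to whether $w$ is important. If $w$ \emph{is} important, then $w$ is an off-$P^{*}$ (hence non-seam) important vertex, so $\neg\event$ gives $\size(w)\le 100\log\log N$; plugging this into the subtree bound of \Cref{claim: bounds on WB for path and subtree} yields $\sum_{T_{w}}\cost\le O(\size(w)\log\size(w))=O(\size(w)\log\log\log N)$, as required.

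The delicate case, and the main obstacle, is when $w$ is \emph{not} important: then $w$ and its parent both lie in the same class $Q_{i}$, and the important $Q_{i}$-ancestor of $w$ is a seam on $P^{*}$, so $\neg\event$ gives no direct size bound on $w$. To handle this I would greedily descend from $w$ along the unique child in class $Q_{i}$ at each step---uniqueness follows because the parent's width is less than $2^{i+1}$ while two $Q_{i}$-children would each have width at least $2^{i}$, summing to at least $2^{i+1}$---producing a chain $Y=(y_{0}=w,y_{1},\dots,y_{k})$ all of whose vertices lie in $Q_{i}$. For every $j<k$, the off-chain sibling $z_{j+1}$ of $y_{j+1}$ lies in a strictly lower class, so $z_{j+1}$ is important and off $P^{*}$, giving $\size(z_{j+1})\le 100\log\log N$ by $\neg\event$; and by \Cref{obs: number of crossings at most min of both sides} we have $\cost(y_{j})\le 2\size(z_{j+1})$. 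The same size bound applies to both children of the terminal $y_{k}$ when it is internal.

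Consequently the chain costs telescope to at most $2\size(w)$, while each off-chain subtree $T_{z}$ contributes at most $O(\size(z)\log\log\log N)$ by the Case-A bound applied to $z$ (whose size is $\le 100\log\log N$). Since the off-chain strips partition $S(w)$, we have $\sum_{z}\size(z) \le \size(w)$, and combining these estimates gives $\sum_{T_{w}}\cost\le O(\size(w)\log\log\log N)$ in Case B as well. Summing this per-branch inequality over $w\in B$ and adding the $O(\log N)$ contribution of $P^{*}$ yields $\WB_{\sigma}(X_{s}) \le O(\log N\log\log\log N)$, as claimed.
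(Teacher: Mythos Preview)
Your proposal is correct and follows essentially the same approach as the paper's proof: the seam path $P^{*}$, the branch roots $B$ (the paper's siblings $v'_i$), the two-case split on whether the branch root is important, and in the non-important case the greedy descent along the same class $Q_i$ with important off-chain siblings bounded via $\neg\event$. The only cosmetic difference is that you bound the chain cost via \Cref{obs: number of crossings at most min of both sides} applied vertex-by-vertex, whereas the paper invokes the path bound of \Cref{claim: bounds on WB for path and subtree} directly---but the latter is proved using exactly the former, so the arguments are the same.
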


\begin{proof}
	Consider the partitioning tree $T=T(\sigma)$. 
We say that a vertex $v\in V(T)$ is a \emph{seam vertex} iff $S(v)$ is a seam strip, that is, the point $p_1$ in instance $X_s$ lies in $S(v)$.  
Clearly, the root of $T$ is a seam vertex, and for every seam vertex $v$, exactly one of its children is a seam vertex. 
Therefore, there is a root-to-leaf path $P$ that only consists of seam vertices, and every seam vertex lies on $P$. 
We denote the vertices of $P$ by $v_1,v_2,\ldots,v_q$, where $v_1$ is the root of $T$, and $v_q$ is a leaf. 
For $1<i\leq q$, we denote by $v'_i$ the sibling of the vertex $v_i$. 
Note that all strips $S(v'_2),\ldots,S(v'_q)$ are mutually disjoint, except for possibly sharing boundaries, and so $\sum_{i=2}^q\size(v'_i)\leq |X_s|= \log N$. 
Moreover, from Claim \ref{claim: bounds on WB for path and subtree}, $\sum_{i=1}^q\cost(v_i)\leq 2|X_s|=2\log N$.
	
For each $1<i\leq q$, let $T_i$ be the sub-tree of $T$ rooted at the vertex $v'_i$. We prove the following claim:
	
	\begin{claim}\label{claim: bounding subtrees}
		For all $1<i\leq q$, the total cost of all vertices in $T_i$ is at most $O(\size(v'_i)\log\log\log N)$.
	\end{claim}
	
	Assume first that the above claim is correct. Notice that every vertex of $T$ that does not lie on the path $P$ must belong to one of the trees $T_i$. 
The total cost of all vertices lying in all trees $T_i$ is then bounded by $\sum_{i=2}^qO(\size(v'_i)\log\log\log N)\leq O(\log N\log\log\log N)$. 
Since the total cost of all vertices on the path $P$ is bounded by $2\log N$, overall, the total cost of all vertices in $T$ is bounded by $O(\log N\log\log\log N)$.

In order to complete the proof of Lemma \ref{lem: no bad event low cost}, it now remains to prove Claim \ref{claim: bounding subtrees}.

\begin{proofof}{Claim \ref{claim: bounding subtrees}}
We fix some index $1<i\leq q$, and consider the vertex $v'_i$. If the parent $v_{i-1}$ of $v'_i$ belongs to a different class than $v'_i$, then $v'_i$ must be an important vertex in its class. In this case, since we have assumed that Event $\event$ does not happen, $\size(v'_i)\leq O(\log\log N)$. 
From Claim \ref{claim: bounds on WB for path and subtree}, the total cost of all vertices in $T_i$ is bounded by 
$$\sum_{v \in V(T_i)} \cost(v) \leq O(\size(v'_i)\log(\size(v'_i)))\leq O(\size(v'_i)\log\log\log N)$$  
Therefore, we can assume from now on that $v_{i-1}$ and $v'_i$ both belong to the same class, that we denote by $Q_j$.
Notice that, if a vertex $v$ belongs to class $Q_j$, then at most one of its children may belong to class $Q_j$; the other child must belong to some class $Q_{j'}$ for $j'<j$, and it must be an important vertex in its class.

We now construct a path $P_i$ in tree $T_i$ iteratively, as follows. 
The first vertex on the path is $v'_i$. 
We then iteratively add vertices at to the end of path $P_i$ one-by-one, so that every added vertex belongs to class $Q_j$. 
In order to do so, let $v$ be the last vertex on the current path $P_i$. 
If some child $u$ of $v$ also lies in class $Q_i$, then we add $u$ to the end of $P_i$ and continue to the next iteration. 
Otherwise, we terminate the construction of the path $P_i$.

Denote the sequence of vertices on the final path $P_i$ by $(v'_i=u_1,u_2,\ldots,u_z)$; recall that every vertex on $P_i$ belongs to class $Q_j$, and that path $P_i$ is a sub-path of some path connecting $v'_i$ to a leaf of $T_i$. Let $Z$ be a set of vertices containing, for all $1<z'\leq z$ a sibling of the vertex $u_{z'}$, and additionally the two children of $u_z$ (if they exist). Note that every vertex $x\in Z$ is an important vertex in its class, and, since we have assumed that Event $\event$ did not happen, $\size(x)\leq O(\log\log N)$. For every vertex $x\in Z$, we denote by $T'_x$ the sub-tree of $T$ rooted at $x$. From Claim \ref{claim: bounds on WB for path and subtree}, $\cost(T'_x)\leq O(\size(x)\log(\size(x)))=O(\size(x)\log\log\log N)$.

Notice that all strips in $\set{S(x)\mid x\in Z}$ are disjoint from each other, except for possibly sharing a boundary. It is then easy to see that $\sum_{x\in Z}\size(x)\leq \size(v'_i)$. Therefore, altogether $\sum_{x\in Z}\cost(T'_x)\leq O(\size(v'_i)\log\log\log N))$.

Lastly, notice that every vertex of $V(T_i)$ either lies on $P_i$, or belongs to one of the trees $T'_x$ for $x\in Z$. Since, from Claim \ref{claim: bounds on WB for path and subtree}, the total cost of all vertices on $P_i$ is bounded by $\size(v'_i)$, altogether, the total cost of all vertices in $T_i$ is bounded by $O(\size(v'_i)\log\log\log N))$.
\end{proofof}
\end{proof}

To summarize, if the shift $s$ is chosen such that Event $\event$ does not happen, then 	 $WB_{\sigma}(X_s)\leq O(\log N \log\log\log N)$. Assume now that the shift $s$ is chosen such that Event $\event$ happens. From Corollary \ref{cor: no bad important strips}, the probability of this is at most 	 $\prob{\event}\leq \frac{32}{\log^{99}N}$. Since $|X_s|=\log N$, from Corollary \ref{cor: upper bound on WB}, $\WB_{\sigma}(X_s)\leq |X_s|\log(|X_s|)\leq \log N\log\log N$. Therefore, altogether, we get that $\mu(\sigma)\leq O(\log N\log\log\log N)$, and $\cost_1(\sigma)=N\cdot \mu(\sigma)=O(N\log N\log\log\log N)=O(N^*\log\log\log N^*)$, as $N^*=N\log N$.

\subsubsection{Bounding Type-2 Crossings}
In this section we show that for every ordering $\sigma$ of the auxiliary columns in $\lset$, $\cost_2(\sigma)\leq O(N^*\log\log\log N^*)$. For the remainder of the section, we fix an ordering $\sigma$ of $\lset$, and we denote the corresponding partitioning tree $T(\sigma)$ by $T$. Recall that we have defined a set $\cset$ of $N$ columns, so that $|\cset|,|\lset|\leq N$, and  $|V(T)|=O(N)$. It is then sufficient to prove the following theorem:

\begin{theorem}\label{thm: bound cost 2}
	For every vertex $v\in V(T)$, $\cost_2(v)\leq O(\log N)+O(\cost_1(v))$.
\end{theorem}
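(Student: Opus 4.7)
\textbf{Plan for the proof of Theorem \ref{thm: bound cost 2}.} Fix $v\in V(T)$; let $L=L(v)$ and let $S_L,S_R$ be the two sub-strips of $S(v)$ separated by $L$, containing $w_L$ and $w_R$ active columns respectively. For every $s\in\{1,\dots,N\}$ set $G_s=X_s\cap S(v)$, and partition the set of indices with $G_s\ne\emptyset$ into $A'_L=\{s:G_s\subseteq S_L\}$, $A'_R=\{s:G_s\subseteq S_R\}$, and $A_{\mathrm{mix}}=\{s:G_s\text{ meets both }S_L\text{ and }S_R\}$. Because the instances $X_1,\dots,X_N$ are stacked vertically in order of $s$, the non-empty groups appear in $y$-order as $G_{s_1},\dots,G_{s_k}$ with $s_1<\cdots<s_k$, and a type-$2$ crossing between $G_{s_i}$ and $G_{s_{i+1}}$ occurs iff the top point of $G_{s_i}$ and the bottom point of $G_{s_{i+1}}$ lie on opposite sides of $L$.

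The first step is an easy charging argument. Each $s\in A_{\mathrm{mix}}$ forces at least one intra-group side-switch when $G_s$ is sorted by $y$, i.e.\ at least one type-$1$ crossing inside $G_s$; the crossings obtained from different values of $s$ are distinct, so $|A_{\mathrm{mix}}|\le\cost_1(v)$. Since every mixed group is an endpoint of at most two type-$2$ transitions, the number of type-$2$ crossings with at least one endpoint in $A_{\mathrm{mix}}$ is at most $2\cost_1(v)$. It therefore remains to bound the type-$2$ crossings whose two endpoints both lie in $A'=A'_L\cup A'_R$; for such a pair the relevant top and bottom sides are already determined by the $A'$-type, so a crossing occurs iff the two consecutive $A'$-groups are on different sides.

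The key step is to use the exponential spacing of the $X_s$ to show that the $\{L,R\}$-colored subsequence of $A'$-types has $O(\log N)$ sign changes. For each $j\in\{1,\dots,\log N\}$ the $j$-th point of $X_s$ has $x$-coordinate $(2^j+s-1)\bmod N$; as $s$ ranges over $\{1,\dots,N\}$ this point lies in $S_L$ for exactly the shifts in a single cyclic interval $I_j^L$ of length $w_L$, and analogously in a cyclic interval $I_j^R$ of length $w_R$. Hence $G_s\subseteq S_L$ iff $s\in\bigl(\bigcup_j I_j^L\bigr)\setminus\bigl(\bigcup_j I_j^R\bigr)$, and symmetrically for $S_R$. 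Each of the two unions is a union of $\log N$ cyclic intervals, so its indicator has at most $2\log N$ transitions around $\{1,\dots,N\}$; thus the joint membership pattern of $s$ with respect to the pair $\bigl(\bigcup_j I_j^L,\bigcup_j I_j^R\bigr)$ changes value at most $4\log N$ times as $s$ runs from $1$ to $N$. Because $A'_L$ and $A'_R$ are two specific values of this pattern, and deleting entries from a binary sequence cannot increase its number of alternations, the $\{L,R\}$-subsequence of $A'$-types has $O(\log N)$ sign changes. Combining the two contributions gives $\cost_2(v)\le 2\cost_1(v)+O(\log N)=O(\cost_1(v))+O(\log N)$, which is the required bound. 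The main subtlety is the ``$O(\log N)$ transitions'' structural claim, which rests solely on the exponential spacing and on the fact that each $X_s$ contains only $\log N$ points.
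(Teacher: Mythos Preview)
Your proof is correct and shares the same high-level decomposition as the paper: both split type-$2$ crossings according to whether one of the two adjacent instances is ``mixed'' (has points of $X_s\cap S(v)$ on both sides of $L$), observe that a mixed instance necessarily contributes a type-$1$ crossing (so these account for $O(\cost_1(v))$), and then separately bound the remaining type-$2$ crossings where both endpoints are one-sided.

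The difference lies in how the $O(\log N)$ bound on the ``pure'' one-sided crossings is obtained. The paper argues \emph{per point}: after collapsing each $X_s$ to a single row, it fixes an index $j\in\{1,\dots,\log N\}$ and observes that the copies of $p_j$ lying in $S(v)$ occupy a diagonal across one or two contiguous blocks of rows; a short case analysis then shows that at most four type-(2b) crossings can involve a copy of $p_j$, giving $4\log N$ in total. Your argument is \emph{global}: you note that the two indicator functions $s\mapsto \mathbf{1}[s\in\bigcup_j I_j^L]$ and $s\mapsto \mathbf{1}[s\in\bigcup_j I_j^R]$ are each indicators of a union of $\log N$ cyclic intervals and hence have at most $2\log N$ sign changes apiece, so the joint pattern changes at most $4\log N$ times, which directly bounds the number of $L/R$ alternations in the $A'$-subsequence. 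Both rely on exactly the same structural fact (each $I_j^L,I_j^R$ is a single cyclic interval), but your packaging avoids the case analysis and is a bit cleaner; the paper's version is perhaps more visual. The final constants are the same.
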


Assuming that the theorem is correct, we get that $\cost_2(\sigma)\leq O(\cost_1(\sigma))+O(|V(T)|\cdot \log N)=O(N^*\log\log\log N^*)+O(N\log N)=O(N^*\log\log\log N^*)$.

The remainder of this subsection is dedicated to the proof of Theorem \ref{thm: bound cost 2}. We fix a vertex $v\in V(T)$, and we denote $S=S(v)$. We also let $L = L(v)$ be the vertical line that $v$ owns. 
Our goal is to show that the number of type-$2$ crossings of $L$ is bounded by $O(\cost_1(v))+O(\log N)$.

Recall that instances $X_1,\ldots,X_N$ are stacked on top of each other, so that the first $\log N$ rows with integral coordinates  belong to $X_1$, the next $\log N$ rows belong to $X_2$, and so on. If we have a crossing $(p,p')$, where $p\in X_s$ and $p'\in X_{s'}$, then we say that the instances $X_s$ and $X_{s'}$ are \emph{responsible} for this crossings. Recall that $p,p'$ may only define a crossing if they lie on opposite sides of the line $L$, and if no point of $\hat X$ lies in the strip $S$ between the row of $p$ and the row of $p'$. It is then clear that every instance $X_s$ may be responsible for at most two type-2 crossings of $L$: one in which the second instance $X_{s'}$ responsible for the crossing has $s'<s$, and one in which $s'>s$.

We further partition the type-2 crossings into two sub-types. Consider a crossing $(p,p')$, and let $X_s,X_{s'}$ be the two instances that are responsible for it. If either of $X_s,X_{s'}$ contributes a type-1 crossing to the cost of $L$, then we say that $(p,p')$ is a type-(2a) crossing; otherwise it is a type-(2b) crossing. Clearly, the total number of type-(2a) crossings is bounded by $O(\cost_1(v))$. It is now sufficient to show that the total number of all type-(2b) crossings is bounded by $O(\log N)$.

Consider now some type-(2b) crossing $(p,p')$, and let $X_s$ and $X_{s'}$ be the two instances that are responsible for it, with $p\in X_s$. We assume that $s<s'$. Since neither instance contributes a crossing to $\cost_1(v)$, it must be the case that all points of $X_s\cap S$ lie to the left of $L$ and all points of $X_{s'}\cap S$ lie to the right of $L$ or vice versa.
Moreover, if $s'>s+1$, then for all $s<s''<s'$, $X_{s''}\cap S=\emptyset$.

It would be convenient for us to collapse each of the instances $X_1,\ldots,X_{N}$ into a single row. In order to do so, for each $1\leq s\leq N$, we replace all rows on which the points of $X_s$ lie with a single row $R_s$. If some point of $X_s$ lies on some column $C$, then we add a point at the intersection of $R_s$ and $C$.

We say that a row $R_s$ is \emph{empty} if there are no input points in $R_s\cap S$. We say that it is a \emph{neutral} row, if there are points in $R_s\cap S$ both to the left of $L$ and to the right of $L$. We say that it is a \emph{left row} if $R_s\cap S$ only contains points lying to the left of $L$, and we say that it is a {\em right row} if $R_s\cap S$ only contains points lying to the right of $L$.

If we now consider any type-(2b) crossing $(p,p')$, and the instances $X_s,X_{s'}$ that are responsible for it, with $s<s'$, then it must be the case that exactly one of the rows $R_s,R_{s'}$ is a left row, and the other is a right row. Moreover, if $s'>s+1$, then every row lying between $R_s$ and $R_{s'}$ is an empty row.

Let us denote the points in $X_1$ by $p_1,\ldots,p_{\log N}$, where for each $1\leq i\leq \log N$, point $p_i$ lies in column $C_{2^i}$. 
In each subsequent instance $X_2,X_3,\ldots$, the point is shifted by one unit to the right, so that in instance $X_s$ it lies in column $C_{2^i+s-1}$; every column in $\cset$ must contain exactly one copy of point $p_i$.

Consider now all copies of the point $p_i$ that lie in the strip $S$. Let $\rset_i$ be the set of rows containing these copies. Then two cases are possible: either (i) $\rset_i$ is a contiguous set of rows, and the copies of $p_i$ appear on $\rset_i$ diagonally as an increasing sequence (the $j$th row of $\rset_i$ contains a copy of $p_i$ that lies in the $j$th column of $\cset$ in the strip $S$); or $\rset_i$ consists of two consecutive sets of rows; the first set, that we denote by $\rset'_i$, contains $R_1$, and the second set, that we denote by $\rset''_i$, contains the last row $R_{N}$. The copies of the point $p_i$ also appear diagonally in $\rset'_i$ and in $\rset''_i$; in $\rset''_i$ the first copy lies on the first column of $\cset$ in $S$; in $\rset'_i$ the last copy lies on the last column of $\cset$ in $S$ (see Figure \ref{fig: strip-crossing}).

\begin{figure}[h]
	\centering
	\begin{subfigure}{.4\textwidth}
		\centering
		\includegraphics[width=.6\linewidth]{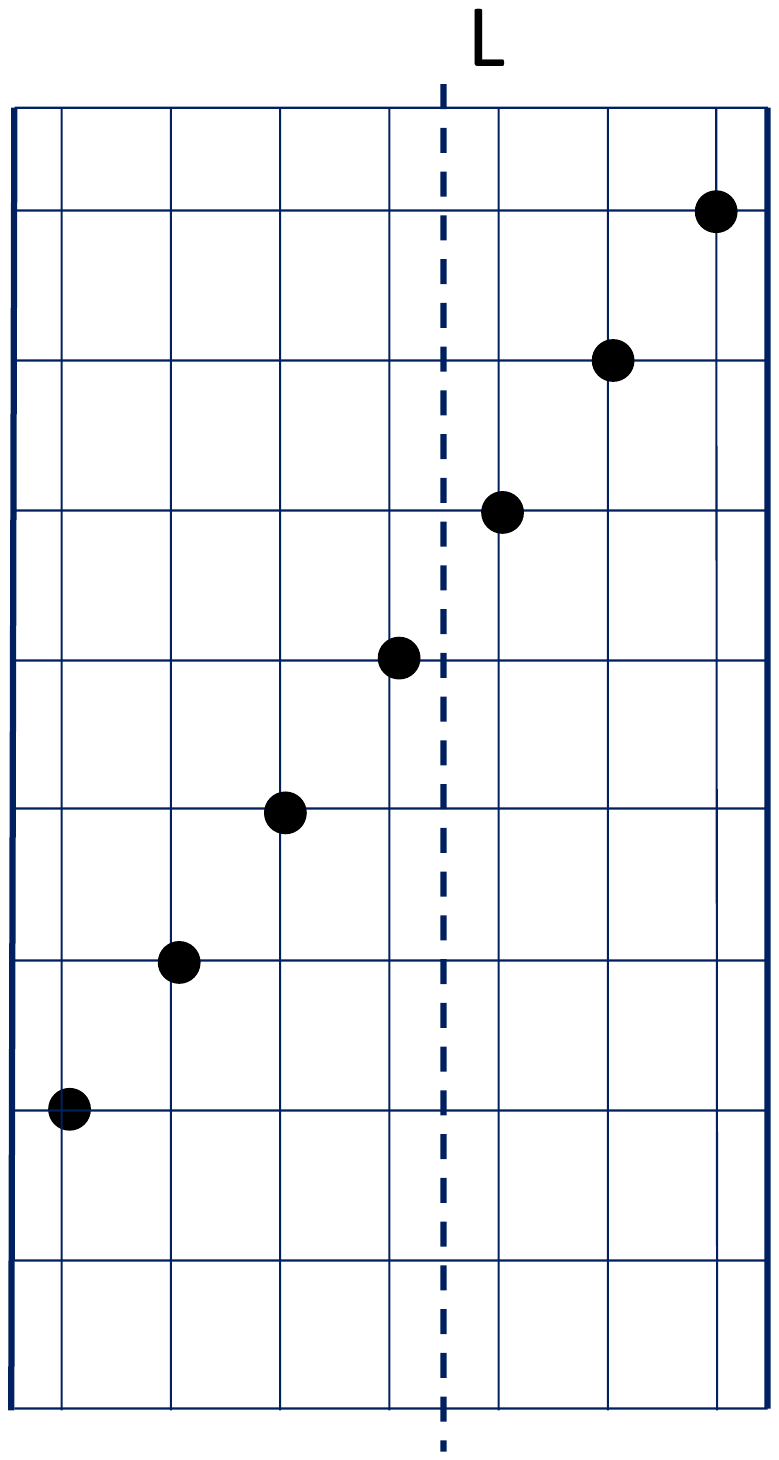}
		\caption{The consecutive set of rows on which the copies of $p_i$ appear is denoted by $\rset_i$.}
	\end{subfigure}%
	\hspace{.1\textwidth}
	\begin{subfigure}{.4\textwidth}
		\centering
		\includegraphics[width=.6\linewidth]{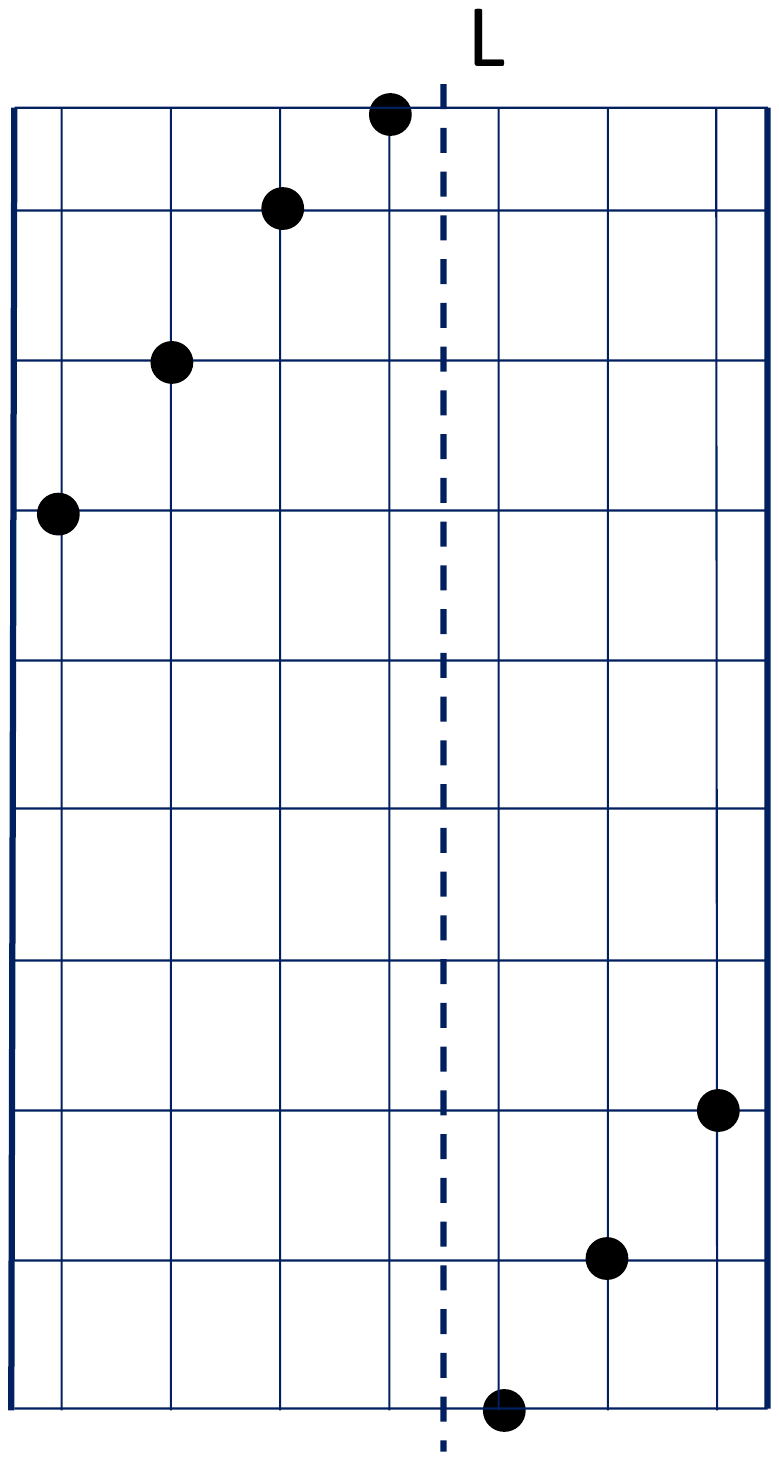}
		\caption{The two consecutive sets of rows on which the copies of $p_i$ appear are $\rset'_i$ (on the bottom) and $\rset''_i$ (on the top).}
	\end{subfigure}
	\caption{Two patterns in which copies of $p_i$ may appear on strip $S$. \label{fig: strip-crossing}}
\end{figure}

We show that for each $1\leq i\leq \log N$, there are at most four type-(2b) crossings of the line $L$ in which a copy of $p_i$ participates. Indeed, consider any type-(2b) crossing $(p,p')$ in which a copy of $p_i$ participates. We assume that the row of $p$ lies below the row of $p'$.
Assume first that both $p$ and $p'$ lie on rows of $\rset_i$, and let $R,R'$ be these two rows, with $p\in R$, $p'\in R'$. Recall that, in order for $(p,p')$ to define a crossing, all input points that lie on $R\cap S$ must lie to the left of $L$, and all point points that lie on $R'\cap S$ must lie to the right of $L$, or the other way around. It is easy to verify (see Figure \ref{fig: strip-crossing}) that one of two cases must happen: either $R$ contains a copy of $p_i$ lying closest to $L$  on its left, and $R'$ contains a copy of $p_i$ lying closest to  $L$ on its right; or $R$ is the last row of $\rset'_i$, and  $R'$ is the first row of $\rset''_i$. Therefore, only two such crossing, with $R,R'\in \rset_i$ are possible.

Assume now that $R\in \rset_i$ and $R'\not\in \rset_i$; recall that we assume that $R'$ lies above $R$. Then all rows that lie between $R$ and $R'$ must be empty, so it is easy to verify that $R$ must be the last row of $\rset_i$ (or it must be the last row of $\rset'_i$). In either case, at most one such crossing is possible.

Lastly, we assume that $R\not\in \rset_i$ and $R'\in \rset_i$. The analysis is symmetric; it is easy to see that at most one such crossing is possible.

We conclude that for each $1\leq i\leq \log N$, at most four type-(2b) crossings of the line $L$ may involve copies of $p_i$, and so the total number of type-(2b) crossings of $L$ is bounded by $O(\log N)$.

To summarize, we have shown that for every ordering $\sigma$ of the auxiliary columns in $\lset$, $\cost_1(\sigma),\cost_2(\sigma)\leq O(N^*\log\log\log N^*)$, and so $\WB_{\sigma}(\hat X)=O(N^*\log\log\log N^*)$. Since $\opt(\hat X)=\Omega(N^*\log\log N^*)$, we obtain a gap of $\Omega(\log\log N^*/\log\log\log N^*)$ between $\opt(\hat X)$ and $\WB(\hat X)$.

\subsection{Upper Bound for $\WB(X^*)$}

In this section we show that $\WB(X^*)=O(N^*\log\log\log N^*)$. 
Recall that instance $X^*$ is obtained from instance $\hat X$ by replacing every active column $C$ of $X^*$ with a block $\bset(C)$ of columns, and then placing the points of $C$ on the columns of $\bset(C)$ so that they form a monotone increasing sequence, while preserving their $y$-coordinates. 
The resulting collection of all blocks $\bset(C)$ partitions the set of all active columns of $X^*$. We denote this set of blocks by $\bset_1,\ldots,\bset_N$. 
The idea is to use Theorem \ref{thm: WB for corner and strip instances} in order to bound $\WB(X^*)$.

Consider a set of lines $\lset'$ (with half-integral $x$-coordinates) that partition the bounding box $B$ into strips, where the $i$th strip contains the block $\bset_i$ of columns, so $|\lset'| = (N-1)$. 
We consider a split of instance $X^*$ by $\lset'$: This gives us  a collection of strip instances $\set{X^s_i}_{1\leq i\leq N}$ and the compressed instance $X^c$. 
Notice that the compressed instance is precisely $\hat{X}$, and each strip instance $X^s_i$ is a monotone increasing point set.

Since each strip instance $X^s_i$ is monotonously increasing, from Observation \ref{obs: solution for monotone increasing set} and Claim \ref{claim: bounding WB by OPT}, for all $i$, $\WB(X^s_i)\leq O(\opt(X^s_i))\leq O(|X^s_i|)$. From Theorem \ref{thm: WB for corner and strip instances}, we then get that:

\[\WB(X^*)\leq 4\WB(\hat X)+8\sum_i\WB(X^s_i)+O(|X^*|)\leq 4\WB(\hat X)+O(|X^*|)\leq O(N^*\log\log\log N^*).
\]

\subsection{Separating $\WBtwo$ and $\WB$}
\label{subsec: separating two wilbers}

In this section, we extend our $\Omega(\frac{\log\log n}{\log\log\log n})$-factor
separation between $\WB$ and $\opt$ to a separation between $\WB$
and the \emph{second Wilber bound} (denoted by $\WBtwo$), which is defined below.\footnote{Wilber originally defined this bound based on the tree view. We use an equivalent geometric definition as discussed in \cite{DHIKP09,in_pursuit}.}

Let $X$ be a set of points that is a semi-permutation, with $|X|=m$. Consider any point $p\in X$. The funnel of $p$, denoted by $\funnel(X,p)$ is the set of all points $q\in X$, such that $q,y<p.y$, and $\square_{p,q}$ contains no point of $X\setminus\set{p,q}$. 
Denote $\funnel(X,p)=\set{a_1,\ldots,a_r}$, where the points are indexed in the increasing order of their $y$-coordinates. Let $\alt(X,p)$ be the number of indices $1\leq i<r$, such that $a_i$ lies strictly to the left of $p$ and $a_{i+1}$ lies strictly to the right of $p$, or the other way around. The
\emph{second Wilber bound } is:
\[
\WBtwo(X)=m+\sum_{p\in X}\alt(X,p).
\]

The goal of this section is to prove the following:
\begin{theorem}
\label{thm:sep WBtwo}For every integer $n'$, there is an integer
$n\ge n'$ and a point set $X$ that is a permutation with
$|X|=n$, such that $\WBtwo(X)\ge\Omega(n\log\log n)$ but $\WB(X)\le O(n\log\log\log n)$.
\end{theorem}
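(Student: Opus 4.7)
The plan is to reuse the same permutation $X^*$ from Theorem~\ref{thm:intro_WB}, for which $\WB(X^*)\le O(n\log\log\log n)$ has already been shown; it remains to establish the matching lower bound $\WBtwo(X^*)\ge \Omega(n\log\log n)$. I would first prove the analogous bound for the semi-permutation $\hat X$ (where $|\hat X|=N^*=N\log N$) and then transfer it to $X^*$ via the monotone column-expansion that defines $X^*$ from $\hat X$.

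The pivotal structural observation is a \emph{funnel isolation} property of $\hat X$. Because the horizontal bands $\rset_0,\rset_1,\ldots,\rset_{N-1}$ used to stack the shifted copies are pairwise disjoint, for any $p,q\in X^s$ with $q.y<p.y$ every point $r\in \Box_{p,q}\setminus\{p,q\}$ must satisfy $r.y\in(q.y,p.y)\subseteq\rset_s$ and therefore also lies in $X^s$. Hence $\funnel(\hat X,p)\cap X^s=\funnel(X^s,p)$, and any additional funnel points of $p$ in $\hat X$ come from sub-instances $X^{s'}$ with $s'<s$ and sit strictly below every point of $\funnel(X^s,p)$ in the $y$-order. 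Concatenating such a block in front of the $X^s$-portion of the funnel cannot decrease the alternation count, so $\alt(\hat X,p)\ge \alt(X^s,p)$ for every $p\in X^s$, and summing gives
\[
\WBtwo(\hat X)\;\ge\;\sum_{s=0}^{N-1}\bigl(\WBtwo(X^s)-|X^s|\bigr).
\]

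The main obstacle is the per-shift lower bound $\WBtwo(X^s)\ge \Omega(n\log n)$ with $n=\log N$. Since $\WBtwo$ depends only on the left-to-right ordering of points, $X^s$ is combinatorially a cyclic rotation of the $\BRS(\ell)$ row sequence, so the task reduces to showing that every cyclic rotation of $\BRS(\ell)$ has WB-2 cost $\Omega(n\log n)$. I would approach this by induction on $\ell$: a cyclic rotation by $s$ splits the rotated sequence into two contiguous intervals of the original $\BRS(\ell)$ (the wrapped prefix $W_s$ and the non-wrapped suffix $U_s$), and a pigeonhole argument shows that one of them, of size $\ge n/2$, contains an intact copy of either the odd-row or the even-row $\BRS(\ell-1)$ embedded in $\BRS(\ell)$. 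A funnel-count analogous to the inductive step in Claim~\ref{claim:BRS complexity} shows that this embedded copy still contributes $\Omega((n/2)\log(n/2))$ alternations to $\WBtwo(X^s)$, with only an $O(n)$ additive loss at the rotation seam; the inductive hypothesis on the other half then closes the induction. Summing across shifts yields $\WBtwo(\hat X)=\Omega(N\cdot n\log n)=\Omega(N^*\log\log N^*)$.

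Finally, to lift the bound from $\hat X$ to $X^*$, recall that $X^*$ is obtained from $\hat X$ by replacing each active column by a monotone block of $n$ columns on the same set of rows. The monotone intra-block points occupy rows private to their own block, so for every $p\in\hat X$ with canonical image $p^*\in X^*$, each pair in $\funnel(\hat X,p)$ induces a corresponding pair in $\funnel(X^*,p^*)$ on the same side of $p^*$, while the new intra-block funnel points can only add alternations on top. Hence $\WBtwo(X^*)\ge \Omega(\WBtwo(\hat X))=\Omega(|X^*|\log\log|X^*|)$, which completes the separation. The technical heart of the argument is the WB-2 bound for cyclic rotations of BRS, since BRS does not decompose cleanly along an arbitrary vertical cut; the bookkeeping of funnel points at the rotation boundary is where the proof will require the most care, and as a fallback the Lecomte--Weinstein dominance $\WBtwo\ge\WB$ can be used together with a tailored partitioning tree for $X^s$ to achieve the same per-shift bound.
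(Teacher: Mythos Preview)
Your proposal is correct and follows the paper's overall structure: reuse $X^*$, establish the $\WBtwo$ lower bound for $\hat X$ via funnel isolation across the stacked shifts, then lift to $X^*$. The funnel-isolation step and the lift from $\hat X$ to $X^*$ are handled essentially the same way in both.

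The one place your route diverges is the per-shift bound $\WBtwo(X^s)\ge\Omega(n\log n)$ for a cyclically shifted $\BRS$. You propose an induction on $\ell$ with seam bookkeeping, flagging this as the ``technical heart.'' The paper avoids this entirely: it notes (as you also do) that any cyclic shift of $\BRS(\ell)$ contains an intact $\BRS(\ell-1)$ as a \emph{subsequence}, and then invokes two black boxes --- Wilber's original $\WBtwo(\BRS(n))=\Omega(n\log n)$ and the Levy--Tarjan subsequence monotonicity $\WBtwo(Z')\le\WBtwo(Z)$ for $Z'\subseteq Z$ --- to conclude immediately. This sidesteps all the seam analysis. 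Your direct induction would work but effectively reproves a special case of subsequence monotonicity under interleaved extra points (the extra rows of the other half of $\BRS(\ell)$ are interleaved with the embedded $\BRS(\ell-1)$, so your clean ``funnel isolation'' from the stacking step does not apply here and you really do need the general monotonicity). Your Lecomte--Weinstein fallback is also valid. The paper's path via Levy--Tarjan is the shortest.
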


As it is known that $\opt(X)\ge\WBtwo(X)$ for any point set $X$ \cite{wilber},
\Cref{thm:sep WBtwo} is a stronger statement than \Cref{thm:intro_WB}.
To prove \Cref{thm:sep WBtwo}, we use exactly the same permutation
sequence $\Xstar$ of size $N^{*}$ that is constructed in \Cref{sec:construction}. Since we already showed that  $\WB(\Xstar)\le O(N^{*}\log\log\log N^{*})$, 
it remains to show that $\WBtwo(\Xstar)\ge\Omega(N^{*}\log\log N^{*})$.

We use the following claim of Wilber~\cite{wilber}. %
bitwise reversal sequences:
\begin{claim}
[\cite{wilber}]$\WBtwo(\BRS(n))=\Omega(n\log n)$ for any $n$.
\end{claim}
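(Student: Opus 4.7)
The plan is to prove this by induction on $n$ (a power of $2$) using the recursive structure of the bit-reversal sequence. Set $A(n) := \sum_{p \in \BRS(n)} \alt_{\BRS(n)}(p)$, so that $\WBtwo(\BRS(n)) = n + A(n)$. I will aim to establish the recursion $A(2n) \ge 2A(n) + \Omega(n)$, which solves to $A(n) = \Omega(n \log n)$ and yields the claim. Decompose $\BRS(2n) = L \cup R$, where $L$ is the $\BRS(n)$ copy on the left columns / odd rows, and $R$ is the $\BRS(n)$ copy on the right columns / even rows; crucially, in the $y$-sorted order of $\BRS(2n)$, consecutive points lie in opposite halves, and every $L$-point has smaller $x$-coordinate than every $R$-point.

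The first step is a monotonicity lemma: for any $p \in L$, the funnel $\funnel_{\BRS(2n)}(p)$ is a superset of the funnel $\funnel_L(p)$ computed inside the left copy alone, and every additional funnel point lies in $R$. The reason is that for $q \in L$, the rectangle $\square_{p,q}$ has $x$-range contained in $[1,n]$, so no $R$-point can block $q$. Symmetrically for $p \in R$. Moreover, the additional points (in $R$) all lie strictly on the right of $p$, since every $R$-point has $x > n \ge p.x$. An easy case analysis on insertions (inserting a right-of-$p$ point between consecutive funnel entries cannot decrease the alternation count, and strictly increases it by $2$ when both neighbors are left-of-$p$) shows $\alt_{\BRS(2n)}(p) \ge \alt_L(p)$. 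Summing over all $p$ and doing the same for $R$ yields $A(2n) \ge 2A(n)$.

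The second step is to extract $\Omega(n)$ additional alternations from the interleaving. Here I use the key structural fact that for each $j \ge 3$, the point $p_{j-1}$ is on the opposite midline side from $p_j$ (hence strictly on the far side of $p_j$), while $p_{j-2}$ is on the same midline side as $p_j$. A short argument shows $p_{j-1}$ is always in $\funnel_{\BRS(2n)}(p_j)$ (as the closest lower row), and so is $p_{j-2}$ (because any potentially blocking point would have to lie in the $x$-strip between $p_{j-2}$ and $p_j$, which sits strictly on one side of the midline, ruling out $p_{j-1}$). Thus $p_{j-1}$ appears as a new funnel entry sandwiched just above $p_{j-2}$ compared to the in-half funnel, and it lies on the right of $p_j$ when $p_j \in L$ (resp.\ left when $p_j \in R$). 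A counting argument on the quantities $p_{j-2}.x$ vs.\ $p_j.x$, exploiting the perfectly balanced bit-reversal structure, shows that for at least a constant fraction of indices $j$ the insertion of $p_{j-1}$ lands between two same-side funnel entries of $p_j$, producing an extra alternation not counted by $\alt_L$ or $\alt_R$. This yields at least $\Omega(n)$ fresh alternations across all $p_j$'s.

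The main obstacle is the precise accounting in the second step: I need to ensure that the ``extra'' alternations I exhibit from the $(p_{j-2}, p_{j-1}, p_j)$ triples are genuinely new and are not already being double-counted against the inductive contribution from $L$ or $R$. The cleanest way to handle this is to fix, from the outset, a disjoint allocation: attribute each alternation in $\BRS(2n)$ either to the in-half funnel of its reference point (covered by induction) or to a specific ``crossing'' insertion forced by the midline interleaving (covered by the extra $\Omega(n)$ term). Verifying disjointness of these two sources via the funnel-superset inclusion from Step~1 is the delicate part of the proof.
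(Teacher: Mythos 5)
The paper never proves this claim at all --- it is quoted from Wilber's paper (which argues in the tree view), so there is no in-paper proof to compare against. Your proposed geometric induction on the two interleaved copies is a legitimate self-contained route, and its core is sound: Step~1 is correct as stated, since for $p$ in the left copy and $q$ in the left copy the rectangle $\rect_{p,q}$ stays in the left half, so the funnel of $p$ in $\BRS(2n)$ meets the left copy exactly in the in-copy funnel, every additional funnel point lies in the right copy and hence strictly to the right of $p$, and the alternation count of a $\{$left,right$\}$ word never drops when passing to a superword, giving $\alt_{\BRS(2n)}(p)\ge \alt_L(p)$ and $A(2n)\ge 2A(n)$.

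Two corrections to Step~2, both of which make your life easier rather than harder. First, the mechanism as you state it cannot occur: $p_{j-1}$ is the topmost point of $\funnel(\BRS(2n),p_j)$, so it never ``lands between two same-side funnel entries''; it is appended at the very end, immediately after $p_{j-2}$, which is the topmost in-copy funnel point. The extra alternation is therefore gained exactly when $p_{j-2}$ lies on the opposite side of $p_j$ from $p_{j-1}$, i.e.\ when $p_{j-2}.x<p_j.x$ for $p_j$ in the left copy (and $p_{j-2}.x>p_j.x$ for $p_j$ in the right copy); in that case the full funnel word ends in two distinct symbols, and since the word with its last symbol removed still contains the in-copy funnel word as a subsequence, you get the per-point inequality $\alt(\BRS(2n),p_j)\ge \alt_{\mathrm{copy}}(p_j)+1$. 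Because this inequality is per reference point $p_j$, the disjointness/double-counting worry in your last paragraph evaporates. Second, no ``constant fraction'' argument and no appeal to the balanced bit-reversal structure is needed: summing the condition over $j$ counts the ascents of $\BRS(n)$ (from left-copy points) plus the descents of $\BRS(n)$ (from right-copy points), and since the two copies are order-isomorphic to $\BRS(n)$ and every consecutive pair is either an ascent or a descent, this total is exactly $n-1$. This yields $A(2n)\ge 2A(n)+n-1$, hence $A(n)=\Omega(n\log n)$ and the claim, for any permutation in place of $\BRS$ in the counting step --- the recursion only uses the interleaved two-copy structure.
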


We extend this bound to the cyclically shifted BRS in the following lemma.
\begin{lemma}
\label{lem:shifted BRS} For integers $n>0,s$ with  $0\leq s<n$, let $X$ be the sequence obtained by performing
a cyclic shift to $\BRS(n)$ by $s$ units. Then $\WBtwo(X)=\Omega(n\log n)$. 
\end{lemma}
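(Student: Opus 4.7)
The plan is to transfer the funnel structure of $\BRS(n)$ to the shifted instance $X$ via the shift bijection $\phi$. I would partition the columns of $\BRS(n)$ into the \emph{right} group $R$ (the first $n-s$ columns) and the \emph{left} group $L$ (the last $s$ columns); under $\phi$, the points in $R$-columns map to the rightmost $n-s$ columns of $X$, while the points in $L$-columns map to the leftmost $s$ columns of $X$. The key geometric observation is that, for any two points $p,q$ lying in the same column group, the rectangle $\rect_{p,q}$ is entirely contained in the strip of that group in both $\BRS(n)$ and $X$, so no point from the opposite group can lie inside it. Hence the within-group funnel relation is preserved by $\phi$, and same-group funnel neighbors keep their left/right orientation relative to the base point.

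Next, for each $p \in \BRS(n)$ let $\alpha_p$ denote the number of alternations in the $y$-sorted left/right sequence obtained by restricting $\funnel(\BRS(n),p)$ to points in the column group containing $p$. By the observation above, the same-group portion of $\funnel(X,\phi(p))$ carries the identical left/right pattern and contributes exactly $\alpha_p$ alternations. The remaining (cross-group) funnel neighbors of $\phi(p)$ in $X$ all lie on a single horizontal side of $\phi(p)$, since the opposite group occupies one entire side. A simple merging argument then applies: inserting into a two-label sequence a collection of elements all carrying a single common label can only weakly increase the total number of label alternations, because the local contribution at each insertion point is non-negative. Therefore $\alt(X,\phi(p)) \geq \alpha_p$, and summing over all $p$,
\[
\WBtwo(X) \;=\; |X| + \sum_{p} \alt(X,\phi(p)) \;\geq\; \WBtwo(\BRS(n)|_R) \,+\, \WBtwo(\BRS(n)|_L),
\]
where $\BRS(n)|_R$ and $\BRS(n)|_L$ denote $\BRS(n)$ restricted to the columns of the respective groups, viewed as geometric point sets.

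The remaining task is to prove the purely structural bound $\WBtwo(\BRS(n)|_R)+\WBtwo(\BRS(n)|_L) = \Omega(n\log n)$, uniformly in the split position $s$. I would establish this by induction on $n$ using the top-level recursive decomposition $\BRS(n) = \BRS(n/2)_{\mathrm{odd}} \cup \BRS(n/2)_{\mathrm{even}}$. For a central split at column $n/2$, both pieces are full sub-BRSes and Claim~\ref{claim:BRS complexity} gives $2\cdot\Omega((n/2)\log(n/2)) = \Omega(n\log n)$ immediately. For a non-central split, exactly one of the two pieces fully contains an entire sub-$\BRS(n/2)$ (contributing $\Omega((n/2)\log n)$), while the other piece is a column-restriction of the remaining sub-$\BRS(n/2)$ to which the induction hypothesis applies. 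The main obstacle will be handling the \emph{composite} restricted instances that arise (a full sub-BRS together with a column-restriction of another sub-BRS, lying on disjoint rows): I will need to verify that after these components are merged into a single instance via $y$-interleaving, the individual funnels and their alternation counts survive up to insertions of points from the other component, which once more reduces to the monotonic insertion principle applied component-wise.
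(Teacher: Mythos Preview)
Your approach is correct but considerably more elaborate than the paper's. The paper's proof is essentially two lines: first observe that after a cyclic shift by $s$, one of the two halves of $\BRS(n)$ (the left half if $s\le n/2$, the right half otherwise) remains a contiguous block of columns, so $X$ contains a subset $X'$ that is a translated copy of $\BRS(n/2)$; then invoke the general monotonicity of the funnel bound under subsequences, $\WBtwo(Z')\le\WBtwo(Z)$ whenever $Z'\subseteq Z$ (cited from Levy--Tarjan), to conclude $\WBtwo(X)\ge\WBtwo(\BRS(n/2))=\Omega(n\log n)$.

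Your route instead re-derives the needed monotonicity from scratch, for the special case where the two parts occupy disjoint contiguous column blocks, via your ``monotone insertion principle.'' This yields the stronger intermediate inequality $\WBtwo(X)\ge\WBtwo(\BRS(n)|_R)+\WBtwo(\BRS(n)|_L)$ and makes the argument self-contained (no external monotonicity lemma needed) at the price of length. Note, however, that your step-2 induction is unnecessary: once you observe (as you do) that one of $R,L$ fully contains a half-$\BRS$, a single further application of the very same column-block argument to $\BRS(n)|_R$ split at column $n/2$ gives $\WBtwo(\BRS(n)|_R)\ge\WBtwo(\BRS(n/2))=\Omega(n\log n)$ directly, with no recursion on the leftover piece.
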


\begin{proof}
Observe that, for any choice of $s$, there must exists
a subsequence $X'$ of $X$ such that $X'$ is a copy of $\BRS(n-1)$ (
see \Cref{fig: brs}). It is shown in Lemma 6.2 of \cite{LevyT19}
that for any pair of sequences $Z,Z'$ with $Z'\subseteq Z$, $\WBtwo(Z')\le\WBtwo(Z)$ holds. Therefore, we conclude that
\[
\WBtwo(Y)\ge\WBtwo(\BRS(n-1))\ge \Omega(n\log n).
\]
\end{proof}

Now, we are ready to bound  $\WBtwo(\Xstar)$.

\begin{lemma}
$\WBtwo(\Xstar)=\Omega(N^{*}\log\log N^{*})$.
\end{lemma}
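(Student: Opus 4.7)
The plan is to decompose the alterations of $X^{*}$ according to the horizontal strips defined by the row blocks $\rset_{s}$, and to lower-bound the contribution of each strip independently. For each $s\in\{0,\dots,N-1\}$, let $Y_{s}\subseteq X^{*}$ be the set of points of $X^{*}$ whose $y$-coordinates lie in $\rset_{s}$. Since the column-expansion that produces $X^{*}$ from $\hat{X}$ preserves $y$-coordinates, $Y_{s}$ is precisely the image, under this expansion, of the shifted sub-instance $X^{s}$. Moreover, because $X^{s}$ uses each original column of $\hat{X}$ at most once, $Y_{s}$ contains exactly one point per block $\bset(C)$ that it touches; since the blocks $\bset(C)$ are arranged in $X^{*}$ in the same left-to-right order as the corresponding columns $C$ in $\hat{X}$, the combinatorial (left/right, above/below) pattern of $Y_{s}$ coincides with that of $X^{s}$. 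Because $\WBtwo$ depends only on this combinatorial pattern, $\WBtwo(Y_{s})=\WBtwo(X^{s})$, and, combinatorially, $X^{s}$ is a cyclic rotation of $\BRS(n)$ (with $n=\log N$), so \Cref{lem:shifted BRS} yields $\WBtwo(Y_{s})=\Omega(n\log n)$ and hence $\sum_{p\in Y_{s}}\alt(Y_{s},p)=\Omega(n\log n)$.

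The key step is to show, for every $p\in Y_{s}$, that $\alt(X^{*},p)\ge\alt(Y_{s},p)$. I plan to establish this via a funnel decomposition: for every $q\in Y_{s}$ with $q.y<p.y$, the rectangle $\rect_{p,q}$ has its $y$-extent entirely contained in the strip of rows $\rset_{s}$, so no point of $Y_{s'}$ with $s'\ne s$ can lie inside $\rect_{p,q}$. This yields the identity $\funnel(X^{*},p)\cap Y_{s}=\funnel(Y_{s},p)$. Furthermore, every element of $\funnel(X^{*},p)\setminus Y_{s}$ belongs to some $Y_{s'}$ with $s'<s$, whose $y$-coordinates lie strictly below $\rset_{s}$. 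Thus in the $y$-sorted listing $(a_{1},\dots,a_{r})$ of $\funnel(X^{*},p)$, the elements of $\funnel(Y_{s},p)$ appear as a contiguous suffix. Since $X^{*}$ is a permutation, every $a_{i}$ is strictly left or strictly right of $p$ (no collinearity complications arise), so the alternations counted within this suffix are exactly $\alt(Y_{s},p)$, and they all contribute to $\alt(X^{*},p)$.

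Summing these two inequalities over all $s$ gives
\[
\WBtwo(X^{*})\;\ge\;\sum_{p\in X^{*}}\alt(X^{*},p)\;=\;\sum_{s=0}^{N-1}\sum_{p\in Y_{s}}\alt(X^{*},p)\;\ge\;\sum_{s=0}^{N-1}\sum_{p\in Y_{s}}\alt(Y_{s},p)\;=\;N\cdot\Omega(n\log n),
\]
and since $N^{*}=N\log N$ and $n=\log N$, the right-hand side is $\Omega(N\log N\log\log N)=\Omega(N^{*}\log\log N^{*})$, as desired.

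I expect the main (albeit mild) obstacle to be Step~4, establishing $\alt(X^{*},p)\ge\alt(Y_{s},p)$: alteration counts can in principle move in either direction when passing between a set and its subsets, and one must invoke the specific vertical separation of the $Y_{s}$ strips to guarantee that the $Y_{s}$-part of $\funnel(X^{*},p)$ sits as a clean suffix of the $y$-sorted funnel, so that no alternations of $Y_{s}$ get ``broken'' by the additional lower points coming from $Y_{0},\dots,Y_{s-1}$.
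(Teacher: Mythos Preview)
Your proof is correct and follows essentially the same idea as the paper's: both rely on the observation that for $p\in Y_s$ the portion of $\funnel(X^*,p)$ coming from the same strip $Y_s$ occupies a contiguous $y$-suffix, so its alternations survive in $\alt(X^*,p)$, together with \Cref{lem:shifted BRS} applied to each shifted copy. The only structural difference is that the paper factors the inequality as $\WBtwo(X^*)\ge\WBtwo(\hat X)\ge\sum_s\WBtwo(X^s)$ (handling the column-expansion $\hat X\to X^*$ as a separate, lightly justified step), whereas you collapse both steps by defining $Y_s$ inside $X^*$ directly and invoking invariance of $\WBtwo$ under order-preserving relabelings to get $\WBtwo(Y_s)=\WBtwo(X^s)$.
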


\begin{proof}
Recall that $\Xhat$ is the union of the sets $X^{0},X^{1},\dots,X^{N-1}$ of points, where for all $0\leq s\leq N-1$, set $X^{s}$ is an exponentially-spaced $\BRS$ instance that is shifted
by $s$ units. From the definition
of $\WBtwo$, it is easy to see that $\WBtwo(\Xhat)\ge\sum_{s=0}^{N-1}\WBtwo(X^{s})$. This is since, for all $0\leq s\leq N-1$, for ever point $p\in X^{s}$, $\funnel(X^s,p)\subseteq \funnel(\Xhat, p)$, and moreover $\alt(\Xhat,p)\geq \alt(X^s,p)$.
From \Cref{lem:shifted BRS}, we get that $\WBtwo(X^{s})=\Omega(n\log n)$, where $n=|X^s|=\log N$. Therefore,
\[
\WBtwo(\Xhat)\ge N\cdot\Omega(n\log n)=\Omega(N^{*}\log\log N^{*}).
\]
Finally, recall that the sequence $X^{*}$ is obtained from $\Xhat$
by replacing each column $C$ $\Xhat$ with a block $\bset(C)$ of columns, and placing all points of $\Xhat$ lying in $C$ on the columns of $\bset(C)$ so that they form a monotonically increasing sequence of length $n$. It is not hard to see that $\WBtwo(\Xstar)\ge\WBtwo(\Xhat)$
which concludes the proof.
\end{proof}

\section{Extensions of Wilber Bound}\label{sec: extension to extended}

In this section we define two extensions of the strong Wilber bound and extend our negative results to  one of these bounds.
In subsection \ref{subsec: defs}, we provide formal definitions of these bounds, and we present our negative result in subsequent subsections.  
 
\subsection{Definitions} 
\label{subsec: defs}

Assume that we are given an input set $X$ of $n$ points, that is a permutation. Let $\lset^V$ be the set of all vertical lines with half-integral $x$-coordinates between $1/2$ and $n-1/2$, and let $\lset^H$ be the set of all horizontal lines with half-integral $y$-coordinates between $1/2$ and $n-1/2$. Recall that for every permutation $\sigma$ of $\lset^V$, we have defined a bound $\WB_{\sigma}(X)$. We can similarly define a bound $\WB'_{\sigma'}(X)$ for every permutation $\sigma'$ of $\lset^H$. We also let $\WB'(X)$ be the maximum, over all permutations $\sigma'$ of $\lset^H$, of $\WB'_{\sigma'}(X)$. Equivalently, let $X'$ be an instance obtained from $X$ by rotating it by $90$ degrees clockwise. Then $\WB'(X)=\WB(X')$. We denote by $B$ a bounding box that contains all points of $X$.

\subsubsection{First Extension -- Consistent Guillotine Bound}

In this section we define the {\em consistent Guillotine Bound}, $\XWB(X)$. Let $\sigma$ be any permutations of all lines in $\lset^V\cup \lset^H$. 
We start from a bounding box $B$ containing all points of $X$ and maintain a partition $\pset$ of the plane into rectangular regions, where initially $\pset = \set{B}$.  
We process the lines in $\lset^V \cup \lset^H$ according to their ordering in $\sigma$. 
Consider an iteration when a line $L$ is processed. Let $P_1,\ldots, P_k$ be all rectangular regions in $\pset$ that intersect the line $L$. For each such region $P_j$, let $P'_j$ and $P''_j$ be the two rectangular regions into which the line $L$ splits $P_j$. 
We update $\pset$ by replacing each region $P_j$, for $1\leq j\leq k$, with the regions $P'_j$ and $P''_j$. 
Once all lines in $\lset^V\cup \lset^H$ are processed, we terminate the process.

This recursive partitioning procedure can be naturally associated with a partitioning tree  $T=T_{\sigma}$ that is defined as follows: 

\begin{itemize}
\item Each vertex $v\in V(T)$ is associated with a rectangular region $S(v)$ of the plane. If $r$ is the root of $T$, then $S(r) =B$. 

\item Each non-leaf vertex $v$  is associated with a line $L(v)\in \lset^H\cup \lset^V$ that was used to partition $S(v)$ into two sub-regions, $S'$ and $S''$. Vertex $v$ has two children $v_1,v_2$ in $T$, with $S(v_1)=S'$ and $S(v_2)=S''$.

\item For each leaf node $v$, the region $S(v)$ contains at most one point of $X$.
\end{itemize}

We now define the cost $\cost(v)$ of each node $v\in V(T)$. If the region $S(v)$ contains no points of $X$, or it contains a single point of $X$, then $\cost(v)=0$. Otherwise, we define $\cost(v)$ in the same manner as before. Assume first that the line $L(v)$ is vertical. Let $p_1, \ldots, p_k$  be all points in $X\cap S(v)$, indexed in the increasing order of their $y$-coordinates. A pair $(p_j, p_{j+1})$ of consecutive  points forms a crossing of $L(v)$ for $S(v)$, if they lie on the opposite sides of $L(v)$. We then let $\cost(v)$ be the number of such crossings. 

When $L(v)$ is a horizontal line, $\cost(v)$ is defined analogously: we index the points of $X\cap S(v)$ in the increasing order of their $x$-coordinates. We then say that a consecutive pair of such points is a crossing iff they lie on opposite sides of $L(v)$. We let $\cost(v)$ be the number of such crossings.

For a fixed ordering $\sigma$ of the lines in $\lset^V\cup \lset^H$, and the corresponding partition tree $T= T(\sigma)$, we define $\XWB_{\sigma}(X) = \sum_{v \in V(T)} \cost(v)$. 

Lastly, we define the consistent Guillotine Bound for a point set $X$ that is a permutation to be the maximum, over all orderings $\sigma$ of the lines in $\lset^V\cup \lset^H$, of $\XWB_{\sigma}(X)$.

In the following subsection we define an even stronger bound, that we call the \emph{Guillotine bound}, and we show that for every point set $X$ that is a permutation, $\XWB(X)\leq \XXWB(X)$, and moreover that $\XXWB(X)\leq O(\opt(X))$. It then follows that for every point set $X$ that is a permutation, $\XWB(X)\leq O(\opt(X))$.

In subsequent subsections, we prove the following negative result for $\XWB$, which extends the results of Theorem \ref{thm:intro_WB}.

\begin{theorem}
	\label{thm:negative for XWB}
	For every integer $n'$, there is an integer $n\geq n'$, and a set $X$ of points that is a permutation with $|X|=n$, such that $\opt(X)\geq \Omega(n\log\log n)$ but $\XWB(X)\leq O(n\log\log\log n)$.
\end{theorem}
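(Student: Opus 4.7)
The plan is to construct a point set $X^{**}$ that is symmetric under $90$-degree rotation (up to the shifts and the stacking that we impose) so that neither vertical nor horizontal cuts give the adversary an advantage, and then re-run the type-1/type-2 crossing analysis of Section~\ref{sec: negative} for both cutting directions.

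The construction will be a two-dimensional analog of the instance from Section~\ref{sec:construction}. Start from $\BRS(\ell)$ on an $n\times n$ grid with $n = 2^\ell$, and replace every point $(i,j)$ by $(2^{i-1},2^{j-1})$ to obtain a two-dimensional exponentially-spaced bit reversal sequence on an $N\times N$ grid, where $N=2^n$. For every pair $(s,s')$ with $0\le s,s'<N$, let $X^{s,s'}$ be the cyclic shift of this instance by $s$ units horizontally and $s'$ units vertically. To avoid row and column collisions, tile the $N^2$ shifted copies into an $N\times N$ grid of disjoint $N\times N$ cells; call the resulting semi-permutation $\hat{X}^{(2)}$. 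Finally, apply the monotone column- and row-block expansion of Section~\ref{sec:construction} in both axes to convert $\hat{X}^{(2)}$ into a permutation $X^{**}$ whose size is $\Theta(N^{2}n)$.

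For the lower bound on $\opt$, observe that each $X^{s,s'}$ contains an induced copy of $\BRS(\ell)$ on its active rows and columns, and a two-axis extension of Claim~\ref{claim: shifting does not hurt opt} (obtained by separating both the rows and the columns into pre- and post-shift bands, and paying $O(|X^{s,s'}|)$ for projection onto the two separator lines) gives $\opt(X^{s,s'}) \ge \Omega(n\log n)$. Since the cells of the tiling are disjoint, $\opt(\hat{X}^{(2)}) \ge \sum_{s,s'} \opt(X^{s,s'}) = \Omega(N^{2}\, n\log n) = \Omega(|\hat X^{(2)}|\cdot \log\log |\hat X^{(2)}|)$, and the permutation conversion preserves this bound by the same collapse argument as in Section~\ref{sec:construction}. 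For the upper bound on $\XWB$, fix an arbitrary ordering $\sigma$ of the horizontal and vertical auxiliary lines, with corresponding consistent partitioning tree $T$, and classify crossings at each node as type-$1$ (both endpoints in the same shifted copy $X^{s,s'}$) or type-$2$. Type-$1$ crossings created by a vertical cut are bounded, in expectation over the random horizontal shift $s$, by $O(n\log\log n)$ per copy, using verbatim the probabilistic argument of Section~\ref{subsec: type 1 crossings} and the exponential $x$-spacing. By the $90$-degree symmetry of the construction, type-$1$ crossings from horizontal cuts are bounded identically in expectation over $s'$. Summing over the $N^2$ copies gives a type-$1$ contribution of $O(N^{2}\,n\log\log n) = O(|\hat X^{(2)}|\log\log\log |\hat X^{(2)}|)$. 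Type-$2$ crossings are bounded, per cell and per cutting line, by $O(\log N)$ using the same monotone-structure charging scheme as in the bound for type-$2$ crossings of Section~\ref{sec: negative}, applied independently to each axis; summed over $O(N^{2})$ cutting lines, these contribute only $O(N^{2}\log N)$, which is lower-order. Finally, transferring the bound to $X^{**}$ is done via a direct extension of Theorem~\ref{thm: WB for corner and strip instances} to the consistent Guillotine setting, using that the added column- and row-blocks are monotone strip instances whose $\XWB$-cost is linear in their size.

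The main technical obstacle will be justifying that the probabilistic argument survives the coupling that consistency imposes across cells: a single vertical line in $\sigma$ slices every intersected cell simultaneously, so the per-cell bad events are no longer fully independent. I intend to handle this by first fixing $\sigma$ (and hence the shape of $T$ restricted to each cell, which depends only on the cuts and not on the shifts), then taking expectation over $(s,s')$ cell-by-cell. Consistency only restricts the set of trees available to the adversary and can only reduce $\XWB$ relative to the fully unrestricted Guillotine bound, so the per-cell expected-cost bounds from Section~\ref{subsec: type 1 crossings} carry over intact. A secondary subtlety is that, unlike in the one-dimensional proof, a single line may produce both type-$1$ and type-$2$ crossings in multiple cells of the tiling; the proof will therefore charge crossings to (cell, line, direction) triples and sum over all three indices.
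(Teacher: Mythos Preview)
Your construction and your lower-bound argument are the same as the paper's, and your overall plan is sound. The upper-bound argument, however, takes an unnecessarily hard route and contains one genuine gap.

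The paper does not analyze the mixed consistent-Guillotine tree directly. Instead it proves a short reduction lemma (Lemma~\ref{lem: extended WB from vertical and horizontal}): for any permutation $X$, $\XWB(X)\le \WB(X)+\WB'(X)$, where $\WB'$ is the Wilber bound using only horizontal cuts. The proof is one line per direction: if $\sigma$ is an ordering of $\lset^V\cup\lset^H$ and $\sigma'$ is its restriction to $\lset^V$, then for every vertical line $L$ the total number of crossings charged to $L$ across all regions it slices in $T(\sigma)$ is at most the number of crossings charged to $L$ in $T(\sigma')$, because the horizontal cuts that preceded $L$ only subdivide the vertical strip $S$ that $L$ would have split, and crossings within each sub-rectangle of $S$ are a subset of the crossings within $S$. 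With this lemma in hand, one only needs to bound $\WB(X^{**})$ and $\WB'(X^{**})$ separately; the paper does the first via the decomposition theorem (splitting along super-columns reduces each strip instance to exactly the one-dimensional instance of Section~\ref{sec: negative}, up to an irrelevant vertical shift, and the compressed instance has $\WB=O(N^*)$), and the second follows by the $90^\circ$ symmetry you already identified.

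Your proposal instead tries to run the type-1/type-2 analysis inside the mixed tree. That can be made to work, but your stated justification for why the per-cell bounds ``carry over intact'' is wrong: you appeal to $\XWB\le\GB$, which is useless here since the paper explicitly has no upper bound on $\GB$ for this instance. The inequality you actually need is the per-line monotonicity above (horizontal cuts never increase the cost of a vertical line), which is precisely Lemma~\ref{lem: extended WB from vertical and horizontal}. Once you invoke that, your ``direct extension of Theorem~\ref{thm: WB for corner and strip instances} to the consistent Guillotine setting'' and the (cell, line, direction) bookkeeping become unnecessary: you are back to two independent one-dimensional $\WB$ computations, each handled exactly as in Section~\ref{sec: negative} after one application of the decomposition theorem along the super-columns (respectively super-rows).
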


The following lemma will be helpful in the proof of Theorem \ref{thm:negative for XWB}; recall that $\WB'(X)$ is the basic Wilber Bound, where we cut via horizontal lines only.

\begin{lemma}\label{lem: extended WB from vertical and horizontal}
	For every instance $X$ that is a permutation, $\XWB(X)\leq \WB(X)+\WB'(X)$.
\end{lemma}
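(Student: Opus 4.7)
\textbf{Proof plan for Lemma \ref{lem: extended WB from vertical and horizontal}.}

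The plan is to fix an ordering $\sigma$ of $\lset^V\cup \lset^H$ that achieves $\XWB(X) = \XWB_\sigma(X)$, let $\sigma_V$ and $\sigma_H$ denote the subsequences of $\sigma$ consisting of the vertical and horizontal lines respectively, and then prove the stronger pointwise inequality
\[
\XWB_\sigma(X)\;\le\;\WB_{\sigma_V}(X)+\WB'_{\sigma_H}(X),
\]
by induction on $|\sigma|=|\lset^V|+|\lset^H|$. The original inequality $\XWB(X)\le \WB(X)+\WB'(X)$ then follows because $\WB_{\sigma_V}(X)\le \WB(X)$ and $\WB'_{\sigma_H}(X)\le \WB'(X)$. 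It is convenient to state the induction for a slightly more general setting: an arbitrary semi-permutation $X$ in a bounding box $B$, with $\sigma$ an ordering of the vertical lines contained in $B$ together with the horizontal lines contained in $B$. The base case $|\sigma|=0$ is trivial since all three quantities are zero.

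For the inductive step, let $L$ be the first line of $\sigma$; without loss of generality $L\in \lset^V$ (the case $L\in \lset^H$ is symmetric). Let $B_1,B_2$ be the two sub-rectangles of $B$ split by $L$, and let $X_i=X\cap B_i$. Let $c$ be the number of crossings of $L$ in $B$ with respect to $X$, and let $\sigma'=\sigma\setminus\{L\}$. Directly from the definition of $\XWB$, and noting that after the first cut every horizontal line applies independently to each sub-rectangle,
\[
\XWB_\sigma(X)=c+\XWB_{\sigma'}(X_1)+\XWB_{\sigma'}(X_2).
\]
Similarly, since $L$ is also the first vertical line of $\sigma_V$,
\[
\WB_{\sigma_V}(X)=c+\WB_{\sigma_V\setminus\{L\}}(X_1)+\WB_{\sigma_V\setminus\{L\}}(X_2).
\]
By the induction hypothesis applied to each of $(X_1,B_1)$ and $(X_2,B_2)$ with the ordering $\sigma'$,
\[
\XWB_{\sigma'}(X_i)\;\le\;\WB_{\sigma_V\setminus\{L\}}(X_i)+\WB'_{\sigma_H}(X_i), \qquad i=1,2.
\]

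The key remaining step, and the only nontrivial one, is to show that the horizontal bound behaves superadditively when we split vertically:
\[
\WB'_{\sigma_H}(X_1)+\WB'_{\sigma_H}(X_2)\;\le\;\WB'_{\sigma_H}(X).
\]
To see this, consider the partitioning tree associated with $\sigma_H$ for $X$; at every node $u$, the crossings counted are pairs of points in $X\cap S(u)$ that are consecutive in $x$-order and lie on opposite sides of the horizontal line $L(u)$. Because every point of $X_1$ has strictly smaller $x$-coordinate than every point of $X_2$, any pair $(p,q)\in X_i\cap S(u)$ that is $x$-consecutive within $X_i\cap S(u)$ remains $x$-consecutive within $X\cap S(u)$: no point of $X_{3-i}$ can appear between $p$ and $q$ in the $x$-order. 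Hence every crossing counted in $\WB'_{\sigma_H}(X_1)$ or in $\WB'_{\sigma_H}(X_2)$ is a distinct crossing counted in $\WB'_{\sigma_H}(X)$, proving the desired superadditivity. Combining the four displayed inequalities gives $\XWB_\sigma(X)\le \WB_{\sigma_V}(X)+\WB'_{\sigma_H}(X)$, completing the induction.

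The main (mild) obstacle in this plan is the superadditivity step for $\WB'$, which is where we crucially use that a vertical cut separates $X_1$ from $X_2$ in $x$-coordinate; the analogous step when $L$ is horizontal uses separation in $y$-coordinate in exactly the same way. All other steps are direct from the recursive definition of the consistent Guillotine bound and of the Wilber bound.
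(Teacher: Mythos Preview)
Your proof is correct and proves the same strengthened inequality as the paper, namely $\XWB_\sigma(X)\le \WB_{\sigma_V}(X)+\WB'_{\sigma_H}(X)$ for the induced sub-orderings $\sigma_V,\sigma_H$. The organization differs: the paper gives a direct, non-inductive charging argument, fixing an arbitrary line $L\in\lset^V$ and observing that, at the moment $L$ is processed in $\sigma$, the rectangles it cuts in the $\XWB$ tree are exactly a horizontal slicing of the single strip $S(v)$ that $L$ cuts in the $\WB_{\sigma_V}$ tree; hence every crossing of $L$ counted in $\XWB_\sigma$ is a crossing of $L$ counted in $\WB_{\sigma_V}$. Your induction peels off the first line of $\sigma$ and isolates the same phenomenon as the explicit superadditivity step $\WB'_{\sigma_H}(X_1)+\WB'_{\sigma_H}(X_2)\le \WB'_{\sigma_H}(X)$ under a vertical split. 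The two arguments are equivalent in content; the paper's version avoids the inductive bookkeeping, while yours makes the key monotonicity property (orthogonal cuts cannot destroy crossings) a clean standalone lemma. One small note: your generalization to ``arbitrary semi-permutation'' is unnecessary, since every sub-instance of a permutation obtained by an axis-aligned cut is again a permutation, and $\WB'$ requires distinct $x$-coordinates; staying within permutations throughout is cleaner.
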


\begin{proof}
	Let $\sigma$ be a permutation of $\lset^V\cup \lset^H$, such that $\XWB(X)=\XWB_{\sigma}(X)$. Notice that $\sigma$ naturally induces a permutation $\sigma'$ of $\lset^V$ and a permutation $\sigma''$ of $\lset^H$. We show that $\XWB_{\sigma}(X)\leq \WB_{\sigma'}(X)+\WB'_{\sigma''}(X)$. In order to do so, it is enough to show that for every vertical line $L\in \lset^V$, the cost that is charged to $L$ in the bound $\XWB_{\sigma}(X)$ is less than or equal to the cost that is charged to $L$ in the bound $\WB_{\sigma'}(X)$, and similarly, for every horizontal line $L'\in \lset^H$, the cost that is charged to $L'$ in the bound $\XWB_{\sigma}(X)$ is less than or equal to the cost that is charged to $L'$ in the bound $\WB'_{\sigma''}(X)$. We show the former; the proof of the latter is similar.
	
	Consider any line $L\in \lset^V$. We let $T$ be the partitioning tree associated with $\XWB_{\sigma}(X)$, just before line $L$ is processed, and we let $T'$ be defined similarly for $\WB_{\sigma'}(X)$. Let $v\in V(T')$ be the leaf vertex with $L\subseteq S(v)$, and let $U$ be the set of all leaf vertices $u$ of the tree $T$ with $S(u)\cap L\neq \emptyset$. Observe that the set of vertical lines that appear before $L$ in $\sigma$ and $\sigma'$ is identical. Therefore, $S(v)=\bigcup_{u\in U}S(u)$. It is easy to verify that, for every vertex $u\in U$, every crossing that contributes to $\cost(u)$ is also a crossing that is charged to the line $L$ in the strip $S(v)$. Therefore, the total number of crossings of line $L$ in tree $T'$ that contribute to $\WB_{\sigma'}(X)$ is greater than or equal to the number of crossings of the line $L$ that contribute to $\XWB_{\sigma}(X)$.
	
	To conclude, we get that $\XWB(X)=\XWB_{\sigma}(X)\leq \WB_{\sigma'}(X)+\WB'_{\sigma''}(X)\leq \WB(X)+\WB'(X)$.
\end{proof}

\subsubsection{Guillotine Bound}

In this section, we define a second extension of Wilber Bound, that we call Guillotine Bound, and denote by $\XXWB$. The bound is more convenient to define using a partitioning tree instead of a sequence of lines. 
Let $X$ be a point set which is a permutation.  

We define a guillotine partition of a point set $X$, together with the corresponding partitioning tree $T$. 
As before, every node $v\in V(T)$ of the partitioning tree $T$ is associated with a rectangular region $S(v)$ of the plane. At the beginning, we add the root vertex $r$ to the tree $T$, and we let $S(r)=B$, where $B$ is the bounding box containing all points of $X$. We then iterate, as long as some leaf vertex $v$ of $T$ has $S(v)\cap X$ containing more than one point. In each iteration, we select any such leaf vertex $v$, and we select an arbitrary vertical or horizontal line $L(v)$, that is contained in $S(v)$, and partitions $S(v)$ into two rectangular regions, that we denote by $S'$ and $S''$, such that $X\cap S',X\cap S''\neq \emptyset$. We then add two child vertices $v_1,v_2$ to $v$, and set $S(v_1)=S'$ and $S(v_2)=S''$. Once every leaf vertex $v$ has $|S(v)\cap X|=1$, we terminate the process and obtain the final partitioning tree $T$.

The cost $\cost(v)$ of every vertex $v\in V(T)$ is calculated exactly as before. We then let $\GB_T(X)=\sum_{v\in V(T)}\cost(v)$, and we let $\GB(X)$ be the maximum, over all partitioning trees $T$, of $\GB_T(X)$. 

We note that the main difference between $\cGB(X)$ and $\GB(X)$ is that in $\cGB$ bound, the partitioning lines must be chosen consistently across all regions: that is, we choose a vertical or a horizontal line $L$ that crosses the entire bounding box $B$, and then we partition every region that intersects $L$ by this line $L$. In contrast, in the $\GB$ bound, we can partition each region $S(v)$ individually, and choose different partitioning lines for different regions. It is then easy to see that $\GB$ is more general than $\cGB$, and, in particular, for every point set $X$ that is a permutation, $\cGB(X)\leq \GB(X)$.

Lastly, we show that $\GB$ is a lower bound on the optimal solution cost, in the following lemma whose proof is deferred to Section \ref{sec: bound for GB} of Appendix.

\begin{lemma}\label{lem: GB is lb for opt}
For any point set $X$ that is a permutation, $\GB(X) \leq 2 \opt(X)$.  
\end{lemma}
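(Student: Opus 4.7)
The plan is to generalize the proof of Claim~\ref{claim: bounding WB by OPT} (which shows $\WB(X) \le 2\opt(X)$) to accommodate both vertical and horizontal cuts. I will fix a guillotine partitioning tree $T$ with $\GB_T(X) = \GB(X)$, take an optimal canonical solution $Y$ for $X$ (so $|Y| = \opt(X)$ and every $y \in Y$ lies on an active row and active column of $X$), and construct an injection from the set of crossings counted across $\sum_{v \in V(T)} \cost(v)$ into $Y \times \{0,1\}$. This would immediately yield $\GB(X) = \GB_T(X) \le 2|Y| = 2\opt(X)$.

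First, for each crossing $c = (p,p')$ at a vertex $v$, I will identify a witness $y_c \in Y$ lying in $\rect_{p,p'}$. The key geometric lemma is that $\rect_{p,p'} \subseteq S(v)$ and $\rect_{p,p'}$ contains no point of $X \setminus \{p,p'\}$. The containment holds because $S(v)$ is a rectangle (by the guillotine construction) and $p,p' \in S(v)$. The absence of other $X$-points follows from two facts: (i) $p$ and $p'$ are consecutive in the relevant coordinate order (in $y$-order if $L(v)$ is vertical, $x$-order if horizontal), so no point of $X \cap S(v)$ lies strictly between them in that coordinate; and (ii) $X$ is a permutation, so no point of $X$ other than $p$ or $p'$ shares their row or column. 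Applying Observation~\ref{obs: aligned point} to the satisfied set $X \cup Y$ then produces a point $r \in \rect_{p,p'} \setminus \{p,p'\}$ with $r.x = p.x$ or $r.y = p.y$, and this $r$ is forced into $Y$ by the geometric lemma.

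Next, I refine the choice of $y_c$ by a priority rule (prefer witnesses on the row of $p$ or $p'$ when $L(v)$ is vertical, and on the column of $p$ or $p'$ when $L(v)$ is horizontal, breaking ties by proximity to $L(v)$), and then assign $c$ to a pair $(y_c, \sigma_c) \in Y \times \{0,1\}$, where the bit $\sigma_c$ separates ``shared-with-$p$'' from ``shared-with-$p'$'' witnesses. The heart of the proof is the claim that this map is injective, equivalently that each $y \in Y$ is chosen as witness for at most two crossings. To prove it, I will consider the unique root-to-leaf path $P_y$ in $T$ consisting of regions that contain $y$; any crossing $y$ witnesses occurs at some $L(v)$ with $v \in P_y$, since $\rect_{p,p'} \subseteq S(v)$. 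I then do a case split by the orientation of $L(v)$ and by whether $y$'s shared coordinate is a row or a column, using that in a canonical solution $y$ lies on a unique active row and column: each of $y$'s two ``witness slots'' (its row-slot and column-slot, each paired with its coordinate neighbour on the opposite side of the corresponding cut) is used at most once along $P_y$.

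The hard part will be the joint bookkeeping across vertical and horizontal cuts along $P_y$. In the pure WB-1 case (vertical cuts only) only row-witnesses can appear, and Claim~\ref{claim: bounding WB by OPT} handles the argument with essentially a single orientation. In the guillotine setting the cuts alternate orientation freely, so I must separately bound the row-witness incidences arising from vertical crossings and the column-witness incidences arising from horizontal crossings, maintaining an invariant that once $y$'s row (respectively column) has been ``used'' at an ancestor $v$, the nesting of the guillotine regions together with the consecutivity property prevents reusing the same slot at any proper descendant of $v$ along $P_y$. Carefully ruling out degenerate cases where $y$ sits on the boundary between two adjacent crossing rectangles is the main technical point I expect to be delicate.
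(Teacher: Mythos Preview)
Your approach is genuinely different from the paper's and substantially harder. The paper proves Lemma~\ref{lem: GB is lb for opt} by a short induction on the height of the guillotine tree $T$: it invokes Claim~\ref{claim: ind step} (established inside the proof of Claim~\ref{claim: bounding WB by OPT}), which says that for a single cut $\opt(X)\ge \opt(X_1)+\opt(X_2)+\cost(r)/2$. That claim is symmetric in rows and columns (since $X$ is a permutation), so it applies verbatim to horizontal cuts. Combining it with $\GB_T(X)=\GB_{T_1}(X_1)+\GB_{T_2}(X_2)+\cost(r)$ and the induction hypothesis gives the factor~$2$ immediately---no injection, no witnesses, no case analysis.

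Your injection plan, by contrast, has a real gap at exactly the point you flag as ``delicate'': injectivity is asserted but never established, and the description of the two ``slots'' is internally inconsistent. You define the bit $\sigma_c$ as ``shared-with-$p$ versus shared-with-$p'$'', but then say the two slots are ``row-slot and column-slot''; these are different partitions of the crossings. More concretely, even restricting to vertical cuts only, a fixed $y\in Y$ lying on the row of $p^*\in X$ can be the chosen row-aligned witness for a crossing $(p^*,q)$ with $q$ above $p^*$ at one vertex \emph{and} for a crossing $(q',p^*)$ with $q'$ below $p^*$ at another vertex along $P_y$---so a single ``row-slot'' is already two roles, not one. Adding horizontal cuts you are looking at up to four roles for $y$, not two, and your bit does not separate them. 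A valid direct charging argument does exist (it is essentially the independent-rectangle bound of Demaine et al.: one first checks that all crossing rectangles are pairwise independent, then $2$-colors them by the orientation of their diagonal and injects each color class into $Y$), but that is not the decomposition you sketched. The paper's inductive route via Claim~\ref{claim: ind step} avoids all of this.
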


\subsection{Negative Result for the $\XXWB$ Bound}\label{subsec: Instance construction for extended}
In this section we prove Theorem \ref{thm:negative for XWB}.

\newcommand{\EESBRS}{\mbox{\sf 2D-ES-BRS}}
We use three main parameters. Let $\ell\geq 1$ be an integer, and let $n = 2^{\ell}$ and $N= 2^n$.
As before, we will first construct point set $\hat{X}$ that is not a permutation (in fact, it is not even a semi-permutation), and then we will turn it into our final instance $X^*$ which is a permutation.  

\paragraph{2D exponentially spaced bit reversal:} We define the instance $\EESBRS(\ell)$ to be a bit-reversal sequence $\BRS(\ell,\rset,\cset)$, where the sets $\rset$ and $\cset$ of active rows and columns are defined as follows. Set $\cset$ contains all columns with $x$-coordinates in $\set{2^j\mid 1\leq j\leq n}$, and similarly set $\rset$ contains all rows with $y$-coordinates in $\set{2^j\mid 1\leq j\leq n}$. Note that set $X$ contains $n$ points, whose $x$- and $y$-coordinates are integers between $1$ and $N$.

\paragraph{2D cyclic shifts:} Next, we define the shifted and exponentially spaced instance, but this time we shift both vertically and horizontally. 
We assume that we are given a horizontal shift $0\leq s<N$ and a vertical shift $0\leq s'<N$. 
In order to construct the instance $X^{s,s'}$, we start with the instance $X=\EESBRS(\ell)$, and then perform the following two operations. 
First, we perform a horizontal shift by $s$ units as before, by moving the last $s$ columns with integral $x$-coordinates  to the beginning of the instance. 
Next, we perform a vertical shift, by moving the last $s'$ rows with integral $y$-coordinates to the bottom of the instance. 
We let $X^{s,s'}$ denote the resulting instance. 
By applying Claim \ref{claim: shifting does not hurt opt} twice, once for the horizontal shift, and once for the vertical shift, we get that $\opt(X^{s,s'})\geq \opt(X)-2|X|\geq \Omega(\log N\log\log N)$, since $|X|=\log N$.

\paragraph{Instance $\hat{X}$:} Next, we construct an instance $\hat X$, by combining the instances $X^{s,s'}$ for $0\leq s,s'<N$. 
In order to do so, let $\hat \cset$ be a set of $N^2$ columns, with integral $x$-coordinates $1,\ldots,N^2$. We partition $\hat \cset$ into subsets $\cset_1, \cset_2,\ldots,\cset_N$, each of which contains $N$ consecutive columns, they appear in this left-to-right order. 
We call each such set $\cset_i$ a \emph{super-column}. We denote by $S^V_i$ the smallest vertical strip containing all columns of $\cset_i$.

Similarly, we let $\hat \rset$ be a set of $N^2$ rows, with integral $y$-coordinates $1,\ldots,N^2$. We partition $\hat \rset$ into subsets $\rset_1,\ldots,\rset_N$, each of which contains $N$ consecutive rows, such that $\rset_1,\ldots,\rset_N$ appear in this bottom-to-top order. We call each such set $\rset_i$ a \emph{super-row}.  We denote by $S^H_i$ the smallest horizontal strip containing all rows of $\rset_i$.
For all $1\leq i,j\leq N$, we let $B(i,j)$ be the intersection of the horizontal strip $S^H_i$ and the vertical strip $S^V_i$. We plant the instance $X^{(i-1),(j-1)}$ into the box $B(i,j)$. 
This completes the construction of instance $\hat{X}$. 
Let $N^* = |\hat{X}| = N^2 \log N$ (recall that  each instance $X^{s,s'}$ contains $\log N$ points.)

Observe that, for each vertical strip $S^V_i$, all instances planted into $S^V_i$ have the same vertical shift - $(i-1)$; the horizontal shift $s'$ of each instance increases from $0$ to $N-1$ as we traverse $S^V_i$ from bottom to top. 
In particular, the instance planted into $S^V_1$ is precisely the instance $\hat X$ from Section \ref{sec: Instance construction} (if we ignore inactive rows).  
For each $i>1$, the instance planted into $S^V_i$ is very similar to the instance  $\hat X$ from Section \ref{sec: Instance construction}, except that each of its corresponding sub-instances is shifted vertically by exactly $(i-1)$ rows.

Similarly, for each horizontal strip  $S^H_j$, all instances planted into $S^H_j$ have the same horizontal shift - $(j-1)$; the vertical shift $s'$ of each instance increases from $0$ to $N-1$ as we traverse $S^H_j$ from left to right.

Since, for every  instance $X^{s,s'}$, $\opt(X^{s,s'})=\Omega(\log N\log\log N)$, we obtain the following bound.
\begin{observation} 
 $\opt(\hat X)=\Omega(N^2\log N\log\log N)=\Omega(N^*\log\log N^*)$. 
\end{observation}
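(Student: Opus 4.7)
The plan is to lower-bound $\opt(\hat X)$ by the sum of the optimal costs of the $N^2$ planted sub-instances $X^{s,s'}$, each of which we have already established satisfies $\opt(X^{s,s'}) = \Omega(\log N \log \log N)$. The key structural property is that the sub-instance $X^{(i-1),(j-1)}$ is planted into the box $B(i,j) = S^V_i \cap S^H_j$, and the $N^2$ boxes $\{B(i,j)\}$ are pairwise interior-disjoint. This ``block'' structure should let me argue that any feasible solution for $\hat X$ restricts to a feasible solution on each $B(i,j)$ separately, and then simply sum.

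More concretely, I would start with an optimal canonical solution $Y$ for $\hat X$, so that every point of $Y$ lies on an active row and an active column of $\hat X$. Since the boundaries between the boxes $B(i,j)$ have half-integral $x$- or $y$-coordinates while active rows and columns of $\hat X$ have integral coordinates, no point of $Y$ lies on any inter-box boundary, and therefore $|Y| = \sum_{i,j} |Y \cap B(i,j)|$. Writing $Y_{i,j} = Y \cap B(i,j)$, the heart of the argument is to check that $Y_{i,j}$ is a feasible solution for $X^{(i-1),(j-1)}$: for any pair $p,q$ in $X^{(i-1),(j-1)} \cup Y_{i,j}$, both endpoints lie in $B(i,j)$, so $\rect_{p,q} \subseteq B(i,j)$; and since $\hat X \cap B(i,j) = X^{(i-1),(j-1)}$ by construction, every point of $(\hat X \cup Y) \cap \rect_{p,q}$ already lies in $X^{(i-1),(j-1)} \cup Y_{i,j}$, and so the pair is satisfied inside the sub-instance.

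Summing the resulting inequality $|Y_{i,j}| \geq \opt(X^{(i-1),(j-1)}) = \Omega(\log N \log \log N)$ over the $N^2$ boxes will give $\opt(\hat X) \geq \Omega(N^2 \log N \log \log N)$. Since $N^* = N^2 \log N$, we have $\log N^* = \Theta(\log N)$ and hence $\log \log N^* = \Theta(\log \log N)$, which converts the bound to $\Omega(N^* \log \log N^*)$. I do not anticipate a real obstacle: the proof is essentially a direct unwinding of the planted construction, with only minor bookkeeping around boundary points that is handled by the half-integral coordinate convention. Notably, no invocation of the general decomposition \Cref{thm: opt decomposes} is needed, which is fortunate because $\hat X$ itself is not a semi-permutation (each active row in super-row $\rset_j$ carries one point from each of the $N$ sub-instances sharing that row).
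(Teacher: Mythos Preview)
Your argument is correct and is essentially the same as what the paper intends: the paper simply states that since each $X^{s,s'}$ satisfies $\opt(X^{s,s'})=\Omega(\log N\log\log N)$, the bound follows, leaving the superadditivity $\opt(\hat X)\ge\sum_{s,s'}\opt(X^{s,s'})$ implicit. You have spelled out that implicit step carefully via the disjoint-boxes restriction argument, and the bookkeeping (canonical solution, half-integral boundaries, and the conversion $N^2\log N\log\log N=\Theta(N^*\log\log N^*)$) is all sound.
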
 

Since instance $\hat X$ is symmetric, and every point lies on one of the $N^2$ rows of $\hat\rset$ and on one of the $N^2$ rows of $\hat\cset$, we obtain the following.

\begin{observation} 
Every row in $\hat\rset$ contains exactly $\log N$ points of $\hat X$. Similarly, every column of $\hat\cset$  contains exactly $\log N$ points of $\hat X$.
\end{observation}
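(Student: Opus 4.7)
The plan is a direct counting argument that exploits the parametric structure of the planted instances. I would fix any row $R \in \hat{\rset}$, and let $\rset_b$ be the unique super-row containing $R$. The only instances that can deposit points on $R$ are those planted in boxes $B(b,1), B(b,2), \dots, B(b,N)$, namely the $N$ copies $X^{b-1, a-1}$ for $a = 1, \ldots, N$. So $|\hat{X} \cap R|$ decomposes as a sum of $N$ contributions, one from each super-column, and the goal reduces to counting how many of these contribute a (necessarily unique) point to $R$.

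Next, I would use that each $X^{s, s'}$ is a cyclically shifted copy of $\EESBRS(\ell)$, whose underlying $\BRS(\ell, \rset, \cset)$ has $|\rset| = \log N$ active rows at $y$-coordinates $\{2^k : 1 \le k \le \log N\}$ and is a permutation on these. After the vertical shift by $s'$, the active row offsets of $X^{s, s'}$ within its super-row become $\{(2^k + s') \bmod N : 1 \le k \le \log N\}$, and by the permutation property each active row carries exactly one point. Letting $r \in \{1, \ldots, N\}$ denote the offset of $R$ within $\rset_b$, the instance $X^{b-1, a-1}$ deposits a point on $R$ iff $r \equiv 2^k + (a-1) \pmod{N}$ for some $k$, equivalently $a \equiv r - 2^k + 1 \pmod{N}$.

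The crux is then the observation that the values $2^1, 2^2, \ldots, 2^{\log N}$ are pairwise distinct modulo $N = 2^{\log N}$. This yields exactly $\log N$ distinct values of $a \in \{1, \ldots, N\}$ satisfying the congruence, each contributing a single point. Hence $|\hat{X} \cap R| = \log N$. The column statement follows by the entirely symmetric argument: for $C$ in super-column $\cset_a$, the only contributing instances are $\{X^{b-1, a-1}\}_{b=1}^N$, they share the vertical shift $a-1$ and vary precisely in horizontal shift $b-1$, and the same congruence count applies with the roles of the two coordinates swapped.

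I do not foresee a genuine obstacle. The only care needed is in aligning the $1$-based versus $0$-based conventions for shifts and coordinates so that the cyclic arithmetic modulo $N$ is consistent, and in verifying (which the definition of $X^{s, s'}$ and the planting convention already guarantee) that the cyclic shift never moves a point out of the relevant super-row or super-column, so that the counting above is indeed exhaustive.
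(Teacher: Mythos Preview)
Your proposal is correct. The paper itself gives essentially no proof of this observation beyond the single sentence ``Since instance $\hat X$ is symmetric, and every point lies on one of the $N^2$ rows of $\hat\rset$ and on one of the $N^2$ [columns] of $\hat\cset$, we obtain the following,'' i.e., a total of $N^{2}\log N$ points spread by symmetry over $N^{2}$ rows and $N^{2}$ columns. Your congruence count makes that symmetry explicit: within a fixed super-row the vertical shifts of the $N$ planted instances exhaust all residues modulo $N$, so the number of instances hitting a given row offset $r$ is exactly the number of exponents $k$, namely $\log N$; and symmetrically for columns. This is the same idea, just carried out in detail rather than asserted.
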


\paragraph{Final instance:} 
Lastly, in order to turn $\hat X$ into a permutation $X^*$, we perform a similar transformation to that in Section \ref{sec: Instance construction}: for every column $C\in \cset$, we replace $C$ with a collection $\bset(C)$ of $\log N$ consecutive columns, and we place all points that lie on $C$ on the columns of $\bset(C)$, so that they form an increasing sequence, while preserving their $y$-coordinates. We replace every row $R\in \rset$ by a collection $\bset(R)$ of $\log N$ rows similarly. The resulting final instance $X^*$ is now guaranteed to be a permutation, and it contains $N^*=N^2\log N$ points. Using the same reasoning as in Section \ref{sec: Instance construction}, it is easy to verify that $\opt(X^*)\geq \opt(\hX)\geq \Omega(N^*\log\log N^*)$. In the remainder of this section, we will show that $\XWB(X^*)=O(N^*\log\log \log N^*)$. 

Abusing the notation, for all $1\leq i\leq N^2$, we denote by $S^V_i$ the vertical strip obtained by taking the union of all blocks $\bset(C)$ of columns, where $C$ belonged to the original strip $S^V_i$. We define the horizontal strips $S^H_i$ similarly. Note that, from Lemma \ref{lem: extended WB from vertical and horizontal}, it is enough to prove that $\WB(X^*)=O(N^*\log\log \log N^*)$ and that  $\WB'(X^*)=O(N^*\log\log \log N^*)$. We do so in the following two subsections.

\subsection{Handling Vertical Cuts}\label{subsec: bounding WB vertical}
The goal of this section is to prove the following theorem:

\begin{theorem}\label{thm: boundinb WB vertical}
	$\WB(X^*)\leq O(N^*\log\log\log N^*)$.
\end{theorem}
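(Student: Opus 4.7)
My plan is to prove $\WB(X^*) \leq O(N^* \log\log\log N^*)$ by combining two applications of the decomposition theorem (Theorem~\ref{thm: WB for corner and strip instances}) with a 2D probabilistic analysis that closely parallels the 1D argument of Section~\ref{sec: negative}.

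The first step is to eliminate the row/column expansion and reduce the problem to bounding $\WB(\hat X)$. I will apply Theorem~\ref{thm: WB for corner and strip instances} to $X^*$ using the vertical lines that separate the adjacent column-blocks $\bset(C)$ for $C \in \hat\cset$. Each resulting strip instance consists of $\log N$ points in a single block, arranged as a monotone increasing sequence; by Observation~\ref{obs: solution for monotone increasing set} and Claim~\ref{claim: bounding WB by OPT}, each has $\WB = O(\log N)$, contributing $O(|X^*|) = O(N^*)$ in total. The compressed instance is essentially $\hat X$ (up to row expansion, which for vertical-cut $\WB$ can be handled by the same block-collapsing reduction symmetrically). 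This yields $\WB(X^*) \leq 4 \WB(\hat X) + O(N^*)$.

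Next, I will bound $\WB(\hat X) \leq O(N^* \log\log\log N^*)$ by an argument that mirrors Sections~\ref{subsec: type 1 crossings}--\ref{sec: negative}. Fix an ordering $\sigma$ of the vertical lines and its partitioning tree $T = T(\sigma)$. Partition each crossing contributing to $\WB_\sigma(\hat X)$ into: \emph{Type~1}, both endpoints in the same box $B(i,j)$; \emph{Type~2}, both endpoints in the same super-column but in different boxes; and \emph{Type~3}, endpoints in distinct super-columns. The critical observation is that for each super-column $\cset_i$, the restriction $\sigma_i$ of $\sigma$ to the lines inside $\cset_i$ depends only on $i$, so the Type~1 contribution is $\sum_{i,j}\WB_{\sigma_i}(X^{(i-1),(j-1)})$. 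For each fixed $j$, the sum $\sum_i \WB_{\sigma_i}(X^{(i-1),(j-1)})$ can be viewed as $N$ times the expectation, over uniformly random $i \in \set{1,\dots,N}$, of $\WB_{\sigma_i}(\ESBRS_{i-1, j-1})$; invoking the probabilistic analysis of Section~\ref{subsec: type 1 crossings} gives $O(N \log N \log\log\log N)$ per $j$, and summing over $j$ yields a total Type~1 cost of $O(N^* \log\log\log N^*)$. For Type~2 crossings, the analog of Theorem~\ref{thm: bound cost 2} applied within each super-column bounds the contribution by $O(\log N)$ per node of $T$ plus the Type~1 cost, contributing $O(N^*)$ in total. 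Type~3 crossings occur only at the $N-1$ super-column boundaries, and since at most $O(|\hat X|/N) = O(N\log N)$ crossings can be associated with each such boundary cut, the total Type~3 cost is $O(N^*)$. Combining the three types gives the desired bound on $\WB(\hat X)$.

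The main obstacle I anticipate is the Type~1 analysis. Section~\ref{subsec: type 1 crossings} fixes a single $\sigma$ and averages $\WB_\sigma$ over a random shift of a fixed sub-instance, whereas here, as $i$ varies, both the sub-instance (through its horizontal shift $i-1$) and the induced sub-ordering $\sigma_i$ change. The resolution is that the bad-strip probability estimate of Claim~\ref{claim: prob bad strip} depends only on the horizontal shift and the widths of strips in $T(\sigma_i)$, not on the finer structure of $\sigma_i$; once this is verified, the chain of arguments leading to Lemma~\ref{lem: no bad event low cost} can be re-run with $\sigma_i$ playing the role of $\sigma$, uniformly over $i$. Making this ``uniformity over $\sigma_i$'' precise, and carefully pushing the probabilistic bound across the random choice of $i$ in the presence of an $i$-dependent sub-ordering, is the crux of the proof.
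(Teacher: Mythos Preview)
The paper's route is different and simpler than yours. It applies Theorem~\ref{thm: WB for corner and strip instances} first at the \emph{super-column} level, splitting $X^*$ along the $N-1$ boundaries between the vertical strips $S^V_1,\dots,S^V_N$. Each resulting strip instance $X^*_i$ is, after a further column-block split, precisely a Section~\ref{sec: negative}-style instance carrying one fixed extra vertical shift; since the Type-1 argument of Section~\ref{subsec: type 1 crossings} never uses vertical positions, Lemma~\ref{lem: bounding vertical WB for strip} follows verbatim. The compressed instance $\tilde X$ then has only $N$ columns and is bounded by a separate combinatorial argument (Lemma~\ref{lem: bounding vertical WB for compressed}).

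Your direct type analysis has two concrete gaps. First, the Type-1 ``obstacle'' is based on a misreading of the construction: within each super-column it is the \emph{horizontal} shift that ranges over $0,\dots,N-1$ as the super-row varies, while the vertical shift is fixed (this is stated explicitly in the proof of Lemma~\ref{lem: bounding vertical WB for strip}). Hence fixing the super-column simultaneously fixes $\sigma_i$ and supplies the full range of horizontal shifts---exactly the Section~\ref{subsec: type 1 crossings} setup, with no coupling between ordering and shift to overcome. Under your reversed summation the coupling is fatal: the line sets in distinct super-columns are disjoint, so the adversary may choose each $\sigma_i$ independently and tailor it to the single column pattern present in that super-column, forcing $\WB_{\sigma_i}=\Theta(\log N\log\log N)$ for every~$i$ at some fixed $j$; your proposed uniformity-of-widths fix cannot prevent this. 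Second, your Type-3 bound is wrong as stated: a line strictly inside one super-column can own a $T(\sigma)$-strip spanning several super-columns, so crossings between points of different super-columns are not confined to the $N-1$ boundary lines. Bounding those crossings is exactly the content of Lemma~\ref{lem: bounding vertical WB for compressed}, which requires a nontrivial case analysis of row-blocks and super-blocks; it does not follow from a simple count over boundary lines.
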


For all $1\leq i\leq N$, we denote by $\bset_i$ the set of active columns that lie in the vertical strip $S^V_i$, so that $\bset_1,\ldots,\bset_N$ partition the set of active columns of $X^*$.
Let $\lset'$ be a collection of lines at half-integral coordinates that partitions the bounding box $B$ into $N$ strips where each strip contains exactly the block $\bset_i$ of columns. 
We consider the split of $X^*$ by the lines $\lset'$: This is a collection of $N$ strip instances (that we will denote by $X^*_1,\ldots,X^*_N$) and a compressed instance, that we denote by $\tilde X$.     
In order to prove Theorem \ref{thm: boundinb WB vertical}, we bound $\WB(X^*_i)$ for every strip instance $X^*_i$, and $\WB(\tilde X)$ for the compressed instance $\tilde X$, and then combine them using Theorem \ref{thm: WB for corner and strip instances} in order to obtain the final bound on $\WB(X^*)$.

\subsubsection{Bounding Wilber Bound for Strip Instances}
In this subsection, we prove the following lemma.

\begin{lemma}\label{lem: bounding vertical WB for strip}
	For all $1\leq i\leq N$, $\WB(X^*_i)\leq O(N\log N\log\log\log N)$.
\end{lemma}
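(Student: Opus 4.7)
The plan is to recognize that, structurally, $X^*_i$ is an isomorphic copy (up to a uniform cyclic vertical shift applied to each sub-instance) of the permutation instance $X^*$ constructed in Section~\ref{sec:construction}, and to invoke the bound $\WB(X^*) \leq O(N^* \log\log\log N^*)$ proved there, with parameter $N^* = N\log N$. This immediately yields $\WB(X^*_i) \leq O(N \log N \log\log\log N)$ as required.

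First I would unpack the definition of $X^*_i$. The strip $S^V_i$ contains the boxes $B(i,1),\ldots,B(i,N)$, and the sub-instance placed in $B(i,j)$ is $X^{(i-1),(j-1)}$. As noted in the paragraph introducing $\hat{X}$ in Section~\ref{subsec: Instance construction for extended}, within $S^V_i$ all sub-instances share a common vertical shift of $(i-1)$, while their horizontal shifts vary over $\{0,1,\ldots,N-1\}$. After applying the column blow-up, the resulting point set $X^*_i$ is precisely the permutation instance $X^*$ from Section~\ref{sec:construction}, modified only by a uniform cyclic vertical shift applied to every individual sub-instance.

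The next step is to verify that the entire argument used to bound $\WB(X^*)$ in Section~\ref{sec:construction} is invariant under such a uniform vertical shift. The argument there has three ingredients. First, the probabilistic bound on type-1 crossings in Section~\ref{subsec: type 1 crossings}: the expectation is taken over a random horizontal shift of a single sub-instance, and the crucial probability estimate (Claim~\ref{claim: prob bad strip}) depends only on how the horizontal positions of points relate to the widths of vertical strips in the partitioning tree, not on any $y$-coordinates. The auxiliary bound in Lemma~\ref{lem: no bad event low cost} bounds sub-instance contributions purely in terms of sizes of strips. Second, the combinatorial bound on type-2 crossings in Theorem~\ref{thm: bound cost 2} proceeds by collapsing each sub-instance into a single row and tracking, for each BRS point $p_i$, the columns on which its copies appear across sub-instances; a uniform cyclic vertical shift does not alter any of these column assignments or the relative $y$-ordering between sub-instances. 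Third, the reduction from $\WB(X^*)$ to $\WB(\hat{X})$ via Theorem~\ref{thm: WB for corner and strip instances}, together with the observation that each monotone strip instance contributes only $O(|X^s_i|)$ to the total, is purely geometric and transfers unchanged.

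With these invariances established, the proof of Section~\ref{sec:construction} applies verbatim to $X^*_i$, yielding the desired bound. The main effort in executing this plan is the bookkeeping of checking each step of the Section~\ref{sec:construction} proof for invariance under the uniform cyclic vertical shift; since every quantity that appears is either purely horizontal or invariant under such shifts, this check should be routine rather than require any new ideas.
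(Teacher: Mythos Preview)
Your approach is essentially the same as the paper's: both recognize that $X^*_i$ differs from the earlier permutation instance only in the vertical positions of points within each sub-instance, reduce via Theorem~\ref{thm: WB for corner and strip instances} to bounding $\WB(\hat X_i)$, and then observe that the type-1 and type-2 crossing analyses from Section~\ref{sec: negative} go through unchanged because they use only the horizontal structure of the sub-instances. One small imprecision: the difference between $X^*_i$ and the earlier $X^*$ is not merely a cyclic vertical shift, since in the 2D construction the rows within each sub-instance are themselves exponentially spaced (and also the row blow-up inserts empty rows); however, this does not affect your argument, since the vertical-line Wilber analysis depends only on the relative vertical order of points and not on actual $y$-coordinates, and the paper makes exactly this observation.
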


From now on we fix an index $i$, and consider the instance $X^*_i$. 
Recall that in order to construct instance $X^*_i$, we started with the instances $X^{0,i},X^{1,i},\ldots,X^{N-1,i}$, each of which has the same vertical shift (shift $i$), and horizontal shifts ranging from $0$ to $N-1$. Let $\hat X_i$ be the instance obtained by stacking these instances one on top of the other, similarly to the construction of instance $\hat X$ in Section \ref{sec: Instance construction}. As before, instance $\hat X_i$ is a semi-permutation, so every row contains at most one point. Every column of $\hat X_i$ contains exactly $\log N$ points of $\hat X_i$. Let $\cset$ denote the set of all active columns of instance $\hat X_i$. For every column $C\in \cset$, we replace $C$ with a block $\bset(C)$ of $\log N$ columns, and place all points of $\hat X_i\cap C$ on the columns of $\bset(C)$, so that they form an increasing sequence, while preserving their $y$-coordinates. The resulting instance is equivalent to $X^*_i$ (to obtain instance $X^*_i$ we also need to replace every active row $R$ with a block $\bset(R)$ of $\log N$ rows; but since every row contains at most one point of $\hat X_i$, this amounts to inserting empty rows into the instance).

The analysis of $\WB(X^*_i)$ is very similar to the analysis of $\WB(X^*)$ for instance $X^*$ constructed in Section \ref{sec: Instance construction}. Notice that, as before, it is sufficient to show that $\WB(\hat X_i)\leq  O(N\log N\log\log\log N)$.
Indeed, consider the partition $\set{\bset(C)}_{C\in \cset}$ of the columns of $X^*_i$. Then $\hat X_i$ can be viewed as the compressed instance for $X^*_i$ with the respect to this partition. Each resulting strip instance (defined by the block $\bset(C)$ of columns) is an increasing sequence of $\log N$ points, so the Wilber Bound value for such an instance is $O(\log N)$. Altogether, the total Wilber Bound of all such strip instances is $O(N\log N)$. Therefore, from Theorem \ref{thm: WB for corner and strip instances}, in order to prove Lemma \ref{lem: bounding vertical WB for strip}, it is now sufficient to show that $\WB(\hat X_i)\leq O(N\log N\log\log\log N)$.

Let $\lset$ be the set of all vertical lines with half-integral  coordinates for the instance $\hat X_i$, and let $\sigma$ be any permutation of these lines. Our goal is to prove that $\WB_{\sigma}(\hat X_i)\leq O(N\log N\log\log\log N)$. Let $T=T(\sigma)$ be the partitioning tree associated with $\sigma$. Consider some vertex $v\in V(T)$ and the line $L$ that $v$ owns. As before, we classify crossings that are charged to $L$ into several types. A crossing $(p,p')$ is a type-1 crossing, if $p$ and $p'$ both lie in the same instance $X^{j,i}$. We say that instance $X^{j,i}$ is \emph{bad} for $L$, if it contributes at least one type-$1$ crossing to the cost of $L$. If $p\in X^{j,i}$ and $p'\in X^{j',i}$ for $j\neq j'$, then we say that $(p,p')$ is a type-2 crossing. If either instance $X^{j,i}$ or $X^{j',i}$ is a bad instance for $L$, then the crossing is of type (2a); otherwise it is of type (2b). 

We now bound the total number of crossings of each of these types separately.

\paragraph{Type-1 Crossings}
We bound the total number of all type-1 crossings exactly like in Section \ref{subsec: type 1 crossings}. We note that the proof does not use the vertical locations of the points in the sub-instances $X^{j,i}$, and only relies on two properties of instance $\hat X$: (i) the points in the first instance $X_0$ (corresponding to instance $X^{0,i}$) are exponentially spaced horizontally, so the $x$-coordinates of the points are integral powers of $2$, and they are all distinct; and (ii) each subsequent instance $X_s$ (corresponding to instance $X^{s,i}$) is a copy of $X_0$ that is shifted horizontally by $s$ units. Therefore, the same analysis applies, and the total number of type-1 crossings in $\hat X_i$ can be bounded by $O(N\log N\log\log\log N)$ as before.

\paragraph{Type-(2a) Crossings}
As before, we charge each type-(2a) crossing to one of the corresponding bad instances, to conclude that the total number of type-(2a) crossings is bounded by the total number of type-1 crossings, which is in turn bounded by $O(N\log N\log\log\log N)$.

\paragraph{Type-(2b) Crossings}
Recall that in order to bound the number of type-(2b) crossings, we have collapsed, for every instance $X_s$, all rows of $X_s$ into a single row. If we similarly collapse, for every instance $X^{s,i}$, all rows of this instance into a single row, we will obtain an identical set of points. This is because the only difference between instances $X_s$ and $X^{s,i}$ is vertical position of their points. Therefore, the total number of type-(2b) crossings in $\hat X_i$ is bounded by $O(N\log N)$ as before.

This finishes the proof of Lemma \ref{lem: bounding vertical WB for strip}. We conclude that $\sum_{i=1}^N\WB(X_i^*)\leq O(N^2\log N\log\log\log N)=O(N^*\log\log\log N^*)$.

\subsubsection{Bounding Wilber Bound for the Compressed Instance}
In this subsection, we prove the following lemma.

\begin{lemma}\label{lem: bounding vertical WB for compressed}
	 $\WB(\tilde X)\leq O(N^*)$.
\end{lemma}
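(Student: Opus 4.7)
The plan is to slice $\tilde X$ along the super-rows $S^H_j$ and exploit a monotone staircase structure within each slice.

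For each $1\leq j\leq N$, let $\tilde X_j = \tilde X \cap S^H_j$ denote the subset of points of $\tilde X$ lying in super-row $S^H_j$. Since collapsing the super-strips $S^V_i$ of $X^*$ into single columns does not change $y$-coordinates, the sets $\tilde X_1,\dots,\tilde X_N$ partition $\tilde X$, and each $\tilde X_j$ has the same $N$ active columns as $\tilde X$. I first decompose $\WB(\tilde X)$ across these slices. Fix any partitioning tree $T$ for $\tilde X$, which has exactly $N-1$ internal nodes. At an internal node $v$ owning line $L(v)$, each pair of points consecutive in $y$-order within $\tilde X \cap S(v)$ is either (a) both in the same $\tilde X_j$, or (b) straddles a boundary between two consecutive super-rows meeting $S(v)$. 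Type-(a) pairs match exactly, summed over $j$, the consecutive pairs of $\tilde X_j \cap S(v)$, because super-rows are $y$-contiguous: any point of $\tilde X \cap S(v)$ lying between two points of $\tilde X_j$ must itself lie in $\tilde X_j$. Type-(b) pairs contribute at most $N-1$ crossings per node. Summing $\cost(v)$ over the $N-1$ internal nodes gives
\[
\WB_T(\tilde X) \;\leq\; \sum_{j=1}^N \WB_T(\tilde X_j) + (N-1)^2 \;\leq\; \sum_{j=1}^N \WB(\tilde X_j) + O(N^2),
\]
and maximizing over $T$ on the left yields $\WB(\tilde X) \leq \sum_{j=1}^N \WB(\tilde X_j) + O(N^2)$.

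Next I show $\WB(\tilde X_j) \leq O(N\log N)$ by analyzing the geometric structure of $\tilde X_j$. Within super-row $S^H_j$ of $\hat X$, the boxes $B(i,j)$ for $i=1,\dots,N$ share vertical shift $j-1$ and therefore the same $\log N$ active sub-rows; each such sub-row carries one point per box, and the $x$-coordinates of these points increase with $i$ because super-column $S^V_i$ lies to the right of $S^V_{i-1}$. Turning $\hat X$ into the permutation $X^*$ expands each such sub-row into a block of consecutive rows and places its $N$ points as a monotone increasing sequence while preserving their $x$-order. After collapsing each $S^V_i$ into a single column of $\tilde X$, the $N$ points originally on one expanded sub-row block occupy columns $1,2,\dots,N$ with column $i$'s point on the $i$-th row of the block. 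Hence $\tilde X_j$ decomposes into $\log N$ monotone increasing staircases stacked in $y$-order, each containing one point on every column. Now fix any partitioning tree $T'$ for $\tilde X_j$ and any internal node $v\in V(T')$ owning $L(v)$, and let $[a,b]$ denote the column range of $S(v)$. Inside each staircase restricted to $S(v)$, the consecutive (by $y$) pairs are precisely the adjacent column pairs $(i,i+1)$ for $a\leq i<b$, so $L(v)$ crosses at most one pair per staircase, for a total of at most $\log N$ crossings. In addition, each boundary between two consecutive staircases within $S(v)$ contributes at most one further crossing, joining the top point of one staircase in $S(v)$ (at column $b$) to the bottom of the next (at column $a$), for at most $\log N - 1$ more. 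Therefore $\cost(v) \leq 2\log N$, and summing over the $N-1$ internal nodes of $T'$ gives $\WB_{T'}(\tilde X_j) = O(N\log N)$, hence $\WB(\tilde X_j) = O(N\log N)$.

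Combining, $\WB(\tilde X) \leq N \cdot O(N\log N) + O(N^2) = O(N^2\log N) = O(N^*)$, as claimed. The main obstacle is setting up the staircase decomposition of $\tilde X_j$: this requires tracing carefully through the composition of the planting of $X^{(i-1),(j-1)}$ into $B(i,j)$, the block expansion that turns $\hat X$ into $X^*$, and the super-strip collapse that produces $\tilde X$, in order to verify that the points originally on a single sub-row block really line up into an increasing staircase across columns $1,\dots,N$ in $\tilde X_j$. Once that structural picture is in hand, both the per-node bound $\cost(v) = O(\log N)$ and the super-row decomposition inequality are routine.
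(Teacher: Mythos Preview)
Your super-row decomposition in the first step is fine and cleanly reduces the task to bounding each $\WB(\tilde X_j)$. The second step, however, rests on a structural claim that is false. You assert that the boxes $B(i,j)$ within super-row $S^H_j$ share vertical shift $j-1$ and hence the same $\log N$ active sub-rows, each carrying one point per box. The construction is exactly the opposite: within a horizontal strip $S^H_j$ all instances share the same \emph{horizontal} shift, while the \emph{vertical} shift runs from $0$ to $N-1$ across the super-columns (this is stated explicitly in the paper's description of the instance, and is what makes each vertical strip look like the Section~\ref{sec: negative} instance). So the boxes in $S^H_j$ do \emph{not} share active sub-rows; each box picks its own $\log N$ sub-rows, and together they hit all $N$ sub-rows of $S^H_j$. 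Thus $\tilde X_j$ is not a stack of $\log N$ length-$N$ staircases but a stack of $N$ row-blocks, each a monotone sequence of only $\log N$ points sitting on $\log N$ of the $N$ columns. (A quick sanity check: each row of $\hat X$ carries exactly $\log N$ points, not $N$, so ``each sub-row carries one point per box'' cannot be right.)

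With the correct picture your per-node bound $\cost(v)\le 2\log N$ collapses: there are $N$ row-blocks, and the trivial per-node estimate is only $O(N)$, giving $\WB(\tilde X_j)=O(N^2)$ and hence $O(N^3)$ overall, which misses $O(N^*)=O(N^2\log N)$. The desired bound $\WB(\tilde X_j)=O(N\log N)$ does hold, but it needs precisely the shift structure the paper uses: because column $i$ of $\tilde X$ carries vertical shift $i{-}1$, a point on column $i$ in row-block $b$ has a ``neighbor'' on column $i{+}1$ in row-block $b{+}1$ (modulo wrap-around at the super-block boundary). This is what forces any crossing between non-adjacent row-blocks to be one of a few boundary types (last block of a super-block, or involving the first/last active column of the strip), which is exactly the content of the paper's claims about left-to-right and right-to-left crossings. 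Without that argument, the staircase bookkeeping alone does not close the gap.
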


We denote the active columns of $\tilde X$ by $C_1,\ldots,C_N$. 
Recall that each column $C_i$ contains exactly $N\log N$ input points. 
Let $\rset$ be the set of all rows with integral coordinates, so $|\rset|=N^2\log N$. Let $\bset_1,\bset_2,\ldots,\bset_{N^2}$ be a partition of the rows in $\rset$ into blocks containing $\log N$ consecutive rows each, where the blocks are indexed in their natural bottom-to-top order. Recall that each such block $\bset_i$ represents some active row of instance $\hat X$, and the points of $\tilde X$ that lie on the rows of $\bset_i$ form an increasing sequence. We also partition the rows of $\rset$ into super-blocks, $\hat \bset_1,\ldots,\hat \bset_N$, where each superblock is the union of exactly $N$ consecutive blocks. For each subinstance $X^{s,s'}$, the points of the subinstance lie on rows that belong to a single super-block.

Let $\lset$ be the set of all columns with half-integral coordinates for $\tX$, so $|\lset|\leq N$. We fix any permutation $\sigma$ of $\lset$, and prove that $\WB_{\sigma}(\tX)\leq O(N^*)$. Let $T$ be the partitioning tree associated with the permutation $\sigma$.

Consider any vertex $v\in V(T)$, its corresponding vertical strip $S=S(v)$, and the vertical line $L= L(v)$ that $v$ owns. 
Let $(p,p')$ be a crossing of $L$, so $p$ and $p'$ both lie in $S$ on opposite sides of $L$, and no point of $\tX\cap S$ lies between the row of $p$ and the row of $p'$. Assume that the row of $p$ is below the row of $p'$. We say that the crossing is \emph{left-to-right} if $p$ is to the left of $L$, and we say that it is \emph{right-to-left} otherwise. In order to bound the number of crossings, we use the following two claims.

\begin{claim}\label{claim: left to right crossing}
	Assume that $(p,p')$ is a left-to-right crossing, and assume that $p$ lies on a row of $\bset_i$ and $p'$ lies on a row of $\bset_j$, with $i\leq j$. Then either $j\leq i+1$ (so the two blocks are either identical or consecutive), or block $\bset_i$ is the last block in its super-block.
\end{claim}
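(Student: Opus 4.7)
The plan is to establish the (slightly stronger) statement $j \leq i+1$ in all cases, which implies the claim; the super-block exception in the statement is not actually required. The heart of the argument is a clean shift-by-one identity for the column sets of consecutive blocks in $\tilde X$. For each block $\bset_i$ lying inside super-block $\hat\bset_k$, if $y_{\text{loc}} \in \{1,\dots,N\}$ denotes the local row index of $\bset_i$ inside super-row $S^H_k$, then the subinstance $X^{(k-1),(j-1)}$ in super-column $S^V_j$ has an active row at local position $y_{\text{loc}}$ iff $j-1 \equiv y_{\text{loc}} - 2^m \pmod N$ for some $1 \le m \le \log N$. Hence the points of $\bset_i$ in $\tilde X$ lie exactly on the columns $C_j$ with $j \in f(y_{\text{loc}})$, where
\[ f(y) \;=\; \{\, y - 2^m + 1 \bmod N : 1 \leq m \leq \log N \,\}. \]
Since this formula is independent of the super-row $k$, the identity $f(y+1) \equiv f(y) + 1 \pmod N$ holds for every $y$, and in particular persists across super-block boundaries because $f(N) + 1 \equiv f(1) \pmod N$.

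I would next exploit the monotone arrangement inside each block. By the construction of $X^*$ from $\hat X$, the $\log N$ points of $\bset_i$ occupy $\log N$ distinct columns of $\tilde X$ in strictly increasing order from the bottom row of the block to the top; call them $c_{i,1} < \cdots < c_{i,\log N}$. Since the column range of $S$ is a contiguous interval $[a,b]$, the subset $\bset_i \cap S$ is a contiguous sub-range of these points by height. I would argue that in the case $j > i$ the point $p$ is necessarily the topmost point of $\bset_i \cap S$: any higher point of $\bset_i$ in $S$ would sit at a strictly larger column, and would either be right of $L$, making it the endpoint of a crossing inside $\bset_i$ (forcing $j=i$), or remain left of $L$ in $S$, in which case it would break the consecutiveness of $(p,p')$ in $\tilde X \cap S$.

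The remainder is the contradiction step. Writing $L = m + \tfrac12$ with $m$ an integer, one has $c_p \le m$, $c_p < N$ (otherwise there is no column of $\tilde X$ to the right of $L$, so $p'$ cannot exist), and $b \ge m+1$. Applying the shift identity gives $c_p + 1 \in f(y_{\text{loc}}+1)$, which is exactly the column set of $\bset_{i+1}$; and since $a \le a+1 \le c_p + 1 \le m+1 \le b$, the column $C_{c_p+1}$ lies in $S$. Now suppose for contradiction that $j \ge i+2$. Then all rows of $\bset_{i+1}$ lie strictly above the row of $p$ and strictly below the row of $p'$, so the point of $\bset_{i+1}$ on column $C_{c_p+1}$ belongs to $\tilde X \cap S$ strictly between $p$ and $p'$, contradicting their consecutiveness. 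Hence $j \le i+1$, proving the claim.

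The main technical obstacle I anticipate is pinning down the formula for $f(y)$ and confirming the shift identity across super-block boundaries; once these structural identities are in hand, the subsequent case analysis around the topmost point of $\bset_i \cap S$ is short and clean.
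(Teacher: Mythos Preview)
Your argument is correct and in fact proves a slightly stronger statement than the paper does: you show $j\le i+1$ unconditionally, whereas the paper retains the exception ``or $\bset_i$ is the last block of its super-block.'' Both proofs rest on the same shift-by-one phenomenon, but the paper locates the witness point as the copy $p^c$ of $p$ in the instance one vertical strip to the right \emph{within the same super-block}; when $p$ sits on the top local row this copy wraps to the bottom of that super-block and is useless, forcing the exception. Your additional observation is that the column set $f(y)$ of a block depends only on the local row index $y$ and not on the super-block index, so the identity $f(y)+1\equiv f(y+1)\pmod N$ carries across super-block boundaries (via $f(N)+1\equiv f(1)$). This lets you place the witness directly in $\bset_{i+1}$ even when that block begins a new super-block, removing the need for the exception and marginally simplifying the later type-2 crossing count for left-to-right crossings.

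One minor comment: your paragraph arguing that $p$ must be the topmost point of $\bset_i\cap S$ is not actually needed for the contradiction. The inequalities $a\le c_p$, $c_p\le m$, and $m+1\le b$ already give $c_p+1\in[a,b]$, and the shift identity gives $c_p+1\in f(y_{\text{loc}}+1)$ for the specific column $c_p$ of $p$ regardless of whether $p$ is topmost; the resulting point of $\bset_{i+1}$ then lies in $S$ strictly between $p$ and $p'$ whenever $j\ge i+2$.
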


\begin{proof}
Assume that $p$ lies on column $C_{s'}$ and on a row of super-block $\hat\bset_s$, so this point originally belonged to instance $X^{s,s'}$. Recall that instance $X^{s,s'+1}$  (that lies immediately to the right of $X^{s,s'}$) is obtained by circularly shifting all points in instance $X^{s,s'}$ by one unit up. In particular, a copy $p^c$ of $p$ in $X^{s,s'+1}$ should lie one row above the copy of $p$ in $X^{s,s'}$, unless $p$ lies on the last row of $X^{s,s'}$. In the latter case, block $\bset_i$ must be the last block of its superblock $\hat \bset_s$. In the former case, since point $p^c$ does not lie between the row of $p$ and the row of $p'$, and it lies on column $C_{s+1}$, the block of rows in which point $p'$ lies must be either $\bset_i$ or $\bset_{i+1}$, that is, $j\leq i+1$.
\end{proof}

\begin{claim}\label{claim: right to left crossing}
	Assume that $(p,p')$ is a right-to-left crossing, and assume that $p$ lies on a row of $\bset_i$ and $p'$ lies on a row of $\bset_j$, with $i\leq j$. Then either (i) $j\leq i+1$ (so the two blocks are identical or consecutive); or (ii) block $\bset_i$ is the last block in its super-block; or (iii) block $\bset_j$ is the first block it its super-block; or (iv) $p$ lies on the last active column in strip $S$, and $p'$ lies on the first active column in strip $S$.
\end{claim}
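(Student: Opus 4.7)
The plan is to mirror the proof of Claim~\ref{claim: left to right crossing}, but now use two shifted copies of the endpoints rather than one. I will consider the copy $p^R$ of $p$'s base point in the sub-instance $X^{s-1, a}$ lying in super-column $S^V_{a+1}$ (immediately to the right of $p$'s sub-instance $X^{s-1, a-1}$), and the copy $p'^L$ of $p'$'s base point in the sub-instance $X^{s'-1, a'-2}$ lying in $S^V_{a'-1}$ (immediately to the left of $p'$'s sub-instance). The argument will proceed by contradiction: suppose (i) fails (so $j \geq i+2$) and that (ii) and (iii) both fail as well; the goal is then to deduce (iv).

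By construction, $X^{s-1, a}$ is obtained from $X^{s-1, a-1}$ by cyclically shifting every point upward by one row within super-row $S^H_s$. Since (ii) fails, $\bset_i$ is not the last block of $\hat\bset_s$, so $p$ is not on the topmost row of $S^H_s$, the shift does not wrap for $p$, and $p^R$ lies exactly one row above $p$ in $\hat X$, placing it in block $\bset_{i+1}$ on column $C_{a+1}$ of $\tilde X$. If $C_{a+1}$ also lay in $S$, then $p^R \in \tilde X \cap S$; since $j \geq i+2$, block $\bset_{i+1}$ would lie strictly between $\bset_i$ and $\bset_j$, placing the row of $p^R$ strictly between those of $p$ and $p'$---contradicting the fact that $(p,p')$ is a crossing. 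Thus $p$ must lie on the rightmost active column of $S$.

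A symmetric analysis using $p'^L$ (which lies one row below $p'$, in block $\bset_{j-1}$ on column $C_{a'-1}$, by the failure of (iii)) will force $p'$ to lie on the leftmost active column of $S$, since $j-1 \geq i+1 > i$ again places $\bset_{j-1}$ strictly between $\bset_i$ and $\bset_j$. Combined, these give (iv). The degenerate cases $a = N$ or $a' = 1$, in which $p^R$ or $p'^L$ does not exist, present no difficulty: there $p$ or $p'$ already lies on the extreme column of the full instance $\tilde X$, hence on the extreme column of $S$, so the relevant half of (iv) is immediate. The main point to verify carefully will be that the shift from $X^{s-1, a-1}$ to $X^{s-1, a}$ really moves $p$ up by exactly one $\hat X$-row (i.e., one $\tilde X$-block) rather than wrapping---which uses precisely the failure of (ii)---and that $p^R$ is a genuine point of $\tilde X$ on column $C_{a+1}$ inside block $\bset_{i+1}$, which follows because expanding each row of $\hat X$ into a monotonically increasing sequence in $\tilde X$ preserves both column membership and block assignment.
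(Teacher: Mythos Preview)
Your proposal is correct and follows essentially the same approach as the paper: both arguments use the fact that the copy of $p$'s base point in the adjacent super-column $S^V_{a+1}$ lies one $\hat X$-row (hence one $\tilde X$-block) above $p$ unless $\bset_i$ is the last block of its super-block, and symmetrically for $p'$. The paper organizes this as a direct case split (``if $p$ is not on the last active column of $S$, then (i) or (ii)'' and symmetrically for $p'$), while you organize it as a proof by contradiction (assume (i)--(iii) all fail and derive (iv)); these are contrapositives of one another. Your treatment is slightly more careful than the paper's in two respects: you spell out the translation between $\hat X$-rows and $\tilde X$-blocks explicitly, and you handle the degenerate cases $a=N$ and $a'=1$ separately, whereas the paper leaves these implicit.
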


\begin{proof}
	Assume that $p$ lies on column $C_{s'}$ and on a row of superblock $\hat \bset_s$, so this point originally belonged to instance $X^{s,s'}$. 
	Assume for that $C_{s'}$ is not the last active column of $S$, so $C_{s'+1}$ also lies in $S$. 
	
	Recall that instance $X^{s,s'+1}$ is obtained by circularly shifting all points in instance $X^{s,s'}$ by one unit up. In particular, a copy $p^c$ of $p$ in $X^{s,s'+1}$ should lie one row above the copy of $p$ in $X^{s,s'}$, unless $p$ lies on the last row of $X^{s,s'}$. In the latter case, block $\bset_i$ must be the last block of its superblock. In the former case, since point $p^c$ does not lie between the row of $p$ and the row of $p'$, the block of rows in which point $p'$ lies is either $\bset_i$ or $\bset_{i+1}$,  that is, $j\leq i+1$.
	
	Using a symmetric argument, if $p'$ does not lie on the first active column of $S$, then either $j\leq i+1$, or $\bset_j$ is the first block in its super-block.
\end{proof}

We can now categorize all crossings charged to the line $L$ into types as follows. Let $(p,p')$ be a crossing, and assume that $p$ lies on a row of $\bset_i$, $p'$ lies on a row of $\bset_j$, and $i\leq j$. We say that $(p,p')$ is a crossing of type 1, if $j\leq i+1$. We say that 
it is a crossing of type 2 if either $\bset_i$ or $\bset_j$ are the first or the last blocks in their superblock. We say that it is of type 3 if $p$ lies on the last active column of $S$ and $p'$ lies on the first active column of $S$.

We now bound the total number of all such crossings separately.

\paragraph{Type-1 crossings}
Consider any pair $\bset_i,\bset_{i+1}$ of consecutive blocks, and let $\tilde X'_i$ be the set of all points lying on the rows of these blocks. Recall that all points lying on the rows of $\bset_i$ form an increasing sequence of length $\log N$, and the same is true for all points lying on the rows of $\bset_{i+1}$. It is then easy to see that $\opt(\tilde X'_i)\leq O(\log N)$, and so the total contribution of crossings between the points of $\tilde X'_i$ to $\WB_{\sigma}(\tilde X)$ is bounded by $O(\log N)$. Since the total number of blocks $\bset_i$ is bounded by $N^2$, the total number of type-1 crossings is at most $O(N^2\log N)$.

\paragraph{Type-2 crossings}
In order to bound the number of type-2 crossings, observe that $|\lset|\leq N$. If $L\in \lset$ is a vertical line, and $S$ is a strip that $L$ splits, then there are $N$ superblocks of rows that can contribute type-2 crossings to $\cost(L)$, and each such superblock may contribute at most one crossing. Therefore, the total number of type-2 crossings charged to $L$ is at most $N$, and the total number of all type-2 crossings is $O(N^2)$.

\paragraph{Type-3 crossings}
In order to bound the number of type-3 crossings, observe that every column contains $N\log N$ points. Therefore, if $L\in \lset$ is a vertical line, then the number of type-3 crossings charged to it is at most $2N\log N$. As $|\lset|\leq N$, we get that the total number of type-3 crossings is $O(N^2\log N)$.

To conclude, we have shown that $\WB_{\sigma}(\tX)=O(N^2\log N)=O(N^*)$, proving Lemma \ref{lem: bounding vertical WB for compressed}. By combining Lemmas \ref{lem: bounding vertical WB for strip} and \ref{lem: bounding vertical WB for compressed}, together with Theorem \ref{thm: WB for corner and strip instances}, we conclude that $\WB(X^*)=O(N^*\log\log\log N^*)$, proving Theorem \ref{thm: boundinb WB vertical}.

\subsection{Handling Horizontal Cuts}\label{subsec: bounding WB horizontal}
We show the following analogue of Theorem \ref{thm: boundinb WB vertical}.

\begin{theorem}\label{thm: boundinb WB horizontal}
	$\WB'(X^*)\leq O(N^*\log\log\log N^*)$.
\end{theorem}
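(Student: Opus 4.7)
The plan is to mimic the argument of \Cref{subsec: bounding WB vertical} verbatim, exchanging the roles of rows and columns. The key point is that our construction of $\hat X$, and consequently $X^*$, is essentially symmetric under a $90^\circ$ rotation: the instance $X^{s,s'}$ at block $B(i,j)$ (with $s=i-1$, $s'=j-1$) is $\EESBRS(\ell)$ shifted horizontally by $s$ and vertically by $s'$, and the underlying $\EESBRS(\ell) = \BRS(\ell,\rset,\cset)$ has exponentially spaced sets both for $\cset$ and $\rset$. Inside a fixed horizontal super-strip $S^H_j$ we fix the horizontal shift to $j-1$ and let the vertical shift range over $\{0,\ldots,N-1\}$, which is exactly the ``rotated'' analogue of the configuration we exploited for vertical cuts inside $S^V_i$.

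First, I would choose a set $\lset''$ of horizontal lines at half-integral coordinates that partitions the bounding box into $N$ horizontal strips, each containing exactly one block $\bset_i^H$ of active rows. This yields, via \Cref{thm: WB for corner and strip instances} applied to the rotated instance, a family of strip instances $\{X^{*,H}_i\}_{1\le i\le N}$ (each occupying one super-row) and a compressed instance $\tilde X^H$. By \Cref{thm: WB for corner and strip instances} it suffices to bound $\WB'$ of each of these.

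For the strip instance $X^{*,H}_i$, I would mirror \Cref{lem: bounding vertical WB for strip}. After collapsing each block $\bset(R)$ of rows to a single row, one obtains (up to inserting empty columns) the analogue $\hat X^H_i$ of $\hat X_i$, that is, the stack of sub-instances $X^{i,0},X^{i,1},\ldots,X^{i,N-1}$ arranged side by side (rather than bottom to top). Each such $X^{i,s'}$ is $\EESBRS(\ell)$ shifted vertically by $s'$ and horizontally by $i-1$; after rotating by $90^\circ$ this is precisely the situation handled in \Cref{subsec: bounding WB vertical}. I would then copy, line for line, the type-1/2a/2b classification of crossings for a fixed permutation $\sigma$ of the horizontal auxiliary lines, noting that the proofs used only (i) that the $y$-coordinates of points of the ``base'' instance are distinct powers of $2$ and (ii) that successive sub-instances are obtained by a unit shift in the direction orthogonal to the cuts; both properties hold in the rotated setting. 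This gives $\WB'(X^{*,H}_i) \le O(N\log N\log\log\log N)$, and hence $\sum_i \WB'(X^{*,H}_i) \le O(N^*\log\log\log N^*)$.

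For the compressed instance $\tilde X^H$, I would mirror \Cref{lem: bounding vertical WB for compressed}. After rotation, $\tilde X^H$ has $N$ active rows; along each active row the points are distributed into $N$ super-blocks of $\log N$ consecutive column-blocks each, with each sub-instance $X^{s,s'}$ projected to an increasing diagonal inside one super-block. Using \Cref{claim: left to right crossing} and \Cref{claim: right to left crossing} with rows and columns swapped, crossings of a horizontal auxiliary line fall into the same three types: type-1 crossings between adjacent column-blocks (bounded by $O(\log N)$ per pair, $O(N^2\log N)$ total); type-2 crossings involving boundary column-blocks of a super-block (at most one per super-block per line, $O(N^2)$ total); and type-3 crossings between the top- and bottom-most active rows of the current strip ($O(N\log N)$ per line, $O(N^2\log N)$ total). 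Summing yields $\WB'(\tilde X^H) \le O(N^*)$.

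Combining the two bounds through \Cref{thm: WB for corner and strip instances} gives $\WB'(X^*) \le 4\WB'(\tilde X^H) + 8\sum_i \WB'(X^{*,H}_i) + O(|X^*|) \le O(N^*\log\log\log N^*)$, as claimed. The only real obstacle is verifying that the symmetry truly goes through, in particular that the ``circular shift'' direction we exploited in \Cref{claim: left to right crossing,claim: right to left crossing} (a unit vertical shift when moving right by one column) has an honest counterpart (a unit horizontal shift when moving up by one row) within a horizontal super-strip $S^H_j$; this follows from the construction of $\hat X$, where inside $S^H_j$ the horizontal shift is fixed and the vertical shift increments by $1$ as $s'$ increases, so after rotation the two shift parameters swap their roles.
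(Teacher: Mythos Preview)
Your proposal is correct and takes essentially the same approach as the paper: both exploit the rotational symmetry of the construction of $X^*$. The paper's presentation is slightly more compact—it observes that rotating $X^*$ by $90^\circ$ yields an instance obtainable by running the \emph{identical} construction starting from a rotated $\BRS$, and then notes that the entire $\WB$ upper-bound analysis of \Cref{subsec: bounding WB vertical} never uses any structural property of $\BRS$ itself (only the exponential spacing and the cyclic shifts), so the bound carries over verbatim—whereas you redo the strip/compressed decomposition explicitly with rows and columns swapped; but the content is the same.
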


The proof of the theorem is virtually identical to the proof of Theorem \ref{thm: boundinb WB vertical}. 
In fact, consider the instance $X^{**}$, that is obtained from $X^*$, by rotating it by $90$ degrees clockwise.
Consider the sequence $\BRS'(\ell, \rset, \cset)$ that is obtained by rotating the point set $\BRS(\ell, \rset, \cset)$ by $90$ degrees.  
Consider now the following process. Our starting point is the rotated Bit Reversal Sequence. We then follow exactly the same steps as in the construction of the instance $X^*$. 
Then the resulting instance is precisely (a mirror reflection of) instance $X^{**}$.
Notice that the only place where our proof uses the fact that we start with the Bit Reversal Sequence is in order to show that $\opt(X^*)$ is sufficiently large. In fact we could replace the Bit Reversal Sequence with any other point set that is a permutation, and whose optimal solution cost is as large, and in particular the Bit Reversal Sequence that is rotated by $90$ degrees would work just as well. The analysis of the Wilber Bound works exactly as before, and Theorem \ref{thm: boundinb WB horizontal} follows.

\section{The Algorithms} \label{sec: alg}

In this section we provide the proof of Theorem \ref{thm:intro_alg}. Both the polynomial time and the sub-exponential time algorithms follow the same framework. We start with a high-level overview of this framework. For simplicity, assume that the number of active columns in the input instance $X$ is an integral power of $2$.
The key idea is to decompose the input instance into smaller sub-instances,  using the split instances defined in Section \ref{sec:split instance}. We solve the resulting instances recursively and then combine the resulting solutions.

Suppose we are given an input point set $X$ that is a semi-permutation, with $|X|=m$, such that the number of active columns is $n$. We consider a \emph{balanced} partitioning tree $T$, where for every vertex $v\in V(T)$, the line $L(v)$ that $v$ owns splits the strip $S(v)$ in the middle, with respect to the active columns that are contained in $S(v)$. Therefore, the height of the partitioning tree is $\log n$.

Consider now the set $U$ of vertices of $T$ that lie in the middle layer of $T$. We consider the split of $(X,T)$ at $U$, obtaining a new collection of instances $(X^c, \set{X^s_i})_{i=1}^k$ where $k = \Theta(\sqrt{n})$. Note that each resulting strip instance $X^s_i$ contains $\Theta(\sqrt{n})$ active columns, and so does the compressed instance $X^c$.
 
We will recursively solve each such instance and then  combine the resulting solutions.
The key in the algorithm and its analysis is to show that there is a collection $Z$ of $O(|X|)$ points, such that, if we are given any  solution $Y^c$ to instance $X^c$, and, for all $1\leq i\leq k$, any  solution $Y_i$ to instance $X^s_i$, then $Z\cup Y^c\cup \left(\bigcup_{i=1}^NY_i\right) $ is a feasible solution to instance $X$. Additionally, we show that the total number of input points that appear in all instances that participate in the same recursive level is bounded by $O(\opt(X))$. This ensures that in every recursive level we add at most $O(\opt(X))$ points to the solution, and the total solution cost is at most $O(\opt(X))$ times the number of the recursive levels, which is in turn bounded by $O(\log\log n)$.

In order to obtain the sub-exponential time algorithm, we restrict the recursion to $D$ levels, and then solve each resulting instance directly in time $r(X) c(X)^{O(c(X))}$ (recall that $r(X)$ and $c(X)$ are the number of active rows and active columns in instance $X$, respectively). Altogether, this approach gives an $O(D)$-approximation algorithm with running time at most  $\poly(m)\cdot \exp\paren{n^{1/2^{\Omega(D)}} \log n}$ as desired. 
We now provide a formal definition of the algorithm and its analysis.
In the remainder of this section, all logarithms are to the base of $2$.

\subsection{Special Solutions} 

Our algorithm will produce feasible solutions of a special form, that we call \emph{special solutions}. Recall that, given a semi-permutation point set $X$, the auxiliary columns for $X$ are a set $\lset$ of vertical lines with half-integral coordinates. We say that a solution $Y$ for $X$ is \emph{special} iff every point of $Y$ lies on an row that is active for $X$, and on a column of $\lset$. In particular, special solutions are by definition non-canonical (see Figure~\ref{fig:canonoical and special solutions} for an illustration).  
The main advantage of  the special solutions is that they allow us to easily use the divide-and-conquer approach.

\begin{observation}
        \label{obs: canonical to special} 
        There is an algorithm, that, given a set $X$ of points that is a semi-permutation, and a canonical solution $Y$ for $X$, computes a special solution $Y'$ for $X$, such that $|Y'| \leq 2 |X| + 2 |Y|$.  The running time of the algorithm is $O(|X|+|Y|)$.
\end{observation}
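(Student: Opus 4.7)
The plan is to use the natural ``doubling'' construction: for every point $p \in X \cup Y$, place into $Y'$ the two auxiliary-column points $p^L = (p.x - \tfrac{1}{2}, p.y)$ and $p^R = (p.x + \tfrac{1}{2}, p.y)$. Since $Y$ is canonical, every point in $X \cup Y$ lies on an active row of $X$ and has integer $x$-coordinate, so each of $p^L, p^R$ lies on an active row of $X$ and on a column of $\lset$, making $Y'$ a special solution. The cardinality bound $|Y'| \leq 2|X \cup Y| \leq 2|X|+2|Y|$ and the $O(|X|+|Y|)$ running time are immediate from the construction.

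The substance of the proof is to verify that $X \cup Y'$ is satisfied. I plan a case analysis on a non-collinear pair $(a, b) \in X \cup Y'$, broken into (A) $a, b \in X$; (B) $a \in X$ and $b \in Y'$ (and its symmetric mirror); and (C) $a, b \in Y'$. Writing $b = p^R$ or $p^L$ (and in (C) also $a = q^R$ or $q^L$) with $p, q \in X \cup Y$, the guiding idea is to consider the ``parent'' pair in $X \cup Y$ --- namely $(a,b)$ in (A), $(a,p)$ in (B), $(p,q)$ in (C) --- and, using the hypothesis that $X \cup Y$ is satisfied, fetch a satisfier $r \in X \cup Y$ of this parent pair. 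I will then show that some point of $\{r, r^L, r^R\} \subseteq X \cup Y'$ lies in $\rect_{a,b} \setminus \{a, b\}$. The key numerical fact driving the verification is that the shifts have magnitude $1/2$, whereas any two distinct integer $x$-coordinates differ by at least $1$; a routine split based on whether $r.x$ sits at an endpoint or in the interior of the relevant $x$-range confirms that the shifted or unshifted copy lands in $\rect_{a,b}$ with the required distinctness from $a$ and $b$ (recalling that points in $X$ have integer $x$-coordinates while points in $Y'$ have half-integer $x$-coordinates, so the two species can never coincide).

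The main obstacle is the genuinely degenerate sub-cases where the parent pair itself is collinear, so $X \cup Y$ supplies no satisfier at all: in (B) this is the configuration $a.x = p.x$, and in (C) it is $p.x = q.x$. The plan for these is to bypass the parent pair entirely and observe that the shift $a^R$ or $a^L$ of $a$, which lies in $\phi(X) \subseteq Y'$ whenever $a \in X$, already serves as a satisfier of $(a, b)$; an analogous shift of $p$ or $q$ handles (C). A final subtlety in (C) is the potentially dangerous sub-sub-case where $p, q$ lie on consecutive integer columns --- which would collapse the width of $\rect_{a,b}$ to zero --- but that configuration forces $a.x = b.x$, contradicting the non-collinearity of $(a, b)$, and so never needs to be handled.
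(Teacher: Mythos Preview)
Your proposal is correct and uses the same construction as the paper. The paper's feasibility verification is more streamlined---it reduces everything to a single case by observing that if the leftmost point of the pair is in $X$ or is a left-shift, then its own right-shift already lies in $\rect_{p,q}$ (and symmetrically for the rightmost point), leaving only the case ``$p$ is a right-shift, $q$ is a left-shift'' to handle via Observation~\ref{obs: aligned point}---but your explicit case analysis arrives at the same conclusion.
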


\begin{proof}
        We construct $Y'$ as follows: For each point $p \in X \cup Y$, we add two points, $p'$ and $p''$ to $Y'$, whose $y$-coordinate is the same as that of $p$, such that $p'$ and $p''$ lie on the lines of $\lset$ appearing immediately to the left and immediately to the right of $p$, respectively. %
                It is easy to verify that $|Y'|=2|X|+2|Y|$.
        
        We now verify that $X \cup Y'$ is a feasible set of points. 
        Consider two points $p,q$ in $X \cup Y'$, such that $p.x < q.x$. Notice that $p$ is either an original point in $X$ or it is a copy of some point $\hat p\in X \cup Y$. 
        If $p \in X$ or $p = \hat p'$ for $\hat p \in X \cup Y$, then the point $\hat p''$ lies in the rectangle $\rect_{p,q}$. Therefore, we can assume that $p=\hat p''$ for some point $\hat p\in X\cup Y$.
        
        Using a similar reasoning, we can assume that $q=\hat q'$ for some point $\hat q\in X\cup Y$. Since $X\cup Y$ is a satisfied point set, there must be a point $r\in X\cup Y$ that lies in the rectangle $\rect_{\hat p,\hat q}$. Moreover, from Observation ~\ref{obs: aligned point}, we can choose $r$ such that either $r.x=\hat p.x$ or $r.y=\hat p.y$. In either case, point $r''$ also lies in $\rect_{\hat p'',\hat q'}=\rect_{p,q}$, so $(p,q)$ is satisfied by $X\cup Y'$. %
        Therefore, $X\cup Y'$ is a feasible set of points. It is easy to see that $Y'$ is a special solution.
\end{proof}

\begin{figure}
        \begin{subfigure}{.5\textwidth}
                \centering
                \includegraphics[width=.8\linewidth]{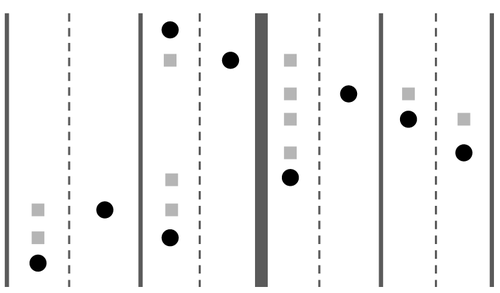}
                \caption{Canonical Solution}
                \label{fig:canonical}
        \end{subfigure}%
        \begin{subfigure}{.5\textwidth}
                \centering
                \includegraphics[width=.8\linewidth]{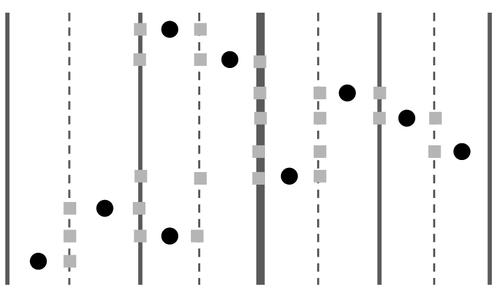}
                \caption{Special Solution}
                \label{fig:special}
        \end{subfigure}
        \caption{Canonical and $T$-special solutions of $X$. The input points are shown as circles; the points that belong to the solution $Y$ are shown as squares.}
        \label{fig:canonoical and special solutions}
\end{figure}

If $\sigma$ is any ordering of the auxiliary columns in $\lset$, and $T=T_{\sigma}$ is the corresponding partitioning tree, then any point set $Y$ that is a special solution for $X$ is also called a \emph{$T$-special solution} (although the notion of the solution $Y$ being special does not depend on the tree $T$, this notion will be useful for us later; in particular, a convenient way of thinking of a $T$-special solution is that every point of $Y$ must lie on an active row of $X$, and on a column that serves as a boundary for some strip $S(v)$, where $v\in V(T)$.)

\subsection{Redundant Points and Reduced Point Sets} 
Consider a semi-permutation $X$, that we think of as a potential input to the \minsat problem.
We denote $X= \set{p_1,\ldots, p_{m}}$, where the points are indexed in their natural bottom-to-top order, so $(p_1).y < (p_2).y < \ldots < (p_m).y$.    
A point $p_i$ is said to be \emph{redundant}, iff  $(p_i).x = (p_{i+1}).x = (p_{i-1}).x$. 
We say that a semi-permutation $X$ is in the {\em reduced form} if there are no redundant points in $X$; in other words, if $p_{i-1},p_i,p_{i+1}$ are three points lying on three consecutive active rows, then their $x$-coordinates are not all equal. We use the following observation and lemma. %

\begin{observation}\label{obs: reduced set does not increase opt}
Let $X$ be a semi-permutation, and let $X' \subseteq X$ be any point set, that is obtained from $X$ by repeatedly removing redundant points. Then $\opt(X')\leq \opt(X)$.
\end{observation}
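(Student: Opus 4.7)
The plan is to reduce to the case of removing a single redundant point and then invoke the row-version of Observation~\ref{obs: collapsing columns}. By a trivial induction on the number of redundant points removed, it suffices to prove: if $X'=X\setminus\{p_i\}$ for a single redundant point $p_i$, then $\opt(X')\le\opt(X)$. Note also that $X'$ remains a semi-permutation, since removing $p_i$ simply removes the (unique) point on row $R_i$, and all other active rows of $X$ continue to contain exactly one point.

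Fix an optimal feasible solution $Y$ for $X$, so that $X\cup Y$ is satisfied and $|Y|=\opt(X)$. I will collapse a carefully chosen set of rows. Specifically, let $\rset$ be the set of all active rows of $X\cup Y$ whose $y$-coordinates lie in the closed interval $[(p_{i-1}).y,(p_i).y]$. Since $p_{i-1}$ and $p_i$ are consecutive in the $y$-ordering of $X$, the only active rows of $X$ in $\rset$ are $R_{i-1}$ and $R_i$; any other row of $\rset$ must be active only because it contains some point of $Y$. By construction, $\rset$ is a set of consecutive active rows of $X\cup Y$.

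Applying the row-version of Observation~\ref{obs: collapsing columns} (obtained by $x\leftrightarrow y$ symmetry) to the satisfied set $X\cup Y$ and the consecutive active rows $\rset$, I collapse $\rset$ onto the representative row $R_{i-1}$, and conclude that $(X\cup Y)_{|\rset}=X_{|\rset}\cup Y_{|\rset}$ is satisfied. The key step is the identity $X_{|\rset}=X'$: the only points of $X$ affected by the collapse are $p_{i-1}$ (which is unchanged) and $p_i$, and the collapse sends $p_i$ to
\[
\bigl((p_i).x,\,(p_{i-1}).y\bigr) \;=\; \bigl((p_{i-1}).x,\,(p_{i-1}).y\bigr) \;=\; p_{i-1},
\]
using precisely the redundancy equality $(p_i).x=(p_{i-1}).x$. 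Hence $p_i$ merges with $p_{i-1}$, and $X_{|\rset}=X\setminus\{p_i\}=X'$ as sets of points.

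It then follows that $X'\cup Y_{|\rset}$ is satisfied, so $Y_{|\rset}$ is a feasible solution for $X'$; since collapsing rows can only identify points and never create them, $|Y_{|\rset}|\le|Y|=\opt(X)$, giving $\opt(X')\le\opt(X)$. The main obstacle is ensuring that the collapse lands \emph{exactly} on $X'$, and this is precisely where the redundancy hypothesis is used; in fact, only the weaker equality $(p_i).x=(p_{i-1}).x$ is needed in this argument, while the symmetric condition $(p_{i+1}).x=(p_i).x$ from the definition of redundancy plays no role (one could equivalently have collapsed $p_i$ downward onto $R_{i+1}$ instead).
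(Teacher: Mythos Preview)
Your proof is correct and follows essentially the same approach as the paper: reduce to removing a single redundant point, then collapse rows containing $p_{i-1}$ and $p_i$ (using the row version of Observation~\ref{obs: collapsing columns}) so that $p_i$ merges into $p_{i-1}$. The only cosmetic difference is that the paper first replaces $Y$ by a \emph{canonical} optimal solution (Observation~\ref{obs: canonical solutions}), which guarantees no solution points lie strictly between $R_{i-1}$ and $R_i$ and hence only two rows need collapsing; you instead skip canonicalization and collapse the whole block $\rset$ of active rows in $[R_{i-1},R_i]$, which works just as well.
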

\begin{proof}
	It is sufficient to show that, if $X''$ is a set of points obtained from $X$ by deleting a single redundant point $p_i$, then $\opt(X'')\leq \opt(X)$. Let $R$ denote the row on which point $p_i$ lies, and let $R'$ be the row containing $p_{i-1}$. Let $Y$ be the optimal solution to instance $X$. We assume w.l.o.g. that $Y$ is a canonical solution. Consider the set $Z=X\cup Y$ of point, and let $Z'$ be obtained from $Z$ by collapsing the rows $R,R'$ into the row $R'$ (since $Y$ is a canonical solution for $X$, no points of $X\cup Y$ lie strictly between the rows $R,R'$). From Observation \ref{obs: collapsing columns}, $Z'$ remains a satisfied point set. Setting $Y'=Z'\setminus X''$, it is easy to verify that $Y'$ is a feasible solution to instance $X''$, and moreover, $|Y'|\leq |Y|$. Therefore, $\opt(X'')\leq |Y'|\leq |Y|\leq \opt(X)$.
\end{proof}

\begin{lemma}\label{lem: reduced point set}
	Let $X$ be a semi-permutation, and let $X' \subseteq X$ be any point set, that is obtained from $X$ by repeatedly removing redundant points. Let $Y$ be any feasible solution for $X'$ such that every point of $Y$ lies on a row that is active for $X'$. Then $Y$ is also a feasible solution for $X$.
\end{lemma}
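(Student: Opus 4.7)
The plan is to reduce, by induction on the number of removed points, to the single-point case: it suffices to show that if $p_i$ is a redundant point of a semi-permutation $\tilde X$, and $Y$ is feasible for $\tilde X \setminus \{p_i\}$ with every point of $Y$ lying on a row active for $\tilde X \setminus \{p_i\}$, then $Y$ is also feasible for $\tilde X$. The inductive step goes through because the rows active for the smaller set are always a subset of those active for the larger one, so at each unwinding stage the row-location hypothesis on $Y$ continues to hold.

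For the single-point case, write $\tilde X'' = \tilde X \setminus \{p_i\}$ and consider an arbitrary non-collinear pair $(a,b)$ in $\tilde X \cup Y$; we must produce a point of $(\tilde X \cup Y) \setminus \{a,b\}$ inside $\rect_{a,b}$. If neither $a$ nor $b$ equals $p_i$, the pair lies in $\tilde X'' \cup Y$ and feasibility of $Y$ for $\tilde X''$ furnishes a satisfier in $\tilde X'' \cup Y \subseteq \tilde X \cup Y$. Otherwise, take $a = p_i$. By definition of redundancy, $p_{i-1}$ and $p_{i+1}$ lie on the column of $p_i$, so $b \notin \{p_{i-1}, p_{i+1}\}$ (else $b$ would be collinear with $p_i$), and by symmetry I may assume $b.y > p_i.y$.

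The crucial observation is that no point of $\tilde X \cup Y$ has $y$-coordinate strictly between $p_i.y$ and $p_{i+1}.y$: the semi-permutation property of $\tilde X$ rules this out for $\tilde X$ (as $p_i, p_{i+1}$ are consecutive in the $y$-ordering), and every point of $Y$ lies on a row active for $\tilde X''$, which excludes the open interval $(p_i.y, p_{i+1}.y)$ as well. Hence $b.y \ge p_{i+1}.y$. If $b.y = p_{i+1}.y$, then since $p_{i+1}.x = p_i.x \ne b.x$, the point $p_{i+1}$ lies on the boundary of $\rect_{p_i,b}$, differs from both $p_i$ and $b$, and therefore satisfies the pair.

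Otherwise $b.y > p_{i+1}.y$, making $(p_{i+1}, b)$ a non-collinear pair in $\tilde X'' \cup Y$; feasibility of $Y$ for $\tilde X''$ yields a satisfier $r \in (\tilde X'' \cup Y) \setminus \{p_{i+1}, b\}$ inside $\rect_{p_{i+1}, b}$. Since $p_{i+1}.x = p_i.x$, the rectangle $\rect_{p_{i+1}, b}$ has the same $x$-extent as $\rect_{p_i, b}$ and a $y$-extent $[p_{i+1}.y, b.y] \subseteq [p_i.y, b.y]$, so $r \in \rect_{p_i, b}$; moreover $r.y \ge p_{i+1}.y > p_i.y$ guarantees $r \ne p_i$, so $r$ satisfies $(p_i, b)$. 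The only delicate step is ruling out a point of $Y$ strictly between rows $R_{p_i}$ and $R_{p_{i+1}}$, which is precisely where the hypothesis that $Y$'s points lie on $X'$-active rows is used; the rest is a straightforward rectangle-containment argument exploiting that $p_{i\pm 1}$ share $p_i$'s column.
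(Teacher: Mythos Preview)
Your proof is correct and follows essentially the same approach as the paper: reduce by induction to removing a single redundant point, then observe that the neighbor $p_{i+1}$ (or $p_{i-1}$) of $p_i$ on the same column handles the only nontrivial case. You split the case $b.y \ge p_{i+1}.y$ into two subcases and invoke feasibility of $Y$ for $\tilde X''$ again when $b.y > p_{i+1}.y$, but this detour is unnecessary: since $p_{i+1}.x = p_i.x$ and $p_i.y < p_{i+1}.y \le b.y$, the point $p_{i+1}$ itself already lies in $\rect_{p_i,b}\setminus\{p_i,b\}$ and satisfies the pair directly, which is exactly how the paper argues.
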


\begin{proof}
It is sufficient to show that, if $X''$ is a set of points obtained from $X$ by deleting a single redundant point $p_i$, and $Y$ is any canonical solution for $X''$, then $Y$ is also a feasible solution for $X$. We can then apply this argument iteratively, until we obtain a set of points that is in a reduced form.
	
	Consider any feasible canonical solution $Y$ to instance $X''$. We claim that $X \cup Y$ is a feasible set of points. Indeed,
	consider any two points $p,q \in X \cup Y$ that are not collinear. If both points are distinct from the point $p_i$, then they must be satisfied in $X\cup Y$, since both these points lie in $X''\cup Y$. Therefore, we can assume that $p=p_i$. Notice that $q \neq p_{i-1}$ and  $q \neq p_{i+1}$, since otherwise $p$ and $q$ must be collinear. Moreover, $q$ cannot lie strictly between the row of $p_{i-1}$ and the row of $p_{i+1}$, as we have assumed that every point of $Y$ lies on a row that is active for $X'$ $X''$. But then it is easy to verify that either point $p_{i-1}$ lies in $\rect_{p,q}$ (if $q$ is below $p$), or point $p_{i+1}$ lies in $\rect_{p,q}$ (otherwise). In either case, the pair $(p,q)$ is satisfied in $X\cup Y$.
\end{proof}

From Lemma \ref{lem: reduced point set}, whenever we need to solve the \minsat problem on an instance $X$, it is sufficient to solve it on a sub-instance, obtained by iteratively removing redundant points from $X$. 
We obtain the following immediate corollary of Lemma \ref{lem: reduced point set}. %

\begin{corollary}\label{cor: removing redundant poitns does not increase special solutions cost}
	Let $X$ be a semi-permutation, and let $X' \subseteq X$ be any point set, that is obtained from $X$ by repeatedly removing redundant points. Let $Y$ be any special feasible solution for $X'$. Then $Y'$ is also a special feasible solution for $X$. %
\end{corollary}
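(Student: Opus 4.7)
The plan is to invoke Lemma \ref{lem: reduced point set} after verifying that the two structural conditions defining a special solution are preserved when passing from $X'$ back up to $X$. Recall that a special solution requires (i) every point of $Y$ to lie on an active row of the instance, and (ii) every point of $Y$ to lie on an auxiliary column of the instance.

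For condition (ii), I would first observe that a redundant point $p_i$ satisfies $(p_i).x = (p_{i-1}).x = (p_{i+1}).x$, so the column containing $p_i$ also contains the two neighbors $p_{i-1}$ and $p_{i+1}$. Therefore, deleting $p_i$ never eliminates an active column; iterating this reasoning shows that $X$ and $X'$ have exactly the same set of active columns, and consequently the same set of auxiliary columns $\lset$. Hence if every point of $Y$ lies on a column of $\lset$ viewed as auxiliary for $X'$, the same holds viewed as auxiliary for $X$.

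For condition (i), since $X' \subseteq X$, every row that is active for $X'$ is also active for $X$, and so the row condition transfers from $X'$ to $X$ immediately.

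Finally, a special solution $Y$ for $X'$ is in particular a feasible solution whose points lie on rows active for $X'$, which matches exactly the hypotheses of Lemma \ref{lem: reduced point set}. Applying that lemma yields that $Y$ is a feasible solution for $X$. Combined with the two structural observations above, this shows $Y$ is a special feasible solution for $X$, proving the corollary. The argument is essentially bookkeeping about how the definitions behave under removal of redundant points, so no real obstacle is anticipated; the only point requiring care is confirming that ``redundant'' is defined so that deleting such a point never kills an active column, which is immediate from the definition.
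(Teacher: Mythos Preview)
Your proposal is correct and matches the paper's approach: the paper simply declares this an ``immediate corollary'' of Lemma~\ref{lem: reduced point set} without further argument, and you have spelled out exactly the routine verification (preservation of active columns under redundant-point removal, inclusion of active rows, and invocation of the lemma for feasibility) that justifies the word ``immediate.''
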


Lastly, we need the following lemma.

\begin{lemma}\label{lem: bound on WB for reduced}
        Let $X$ be a point set that is a semi-permutation in reduced form. Then $\opt(X)\geq |X|/4-1$. 
\end{lemma}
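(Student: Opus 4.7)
The plan is to give a direct combinatorial charging argument without going through $\WB(X)$. I would enumerate $X = \{p_1, \ldots, p_m\}$ in increasing order of $y$-coordinate (well-defined since $X$ is a semi-permutation) and call a consecutive pair $(p_i, p_{i+1})$ \emph{column-jumping} if $(p_i).x \neq (p_{i+1}).x$. The first step is to observe that the reduced form assumption is exactly the statement that no two consecutive pairs can both be non-column-jumping: otherwise $p_{i-1}, p_i, p_{i+1}$ would share an $x$-coordinate. Viewing the $m-1$ consecutive pairs as vertices of a path, the non-column-jumping pairs form an independent set, so there are at most $\lceil (m-1)/2\rceil$ of them, and hence at least $\lfloor (m-1)/2 \rfloor$ column-jumping pairs.

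Next, by Observation~\ref{obs: canonical solutions} I may fix an optimal canonical solution $Y$, so every point of $Y$ lies on a row active for $X$. For each column-jumping pair $(p_i, p_{i+1})$, the two endpoints are non-collinear (different rows since $X$ is a semi-permutation, different columns by definition), so $X \cup Y$ must contain a point $r \neq p_i, p_{i+1}$ lying in $\rect_{p_i, p_{i+1}}$. Because $p_i$ and $p_{i+1}$ are consecutive in the $y$-order of $X$, no point of $X$ has $y$-coordinate strictly between $(p_i).y$ and $(p_{i+1}).y$; moreover, the only points of $X$ on the rows of $p_i$ and $p_{i+1}$ are $p_i$ and $p_{i+1}$ themselves. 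Consequently, the witness $r$ must lie in $Y$.

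Finally, I would bound how often a single $r \in Y$ can serve as such a witness. Since $r$ is canonical, it lies on some active row, say the row of $p_j$. In order for $r \in \rect_{p_i, p_{i+1}}$, the $y$-coordinate $(p_j).y$ must fall between $(p_i).y$ and $(p_{i+1}).y$, forcing $j \in \{i, i+1\}$. Thus $r$ can witness at most the two pairs $(p_{j-1}, p_j)$ and $(p_j, p_{j+1})$. Combining the three steps, $\lfloor (m-1)/2 \rfloor \leq 2|Y|$, which gives $\opt(X) = |Y| \geq (m-2)/4 \geq |X|/4 - 1$. The main subtlety is ensuring that the witness really lies in $Y$ and not in $X$, but this is immediate from the fact that no element of $X \setminus \{p_i, p_{i+1}\}$ can occupy the required region of the rectangle.
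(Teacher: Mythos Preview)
Your proof is correct and follows essentially the same approach as the paper: both identify the ``good'' (column-jumping) consecutive pairs, lower-bound their number via the reduced-form assumption, and then charge each such pair to a point of $Y$ inside its rectangle. The only cosmetic difference is that the paper first extracts a matching $\Pi''\subseteq\Pi'$ of good pairs so that the corresponding rectangles are row-disjoint (giving one distinct witness per pair), whereas you keep all good pairs and instead argue, via the canonical-solution assumption, that each $r\in Y$ can witness at most two pairs; the resulting factor-two loss and final bound are identical.
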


\begin{proof}
Since $X$ is a semi-permutation, every point of $X$ lies on a distinct row; we denote $|X|=n$. Let $X=\set{p_1,\ldots,p_n}$, where the points are indexed in the increasing order of their $y$-coordinates. 
Let $\Pi=\set{(p_i,p_{i+1})\mid 1\leq i<n}$ be the collection of all consecutive pairs of points in $X$. We say that the pair $(p_i,p_{i+1})$ is \emph{bad} iff both $p_i$ and $p_{i+1}$ lie on the same column. From the definition of the reduced form, if $(p_i,p_{i+1})$ is a bad pair, then both $(p_{i-1},p_i)$ and $(p_{i+1},p_{i+2})$ are good pairs. Let $\Pi'\subseteq \Pi$ be the subset containing all good pairs. Then $|\Pi'|\geq (|\Pi|-1)/2\geq n/2-1$. Next, we select a subset $\Pi''\subseteq \Pi'$ of pairs, such that $|\Pi''|\geq |\Pi'|/2\geq n/4-1$, and every point in $X$ belongs to at most one pair in $\Pi''$. Since every point in $X$ belongs to at most two pairs in $\Pi'$, it is easy to see that such a set exists. Let $Y$ be an optimal solution to instance $X$.

Consider now any pair $(p_i,p_{i+1})$ of points in $\Pi''$. Then there must be a point $y_i\in Y$ that lies in the rectangle $\rect_{p_i,p_{i+1}}$. Moreover, since all points of $X$ lie on distinct rows, and each such point belongs to at most one pair in $\Pi''$, for $i\neq j$, $y_i\neq y_j$. Therefore, $|Y|\geq |\Pi''|\geq n/4-1$.
\end{proof}

\subsection{The Algorithm Description} 
\newcommand{\progspace}{\hspace{0.5cm}} 

Suppose we are given an input set $X$ of points that is a semi-permutation. Let $T$ be any partitioning tree for $X$. We say that $T$ is a \emph{balanced} partitioning tree for $X$ iff for every non-leaf vertex $v\in V(T)$ the following holds. Let $v'$ and $v''$ be the children of $v$ in the tree $T$. Let $X '$ be the set of all input points lying in strip $S(v)$, and let $X'',X'''$ be defined similarly for $S(v')$ and $S(v'')$. Let $c$ be the number of active columns in instance $X'$, and let $c'$ and $c''$ be defined similarly for $X''$ and $X'''$. Then we require that $c',c''\leq \ceil{c/2}$.

Given a partitioning tree $T$, we denote by $\Lambda_i$ the set of all vertices of $T$ that lie in the $i$th \emph{layer} of $T$ -- that is, the vertices whose distance from the root of $T$ is $i$ (so the root belongs to $\Lambda_0$). The \emph{height} of the tree $T$, denoted by $\height(T)$, is the largest index $i$ such that $\Lambda_i\neq \emptyset$. If the height of the tree $T$ is $h$, then we call the set  $\Lambda_{\ceil{h/2}}$ of vertices the \emph{middle layer} of $T$. Notice that, if $T$ is a balanced partitioning tree for input $X$, then its height is at most $2 \log c(X)$.

Our algorithm takes as input a set $X$ of points that is a semi-permutation, a balanced partition tree $T$ for $X$,  and an integral parameter $\rho>0$. 

Intuitively, the algorithm uses the splitting operation to partition the instance $X$ into subinstances that are then solved recursively, until it obtains a collection of instances whose corresponding partitioning trees have height at most $\rho$. We then either employ dynamic programming, or use a trivial $O(\log c(X))$-approximation algorithm. 
The algorithm returns a special feasible solution for the instance. 
Recall that the height of the tree $T$ is bounded by $2\log c(X)\leq 2\log n$.

The following two theorems will be used as the recursion basis.

\begin{theorem}
	\label{thm: leaf problem1} 
	There is an algorithm called {\sc LeafBST-1} that, given a semi-permutation instance $X$ of \minsat in reduced form, and a partitioning tree $T$ for it, produces a feasible $T$-special solution for $X$ of cost at most $2 |X| + 2\opt(X)$, in time $|X|^{O(1)} \cdot c(X)^{O(c(X))}$.
\end{theorem}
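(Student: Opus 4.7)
The strategy is two-fold: compute an optimal \emph{canonical} solution $Y^{*}$ for $X$ using dynamic programming, and then apply the construction from Observation~\ref{obs: canonical to special} to convert $Y^{*}$ into a $T$-special solution. Since Observation~\ref{obs: canonical solutions} guarantees that some canonical solution attains cost $\opt(X)$, we have $|Y^{*}| \le \opt(X)$; the conversion then returns a special solution of cost at most $2|X| + 2|Y^{*}| \le 2|X| + 2\opt(X)$ in time $O(|X| + |Y^{*}|)$. Thus the entire task reduces to computing $Y^{*}$ within the stated time budget.

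To compute $Y^{*}$, I would run a dynamic program that sweeps the active rows of $X$ from bottom to top. Between consecutive rows, the DP state records, for each of the $c(X)$ active columns $C$, the row (if any) of the topmost point of $(X \cup Y)$ placed so far in $C$. The structural motivation is the following observation: if $p \in (X \cup Y)$ lies in column $C$ strictly below the current topmost point $p'$ of $C$, then for any future point $q$ the point $p'$ lies in $\rect_{p,q}$, and the pair $(p,q)$ is automatically satisfied by $p'$. Hence only the topmost points of each column ever influence future satisfaction checks. Moreover, the exact row-numbers are irrelevant beyond their relative order, so the state can be compressed to an \emph{ordered partition} of the active columns (with one distinguished ``empty'' class for columns that have not yet received any point), of which there are only $c(X)^{O(c(X))}$.

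At a given row $r$, the transition enumerates the $2^{c(X)}$ subsets of active columns that receive a solution point at row $r$ (the input point on row $r$, if any, is forced). For each candidate subset, it updates the state and locally verifies that every non-collinear pair $(p,q)$, with $q$ placed at row $r$ and $p$ a topmost point of its column below $r$, is satisfied. The soundness of this \emph{local} check is the crux: any satisfier of such a pair must have $y$-coordinate in $[r_p, r_q] \subseteq \{y \le r\}$, so it is either already accounted for in the previous state or placed at row $r$ in the current step. By Observation~\ref{obs: aligned point} we may further restrict attention to satisfiers that share a row or column with $p$ or $q$, and the state exposes exactly which columns carry points on the relevant rows $r$ and $r_p$. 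The check thus runs in $\poly(c(X))$ time, and by standard DP back-tracking we recover an optimum canonical solution $Y^{*}$.

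Multiplying the $O(|X|)$ rows, the $c(X)^{O(c(X))}$ states, the $2^{c(X)}$ transitions per state and the $\poly(c(X))$ verification cost, the total running time is $|X|^{O(1)} \cdot c(X)^{O(c(X))}$, as required. The main obstacle I anticipate is arguing rigorously that the topmost-points state truly captures all information needed for the local verification step; the heart of that argument is the elementary observation that a satisfier of a pair $(p,q)$ can always be chosen with $y$-coordinate in $[r_p, r_q]$, so that by the time we process the higher of the two rows every potential satisfier is either present in the current state or being placed at that very row, making no ``future-looking'' commitment necessary.
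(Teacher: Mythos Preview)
Your proposal is correct and follows essentially the same approach as the paper. Both reduce the problem to computing an optimal canonical solution by a row-by-row dynamic program whose state records only the relative heights of the current column-wise topmost points (the paper calls this a ``height profile,'' you call it an ordered partition), and both then convert via Observation~\ref{obs: canonical to special}. The paper formalizes the transition check as verifying that the new row's point set and the previous $\topset$ are \emph{non-conflicting} (its Lemma~\ref{lem: conflict sets} and Observation~\ref{obs:top points sufficient}), which is exactly your local satisfaction test; your appeal to Observation~\ref{obs: aligned point} is a bit looser than what is actually needed, but the verification you describe is carried out from the state information alone, so the argument goes through.
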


\begin{theorem}
	\label{thm: leaf problem2} 
	There is an algorithm called {\sc LeafBST-2} that, given a semi-permutation instance $X$ of \minsat in reduced form, and a partitioning tree $T$ for it, produces a feasible $T$-special solution of cost at most $2 |X| \height(T)$, in time $\poly(|X|)$. 
\end{theorem}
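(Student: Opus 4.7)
The plan is to construct a $T$-special solution $Y$ directly by simulating a static search through $T$ for each input point. For each $p \in X$, let $v_p$ be the unique leaf of $T$ whose (one-column) strip contains the column of $p$, and, for each strict ancestor $u$ of $v_p$, add to $Y$ the point $q_{p,u}$ at the intersection of the row of $p$ with the line $L(u)$ that $u$ owns. Since $L(u) \in \lset$ is half-integral and the row of $p$ is active for $X$, every $q_{p,u}$ is $T$-special. Each leaf has at most $\height(T)$ strict ancestors in $T$, so $|Y| \le |X|\cdot \height(T) \le 2|X|\cdot \height(T)$, giving the cost bound with room to spare. The running time is clearly polynomial in $|X|$, since we simply walk from each input point up to the root of $T$ once.

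The substantive part is verifying that $X \cup Y$ is satisfied. For each $z \in X \cup Y$, define $\mathrm{orig}(z) \in X$ to equal $z$ if $z \in X$ and the underlying $p$ if $z = q_{p,u}$, and define $\nu(z) \in V(T)$ to equal $v_z$ if $z \in X$ and $u$ if $z = q_{p,u}$. Note that $\mathrm{row}(z) = \mathrm{row}(\mathrm{orig}(z))$ and that the column of $z$ lies in $S(\nu(z))$, with $\nu(z)$ on the root-to-leaf path ending at $v_{\mathrm{orig}(z)}$. Given a non-collinear pair $r,r' \in X \cup Y$, let $p = \mathrm{orig}(r)$, $p' = \mathrm{orig}(r')$, and $w = \mathrm{LCA}(v_p,v_{p'})$ in $T$. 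I would then proceed by case analysis on the positions of $\nu(r), \nu(r')$ relative to $w$ along the two root-to-leaf paths. In every case, one of the four candidate points $q_{p,w}, q_{p',w}, q_{p,\nu(r')}, q_{p',\nu(r)}$ is actually in $Y$ (because the chosen node is an ancestor of the chosen leaf), lies on the boundary of $\rect_{r,r'}$ (since it shares a row with one endpoint and a column with the other, or with an ancestor line lying between their columns), and is distinct from both $r$ and $r'$ (distinct ancestors own distinct lines in $\lset$, and distinct input points have distinct rows since $X$ is a semi-permutation).

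The hardest part will be the case where both $r$ and $r'$ are auxiliary, since $\nu(r)$ and $\nu(r')$ may each sit either strictly above $w$ or on the descending path from $w$ to the corresponding leaf, yielding several sub-configurations that must be handled individually. The degenerate situation $v_p = v_{p'}$ (i.e., $p, p'$ share a leaf) also needs a separate treatment, but there the witness $q_{p,u}$ or $q_{p',u'}$ associated with the deeper of the two ancestor lines always works. The reduced-form hypothesis is not actually needed for this construction or its analysis, and is inherited from the statement for compatibility with the recursive algorithm of Theorem \ref{thm:intro_alg}. Combining the construction, the size bound, the feasibility case analysis, and the trivial running-time estimate completes the proof.
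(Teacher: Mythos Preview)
Your approach is correct and essentially the same as the paper's: the paper also projects each input point $p$ onto the boundaries of every strip $S(v)$ along its root-to-leaf path, which yields exactly your points $q_{p,u}$ on the lines $L(u)$ owned by the strict ancestors (plus redundant copies and the two outer boundaries of $B$). The feasibility argument in the paper is likewise an LCA-based case analysis on the originals $p', q'$ of the two points, so both the construction and the verification match.
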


The proofs of both theorems are deferred to the next two subsections, after we complete the proof of Theorem \ref{thm:intro_alg} using them.
We now provide a schematic description of our algorithm. Depending on the guarantees that we would like to achieve, whenever the algorithm calls procedure  {{\sc LeafBST}, it will call either procedure {\sc LeafBST-1} or procedure {\sc LeafBST-2}; we specify this later.

\program{{\sc RecursiveBST}($X,T,\rho$)}{

\item Keep removing redundant points from $X$ until $X$ is in reduced form. 

\item IF $T$ has height at most $\rho$, 

\item \progspace {\bf return} {\sc LeafBST}($X,T$)

\item Let $U$ be the set of vertices lying in the middle middle layer of $T$. 

\item Compute the split $(X^c,\{X^s_v\}_{v \in U})$ of $(X,T)$ at $U$.

\item Compute the corresponding sub-trees $(T^c, \{T^s_v\}_{v \in U})$ of $T$. 

\item  For each vertex $v \in U$, call to {\sc RecursiveBST} with input $(X^s_v, T^s_v,\rho)$, and let $Y_v$ be the solution returned by it.

\item  Call {\sc RecursiveBST} with input $(X^c, T^c, \rho)$, and let $\hat Y$ be the solution returned by it.

\item Let $Z$ be a point set containing, for each vertex $v\in U$, for each point $p \in X^s_v$, two copies $p'$ and $p''$ of $p$ with $p'.y=p''.y=p.y$, where $p'$ lies on the left boundary of $S(v)$, and $p''$ lies on the right boundary of $S(v)$.

\item {\bf return} $Y^*=Z \cup \hat{Y} \cup (\bigcup_{v \in U} Y_v)$ 
}

\subsection{Analysis}

We start by showing that the solution that the algorithm returns is $T$-special.

\begin{observation}\label{obs: get special sol}
Assuming that {\sc LeafBST}$(X,T)$ returns a $T$-special solution, the solution $Y^*$ returned by Algorithm {\sc RecursiveBST}$(X,T,\rho)$ is  a $T$-special solution. 
\end{observation}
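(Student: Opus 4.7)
The plan is to prove this by induction on the recursion depth. The base case is immediate from the hypothesis: if $\height(T) \leq \rho$, then the algorithm returns the output of {\sc LeafBST}$(X,T)$, which is assumed to be a $T$-special solution. So the work is in the inductive step, where I must verify that the set $Y^* = Z \cup \hat Y \cup (\bigcup_{v \in U} Y_v)$ produced in step 9 is $T$-special, given that the recursive calls return $T_v^s$-special and $T^c$-special solutions respectively.

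Recall that a $T$-special solution is one in which every point lies on (i) a row that is active for $X$, and (ii) a column of $\lset$, which, following the equivalent formulation, can be thought of as a column that serves as a boundary for some strip $S(w)$ with $w \in V(T)$ (together with the two boundary columns of the bounding box $B$). So I need to verify conditions (i) and (ii) separately for each of the three pieces $Z$, $\bigcup_{v \in U} Y_v$, and $\hat Y$.

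For condition (i), the argument is uniform: no splitting operation collapses rows, so the active rows of $X^s_v$ (for each $v \in U$) and of $X^c$ are all active rows of $X$; hence points from the inductively guaranteed special solutions $Y_v, \hat Y$ lie on rows active for $X$. Points of $Z$ by construction share their $y$-coordinates with points of $X^s_v \subseteq X$, hence also lie on rows active for $X$. For condition (ii), I analyze each piece:
\begin{itemize}
\item For $Z$: by its explicit definition, every point in $Z$ lies on the left or right boundary of some strip $S(v)$ with $v \in U$, and such a boundary is either a boundary of $B$ or the line $L(u)$ owned by some proper ancestor $u$ of $v$ in $T$; in either case it belongs to $\lset$.
\item For $\bigcup_{v \in U} Y_v$: by the inductive hypothesis, each $Y_v$ is $T^s_v$-special, so its points lie on columns that are boundaries of strips $S(u)$ for some $u \in V(T^s_v)$. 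Since $T^s_v$ is a subtree of $T$, every such $u$ lies in $V(T)$ and every such boundary column belongs to $\lset$.
\item For $\hat Y$: by the inductive hypothesis, $\hat Y$ is $T^c$-special for the compressed instance $X^c$. Here I need the key observation that the auxiliary columns of $X^c$, when translated back to the coordinate system of $X$, are precisely columns that appear as boundaries between consecutive strips $\{S(v)\}_{v \in U}$ in $T$ (these are owned by proper ancestors of vertices in $U$). In particular they lie in $\lset$, so $\hat Y$ consists of points on columns of $\lset$.
\end{itemize}

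The only mildly non-trivial point is the last bullet, where one must carefully unwind the definition of the compressed instance and the correspondence between auxiliary columns of $X^c$ and columns in the original $X$; once that correspondence is established, the claim is immediate. Combining the three bullets with condition (i) completes the induction and proves the observation.
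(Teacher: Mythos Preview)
Your proposal is correct and follows essentially the same approach as the paper's proof: induction on recursion depth, with the inductive step relying on the fact that $T^c$ and each $T_v$ are subtrees of $T$ (so their auxiliary columns are among those of $T$), and that $Z$ is built from boundaries of strips $S(v)$ for $v\in U$. Your write-up is considerably more explicit than the paper's (which dispatches the inductive step in a single sentence), in particular in spelling out the active-row condition and the coordinate correspondence for $\hat Y$; but the underlying argument is the same.
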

\begin{proof}
The proof is by induction on the recursion depth. The base of the induction is the calls to Procedure {\sc LeafBST}$(X,T)$, which return $T$-special solutions by our assumption.

Consider now some call to Algorithm {\sc RecursiveBST}$(X,T,\rho)$. From the induction hypothesis, the resulting solution $\hat{Y}$ for instance $X^c$ is $T^c$-special, and, for every vertex $v \in U$, the resulting solution $Y_v$ for instance $X^s_v$ is $T_v$-special. 
Since both $T^c$ and every tree $\set{T_v}_{v\in U}$ are subtrees of $T$, and since the points of $Z$ lie on boundaries of strips in $\set{S(v)}_{v\in U}$, the final solution $Y^*$ is $T$-special. 
\end{proof}

We next turn to prove that the solution $Y^*$ computed by Algorithm {\sc RecursiveBST}$(X,T,\rho)$ is feasible. In order to do so, we will use the following immediate observation.

\begin{observation}
	\label{obs:points structure} 
	Let $Y^*$ be the solution returned by Algorithm {\sc RecursiveBST}$(X,T,\rho)$, and let $u\in U$ be any vertex. Then:
	\begin{itemize}
		\item Any point $y\in Y^*$ that lies in the interior of $S(u)$ must lie on an active row of instance $X^s_u$.
		\item Any point $y\in Y^*$ that lies on the boundary of $S(u)$ must belong to in $\hat{Y}\cup Z$. Moreover, the points of $\hat{Y} \cup Z$ may not lie in the interior of $S(u)$. 
\item If $R$ is an active row for instance $X^s_u$, then set $Z$ contains two points, lying on the intersection of $R$ with the left and the right boundaries of $S(u)$, respectively.
		
	\end{itemize}
\end{observation}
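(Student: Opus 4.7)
The plan is to proceed by a straightforward case analysis on the origin of each point $y \in Y^* = Z \cup \hat Y \cup (\bigcup_{v \in U} Y_v)$, combined with the fact established in Observation~\ref{obs: get special sol} that the recursive calls return $T^c$-special and $T_v$-special solutions, respectively.

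For the first bullet, I would argue that if $y$ lies in the interior of $S(u)$, then every source other than $Y_u$ is ruled out. Points of $Z$ sit on the left or right boundary of some $S(v)$ with $v \in U$; by the internal disjointness of the strips $\{S(v)\}_{v\in U}$, no such point can lie in the interior of $S(u)$. Points of $\hat Y$, being $T^c$-special, lie on auxiliary columns of $X^c$; unpacking the definition of the compressed instance (each strip of $\{S(v)\}_{v\in U}$ contributes exactly one representative column to $X^c$), these auxiliary columns are exactly the inter-strip boundaries and therefore do not meet the interior of $S(u)$. For $v\in U\setminus\{u\}$, the $T_v$-special solution $Y_v$ has points on auxiliary columns of $X^s_v$, which lie strictly inside $S(v)$ and hence outside $S(u)$. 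The only remaining possibility is $y \in Y_u$, and since $Y_u$ is $T_u$-special, $y$ lies on an active row of $X^s_u$.

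The second bullet reuses the same case analysis: a point on the boundary of $S(u)$ cannot come from any $Y_v$ (whose points live strictly in $S(v)$), so it must come from $Z \cup \hat Y$; the converse inclusion, that $Z \cup \hat Y$ contains no point in the interior of $S(u)$, has already been established in the argument for the first bullet. The third bullet is immediate from the construction of $Z$: for every $p \in X^s_u$ the algorithm inserts two copies on the left and right boundaries of $S(u)$ at row $p.y$, and since every active row of $X^s_u$ carries at least one point of $X^s_u$, both boundary points appear on that row.

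The only step that I expect to require careful bookkeeping is the identification of the auxiliary columns of $X^c$ with the common boundaries of strips in $\{S(v)\}_{v\in U}$; once this is confirmed by chasing the definitions of the split operation and of the compressed instance, everything else is routine verification.
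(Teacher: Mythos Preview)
Your proposal is correct and matches the paper's reasoning. The paper labels this an ``immediate observation'' and gives no proof; your case analysis on the origin of each point in $Y^*$---using that $\hat Y$ is $T^c$-special, each $Y_v$ is $T_v$-special, and $Z$ lies on strip boundaries by construction---is exactly the unpacking the paper leaves to the reader. Your flagged bookkeeping step (identifying the auxiliary columns of $X^c$ with the inter-strip boundaries) is indeed the only point requiring care, and it follows because $T^c$ is the subtree of $T$ whose leaves are $U$, so the lines owned by inner vertices of $T^c$ are precisely the common boundaries of consecutive strips in $\{S(u):u\in U\}$.
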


We are now ready to prove that the algorithm returns feasible solutions.
In the following proof, when we say that a row $R$ is an active row of strip $S(u)$, we mean that some point of instance $X^s_v$ lies on row $R$, or equivalently, $R$ is an active row for instance $X^s_v$.

\begin{theorem} \label{thm: algorithm feasibility}
	Assume that the recursive calls to Algorithm {\sc RecursiveBST} return a feasible special solution $\hat Y$ for instance $X^c$, and for each $v\in U$, a feasible special solution $Y_v$ for the strip instance $X^s_v$. Then the point set $Y^*=Z \cup \hat{Y} \cup (\bigcup_{v \in U} Y_v)$ is a feasible solution for instance $X$. 
\end{theorem}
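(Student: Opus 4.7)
My plan is to prove that $X \cup Y^*$ is satisfied. Using Observation \ref{obs:points structure}, every point of $X \cup Y^*$ either lies strictly in the interior of some strip $S(v)$ (in which case it belongs to $X^s_v \cup Y_v$) or on a boundary of some strip (in which case it belongs to $Z \cup \hat Y$). I will perform a case analysis on a non-collinear pair $p,q$ based on this dichotomy, resolving each case either by the feasibility of one of the recursively-computed solutions or by the bridging role of $Z$.

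The easiest case is when both $p,q$ lie strictly inside the same strip $S(u)$: then $p,q\in X^s_u\cup Y_u$ and the feasibility of $Y_u$ supplies a witness. The next case, in which at least one of $p,q$ lies strictly inside some strip $S(u)$ and the other lies either in the interior of a different strip or on any boundary to the opposite side, is resolved directly by $Z$. Assuming $p$ is the interior point in $S(u)$ with $p.x<q.x$, the row $R$ of $p$ is active for $X^s_u$ (trivially if $p\in X$, and by the definition of a special solution if $p\in Y_u$), so $X^s_u$ contains a point $p^*$ on row $R$ and $Z$ contains the right-boundary copy $p^{**}=(b_{uR},R)$, where $b_{uR}$ is the $x$-coordinate of the right boundary of $S(u)$. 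A short coordinate check shows $p.x<b_{uR}\le q.x$, that $p^{**}\in\rect_{p,q}$, and that $p^{**}\ne p,q$ (using $R\neq q.y$ when $b_{uR}=q.x$, which follows from non-collinearity). The symmetric subcase uses the left-boundary copy.

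The main obstacle is the remaining case in which both $p,q$ lie on strip boundaries $B_i< B_j$ (with $p.x=B_i$, $q.x=B_j$). If some intermediate strip $S(v_m)$ with $i<m\le j$ contains a point of $X$ on a row inside $[\min(R_p,R_q),\max(R_p,R_q)]$, that point is itself a witness, because its $x$-coordinate lies strictly in $(B_i,B_j)$ and its $y$-coordinate is in range. Otherwise, I will invoke the feasibility of $\hat Y$ for $X^c$. To do so, I will transport $p,q$ into the compressed coordinate system of $X^c$, obtaining points $\tilde p,\tilde q$ on auxiliary columns of $X^c$: points of $\hat Y$ transport identically, while a $Z$-copy on the right (resp.\ left) boundary of $S(v)$ corresponds to the auxiliary column of $X^c$ immediately to the right (resp.\ left) of the active column representing $S(v)$. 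Then I will pass to an auxiliary pair lying in $X^c \cup \hat Y$, specifically the compressed counterparts $\tilde p^*,\tilde q^*$ of the $X$-points $P_{R_p},P_{R_q}$ on rows $R_p,R_q$ (which exist since those rows are active), and use the satisfiability of $X^c \cup \hat Y$ to produce a compressed witness $\tilde r \in X^c \cup \hat Y$. Lifting $\tilde r$ back: if $\tilde r\in\hat Y$, the same point $r\in\hat Y$ lies in $\rect_{p,q}$ because the compression preserves left-to-right column ordering; if $\tilde r\in X^c$ corresponds to $r\in X$, then $r$ lies strictly inside an intermediate strip $S(v_m)$ and hence inside $\rect_{p,q}$.

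The main technical subtlety is the lifting step, since $p,q$ themselves may be $Z$-copies that are not elements of $X^c \cup \hat Y$, and because the original and compressed rectangles cover different absolute column positions while sharing only their left-to-right ordering. The two facts that make the argument go through are (i) compression preserves left-to-right ordering, so $\hat Y$-witnesses transport directly, and (ii) any $X$-point witness in the compressed system lies strictly inside a strip between $B_i$ and $B_j$, hence automatically inside $\rect_{p,q}$. Edge cases, where $p$ or $q$ sits on a boundary immediately adjacent to a witness-containing strip, require a short distinctness check using that boundary points lie on strip boundaries while the witness lies strictly inside---this is exactly where the $Z$-padding guarantees that the argument does not degenerate.
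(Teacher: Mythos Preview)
Your easy cases (both points in the same strip interior; one point in a strip interior) are handled correctly and match the paper. The gap is in the boundary case, specifically in the lifting step.

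When you pass from $(p,q)$ to the auxiliary pair $(\tilde p^*,\tilde q^*)=(P_{R_p}^c,P_{R_q}^c)$, you lose control over where this pair sits. Under your ``otherwise'' assumption, every $X$-point on a row in $[\min(R_p,R_q),\max(R_p,R_q)]$ lies in a \emph{non}-intermediate strip, so $\tilde p^*$ and $\tilde q^*$ live on columns $C_a,C_b$ with $a\le i$ or $a>j$, and likewise for $b$. Two things then go wrong. First, $\tilde p^*$ and $\tilde q^*$ can be collinear (both on the same $X^c$-column, e.g.\ if $P_{R_p},P_{R_q}$ happen to lie in the same outside strip), so there is no rectangle to satisfy. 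Second, even when they are not collinear, a witness $\tilde r\in\rect_{\tilde p^*,\tilde q^*}$ need not lie in $\rect_{p,q}$: your lifting claim~(ii), that an $X^c$-witness $\tilde r$ corresponds to an $X$-point in an intermediate strip, is actually the opposite of what holds---by your own ``otherwise'' assumption, any $X^c$-point in the $y$-range lies in a non-intermediate strip, hence outside $\rect_{p,q}$. Lifting claim~(i) is equally unjustified: an $\hat Y$-witness on an auxiliary column between $C_a$ and $C_b$ can sit on a boundary $B_m$ with $m<i$ or $m>j$.

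The paper avoids this by refusing to treat $p\in\hat Y$ and $p\in Z$ uniformly. If $p\in\hat Y$ it uses $p$ itself as the compressed endpoint. If $p\in Z$ on an inactive row for the strip to its right, then $P_{R_p}$ is forced to lie in the strip \emph{immediately} to the left of $p$, so the compressed endpoint sits on the column just left of $B_i$---the rectangle grows by only one column on each side. Even then the paper does not claim the witness lands directly in $\rect_{p,q}$; it invokes Observation~\ref{obs: aligned point} to put the witness $r$ on a boundary edge of $\rect_{p',q'}$, and when $r$ falls on the (slightly-outside) left or right edge it observes $r\in X^c$ and uses a $Z$-point on $B_i$ (or $B_j$) on $r$'s row as the actual satisfier. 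Your proposal needs this three-way case split and the aligned-point observation to close the gap.
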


\begin{proof}
	It would be convenient for us to consider the set of all points in  $X^c\cup X\cup Y^*$ simultaneously. 
	In order to do so, we start with the set $X\cup Y^*$ of points. For every vertex $v\in U$, we select an arbitrary active column $C_v$ in strip $S(v)$, and we then add a copy of every point $p\in X^s_v$ to column $C_v$. The resulting set of points, obtained after processing all strips $S(v)$ is identical to the set $X^c$ of points (except possibly for horizontal spacing between active columns), and we do not distinguish between them.

	Consider any pair of points $p,q$ that lie in $Y^*\cup X$, which are not collinear. Our goal is to prove that some point $r\neq p,q$ with $r\in X\cup Y^*$ lies in $\rect_{p,q}$. We assume w.l.o.g. that $p$ lies to the left of $q$. We also assume that $p.y<q.y$ (that is, point $p$ is below point $q$); the other case is symmetric.
	
	Assume first that at least one of the two points (say $p$) lies in the interior of a strip $S(u)$, for some $u\in U$. We then consider two cases. First, if $q$ also lies in the interior of the same strip, then $p,q\in X^s_u \cup Y_u$, and, since we have assumed that $Y_u$ is a feasible solution for instance $X^s_u$, the two points are satisfied in $X^s_u\cup Y_u$, and hence in $X\cup Y^*$. Otherwise, $q$ does not lie in the interior of strip $S(u)$. Then, from Observation \ref{obs:points structure}, if $R$ is the row on which point $p$ lies, then $R$ is an active row for instance $X^s_u$, and the point that lies on the intersection of the row $R$ and the right boundary of strip $S(u)$ was added to $Z$. This point satisfies the pair $(p,q)$.

Therefore, we can now assume that both $p$ and $q$ lie on boundaries of strips $\set{S(u)\mid u\in U}$. Since every pair of consecutive strips share a boundary, and since, from the above assumptions, $p$ lies to the left of $q$, we can choose the strips $S(u)$ and $S(v)$, such that $p$ lies on the left boundary of $S(u)$, and $q$ lies on the right boundary of $S(v)$. Notice that it is possible that $S(v)=S(u)$.

Notice that, if $p$ lies on a row that is active for strip $S(u)$, then a point that lies on the same row and belongs to the right boundary of the strip $S(u)$ has been added to $Z$; this point satisfies the pair $(p,q)$. Similarly, if $q$ lies on a row that is active for strip $S(v)$, the pair $(p,q)$ is satisfied by a point of $Z$ that lies on the same row and belongs to the left boundary of $S(v)$.

Therefore, it remains to consider the case where point $p$ lies on a row that is inactive for $S(u)$, and point $q$ lies on a row that is inactive for $S(v)$. From Observation  \ref{obs:points structure}, both $p$ and $q$ belong to $\hat Y\cup Z$.

The following observation will be useful for us. Recall that the points of $X^c$ are not included in $X\cup Y^*$.

\begin{observation}\label{obs: satisfied by Xc}
	Assume that there is some point $r\neq p,q$, such that $r\in \rect_{p,q}$, and $r\in X^c$. Then the pair $(p,q)$ is satisfied in set $X\cup Y^*$.
	\end{observation}

\begin{proof}
	Since $r\in X^c$, and $r\in \rect_{p,q}$, there must be some strip $S(w)$, that lies between strips $S(u)$ and $S(v)$ (where possibly $w=u$ or $w=v$ or both), such that point $r$ lies on the column $C_w$ (recall that this is the unique active column of $S(w)$ that may contain points of $X^c$). But then the row $R$ to which point $r$ belongs is an active row for strip $S(w)$. Therefore, two  points, lying on the intersection of $R$ with the two boundaries of strip $S(w)$ were added to $Z$, and at least one of these points must lie in $\rect_{p,q}$. Since $Z\subseteq Y^*$, the observation follows.
\end{proof}

From the above observation, it is sufficient to show that some point $r\in \hat Y\cup Z\cup X^c$ that is distinct from $p$ and $q$, lies in $\rect_{p,q}$.
We distinguish between three cases.

The first case happens when $p,q\in \hat Y$. Since set $\hat Y$ is a feasible solution for instance $X^c$, there is some point $r\in \rect_{p,q}$ that is distinct from $p$ and $q$, and lies in $\hat Y\cup X^c$. %

The second case happens when neither $p$ nor $q$ lie in $\hat Y$, so both $p,q\in Z$. Let $u'\in U$ be the vertex for which strip $S(u')$ lies immediately to the left of strip $S(u)$. Since $p$ lies on a row that is inactive for strip $S(u)$, but $p\in Z$, such a vertex must exist, and moreover, the row $R$ to which $p$ belongs must be active for strip $S(u')$. Therefore, the point lying on the intersection of the column $C_{u'}$ (the unique active column of $S(u')$ containing points of $X^c$) and $R$ belongs to $X^c$; we denote this point by $p'$. 

Similarly, we let $v'\in U$ be the vertex  for which strip $S(v')$ lies immediately to the right of $S(v)$. Since $q$ lies on a row that is inactive for strip $S(v)$, but $q\in Z$, such a vertex must exist, and moreover, the row $R'$ to which $q$ belongs must be active for strip $S(v')$. Therefore, the point lying on the intersection of $C_{v'}$ and $R'$ belongs to $X^c$; we denote this point by $q'$.

Since the set $X^c\cup \hat Y$ of points is satisfied, some point $r\in X^c\cup \hat Y$ that is distinct from $p'$ and $q'$, lies in $\rect_{p',q'}$. Moreover, from Observation \ref{obs: aligned point}, we can choose this point so that it lies on the boundary of the rectangle $\rect_{p',q'}$. Assume first that $r$ lies on the left boundary of this rectangle. Then, since $\hat Y$ is a special solution for instance $X^c$, $r\in X^c$ must hold. If $R''$ denotes the row on which $r$ lies, then $R''$ is an active row for strip $S(u')$, and so a point that lies on the intersection of row $R''$ and the right boundary of $S(u')$ belongs to $Z$. That point satisfies $\rect_{p,q}$. The case where $r$ lies on the right boundary of $\rect_{p',q'}$ is treated similarly.

Assume now that $r$ lies on the top or the bottom boundary of $\rect_{p',q'}$, but not on one of its corners. Then, since solution $\hat Y$ is special for $X^c$, point $r$ must lie in the rectangle $\rect_{p,q}$. Moreover, since we have assumed that neither $p$ nor $q$ lie in $\hat Y$, $r\neq p,q$. But then $r\in X^c\cup \hat Y$ lies in $\rect_{p,q}\setminus\set{p,q}$ and by Observation \ref{obs: satisfied by Xc}, pair $(p,q)$ is satisfied in $X\cup Y^*$.

The third case happens when exactly one of the two points (say $p$) lies in $\hat Y$, and the other point does not lie in $\hat Y$, so $q\in Z$ must hold. We define the strip $S(v')$ and the point $q'\in X^c$ exactly as in Case 2. Since $p,q'\in X^c\cup \hat Y$, there must be a point $r\in \rect_{p,q'}\setminus\set{p,q'}$ that lies in $X^c\cup \hat Y$. From Observation \ref{obs:points structure}, we can choose the point $r$, so that it lies on the left or on the bottom boundary of $\rect_{p,q'}$ (that is, its $x$- or its $y$-coordinate is aligned with the point $p$). If $r$ lies on the left boundary of $\rect_{p,q'}$, then it also lies in $\rect_{p,q}$, and from Observation  \ref{obs: satisfied by Xc}, pair $(p,q)$ is satisfied in $X\cup Y^*$. If it lies on the bottom boundary of $\rect_{p,q'}$, but it is not the bottom right corner of the rectangle, then, using the same reasoning as in Case 2, it must lie in $\rect_{p,q}$, and it is easy to see that $r\neq q$. Lastly, if $r$ is the bottom right corner of $\rect_{p,q'}$, then $r\in X^c$. As before, there is a copy of $r$, that lies on the left boundary of strip $S(v')$ and belongs to $Z$, that satisfies the pair $(p,q)$. 
\end{proof}

In order to analyze the solution cost, consider the final solution $Y^*$ to the input instance $X$. We distinguish between two types of pints in $Y^*$: a point $p\in Y^*$ is said to be of type-2 if it was added to the solution by Algorithm \leafBST, and otherwise we say that it is of type 1. We start by bounding the number of points of type $1$ in $Y^*$.

\begin{claim}\label{claim: type 1 points}
	The number of points of type 1 in the solution $Y^*$ to the original instance $X$ is at most $O(\log(\height(T)/\rho))\cdot \opt(X)$.
	\end{claim}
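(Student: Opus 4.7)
The plan is to count type-1 points by grouping them according to the recursion level at which they were created, and show that each level contributes only $O(\opt(X))$ type-1 points.

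First I would bound the recursion depth. At each recursive call the algorithm splits at the middle layer $U$ of the current partitioning tree $T'$, so both $T^c$ and every subtree $T^s_v$ have height at most $\lceil \height(T')/2\rceil$. Since recursion stops once the height drops to $\rho$, the recursion depth is $D = O(\log(\height(T)/\rho))$. Let $\mathcal{I}_\ell$ denote the multiset of instances processed at recursion level $\ell$, for $0\le \ell < D$.

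Next I would bound the type-1 points added in a single recursive call on an input $X'$ that has been reduced at the start of the call. By construction, $Z$ contains two copies of every point of $\bigcup_{v\in U} X^s_v$, one on each vertical boundary of $S(v)$. Because $X'$ is a semi-permutation and the strips $\{S(v)\}_{v\in U}$ partition the bounding box with half-integral boundaries, every point of $X'$ lies in the interior of exactly one $S(v)$, so $\sum_{v\in U}|X^s_v|=|X'|$ and hence $|Z|=2|X'|$. Lemma~\ref{lem: bound on WB for reduced} applied to the reduced $X'$ then gives $|X'|\le 4\opt(X')+4$, so $|Z|=O(\opt(X'))$ whenever $\opt(X')\ge 1$.

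The third step is a per-level $\opt$-telescoping argument. I would prove by induction on $\ell$ that
\[
\sum_{X'\in\mathcal{I}_\ell}\opt(X')\ \le\ \opt(X).
\]
The base case $\ell=0$ is trivial. For the inductive step, Observation~\ref{obs: reduced set does not increase opt} ensures that the reduction step at the start of each call does not increase $\opt$, and Theorem~\ref{thm: opt decomposes} guarantees that for any reduced parent instance $X''$ the split into $(X^c,\{X^s_v\}_{v\in U})$ satisfies $\opt(X^c)+\sum_v\opt(X^s_v)\le \opt(X'')$. Summing these inequalities across all parents at level $\ell-1$ gives the desired bound at level $\ell$. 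Combining with Step~2, the type-1 points contributed at level $\ell$ are at most $O\bigl(\sum_{X'\in\mathcal{I}_\ell}\opt(X')\bigr)=O(\opt(X))$, and summing over the $D$ levels yields $O(\log(\height(T)/\rho))\cdot\opt(X)$.

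The main obstacle is the additive $+4$ in Lemma~\ref{lem: bound on WB for reduced}, which prevents a clean bound $|X'|=O(\opt(X'))$ when $\opt(X')=0$. A naive per-level accounting adds a term proportional to $|\mathcal{I}_\ell|$, and the branching factor at each split can make $|\mathcal{I}_\ell|$ grow faster than $\opt(X)$. I would resolve this by short-circuiting trivial recursive calls, namely declaring an instance with $\opt=0$ (equivalently, $|X'|\le 4$ after reduction, which is easily checked) as a base case that returns the empty solution without contributing anything to $Z$. Under this convention every recursive call that actually adds points to $Z$ has $\opt(X')\ge 1$, so $|\mathcal{I}_\ell|\le\sum_{X'\in\mathcal{I}_\ell}\opt(X')\le\opt(X)$, and the additive slack is absorbed into the $O(\opt(X))$ per-level bound, completing the proof.
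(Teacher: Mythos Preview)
Your proposal follows essentially the same approach as the paper: bound the recursion depth by $O(\log(\height(T)/\rho))$, show the per-level type-1 contribution is $2|X'|$ for each reduced instance, prove the telescoping inequality $\sum_{X'\in\mathcal{I}_\ell}\opt(X')\le\opt(X)$ by induction using Theorem~\ref{thm: opt decomposes} and Observation~\ref{obs: reduced set does not increase opt}, and then convert sizes to $\opt$ via Lemma~\ref{lem: bound on WB for reduced}.

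The one difference is that you are more careful than the paper about the additive ``$-1$'' in Lemma~\ref{lem: bound on WB for reduced}. The paper simply writes ``$|X'|\le O(\opt(X'))$'' and moves on, without worrying about instances with $\opt(X')=0$; you instead propose to short-circuit such instances so that every instance that actually contributes to $Z$ has $\opt(X')\ge 1$. Your fix is sound, but note it technically amends the algorithm rather than proving the claim for the algorithm exactly as written; the paper's own proof glosses over this point entirely.
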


\begin{proof}
	Observe that the number of recursive levels is bounded by $\lambda =O(\log(\height(T)/\rho))$. 
	This is since, in every recursive level, the heights of all trees go down by a constant factor, and we terminate the algorithm once the tree heights are bounded by $\rho$.
	 For each $1\leq i\leq \lambda$, let $\xset_i$ be the collection of all instances in the $i$th recursive level, where the instances are in the reduced form. Notice that the only points that are added to the solution by Algorithm \recBST directly are the points in the sets $Z$. The number of such points added at recursive level $i$ is bounded by $\sum_{X'\in \xset_i}2|X'|$. It is now sufficient to show that for all $1\leq i\leq \lambda$, $\sum_{X'\in \xset_i}|X'|\leq O(\opt(X))$. We do so using the following observation.
	
	\begin{observation}\label{obs: type 1 points}
		For all $1\leq i\leq \lambda$, $\sum_{X'\in \xset_i}\opt(X')\leq \opt(X)$.
		\end{observation}
	
	Assume first that the observation is correct. For each instance $X'\in \xset_i$, let $T'$ be the partitioning tree associated with $X'$. From Lemma \ref{lem: bound on WB for reduced}, $|X'|\leq O(\opt(X'))$. Therefore, the number of type-1 points added to  the solution at recursive level $i$ is bounded by $O(\opt)$.

We now turn to prove Observation \ref{obs: type 1 points}.

\begin{proofof}{Observation \ref{obs: type 1 points}}
	The proof is by induction on the recursive level $i$. It is easy to see that the claim holds for $i=1$, since, from Observation \ref{obs: reduced set does not increase opt}, removing redundant points from $X$ to turn it into reduced form cannot increase $\opt(X)$.
	
	Assume now that the claim holds for level $i$, and consider some level-$i$ instance $X'\in \xset_i$. Let $(X^c,\set{X^s_u}_{u\in U})$ be the split of  $(X',T')$ that we have computed. Then, from  Theorem \ref{thm: opt decomposes}, $\sum_{v \in U}\opt(X^s_v) + \opt(X^c) \leq \opt(X')$. Since, from Observation \ref{obs: reduced set does not increase opt}, removing redundant points from an instance does not increase its optimal solution cost, the observation follows.
\end{proofof}		 
\end{proof}

In order to obtain an efficient $O(\log\log n)$-approximation algorithm, we set $\rho$ to be a constant (it can even be set to $1$), and we use  algorithm \leafBST-2 whenever the algorithm calls to subroutine \leafBST. Observe that the depth of the recursion is now bounded by $O(\log\log n)$, and so the total number of type-$1$ points in the solution is bounded by $O(\log\log n)\cdot \opt(X)$. Let $\iset$ denote the set of all instances to which Algorithm \leafBST is applied. Using the same arguments as in Claim \ref{claim: type 1 points}, $\sum_{X'\in \iset}|X'|=O(\opt(X))$. The number of type-2 points that Algorithm \leafBST adds to the solution for each instance $X'\in \iset$ is bounded by $O(|X'|\cdot \rho)=O(|X'|)$. Therefore, the total number of type-2 points in the solution is bounded by $O(\opt(X))$. Overall, we obtain a solution of cost at most $O(\log\log n)\cdot \opt(X)$, and the running time of the algorithm is polynomial in $|X|$.

Finally, in order to obtain the sub-exponential time algorithm, we set the parameter $\rho$ to be such that the recursion depth is bounded by $D$. Since the number of active columns in instance $X$ is $c(X)$, and the height of the partitioning tree $T$ is bounded by $2\log c(X)$, while the depth of the recursion is at most $2\log(\height(T)/\rho)$, it is easy to verify that $\rho=O\left (\frac{\log c(X)}{2^{D/2}}\right )=\frac{\log c(X)}{2^{\Omega(D)}}$. We use algorithm \leafBST-1 whenever the algorithm calls to subroutine \leafBST. As before, let $\iset$ be the set of all instances to which Algorithm \leafBST is applied.
 Using the same arguments as in Claim \ref{claim: type 1 points}, $\sum_{X'\in \iset}(|X'|+\opt(X'))=O(\opt(X))$. For each such instance $X'$, Algorithm \leafBST-1 produces a solution of cost $O(|X'|+\opt(X'))$. Therefore, the total number of type-$2$ points in the final solution is bounded by $O(\opt(X))$. The total number of type-$1$ points  in the solution is therefore bounded by $O(D)\cdot \opt(X)$ as before. Therefore, the algorithm produces a factor-$O(D)$-approximate solution. Finally, in order to analyze the running time of the algorithm, we first bound the running time of all calls to procedure \leafBST-1. The number of such calls is bounded by $|X|$. Consider now some instance $X'\in \iset$, and its corresponding partitioning tree $T'$. Since the height of $T'$ is bounded by $\rho$, we get that $c(X')\leq 2^{\rho}\leq 2^{\log c(X)/2^{\Omega(D)}}\leq (c(X))^{1/2^{\Omega(D)}}$. 
 Therefore, the running time of \leafBST-1 on instance $X'$ is bounded by $|X'|^{O(1)}\cdot (c(X'))^{O(c(X'))}\leq |X'|^{O(1)}\cdot \exp\left({O(c(X')\log c(X')}\right )\leq |X'|^{O(1)}\cdot \exp\left({c(X)^{1/2^{\Omega(D)}}\cdot \log c(X)}\right )$.
 
The running time of the remainder of the algorithm, excluding the calls to \leafBST-1, is bounded by $\poly(|X|)$. We conclude that the total running time of the algorithm is bounded by $|X|^{O(1)}\cdot \exp\left({c(X)^{1/2^{\Omega(D)}}\cdot \log c(X)}\right )\leq \poly(m)\cdot \exp\left(n^{1/2^{\Omega(D)}}\cdot \log n \right )$. 
It now remains to prove Theorems \ref{thm: leaf problem1} and \ref{thm: leaf problem2}, which we do next.

\subsection{Leaf Instances: Proof of Theorem \ref{thm: leaf problem1}}

The goal of this subsection is to prove Theorem \ref{thm: leaf problem1}. For convenience, given an input instance $X$ of \minsat, we denote $r(X)=m$ and $c(X)=n$. Our goal is to compute an optimal canonical solution $Y$ for $X$ in time $\poly(m)\cdot n^{O(n)}$. Using Observation \ref{obs: canonical to special}, we can  then compute a special solution for instance $X$ of cost at most $2|X|+2|Y|\leq 2|X|+2\opt(X)$, in time $\poly(|X|)$. Therefore, in order to complete the proof of Theorem \ref{thm: leaf problem1}, it is enough to show an algorithm that, given an input $X$ as above, computes an optimal canonical solution for $X$ in time $\poly(m)\cdot n^{O(n)}$

We start by providing several definitions and structural observations that will be helpful in designing the algorithm.

\subsubsection{Conflicting sets} 

Our algorithm uses the notion of conflicting point sets, defined as follows.

\begin{definition}[Conflicting Sets]  
Let $Z$ and $Z'$ be sets of points. 
We say that $Z$ and $Z'$ are {\em conflicting} if there is a pair of points $p\in Z$ and $p' \in Z'$, such that $(p,p')$ is not satisfied in $Z\cup Z'$. In other words, set $Z\cup Z'$ contains no point in $\Box_{p,p'} \setminus \set{p,p'}$.  %
\end{definition} 

\begin{lemma}
\label{lem: conflict sets} 
Let $Z$ and $Z'$ be two point sets, each of which is a satisfied set. 
Then the sets $Z$ and $Z'$ are not conflicting if and only if $Z \cup Z'$ is satisfied. 
\end{lemma}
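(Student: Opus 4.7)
The plan is to prove both directions directly from the definitions via a short case analysis; there is no real obstacle here, only bookkeeping about where each point lies.

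For the forward direction, suppose $Z\cup Z'$ is satisfied. Then by definition every non-collinear pair of points in $Z\cup Z'$ is satisfied by some third point in $Z\cup Z'$. In particular, for every $p\in Z$ and $p'\in Z'$ that are non-collinear, the pair $(p,p')$ is satisfied in $Z\cup Z'$, so $Z$ and $Z'$ are not conflicting. (Collinear cross-pairs are vacuously fine, as the notion of ``satisfied pair'' only applies to non-collinear pairs.)

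For the reverse direction, assume $Z$ and $Z'$ are each satisfied and not conflicting. Take an arbitrary non-collinear pair $p,q\in Z\cup Z'$; the goal is to exhibit a point $r\in Z\cup Z'$ with $r\neq p,q$ and $r\in \rect_{p,q}$. I would split into three cases based on membership. If $p,q\in Z$, then since $Z$ is satisfied there exists such an $r\in Z\subseteq Z\cup Z'$. The case $p,q\in Z'$ is symmetric. The remaining case is when, up to relabeling, $p\in Z\setminus Z'$ and $q\in Z'\setminus Z$; this is exactly the situation controlled by the non-conflicting hypothesis, which guarantees that $(p,q)$ is satisfied by some point of $Z\cup Z'$.

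Since every non-collinear pair in $Z\cup Z'$ is satisfied, $Z\cup Z'$ is a satisfied set, completing the proof. The main thing to be careful about is to note that the two directions are essentially definitional once the case split is written out, and that collinear pairs require no attention since the definition of ``satisfied set'' quantifies only over non-collinear pairs.
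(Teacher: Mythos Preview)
Your proof is correct and follows essentially the same approach as the paper: both directions are immediate from unpacking the definitions, with the reverse direction handled by the natural case split on whether the two points lie in the same set or in different sets. Your write-up of the reverse direction is in fact slightly more explicit than the paper's, which compresses this case analysis.
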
 
\begin{proof}
	Assume first that $Z\cup Z'$ is satisfied. Then it is clear that for every pair $p\in Z$, $p'\in Z'$ of points, some point $r\in Z\cup Z'$ lies in $\Box_{p,p'} \setminus \set{p,p'})$, so $Z$ and $Z'$ are not conflicting.
	
	Assume now that $Z$ and $Z'$ are conflicting, and let $p\in Z$ and $p'\in Z'$ be the corresponding pair of points. Then $(p,p')$ is not satisfied in $Z\cup Z'$, and so $Z\cup Z'$ is not a satisfied set of points.
\end{proof}

The following definition is central to our algorithm.

\begin{definition}[Top representation]  
Let $Z$ be any set of points.
A {\em top representation} of $Z$, that we denote by $\topset(Z)$, is a subset $Z' \subseteq Z$ of points, obtained as follows: for every column $C$ that contains points from $Z$, we add the topmost point of $Z$ that lies on $C$ to $Z'$.
\end{definition} 

\begin{observation} 
\label{obs:top points sufficient} 
Let $\topset(Z) \subseteq Z$ be the top representation of $Z$, and let $R$ be a row lying strictly above all points in $Z$. Let $Y$ be any set of points lying on row $R$. Then $top(Z)$ is conflicting with $Y$ if and only if $Z$ is conflicting with $Y$.  
\end{observation}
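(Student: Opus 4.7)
The plan is to prove the biconditional by exploiting two facts: every column of $Z$ contributes its topmost point to $\topset(Z)$, and every point of $Y$ lies strictly above every point of $Z$. Throughout, fix any $p \in Z$ and $y \in Y$; note that since $y.y = R.y$ exceeds $p.y$, the rectangle $\rect_{p,y}$ has $y$ at its top side and $p$ at its bottom.

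For the forward direction, suppose $Z$ is conflicting with $Y$, witnessed by a pair $p \in Z$, $y \in Y$ with no point of $(Z\cup Y)\setminus\{p,y\}$ lying in $\rect_{p,y}$. I will first show $p \in \topset(Z)$: if not, let $p^*$ be the topmost point of $Z$ on column $p.x$, so $p^* \in \topset(Z)$ and $p.y < p^*.y$. Since $p^* \in Z$ and $R$ lies strictly above all of $Z$, also $p^*.y < y.y$. Thus $p^*$ lies in $\rect_{p,y}$ and is distinct from both $p$ and $y$, contradicting that $(p,y)$ is unsatisfied in $Z\cup Y$. Hence $p \in \topset(Z)$, and since $\topset(Z)\cup Y \subseteq Z\cup Y$, the pair $(p,y)$ remains unsatisfied in the smaller set, so $\topset(Z)$ is conflicting with $Y$.

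For the reverse direction, suppose $\topset(Z)$ is conflicting with $Y$, witnessed by $p \in \topset(Z)$ and $y \in Y$. Assume for contradiction that some $r \in (Z \cup Y) \cap \rect_{p,y} \setminus \{p,y\}$. Since such an $r$ in $\topset(Z) \cup Y$ would contradict the witnessing property, we must have $r \in Z \setminus \topset(Z)$. Let $r^*$ be the topmost point of $Z$ on column $r.x$; then $r^* \in \topset(Z)$ and $r^*.y > r.y \ge p.y$. Because $r^* \in Z$ and $R$ lies strictly above $Z$, also $r^*.y < R.y = y.y$. Moreover $r^*.x = r.x$ lies between $p.x$ and $y.x$ since $r \in \rect_{p,y}$, so $r^* \in \rect_{p,y}$. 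Clearly $r^* \neq y$ because $r^*.y < y.y$. To see $r^* \neq p$, note that if $r.x = p.x$, then $r$ lies on the same column as $p$ with $r.y > p.y$ (using $r \neq p$ and $r.y \geq p.y$), contradicting $p \in \topset(Z)$; hence $r^*.x = r.x \neq p.x$, so $r^* \neq p$. Therefore $r^* \in (\topset(Z)\cup Y) \cap \rect_{p,y} \setminus \{p,y\}$, contradicting that $(p,y)$ is unsatisfied in $\topset(Z)\cup Y$. Thus $(p,y)$ is unsatisfied in $Z \cup Y$ as well, so $Z$ is conflicting with $Y$.

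The argument is entirely a case analysis on the position of the topmost point in each column; the only mildly delicate step is verifying $r^* \neq p$ in the reverse direction, which crucially uses the assumption $p \in \topset(Z)$. No other obstacles are expected.
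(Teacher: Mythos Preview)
Your proof is correct and follows essentially the same approach as the paper's: in both directions you exhibit the same witnessing pair and show it remains unsatisfied in the other set by replacing any potential satisfying point with the top point on its column. Your argument is in fact more careful than the paper's in the reverse direction, where the paper simply asserts that no point of $Y\cup Z$ can lie in the rectangle without spelling out the $r\mapsto r^*$ replacement and the verification that $r^*\neq p$.
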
 

\begin{proof}
	Assume that $Y$ is conflicting with $Z$, and let $p\in Y$, $q\in Z$ be a pair of points, such that no point of $Y\cup Z$ lies in $\rect_{p,q}\setminus\set{p,q}$. But then $q\in \topset(Z)$ must hold, and no point of $\topset(Z)\cup Y$ lies in $\rect_{p,q}\setminus\set{p,q}$, so $\topset(Z)$ and $Y$ are conflicting.
	
	Assume now that $\topset(Z)$ and $Y$ are conflicting, and let $p\in Y$, $q\in \topset(Z)$ be a pair of points, such that no point of $Y\cup \topset(Z)$ lies in $\rect_{p,q}\setminus\set{p,q}$. But then no point of $Y\cup Z$ lies in $\rect_{p,q}\setminus\set{p,q}$, and, since $\topset(Z)\subseteq Z$, sets $\topset(Z)$ and $Y$ are conflicting.
\end{proof}

\subsubsection{The Algorithm} 

Let $X$ be the input point set, that is a semi-permutation.
We denote by $\rset=\set{R_1,\ldots,R_m}$ the set of all active rows for $X$, and we assume that they are indexed in their natural bottom-to-top order. We denote by $\cset=\set{C_1,\ldots,C_n}$ the set of all active columns of $X$, and we assume that they are indexed in their natural left-to-right order. We also denote $X=\set{p_1,\ldots,p_m}$, where for all $1\leq i\leq m$, $p_i$ is the unique point of $X$ lying on row $R_i$.
For an index $1\leq t\leq m$, we denote by $\rset_{\leq t}=\set{R_1,\ldots,R_t}$, and we denote by $X_{\leq t}=\set{p_1,\ldots,p_t}$.

Note that, if $Y$ is a feasible solution to instance $X$, then for all $1\leq t\leq m$, the set $X_{\leq t}\cup Y_{\leq t}$ of points must be satisfied (here $Y_{\leq t}$ is the set of all points of $Y$ lying on rows of $\rset_{\leq t}$.) Our dynamic programming-based algorithm constructs the optimal solution row-by-row, using this observation. We use height profiles, that we define next, as the ``states'' of the dynamic program.

\paragraph{Height Profile:} 
A \emph{height profile} $\pi$ assigns, to every column $C_i\in \cset$, a value $\pi(C_i)\in \set{1,\ldots,n,\infty}$. Let $\Pi$ be the set of all possible height profiles, so $|\Pi|\leq n^{O(n)}$. For a profile $\pi\in \Pi$, we denote by $M(\pi)$ the largest value of $\pi(C_i)$ for any column $C_i\in \cset$ that is not $\infty$. Given a height profile $\pi$, let $\cset(\pi)\subseteq \cset$ be the set of columns $C_i$ with $\pi(C_i)<\infty$, and let $\cset'(\pi)=\cset\setminus\cset(\pi)$. We can then naturally associate an ordering $\rho_{\pi}$ of the columns in $\cset(\pi)$ with $\pi$ as follows: for columns $C_i,C_j\in \cset(\pi)$, $C_i$ appears before $C_j$ in $\pi$ iff either  (i) $\pi(C_i)<\pi(C_{j})$; or (ii) $\pi(C_i)=\pi(C_{j})$ and $i<j$.

Consider now any point set $Z$, where every point lies on a column of $\cset$. Let $Z'=\topset(Z)$, and let $\sigma(Z)$ denote the ordering of the points in $Z'$, such that $p$ appears before $p'$ in $\sigma$ iff either (i) $p.y<p'.y$; or (ii) $p.y=p'.y$ and $p.x<p'.x$. 
Consider now any profile  $\pi\in \Pi$.
We say that point set $Z$ is \emph{consistent with profile $\pi$} (see Figure~\ref{fig:height} for an illustration) iff the following hold:

\begin{itemize}
	\item For every column $C_i\in \cset$, if $\pi(C_i)=\infty$, then no point of $Z$ lies on $C_i$; and
	\item For all $1\leq i\leq |Z'|$, the $i$th point in $\sigma(Z)$ lies on the $i$th column of $\rho_{\pi}$.
\end{itemize}

\begin{figure}[h]
    \centering
    \includegraphics[width=0.3\textwidth]{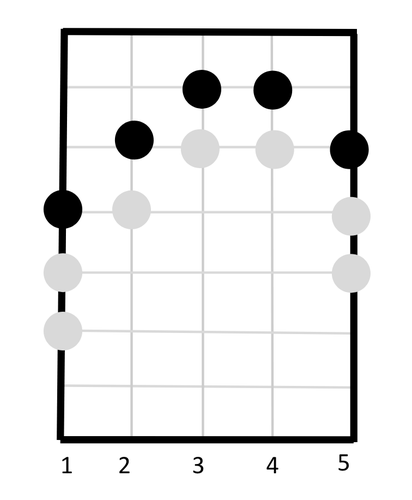}
    \caption{An illustration of height profile that is consistent with a set $Z$ of points. The set $\topset(Z)$ is shown by dark points. The height profile is $\pi(C_1) = 1, \pi(C_2) = \pi(C_5) = 2$ and $\pi(C_3) = h_Z(C_4) = 3$.
    }  
    \label{fig:height}
\end{figure}

\paragraph{Dynamic Programming Table.}

Consider some integer $1\leq t\leq m$ and some height profile $\pi$; recall that $M(\pi)$ is the largest value of $\pi(C_j)$ for $C_j\in \cset$ that is not $\infty$.
We say that $\pi$ is a \emph{legal profile} for $t$ iff  $M(\pi)\leq t$, and, if $C_i$ is the column containing the input point $p_t$, then $\pi(C_i)=M(\pi)$ (that is, column $C_i$ has the largest value $\pi(C_i)$ that is not $\infty$; we note that it is possible that other columns $C_j\neq C_i$ also have $\pi(C_j)=M(\pi)$).

For every integer $1\leq t\leq m$, and every height profile $\pi$ that is legal for $t$, there is an entry $T[t,\pi]$ in the dynamic programming table. The entry is supposed to store the minimum-cardinality set $Z$ of points with the following properties:

\begin{itemize}
	\item $X_{\leq t}\subseteq Z$;
	\item all points of $Z$ lie on rows $R_1,\ldots, R_t$;
\item 	$Z$ is a satisfied point set; and 
\item $Z$ is consistent with $\pi$.
\end{itemize}

Clearly, there number of entries in the dynamic programming table is bounded by $m n^{O(n)}$. We fill out the entries in the increasing order of the index $t$.  

We start with $t=1$. Consider any profile $\pi$ that is consistent with $t$. Recall that for every column $C_j\in \cset$, $\pi(C_j)\in \set{1,\infty}$ must hold, and moreover, if $C_i$ is the unique column containing the point $p_1\in X$, then $\pi(C_i)=1$ must hold.
We let $T[1,\pi]$ contain the following set of points: for every column $C_j\in \cset$ with $\pi(C_j)=1$, we add the point lying in the intersection of column $C_j$ and row $R_1$ to the point set stored in $T[t,\pi]$. It is immediate to verify that the resulting point set is consistent with $\pi$, it is satisfied, it contains $X_{\leq 1}=\set{p_1}$, and it is the smallest-cardinality point set with these properties.

We now assume that for some $t\geq 1$, we have computed correctly all entries $T[t',\pi']$ for all $1\leq t'\leq t$, and for all profiles $\pi'$ consistent with $t'$. We now fix some profile $\pi$ that is consistent with $t+1$, and show how to compute entry $T[t+1,\pi]$.

Let $r=M(\pi)$ (recall that this is the largest value of $\pi(C_j)$ that is not $\infty$), and let $\cset_1\subseteq \cset$ be the set of all columns $C_j$ with $\pi(C_j)=r$. Recall that, if $C_j$ is the column containing the input point $p_{t+1}$, then $C_j\in \cset_1$ must hold. Let $P$ be the set of $|\cset_1|$ points that lie on the intersection of the row $R_{t+1}$ and the columns in $\cset_1$.

Consider now any profile $\pi'$ that is legal for $t$, and let $\hat Z=T[t,\pi']$. Denote $\hat Z'=\topset(\hat Z)$. We say that profile $\pi'$ is a \emph{candidate profile} if (i) $\pi'$ is legal for $t$; (ii) the point sets $P,\hat Z'$ do not conflict; (iii) $\cset'(\pi')\subseteq \cset'(\pi)\cup \cset_1$; and (iv) if we discard from $\rho_{\pi}$ and from $\rho_{\pi'}$ the columns of $\cset_1$, then the two orderings are defined over the same set of columns and they are identical. We select a candidate profile $\pi'$ that minimizes $|T[t,\pi']|$, and let $Z=\hat Z\cup P$, where $\hat Z$ is the point set stored in $T[t,\pi']$. We then set $T[t+1,\pi]=Z$.

We now verify that set $Z$ has all required properties. If the entry $T[t,\pi']$ was computed correctly, then point set $\hat Z$ is satisfied. Since $\topset(\hat Z)$ and $P$ are not conflicting, from Observation \ref{obs:top points sufficient} neither are $\hat Z$ and $P$. Therefore, from Lemma \ref{lem: conflict sets}, set $Z$ is satisfied. 
If the entry $T[t,\pi']$ was computed correctly, then $X_{\leq t}\subseteq \hat Z$. Since $p_{t+1}\in P$, we get that $X_{\leq (t+1)}\subseteq Z$. 

Next, we show that $Z$ is consistent with the profile $\pi$. Let $Z'=\topset(Z)$, and consider the ordering $\sigma(Z)$ of the points in $Z'$. It is easy to verify that the last $|P|$ points in this ordering are precisely the points of $P$. The remaining points in this ordering can be obtained from the point set $\hat Z'$, by first ordering them according to ordering $\sigma(\hat Z)$, and then deleting all points lying on columns of $\cset_1$. From the definition of candidate profiles, it is easy to verify that for all $1\leq i\leq |Z'|$, the $i$th point in $\sigma(Z)$ lies on the $i$th column of $\rho_{\pi}$. Therefore, $Z$ is consistent with profile $\pi$.

Lastly, it remains to show that the cardinality of $Z$ is minimized among all sets with the above properties. Let $Z^*$ be the point set that contains $X_{\leq t+1}$, is satisfied, is consistent with profile $\pi$, with every point of $Z^*$ lying on rows $R_1,\ldots,R_{t+1}$,  such that $|Z^*|$ is minimized among all such sets. It is easy to verify that the set of points of $Z^*$ lying on row $R_{t+1}$ must be precisely $P$. This is since point $p_{t+1}$ must belong to $Z^*$, and so every column $C_i$ with $\pi(C_i)=M(\pi)$ must have a point lying on the intersection of $C_i$ and $R_{t+1}$ in $Z^*$. Let $\hat Z=Z^*\setminus P$.
Clearly, $\hat Z$ is satisfied, all points of $\hat Z$ lie on rows $R_1,\ldots,R_t$, and $X_{\leq t}\subseteq \hat Z$. Moreover, there is no conflict between $\topset(\hat Z)$ and $P$.

We define a new height profile $\pi'$ as follows: for every column $C_i\in \cset$, if no point of $\hat Z$ lies on $C_i$, then $\pi'(C_i)=\infty$. Otherwise, let $p$ be the unique point in $\hat Z'$ that lies on $C_i$, and assume that it lies on row $R_{t'}$. Then we set $\pi'(C_i)=t'$. Notice that $\cset'(\pi')\subseteq \cset'(\pi)\cup \cset_1$, and, if we discard from $\rho_{\pi}$ and from $\rho_{\pi'}$ the columns of $\cset_1$, then the two orderings are defined over the same set of columns and they are identical.

It is easy to verify that the resulting profile $\pi'$ must be legal for $t$. Therefore, profile $\pi'$ was considered by the algorithm. 
Since $\hat Z'$ and $P$ are not conflicting, it is easy to verify that for any other set $\tilde Z$ of points that lie on rows $R_1,\ldots,R_t$ and are consistent with profile $\pi'$, set $\topset(\tilde Z)$ does not conflict with $P$. Therefore, $\pi'$ must be a candidate profile for $\pi$. We conclude that $|T[t,\pi']|\leq |\hat Z|$, and  $|T[t,\pi]|\leq |T[t,\pi']|+|P|=|Z^*|$, so $|Z|\leq |Z^*|$ must hold.

The output of the algorithm is a set $T[m,\pi]\setminus X$ of smallest cardinality among all profiles $\pi\in \Pi$ that are consistent with $m$. It is immediate to verify that this is a feasible and optimal solution for $X$.

As observed before, the number of entries in the dynamic programming table is $m\cdot n^{O(n)}$, and computing each entry takes time $n^{O(n)}$. Therefore, the total running time of the algorithm is bounded by $m\cdot n^{O(n)}$.

\subsection{Leaf Instances: Proof of Theorem \ref{thm: leaf problem2}} 

In this section we provide we prove \ref{thm: leaf problem2}. Recall that the input is a semi-permutation point set $X$ and a partitioning tree $T$ for $X$. Our goal is to compute a $T$-special solution for $X$, of cost at most $2 m h$ where $m = |X|$ and $h = \height(T)$ in time $\poly(m)$.

For every point $p\in X$, we define a new set of points $Z(p)$, as follows. For every vertex $v\in V(T)$ of the tree with $p\in S(v)$, we add two points to $Z(p)$, that have the same $y$-coordinate as $p$: one lying on the left boundary of $S(v)$, and one lying on its right boundary; we call these two points the \emph{projection of $p$ to the boundaries of $S(v)$}. We then set $Y=\bigcup_{p\in X}Z(p)$.

First, it is easy to see that the solution is $T$-special, as $Y$ only contains points on boundaries of strips $S(v)$, for $v\in V(T)$. 
Moreover, for every point $p\in X$, the set of all vertices $v\in V(T)$ with $p\in S(v)$ lie on a single root-to-leaf path of the tree $T$. Therefore, $|Z(p)|\leq 2h$, and  $|Y| \leq 2 m h$. 

Finally, we argue that the set $X\cup Y$ of points is satisfied.
Consider any pair of points $p,q \in X \cup Y$, where $p.x < q.x$. We assume that and $p.y < q.y$ (other case is symmetric).
Assume first that at least one of these two points, say $p$, lies in $X$. Let $v\in V(T)$ be the unique leaf vertex with $p\in S(v)$. 
Then the copy of $p$ lying on the right boundary of $S(v)$ was added to $Z(p)$, and this point lies in $\rect_{p,q}\setminus\set{p,q}$.

Therefore, we can assume that both $p$ and $q$ lie in $Y$. Denote by $p'$ and $q'$ be the points in $X$, such that $p\in Z(p')$ and $q\in Z(q')$. 
 
Let $v$ be the node of the tree for which $S(v)$ contains both $p'$ and $q'$, such that the width of $S(v)$ is the smallest.
Let $u$ and $w$ be children of $v$ such that $S(u)$ contains $p'$ and $S(w)$ contains $q'$, and $L$ be the vertical line in the middle that splits $S(v)$ into $S(u)$ and $S(w)$. 
Assume first that at least one of the two points, $p$ or $q$ (say $p$) lies either on the boundary of $S(v)$, or outside $S(v)$. Since for every ancestor $x$ of vertex $v$, both $p',q'\in S(x)$, both $p'$ and $q'$ were projected to the boundaries of $S(x)$. Therefore, there is a copy of $q'$ on the same column where $p$ lies, satisfying $\rect_{p,q}$. 

We can therefore assume from now on that both $p$ and $q$ lie in the interior of $S(v)$. But then $p$ must lie $S(u)$ and $q$ must lie in $S(w)$. Since both $p',q'$ are projected to the line $L=L(v)$, the pair $(p,q)$ must be satisfied.

\begin{figure}[h]
    \centering
    \includegraphics[width=0.5\textwidth]{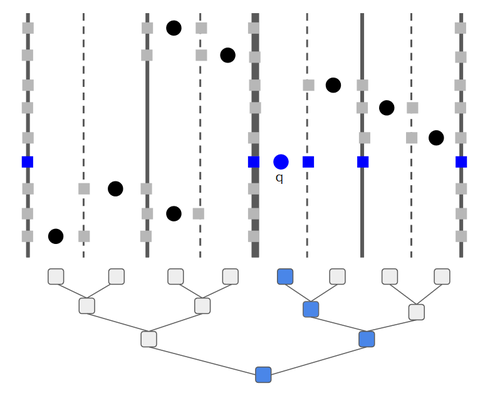}
    \caption{An illustration of the algorithm. Point $q$ is projected to the boundaries of every strip $S(v)$ that contains $q$.}
    \label{fig:static-opt}
\end{figure}

We remark that this approximation algorithm corresponds to an algorithm that uses a static tree $T$ in the classical BST view.

This completes the proof of Theorem \ref{thm:intro_alg}, except for an $O(\log \log n)$-competitive online algorithm. We show that our $O(\log\log n)$-approximation algorithm can be adapted to the online setting in Section \ref{sec: online}. %

\newcommand{\first}{{\sf first}} 

\newcommand{\last}{{\sf last}}

\newcommand{\proj}{{\sf proj}} 

\section{An $O(\log \log n)$-Competitive Online Algorithm}  
\label{sec: online}

In this section we extend the $O(\log\log n)$-approximation algorithm from the previous section to the online setting, obtaining an $O(\log\log n)$-competitive algorithm and completing the proof of Theorem \ref{thm:intro_alg}. The recursive description of the algorithm from the previous section is however not very convenient for the online setting. In subsection~\ref{subsec: unfold}, we present an equivalent iterative description of the algorithm. In subsection~\ref{sec: online-friendly}, we slightly modify the solution $Y$ that the algorithm returns to obtain another solution $\hat{Y}$ that is more friendly for the online setting, before presenting the final online algorithm in subsection \ref{subsec: online final}.

\subsection{Unfolding the recursion} 
\label{subsec: unfold}  
 
Let $X$ be an input set of points that is semi-permutation, with $|X|=m$, and $c(r)=n$. Let $T$ be a balanced partitioning tree of height $H=O(\log n)$ for $X$. %

We now construct another tree $R$, that is called a \emph{recursion tree}, and which is unrelated to the tree $T$. Every vertex $q$ of the tree $R$ is associated with an instance $\iset(q)$ of \minsat that arose during the recursive execution of Algorithm {\sc RecursiveBST}($X,T,\rho$), with $\rho=1$. 
Specifically, for the root $r$ of the tree $R$, we let $\iset(r)=X$. For every vertex $q$ of the tree $R$, if Algorithm  {\sc RecursiveBST}($X,T,\rho$), when called for instance $\iset(q)$, constructed instances $\iset_1,\ldots,\iset_z$ (recall that one of these instances is a compressed instance, and the remaining instances are strip instances), then vertex $q$ has $z$ children in tree $R$, each of which is associated with one of these instances.

For a vertex $q\in V(R)$, let $n(q)$ be the number active columns in the instance $\iset(q)$. 
Recall that instance $\set(q)$ corresponds to some sub-tree of the partitioning tree $T$, that we denote by $T_q$. 
For all $i\geq 0$, let $\Lambda'_i$ be the set of all vertices of the tree $R$ that lie at distance exactly $i$ from the root of $R$. We say that the vertices of $\Lambda'_i$ belong to the $i$th layer of $R$. Notice that, if vertex $q$ lies in the $i$th layer of $R$, then the height of the corresponding tree $T_q$ is bounded by $H\cdot (2/3)^i$ (the constant $2/3$ is somewhat arbitrary and is used because, when a tree $T'$ is split in its middle layer, each of the resulting subtrees has height at most $\floor{\height(T')/2}+1\leq 2H/3$).
Recall that the recursion terminates once we obtain instances whose corresponding partitioning trees have height $1$. It is then easy to verify that the height of the recursion tree $R$ is bounded by $O(\log\log n)$.

Consider now some layer $\Lambda_i$ of the recursion tree $R$. We let $\tset_i$ be the collection of all sub-trees of the partitioning tree $T$ corresponding to the vertices of $\Lambda_i$, so $\tset_i=\set{T_q\mid q\in \Lambda_i}$. Recall that all trees in $\tset_i$ have height at most $H\cdot (2/3)^i$.

Notice that $\tset_0 = \set{T}$. 
Let $U=\set{v_1,\ldots, v_k}$ be the middle layer of $T$. Then $\tset_1 = \set{T_{v_1},\ldots, T_{v_k},T_r}$. Set $\tset_2$ is similarly obtained by subdividing every tree in $\tset_1$, and so on. (See Figure~\ref{fig:Trees} for an illustration. )
The construction of the tree sets $\tset_i$ can be described using the following process:  

\begin{itemize}
\item Start from $\tset_0 = \set{T}$. 

\item For $i = 1,\ldots, D$, if some tree in $\tset_{i-1}$ has height greater than $1$, then construct the set $\tset_i$ of trees as follows: For every tree $T' \in \tset_i$ whose height is greater than $1$, split $T'$ at the middle layer $U_{T'} \subseteq V(T')$. For every vertex $v\in U_{t'}$, we then obtain a sub-tree of $T'$ rooted at $v$, that is added to $\tset_i$ -- this tree corresponds to a strip instance. Additionally, we obtain a sub-tree of $T'$ rooted at the root of $T'$, whose leaves are the vertices of $U_{t'}$, that we also add to $\tset_i$ -- this tree corresponds to the compressed instance.
\end{itemize}

The following observation is immediate.

\begin{observation} 
For each $1\leq i\leq D$, $\bigcup_{T' \in \tset_i} V(T') = V(T)$, and for each pair $T', T'' \in \tset_i$ of trees, either $V(T') \cap V(T'') = \emptyset$ or the root of one of these trees is a leaf of the other tree.
\end{observation}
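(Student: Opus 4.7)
The plan is to prove this by induction on $i$, with the key observation that the splitting operation applied to a single tree preserves the required structural property, and that splits at different trees do not interfere because of the inductive hypothesis on boundary vertices.

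For the base case $i=0$, we have $\tset_0 = \{T\}$, so $\bigcup_{T' \in \tset_0} V(T') = V(T)$ and the pairwise condition is vacuous. For the inductive step, I would first isolate a local lemma about a single split. Specifically, suppose $T'$ is an arbitrary tree and $U_{T'} \subseteq V(T')$ is an antichain (the middle layer) such that every root-to-leaf path of $T'$ contains exactly one vertex of $U_{T'}$. Let $T'^c$ be the compressed sub-tree (rooted at the root of $T'$ with leaves $U_{T'}$) and for each $v \in U_{T'}$ let $T'_v$ be the sub-tree of $T'$ rooted at $v$. Then: (a) their vertex sets cover $V(T')$; (b) for distinct $v, v' \in U_{T'}$ the sub-trees $T'_v, T'_{v'}$ are vertex-disjoint since $U_{T'}$ is an antichain; and (c) $T'^c$ and $T'_v$ share exactly the vertex $v$, which is a leaf of $T'^c$ and the root of $T'_v$. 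This local lemma is essentially a direct unpacking of the definition of a split.

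To combine pieces across different trees of $\tset_{i-1}$, consider any two $T', T'' \in \tset_i$, arising respectively from trees $T_1, T_2 \in \tset_{i-1}$ (by splitting if their height exceeds $1$, or by being copied otherwise). If $T_1 = T_2$, the local lemma directly gives the property. If $T_1 \neq T_2$, by the inductive hypothesis either $V(T_1) \cap V(T_2) = \emptyset$, in which case $V(T') \cap V(T'') \subseteq V(T_1) \cap V(T_2) = \emptyset$, or, without loss of generality, the root $r$ of $T_2$ is a leaf of $T_1$. In this case, I would track $r$ through both splits: because $r$ is a leaf of $T_1$, after splitting $T_1$ the vertex $r$ appears in exactly one sub-tree of $T_1$ and remains a leaf there (either in the unique strip tree of $T_1$ whose root is the ancestor of $r$ in $U_{T_1}$, or in $T_1^c$ if $r \in U_{T_1}$). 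Because $r$ is the root of $T_2$, after splitting $T_2$ the vertex $r$ remains the root of the compressed sub-tree $T_2^c$ (and of $T_2$ itself if $T_2$ is not split). In every case $V(T') \cap V(T'') = \{r\}$ with $r$ a leaf of one and the root of the other, giving exactly the pairwise property. The union property $\bigcup_{T' \in \tset_i} V(T') = V(T)$ follows by combining the inductive hypothesis with part (a) of the local lemma applied to each split.

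The main obstacle, though essentially bookkeeping, is to verify carefully the dual role a shared vertex can play: a vertex $r$ that is simultaneously a leaf of some $T_1 \in \tset_{i-1}$ and a root of some $T_2 \in \tset_{i-1}$ must retain exactly these two roles after the respective splits, and no new spurious overlaps of internal vertices can be created. This follows because the splitting procedure only partitions a tree at an antichain of internal vertices, and a vertex internal to $T_1$ cannot be equal to any vertex of $T_2$ (since, by the inductive hypothesis, any intersection vertex must be either a root or a leaf of each). Once this is observed, no case is lost and the induction completes.
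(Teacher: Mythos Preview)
Your inductive argument is correct and is precisely the natural way to verify this claim; the paper itself states the observation as ``immediate'' from the iterative construction of the $\tset_i$ and gives no proof, so your write-up simply makes explicit the case analysis that the paper leaves to the reader.
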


\begin{figure}
    \centering
    \includegraphics[width=0.9\textwidth]{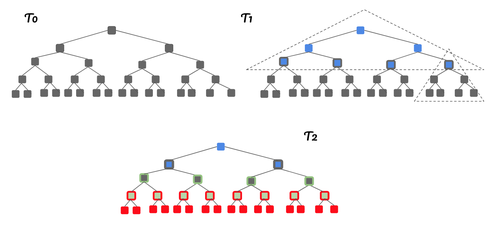}
    \caption{An illustration of the families $\tset_i$. Notice that $\tset_1$ contains $5$ trees, each having height $2$.}
    \label{fig:Trees}
\end{figure}

\paragraph{Boxes:}
Fix an index $1\leq i\leq D$, and consider any tree $T' \in \tset_i$.  
Denote the number of leaves by $k$, and let $v_1,\ldots, v_k$ be the leaves of $T'$. 
If $v$ is the root vertex of the tree $T'$, then we can view $T'$ as defining a hierarchical partitioning scheme of the strip $S(v)$, until we obtain the partition $(S(v_1),S(v_2), \ldots ,S(v_k))$ of $S(v)$.  
We define a collection of $T'$-boxes as follows. Let $X' = X \cap S(v)$ be the set of all points lying in strip $S(v)$. 
Assume that $X' = \set{p_1,p_2, \ldots, p_{m'}}$, where the points are indexed in the increasing order of their $y$-coordinates.

We now iteratively partition the point set $X'$ into boxes, where each box is a consecutive set of points of $X'$. 
Let $v_i$ be the leaf vertex of $T'$, such that $p_1\in S(v_i)$, and let $m_1$ be the largest index, such that all points $p_1,\ldots,p_{m_1}$ lie in $S(v_i)$. We then then define a box $B_1 = \set{p_1, p_2,\ldots, p_{m_1}}$. We discard the points $p_1,\ldots,p_{m_1}$, and continue this process to define $B_2$ (starting from point $p_{m_1}+1$), and so on, until every point of $X'$ belongs to some box.

We let $\bset(T') = \set{B_1,B_2,\ldots, B_{z(T')}}$ be the resulting partition of the points in $X'$, where we refer to each set $B_i$ as a $T'$-box, or just a box.
For each box $B' \in \bset(T')$, the lowest and the highest rows containing points of $B'$ are denoted by $\first(B')$ and $\last(B')$ respectively.

\paragraph{Projections of points:} Recall that the solution that our recursive algorithm returns is $T$-special, that is, the points that participate in the solution lie on the active rows of $X$ and on $T$-auxiliary columns. 
Let $Y$ be a feasible solution obtained by our algorithm {\sc RecursiveBST}($X$, $T$, $\rho$), where $\rho=1$. 
Notice the every point in $Y$ is obtained by ``projecting'' some input point $p \in X$ to the boundary of some strip $S(v)$ for $v \in V(T)$, where strip $S(v)$ contains $p$.  
For each point $p \in X$ and node $v \in V(T)$ of the partitioning tree, such that $p \in S(v)$, we define the set $\proj(p,v)$ to contain two points on the same row as $p$, that lie on the left and right boundaries of $S(v)$, respectively.
We denote by $Y_{p,v}$ the set that contain the two points $\proj(p,v)$ if  our algorithm adds these two points to the solution. Notice that,  since all points of $X$ lie on distinct rows, for any two points $p\neq p'$, for any pair $v,v'\in V(T)$ of vertices, $\proj(p,v)\cap \proj(p',v')=\emptyset$.
Observe that we can now write the solution $Y$ as $Y = \bigcup_{p \in X} \bigcup_{v \in V(T)} Y_{p,v}$.   
The following lemma characterizes the points of $Y$ in terms of boxes.  

\begin{lemma} 
For every point $p \in X$ and every node $v \in V(T)$ of the partitioning tree, $|Y_{p,v}| =2$ if and only if there is an index $1\leq i\leq D$, and a subtree $T' \in \tset_i$, such that (i) $p$ is the first or the last point of its $T'$-box; (ii) $v$ lies in the middle layer of $T'$; and (iii) $S(v)$ contains $p$.
\end{lemma}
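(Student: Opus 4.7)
The plan is to prove the lemma by tracking, for each pair $(p,v)$, precisely which recursive call (if any) is responsible for adding $\proj(p,v)$ to the solution. Since $Y$ is the union of contributions from every recursive call, $|Y_{p,v}|=2$ if and only if at least one such call inserts $\proj(p,v)$. The key insight is that the reductions performed at the start of each recursive call precisely cut the input down to the first and last points of each $T'$-box, so that the ``if and only if'' of the lemma reduces to a structural statement about iterated reductions.

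First, I would use the partition property of $\bigcup_i \tset_i$ over $V(T)$ to show that each non-leaf $v\in V(T)$ lies in the middle layer of exactly one tree $T'\in \tset_i$ (and each leaf of $T$ lies in the middle layer of the corresponding height-$1$ local tree at depth $D$). The only recursive call that can contribute $\proj(p,v)$ via the explicit $Z$-construction is the call on this unique $T'$, because $Z$-projections are built using only vertices in the middle layer of the current tree. Deeper {\sc LeafBST} invocations whose local tree contains $v$ (necessarily as its root) also contain $\proj(p,v)$ candidates, and I will deal with these at the end.

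Second, I would prove by induction on $i$ the key structural claim: at the call on $T'\in \tset_i$, after the initial reduction step, the input $X^*$ is exactly the set
\[
\{\, p \in X\cap S(r_{T'}) \,:\, p \text{ is the first or last point of its } T'\text{-box}\,\}.
\]
The base case $i=0$ with $T'=T$ is immediate: the leaves of $T$ are single columns, so $T$-boxes coincide with maximal same-column runs in $X$, and iteratively removing redundant points keeps precisely the first and last of each run. For the inductive step, the input to $T'$ is obtained from the (already-reduced) input at the parent $T_{\mathrm{par}}\in\tset_{i-1}$ either by restriction to $S(r_{T'})$ (strip subtree) or by column collapse (compressed subtree); in both cases the columns of $X^*$ correspond bijectively to the leaves of $T'$, so ``same column in $X^*$'' is synonymous with ``same leaf of $T'$''. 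A short argument then shows that the combined effect of the intermediate reductions is equivalent to computing first-and-last within this final column structure, which yields exactly the first/last points of each $T'$-box.

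Third, combining the two facts: at the call on the unique $T'$ with $v$ in its middle layer, the algorithm adds $\proj(p,v)$ iff $p\in X^*$ and $p\in S(v)$, which by the structural claim means precisely (i)--(iii). For the reverse direction, the same characterization gives the explicit insertion. To rule out spurious contributions from deeper {\sc LeafBST} invocations on a local tree $T''$ having $v$ as its root, I would observe that each $T''$-box is a merge of one or more $T'$-boxes (since $S(r_{T''})\subseteq S(v)$ may absorb points of $X\cap S(r_{T'})\setminus S(v)$ into a single run within $S(r_{T''})$), so first/last of $T''$-boxes is contained in first/last of $T'$-boxes restricted to $S(v)$; hence {\sc LeafBST} never introduces a $\proj(p,v)$ not already present. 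The hardest part will be the inductive step of the structural claim along recursion paths that mix strip and compressed subtrees: one must check that the column collapse in compressed instances is compatible with the leaf structure of every descendant $T'$ so that iterated reductions commute into the single ``reduce with respect to the leaves of $T'$'' operation, and pair this with the subset argument between $T''$-boxes and $T'$-boxes to close the loop.
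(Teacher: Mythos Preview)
Your approach is essentially the same as the paper's, but considerably more careful: the paper's proof simply asserts the key structural claim (``instance $X'$ only contains the first and the last point of each $T'$-box'') without justification, whereas you correctly identify this as the crux and propose an inductive proof along the recursion tree. That induction is the right argument; the key lemma you will want is that if $Z\subseteq Z'$ and every point of $Z'\setminus Z$ is interior to its maximal same-column run in $Z'$, then reducing $Z$ and reducing $Z'$ give the same set --- this is what makes iterated reductions commute with your final ``reduce with respect to the leaves of $T'$'' operation.

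One small slip to fix: your parenthetical ``necessarily as its root'' is not correct. A height-$\le 1$ tree $T''$ on which {\sc LeafBST} is invoked can contain $v$ as a \emph{leaf} as well, via the path that first takes the compressed subtree $T'^c$ of $T'$ (in which $v$ is a leaf) and then descends by strip subtrees. In that case {\sc LeafBST-2} again adds $\proj(p,v)$ for every surviving $p\in S(v)$. Fortunately your containment argument extends: in this situation $S(v)\subseteq S(r_{T''})\subseteq S(r_{T'})$ and the $T''$-leaf-strip containing $p$ is exactly $S(v)$, which is coarser than $p$'s $T'$-leaf-strip; together with the point-set restriction this again forces every $T''$-box (intersected with $S(v)$) to be a union of $T'$-boxes, so first/last of $T''$-boxes in $S(v)$ are already first/last of $T'$-boxes. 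You should also note that the leaf-coarsening effect (not just the point-set restriction you cite) contributes to making $T''$-boxes coarser than $T'$-boxes when $T''$ is reached via compressed subtrees, but both effects push in the same direction and your conclusion stands.
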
 

\begin{proof}
	Let $1\leq i\leq D$ be an index, and let $T'\in \tset_i$ be the corresponding partitioning tree. We denote the corresponding point set by $X'$. Let $v_1,\ldots,v_k$ be the leaf vertices of $T'$, and let $u_1,\ldots,u_r$ be the vertices lying in the middle layer of $T'$. The only points added to the solution when instance $X'$ is processed are the following: for every $1\leq j\leq r$, for every point $p\in X'\cap S(u_j)$, we add the two copies of $p$ to the boundaries of $S(u_j)$. In subsequent, or in previous iterations, we may add points to boundaries of $S(u_j)$, but we will never add two copies of the same input point to both boundaries of $S(u_j)$. Therefore, if $|Y_{p,v}|=2$, then there must be an index $i$, and a tree $T'\in \tset_i$, such that $v$ lies in the middle layer of $T'$, and $S(v)$ contains $p$. Observe that for every $T'$-box $B$, instance $X'$ only contains the first and the last point of $B$; all remaining points are redundant for $X'$, and such points are not projected to the boundaries of their strips. Therefore, $p$ must be the first or the last point of its $T'$-box.
	\end{proof}

\paragraph{An equivalent view of our algorithm:}
We can think of our algorithm as follows.
First, we compute all families $\tset_0,\tset_1, \ldots, \tset_D$ of trees, and for each tree $T'$ in each family $\tset_i$, all $T'$-boxes.
For each point $p \in X$, for each vertex $v \in V(T)$ of the partitioning tree $T$, such that $p\in S(v)$, we add the projection points $\proj(p,v)$ to the solution, depending on whether there exists an index $i \in \set{0,1,\ldots, D}$ and a tree $T' \in \tset_i$, such that $v$ lies in the middle layer $U_{T'}$ of $T'$,  and whether $p$ is the first or last point of its $T'$-box. Notice that in the online setting, when the point $p$ arrives, we need to be able to immediately decide which copies of $p$ to add to the solution. Since the trees in the families $\tset_0,\tset_1,\ldots,\tset_D$ are known in advance, and we discover, for every vertex $v\in V(T)$ whether $p\in S(v)$ immediately when $p$ arrives, the only missing information, for every relevant tree $T'$, is whether vertex $p$ is the first or the last vertex in its $T'$-box. In fact it is easy to check whether it is the first vertex in its $T'$-box, but we will not discover whether it is the last vertex in its $T'$-box until the next iteration, at which point it is too late to add points to the solution that lie in the row of $p$.

In order to avoid this difficulty, we slightly modify the instance and the solution.

\subsection{Online-friendly solutions}
\label{sec: online-friendly}

In this section we slightly modify both the input set of points and the solutions produced by our algorithm, in order to adapt them to the online setting.

\paragraph{Modifying the Instance.}
Let $X=\set{p_1,\ldots,p_m}$ be the input set of points, where for all $1\leq i\leq m$, the $y$-coordinate of $p_i$ is $i$. We produce a new instance $X'=\set{p_i',p_i''\mid 1\leq i\leq m}$, as follows: for all $1\leq i\leq m$, we let $p'_i$ and $p''_i$ be points whose $y$-coordinates are $(2i-1)$ and $2i$, respectively, and whose $x$-coordinate is the same as that of $p_i$. We refer to $p'_i$ and to $p''_i$ as \emph{copies of $p_i$}.
	
	Clearly, $|X'|=2|X|$, and it is easy to verify that $\opt(X')\leq 2\opt(X)$. Indeed, if we let $Y$ be a solution for $X$, we can construct a solution $Y'$ for $X'$, by creating, for every point $q\in Y$, two copies $q'$ and $q''$, that are added to rows with $y$-coordinates $2q.y-1$ and $2q.y$ respectively.

For convenience, we denote $\tset=\bigcup_{i=0}^D\tset_i$. Notice that, for every tree $T'\in \tset$, for every point $p_i$ of the original input $X$, copy $p'_i$ of $p_i$ may never serve as the last point of a $T'$-box, and copy $p''_i$ of $p_i$ may never serve as the first point of a $T'$-box.

\paragraph{Modifying the Solution.}
Let $Y$ be the solution that our $O(\log\log n)$-approximation algorithm produces for the new instance $X'$. For convenience, for all $1\leq i\leq 2m$, we denote by $R_i$ the row with $y$-coordinate $i$. Notice that all points of $Y$ lying on a row $R_{2i-1}$ are projections of the point $p'_i$. Since this point may only serve as the first point of a $T'$-box for every tree $T'\in \tset$, when point $p'_i$ arrives online, we can immediately compute all projections of $p'_i$ that need to be added to the solution.
All points of $Y$ lying on row $R_{2i}$ are projections of the point $p''_i$. This point may only serve as the last point of a $T'$-box in a tree $T'\in \tset$. But we cannot know whether $p''_i$ is the last point in its $T'$-box until we see the next input point. Motivated by these observations, we now modify the solution $Y$ as follows.

We perform $m$ iterations. In iteration $i$, we consider the row $R_{2i}$. If no point of $Y$ lies on row $R_{2i}$, then we continue to the next iteration. Otherwise, we move every point of $Y$ that lies on row $R_{2i}$ to row $R_{2i+1}$ (that is, one row up). Additionally, we add another copy $p'''_i$ of point $p_i$ to row $R_{2i}$, while preserving its $x$-coordinate.

In order to show that the resulting solution is a feasible solution to instance $X'$, it is sufficient to show that the solution remains feasible after every iteration. Let $Y_i$ be the solution $Y$ obtained before the $i$th iteration, and let $S_i=X'\cup Y_i$. We can obtain the new solution $Y_{i+1}$ equivalently as follows. First, we collapse the rows $R_{2i+1}$ and $R_{2i}$ for the set $S_i$ of points into the row $R_{2i+1}$, obtaining a new set $S'_i$ of points that is guaranteed to be satisfied. Notice that now both row $R_{2i+1}$ and row $R_{2i-1}$ contain a point at $x$-coordinate $(p_i).x$, while row $R_{2i}$ contains no points. Therefore, if we add to $S'_i$ a point with $x$-coordinate $(p_i).x$, that lies at row $R_{2i}$, then the resulting set of points, that we denote by $S_{i+1}$ remains satisfied. But it is easy to verify that $S_{i+1}=X'\cup Y_{i+1}$, where $Y_{i+1}$ is the solution obtained after iteration $i$.
We denote by $Y'$ the final solution obtained by this transformation of $Y$. It is easy to see that $|Y'|\leq 2|Y|\leq O(\log\log n)\opt(X)$.

\subsection{The Final Online Algorithm} \label{subsec: online final}
We now provide the final online algorithm, that is described in two steps.

First, we assume that in the online sequence, all input points are doubled, that is, for $1\leq i\leq m$, both in iterations $(2i-1)$ and $2i$ the algorithm receives copies of the input point $p_i$; the point received at iteration $(2i-1)$ is denoted by $p'_i$, and the point received at iteration $2i$ is denoted by $p''_i$. The algorithm computes the collection $\tset$ of trees at the beginning. When a point $p'_i$ arrives, we can identify, for every tree $T'\in \tset$, whether $p'_i$ is the first point in any $T'$-box. We then add projections of the point $p'_i$ to the corresponding strips $S(v)$ as needed, all on row $R_{2i-1}$. When a point $p''_i$ arrives, we only discover whether it is the last point of any $T'$-box, for all $T'\in \tset$, in the following iteration. Therefore, in iteration $2i+1$, we add both the copy $p'''_i$ of $p''_i$, and all its relevant projections, to row $R_{2i+1}$. Note that at the same time, we may also add projections of point $p'_{i+1}$ to row $R_{2i+1}$. It is easy to verify that the resulting solution is precisely $Y'$, and therefore it is a feasible solution. The cost of the solution is bounded by $O(\log\log n) \cdot \opt(X)$.

Lastly, for all $1\leq i\leq m$, we unify the iterations $(2i-1)$ and $(2i)$ into a single iteration in a straightforward way.

\appendix

\section{Iacono's instance} 
\label{sec: Iacono} 
Iacono \cite{in_pursuit} showed that there is an access sequence $X$, for which $\opt(X) = \Omega(|X| \log |X|)$ but $\WB_T(X) = O(|X|)$, where $T$ is a balanced tree (used by Tango's tree to obtain $O(\log \log n)$ approximation.)

We describe his proof in  the geometric view in more detail here. Let $k$ be an integer that is an integral power of $2$, $n=2^k$, and $\ell=\log k$. Let $\cset$ be a set of columns with integral $x$-coordinates from $1$ to $n$, and let $\lset$ be the set of vertical lines with half-integral coordinates between $1/2$ and $n-1/2$. 
We consider a balanced partitioning tree $T$, where for every inner vertex $v\in V(T)$, the line $L(v)$ that $v$ owns crosses the strip $L(v)$ exactly in the middle. Let $P= (v_1, v_2, \ldots, v_{k})$ be the unique path in tree $T$, obtained by starting from the root $r=v_1$, and then repeatedly following the left child of each current vertex, until a leaf $v_k$ is reached, so $k=\log n$. 

For each $2\leq i\leq k$, let $v'_i$ be the sibling of $v_i$; observe that the strips $\set{S(v'_i)}_{i=2}^k$ are internally disjoint. 
For each such strip $S(v'_i)$, let $C_i$ be any column with an integral $x$-coordinate in that strip. 
This set $\cset = \set{C_2,\ldots, C_{k}}$ of  $\Theta(\log n)$ columns will serve as the set of active columns. 
Let $\rset$ be any collection of $k$ rows, and let $X=\BRS(\ell,\rset,\cset)$ be the Bit Reversal Sequence defined in Section \ref{sec: sequences}. From Claim \ref{claim:BRS complexity}, $\opt(X)\geq \Omega(|X|\log |X|)=\Omega(k\log k)$.

Next, we show that $\WB_T(X)=O(k)$. 
Indeed,  consider any vertex $v\in V(T)$. Notice that $\cost(v)$ is positive if and only if $v \in V(P)\setminus\set{v_k}$. Therefore, 
using Claim \ref{claim: bounds on WB for path and subtree}, we get that $\WB_T(X)\leq \sum_{v \in P} \cost(v) \leq O(k)$.

\section{Binary Search Trees and Equivalence with the Geometric Formulation}
\label{sec: equivalence}

We start by formally defining the BST model.
An \emph{access sequence} $X=\{x_{1},\dots,x_{m}\}$ is a sequence
of keys from the universe $\{1,\dots,n\}$.
We consider algorithms that maintain a Binary Search Tree (BST) $T$ over the set 
 $\{1,\dots,n\}$ of keys, and support only \emph{access} operations.
After each access the algorithm can reconfigure the tree $T$, as follows. (The definitions below are adapted from \cite{DHIKP09}.)

\begin{definition}
	Given two BST's $T_{1}$ and $T_2$, a subtree $\tau$ of $T_{1}$ containing the
	root of $T_1$, and a tree $\tau'$ with $V(\tau)=V(\tau')$, we say that $T_{2}$ is a \emph{reconfiguration of $T_1$ via the operation $\tau\rightarrow\tau'$}, if $T_{2}$ is identical to $T_{1}$ except
	that $\tau$ is replaced by $\tau'$. The \emph{cost} of the reconfiguration
	is  $|V(\tau)|$. 
\end{definition}

\begin{definition}
	Given an access sequence $X=(x_{1},x_{2},\dots,x_{m})$, an \emph{execution sequence} of a BST algorithm on $X$ is $E=(T_{0},\tau_{1}\rightarrow\tau'_{1},\dots,\tau_{m}\rightarrow\tau'_{m})$, where $T_0$ is the initial tree, and for each $1\leq i\leq m$, $\tau_i\rightarrow \tau_{i+1}$ is a valid reconfiguration operation to be executed at step $i$, such that $x_i\in \tau_i$. The cost of execution $E$ is $\sum_{i=1}^{m}|V(\tau_{i})|$. 
\end{definition}

As stated in \cite{DHIKP09}, the cost of an algorithm's execution in this model is identical to within a constant-factor
to other standard models. This includes, for example, the model that we informally defined in the introduction, where, after each access, the tree is modified by a series of rotations, and the cost consists of the number of such rotations and the distance of the accessed key from the tree's root.

The BST problem can now be defined as follows:
\begin{problem}[The BST problem]
	\label{prob:tree view}Given an access sequence $X$, compute an execution sequence for $X$ that has minimum cost.
	In the online version, the $t$-th element of the sequence $X$ is revealed at time $t$, and we need to output a valid reconfiguration operation $\tau_t\rightarrow \tau'_t$ with $x_t\in \tau_t$ before the $(t+1)$-th element is revealed.
\end{problem}

Observe that any access sequence $X$ uniquely corresponds to a set $P$ of points in the plane that is a semi-permutation: if we denote $X=(x_{1},x_{2},\dots,x_{m})$, then for each $1\leq i\leq m$, we add the point $(x_{i},i)$ to $P$.  For convenience, we do not distinguish between the input sequence $X$ and the corresponding set $P$ of points.

\begin{theorem}
	[Implied by Lemmas 2.1, 2.2 and 2.3 from \cite{DHIKP09}]
	
There is an (online) algorithm for BST, that, for any access sequence $X$ computes an execution sequence of cost at most $O(|X|+c(X))$, if and only if there is an (online) algorithm for the \minsat problem, that, given any input $X$, computes a solution $Y$ of cost $|Y|\leq c(X)$. The running time of the two algorithms are the same up to a constant factor.
\end{theorem}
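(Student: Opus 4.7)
The plan is to establish both directions of the equivalence by exhibiting explicit constructions that preserve cost up to constant factors.

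For the direction (BST $\to$ \minsat), I would start with an execution sequence $E=(T_0, \tau_1\to\tau'_1, \ldots, \tau_m\to\tau'_m)$ on access sequence $X$ and define the associated point set $Y = \bigcup_{t=1}^{m} \{(k, t) : k \in V(\tau_t)\}$. Since $x_t \in V(\tau_t)$ by the rules of the BST model, we have $X \subseteq Y$, and $|Y| = \sum_t |V(\tau_t)|$ equals the total execution cost. The heart of this direction is to prove that $Y$ is satisfied. The key structural fact is that each $\tau_t$ is a subtree of $T_{t-1}$ that contains the root, hence the keys touched at any single time form a contiguous range in the in-order listing that is also closed under taking tree-ancestors (within the current tree). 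I would argue by contradiction: suppose $(k_1, t_1), (k_2, t_2) \in Y$ with $t_1 < t_2$ form an unsatisfied pair, i.e., no point of $Y$ lies strictly inside $\rect_{(k_1,t_1),(k_2,t_2)}$. Then between times $t_1$ and $t_2$, no key in the open interval $(\min(k_1,k_2), \max(k_1,k_2))$ is touched, so the portion of the tree connecting $k_1$ to $k_2$ is unchanged. At time $t_2$, key $k_2$ is touched, so the path from the root to $k_2$ in $T_{t_2-1}$ is inside $\tau_{t_2}$; tracking the tree shape and applying the analogous observation at time $t_1$ yields a point in the forbidden rectangle, a contradiction.

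For the direction (\minsat $\to$ BST), given a satisfied $Y \supseteq X$ I would construct an execution sequence of cost $O(|Y|+|X|)$. At time $t$, let $K_t = \{k : (k,t) \in Y\}$. The construction maintains, after step $t$, a BST $T_t$ in which the keys in $K_t$ form a subtree containing the root of $T_t$; the operation $\tau_t \to \tau'_t$ takes this subtree and rearranges it (to an arbitrary valid BST shape for these keys) to prepare for step $t+1$. Feasibility reduces to two invariants: that $K_t$ can indeed be made into a root-subtree at time $t$, and that we can transform the top-subtree of $T_{t-1}$ on keys $K_{t-1}$ into the required top-subtree on keys $K_t$ via a single $\tau_t \to \tau'_t$ operation whose vertex set is $K_t$. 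Both follow from the satisfied property of $Y$: for any $k \in K_t$, the in-order neighbors of $k$ used for its insertion into the top-subtree come from pairs $(k,t)$ and $(k', t')$ with $t' \leq t$ and $k'$ adjacent to $k$ in the sorted order of $K_t \cup K_{t-1}$, and the satisfied condition guarantees these witnesses exist. One then initializes $T_0$ to any BST over $\{1,\ldots,n\}$ compatible with $K_1$; this contributes $O(|X|)$ to the cost.

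The hard part will be the second direction, where we need to argue that the satisfied condition is strong enough to realize the sequence $(K_t)_t$ as the touched-key sequence of an actual BST execution. Specifically, the delicate claim is that $K_t \cup K_{t-1}$ always forms a set that can be realized as a rooted subtree under some valid reconfiguration $\tau_t \to \tau'_t$ with $V(\tau_t) = K_t$. The existence of this realization is a combinatorial lemma about interval structures, and proving it requires induction on $t$ with a carefully chosen invariant (for instance, that at the end of step $t$, the subtree of $T_t$ induced by $K_t$ has a shape consistent with subsequent accesses, in the sense that each $k \in K_t \cap K_{t+1}$ is at the correct ``depth'' relative to the witnesses supplied by $Y$). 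Once this lemma is in place, the cost bound $\sum_t |V(\tau_t)| = |Y|$ follows immediately, and the overall bound $O(|Y| + |X|)$ absorbs the $O(|X|)$ overhead from initialization and ensuring the root is always in $\tau_t$.
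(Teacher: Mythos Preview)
The paper does not supply its own proof of this theorem: it is stated in Appendix~\ref{sec: equivalence} as a direct consequence of Lemmas~2.1--2.3 of~\cite{DHIKP09}, with no further argument given. So there is no ``paper's proof'' to compare against here---only the cited source.

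Your outline matches the construction of~\cite{DHIKP09}. The first direction (execution $\to$ satisfied point set) is essentially as you describe. For the second direction, your identification of the hard part is correct, but your sketch of the invariant is vaguer than what is actually needed. The standard argument (Lemma~2.2 of~\cite{DHIKP09}) maintains a \emph{specific} tree shape after step~$t$: a treap on the keys using the most-recent touch time as priority (so the root is the most recently touched key, and recursively on each side). The crucial lemma is that if $Y$ is satisfied, then the set $K_t$ is precisely the set of ancestors in this treap of the keys that change priority at time~$t$, and hence automatically forms a connected subtree containing the root. Your proposed invariant (``correct depth relative to witnesses'') gestures at this but does not pin down the structure; without the treap (or an equivalent canonical form), the induction does not close, because an arbitrary BST on $K_{t-1}$ need not allow $K_t$ to be realized as a root-subtree at the next step. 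If you intend to write out the proof rather than cite it, you will want to make this treap invariant explicit.
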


We then obtain the following corollary. 
\begin{corollary}
	There is a $c$-approximation polynomial-time algorithm for the offline version of \Cref{prob:tree view}
	iff there is an $O(c)$-approximation polynomial-time algorithm for the offline \minsat problem. Similarly, there is a $c$-competitive algorithm for the online version of \Cref{prob:tree view}
	iff there is a $O(c)$-competitive algorithm for the online \minsat problem.
\end{corollary}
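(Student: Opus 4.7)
The plan is to derive both directions of the corollary as an immediate consequence of the quantitative equivalence theorem stated just above it, together with two elementary facts about the optimal values: (i) $\opt_{BST}(X)\ge |X|$ (each access contributes at least one to the cost), and (ii) for any semi-permutation $X$, $\opt_{\minsat}(X)\ge \Omega(|X|)$ (mentioned in the footnote of Section~\ref{sec: problem def}). Applying the equivalence theorem to the optimal BST and \minsat\ algorithms respectively immediately yields the two-sided comparison
\[
\opt_{BST}(X)=\Theta(|X|+\opt_{\minsat}(X))\quad\text{and hence}\quad \opt_{BST}(X)=\Theta(\opt_{\minsat}(X)),
\]
using (i) and (ii). This is the only ``structural'' ingredient needed.

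For the direction \minsat\ $\Rightarrow$ BST: given an algorithm that on input $X$ outputs a feasible $Y$ with $|Y|\le c\cdot \opt_{\minsat}(X)$, I would feed it into the construction guaranteed by the theorem to obtain a BST execution of cost $O(|X|+|Y|) \le O(|X|+c\cdot \opt_{\minsat}(X))$. Using (i) and $\opt_{\minsat}(X)\le O(\opt_{BST}(X))$, this simplifies to $O(c)\cdot \opt_{BST}(X)$. In the online case, the theorem's reduction is itself online and preserves running times up to constants, so the same argument delivers an $O(c)$-competitive online BST algorithm.

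For the direction BST $\Rightarrow$ \minsat: given a BST algorithm of cost $C\le c\cdot \opt_{BST}(X)$, the theorem converts it into a \minsat\ solution $Y$ with $|Y|\le O(C)\le O(c)\cdot \opt_{BST}(X)=O(c)\cdot \opt_{\minsat}(X)$, using $\opt_{BST}(X)=\Theta(\opt_{\minsat}(X))$. Again, the reduction is online-preserving and polynomial-time-preserving, so the corresponding statements in both the offline and online models follow.

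Since the corollary is a direct reformulation of the preceding theorem, there is no real obstacle; the only point worth being careful about is verifying that both $|X|\le O(\opt_{BST}(X))$ and $|X|\le O(\opt_{\minsat}(X))$, so that the additive $|X|$ terms produced by the reduction are absorbed into the approximation factor on either side. Both inequalities are immediate, so the proof reduces essentially to invoking the theorem twice.
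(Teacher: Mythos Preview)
Your proposal is correct and is precisely the implicit argument the paper relies on: the paper does not give a separate proof of the corollary, merely stating ``We then obtain the following corollary'' after the equivalence theorem. Your derivation—applying the theorem to the optimal algorithms on each side to get $\opt_{BST}(X)=\Theta(\opt_{\minsat}(X))$ via the bounds $|X|\le\opt_{BST}(X)$ and $|X|\le O(\opt_{\minsat}(X))$, and then absorbing the additive $|X|$ term into the approximation factor in each direction—is exactly what is needed and matches the paper's intent.
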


\section{Proofs Omitted from Preliminaries}

\subsection{Proof of Observation~\ref{obs: number of crossings at most min of both sides}}
 
For simplicity, we denote $X' =  X \cap S(v_1)$ and $X'' = X \cap S(v_2)$. 
Assume w.l.o.g. that $|X'|\leq |X''|$. Notice that, if the pair $(p_i,p_{i+1})$ of points in $S(v)$ define a crossing of $L(v)$, then one of $p_i,p_{i+1}$ must lie in $X'$. 
Every point $p_j\in X'$ may participate in at most two pairs of points that define crossings: the pairs $(p_{j-1},p_j)$ and $(p_j,p_{j+1})$. 
Therefore, the total number of crossings of $L(v)$ is at most $2|X'|$.

\subsection{Proof of Observation~\ref{obs: aligned point}} \label{subsec: proof of Obs: alligned point} 
Since the set $Z$ is satisfied, rectangle $\Box_{p,q} $ must contain at least one point of $Z$ that is distinct from $p$ and $q$. Among all such points, let $r$ be the one with smallest $\ell_1$-distance to $p$. We claim that either $p.x=r.x$, or $p.y=r.y$. Indeed, assume otherwise. Then $p$ and $r$ are non-collinear, but no point of $Z$ lies in $\rect_{p,r}\setminus\set{p,r}$, contradicting the fact that $Z$ is a satisfied point set.

\subsection{Proof of Observation \ref{obs: collapsing columns}} \label{subsec: proof of obs: collapsing col}
It is sufficient to prove the observation for the case where $\cset$ contains two consecutive active columns, that we denote by $C$ and $C'$, which are collapsed into the column $C$. We can then apply this argument iteratively to collapse any number of columns.

Assume for contradiction that the set $S_{|\cset}$ of points is not satisfied, and let $p,q\in S_{|\cset}$ be a pair of points that are not satisfied. Note that, if $p$ and $q$ cannot both lie on the column $C$ in $S_{|\cset}$. Moreover, if both $p$ and $q$ lie to the right, or to the left of the column $C$, then they continue to be satisfied by the same point $r\in S$ that satisfied them in set $S$. We now consider two cases.

Assume first that $p$ lies to the left of the column $C$, and $q$ lies to the right of the column $C$ in point set $S_{|\cset}$. Let $r$ be the point that satisfied the pair $(p,q)$ in point set $S$. If $r$ lied on column $C$ in $S$, then it remains on column $C$ in $S_{|\cset}$. If $r$ lied on column $C'$ in $S$, then a copy of $r$ lies on column $C$ in $S_{|\cset}$, and this copy continues to satisfy the pair $(p,q)$. Otherwise, point $r$ belongs to set $S_{|\cset}$, and it continues to satisfy the pair $(p,q)$.

It now remains to consider the case when exactly one of the two points (say $p$) lies on the column $C$ in $S_{|\cset}$. Assume w.l.o.g. that $q$ lies to the right of $p$ and below it in $S_{|\cset}$. Then either $p$ belongs to $S$ (in which case we denote $p'=p$), or $p$ is a copy of some point $p'$ that lies on column $C'$ in $S$. Let $r$ be the point that satisfies the pair $(p',q)$ of points in $S$. Using the same reasoning as before, it is easy to see that either $r$ belongs to $S_{|\cset}$, where it continues to satisfy the pair $(p,q)$ of points, or  a copy of $r$ belongs to $S_{|\cset}$, and it also continues to satisfy the pair $(p,q)$.

It is easy to verify that an analogue of Observation \ref{obs: collapsing columns} holds for collapsing rows as well.

\subsection{Proof of Observation~\ref{obs: canonical solutions}} 
 \label{subsec: proof of obs: canonical solutions}
Let $C$ and $C'$ be any pair of consecutive active columns for $X$, such that some point of $Y$ lies strictly between $C$ and $C'$. Let $\cset$ be the set of all columns that lie between $C$ and $C'$, including $C$ but excluding $C'$, that contain points of $X\cup Y$. We collapse the columns in $\cset$ into the column $C$, obtaining a new feasible solution for instance $X$ (we use Observation \ref{obs: collapsing columns}). We continue this process until every point of the resulting solution $Y$ lies on an active column, and we perform the same procedure for the rows.

\subsection{Proof of Claim~\ref{claim: upper bound on OPT static}}
Let $X$ be a given semi-permutation.
We will construct a solution $Y$ for $X$, with $|Y| = O(r(X) \log c(X))$. In order to do so, we
consider a \emph{balanced} partitioning tree $T$ of $X$ with depth $O(\log c(X) )$: in other words, if $v$ is an inner vertex of $T$, and $u,u'$ are its two children, then $\width(u),\width(u')\geq 3\width(v)/4$.
Let $v$ be the root vertex of $T$ and $L(v)\in \lset$ be vertical line that $v$ owns. Let $T_1$ and $T_2$ be the subtrees of $T$ rooted at the two children of $v$.
Notice that $L(v)$ partitions the input set $X$ of points into two subsets, $X_1$ and $X_2$.

We start by adding, for every point $p \in X$, the point $p'$, whose $y$-coordinate is $p.y$, and that lies on $L(v)$, to the sollution $Y$. We then recursively compute a solution $Y_1$ for $X_1$, using the partitioning tree $T_1$, and similarly a solution $Y_2$ for $X_2$, using the partitioning tree $T_2$. The final solution is obtained by adding the points of $Y_1\cup Y_2$ to the set $Y$. It is easy to verify that $Y$ is indeed a feasible solution. Moreover, each active row of $X$ may contain at most $O(\log c(X) )$ points of $Y$, and so $|Y|\leq O(r(X) \log c(X))$.

\subsection{Proof of \Cref{claim: bounding WB by OPT}} 
\label{sec: WB proof}

It is sufficient to prove that, if $X$ is a semi-permutation, and $T$ is any partitioning tree for $X$, then $\WB_T(X) \leq 2\opt(X)$. %

We prove this claim by induction on the height of $T$. 
The base case, when the height of $T$ is $0$, is obvious: there is only one active column, and so $\WB_T(X) = \opt(X) = 0$.

We now consider the inductive step.
Let $X$ be any point set that is a semi-permutation, and let $T$ be any partitioning tree for $X$, such that the height of $T$ is at least $1$. Let $v$ be the root vertex of $T$, $L=L(v)$ the line that $v$ owns, and let $v_L,v_R$ be the two children of $v$. We assume w.l.o.g. that 
the strip $S(v_L)$ lies to the left of $S(v_R)$. We denote $S(v)=B$ -- the bounding box of the instance, $S(v_L)=S^L,S(v_R)=S^R$, and we also denote $X^L=X\cap S^L,X^R=X\cap S^R$. Lastly, we let $T^L$ and $T^R$ be the sub-trees of $T$ rooted at $v_L$ and $v_R$, respectively.
We prove the following claim.

\begin{claim}\label{claim: ind step}
	\[\opt(X)\geq \opt(X^L)+\opt(X^R)+\cost(v)/2.\]
	\end{claim}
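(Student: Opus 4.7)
The plan is to take an optimal canonical solution $Y$ for $X$, of size $\opt(X)$, and decompose it geometrically. Since $L = L(v)$ has half-integral $x$-coordinate and $Y$ is canonical, every point of $Y$ lies strictly to one side of $L$, so $Y$ partitions as $Y^L \sqcup Y^R$, where $Y^S := Y \cap S^S$ for $S\in\{L,R\}$. The first observation is that $Y^L$ is already a feasible solution for $X^L$: for any non-collinear pair $p,q \in X^L$, the rectangle $\Box_{p,q}$ is contained in $S^L$, so whatever point of $X \cup Y$ satisfies $(p,q)$ must lie in $X^L \cup Y^L$. The symmetric argument applies to $Y^R$, giving $|Y^L| \ge \opt(X^L)$ and $|Y^R| \ge \opt(X^R)$, and in particular $|Y| \ge \opt(X^L) + \opt(X^R)$.

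To squeeze out the extra $\cost(v)/2$ summand, I will identify a set $W \subseteq Y$ of ``excess'' points, each paying for a crossing of $L$. Enumerate the crossings at $v$ as $(p_{j_1},p_{j_1+1}),\dots,(p_{j_k},p_{j_k+1})$ in $y$-order, with $k=\cost(v)$. For each crossing, the rectangle $R_i := \Box_{p_{j_i}, p_{j_i+1}}$ contains no input point in its open interior (since $p_{j_i},p_{j_i+1}$ are consecutive in the $y$-order of $X\cap S(v)=X$), so the pair must be satisfied by some $w_i \in Y$ with $p_{j_i}.y < w_i.y < p_{j_i+1}.y$. Since the open $y$-intervals $(p_{j_i}.y, p_{j_i+1}.y)$ are pairwise disjoint, the $w_i$'s are distinct. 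Taking every other crossing, $I := \{1,3,5,\dots\}$, yields $|I| \ge \lceil k/2\rceil$ pairwise corner-disjoint rectangles; define $W := \{w_i : i \in I\}$, so $|W| \ge \cost(v)/2$.

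The crux of the argument is to show that $W$ is redundant for the two subproblems: $(Y^L \setminus W)$ is feasible for $X^L$ and $(Y^R \setminus W)$ is feasible for $X^R$. Once this is established, $\opt(X^L)+\opt(X^R) \le |Y \setminus W| = |Y|-|W| \le \opt(X) - \cost(v)/2$, which rearranges to the claim. The main obstacle is this redundancy: a point $w_i \in W \cap Y^L$ might be the witness of some pair $(a,b) \in X^L$. I plan to handle this via a case analysis exploiting the geometry of $R_i$ together with the independence of the chosen crossings. The key geometric fact is that $w_i$ lies in a horizontal slab strictly between two consecutive $y$-coordinates of $X$, which meets $S^L$ in a region containing at most one $X^L$-point (namely $p_{j_i}$ when $p_{j_i}\in S^L$); hence any pair $(a,b)\in X^L$ whose witness is $w_i$ must straddle this slab, and inspecting the $X$-points on the boundary rows of the slab (in particular $p_{j_i}$) together with nearby points of $Y^L \setminus W$ (which is unaffected by the removal of $w_{i'}$ for $i'\neq i$ because of the disjoint $y$-ranges in $I$) should produce an alternative satisfying point. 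The argument will be symmetric for witnesses lying in $Y^R$, and treating each $w_i$ independently is justified by the pairwise disjointness of the $y$-ranges of the chosen crossings.
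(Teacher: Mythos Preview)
Your plan has a genuine gap: the step ``$(Y^L\setminus W)$ is feasible for $X^L$'' is not true in general, and the outlined case analysis cannot repair it.

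First, a smaller issue that undermines your setup: since you chose $Y$ canonical, every point of $Y$ lies on an \emph{active} row of $X$. But $p_{j_i}$ and $p_{j_i+1}$ are consecutive in $X$, so there is no active row strictly between them; hence every $w_i$ satisfies $w_i.y\in\{p_{j_i}.y,\,p_{j_i+1}.y\}$, never the open interval. (Distinctness of the $w_i$ for $i\in I$ still holds, via the corner-disjointness, so your count $|W|\ge\cost(v)/2$ survives.)

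The fatal issue is the redundancy step. Feasibility of $Y^L\setminus W$ for $X^L$ means that \emph{every} pair in $X^L\cup(Y^L\setminus W)$ is satisfied, not just pairs with both endpoints in $X^L$; your last paragraph only addresses the latter. And even for those pairs, deleting a single $w_i$ can break feasibility. Concretely, take $X=\{p_1=(1,1),\,p_2=(3,2),\,p_3=(2,3)\}$ with $L$ at $x=2.5$, so $X^L=\{(1,1),(2,3)\}$, $X^R=\{(3,2)\}$, and $\cost(v)=2$. One checks that $Y=\{(1,2),(2,2)\}$ is an optimal canonical solution, with $Y^L=Y$ and $Y^R=\emptyset$. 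Your set $I$ has size $1$, and $w_1$ is one of $(1,2),(2,2)$. Whichever you delete, $X^L\cup(Y^L\setminus\{w_1\})$ fails to be satisfied (e.g.\ removing $(2,2)$ leaves the pair $\big((1,2),(2,3)\big)$ unsatisfied; removing $(1,2)$ leaves $\big((1,1),(2,2)\big)$ unsatisfied). So point-deletion cannot work here, even though the inequality of the claim holds with equality.

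The paper's proof proceeds differently. It does not delete points; instead it identifies, for at least $\cost(v)/2$ crossings, a \emph{row} that is inactive for $X^L$ (resp.\ $X^R$) yet carries a point of $Y^L$ (resp.\ $Y^R$), and then \emph{collapses} each such row into an adjacent active row. Row-collapsing provably preserves satisfiability (Observation~\ref{obs: collapsing columns}) and is guaranteed to merge at least one pair of points, so each collapse shrinks the solution by at least one. In the example above, collapsing row $y=2$ into $y=1$ turns $Y^L$ into $\{(2,1)\}$, which is indeed feasible for $X^L$; note that $(2,1)\notin Y$, which is exactly why ``remove a subset of $Y$'' cannot reproduce this step.
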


Notice that, if the claim is correct, then we can use the induction hypothesis on $X^L$ and $X^R$ with the trees $T^L$ and $T^R$ respectively, to conclude that:

\[\opt(X)\geq \half \left(\WB_{T^L}(X_L)+\WB_{T^R}(X_R)+\cost(v)\right )=\half \WB_T(X).\]

Therefore, in order to complete the proof of \Cref{claim: bounding WB by OPT}, it is enough to prove Claim \ref{claim: ind step}.

\begin{proofof}{Claim \ref{claim: ind step}}
	Let $Y$ be an optimal solution to instance $X$. We can assume w.l.o.g. that $Y$ is a canonical solution, so no point of $Y$ lies on the line $L$. Let $Y^L,Y^R$ be the subsets of points of $Y$ that lie to the left and to the right of the line $L$, respectively.
	
Let $\rset^L$ be the set of all rows $R$, such that (i) no point of $X^L$ lies on $R$, and (ii) some point of $Y^L$ lies on $R$. We define a set $\rset^R$ of rows similarly for instance $X^R$. The crux of the proof is the following observation.

\begin{observation}\label{obs number of inactive rows with points}
\[	|\rset^L|+|\rset^R|\geq \cost(v)/2.\]
\end{observation}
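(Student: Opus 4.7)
My plan is to prove the inequality $|\rset^L|+|\rset^R|\geq \cost(v)/2$ via a charging argument: for each crossing pair $(p_j,p_{j+1})$ I will exhibit a row of $X$ that lies in $\rset^L\cup\rset^R$, and then show that every row is charged by at most two crossings. Two preparatory observations will be repeatedly used. First, since $Y$ is canonical, every point of $Y$ lies on an active row and an active column of $X$; in particular $Y$ contains no point on the line $L$ (whose $x$-coordinate is half-integral). Second, since consecutive points of the semi-permutation $X$ lie on consecutive active rows, no point of $X\cup Y$ has $y$-coordinate strictly between those of $p_j$ and $p_{j+1}$. Fix a crossing $(p_j,p_{j+1})$ and assume WLOG $p_j\in X^L$ and $p_{j+1}\in X^R$; my goal is to show that the row of $p_j$ lies in $\rset^R$, or that the row of $p_{j+1}$ lies in $\rset^L$.

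The core of the argument is the following inductive claim: for any $a,b\in X\cup Y$ satisfying (i) $a.x<L<b.x$, (ii) $a.y<b.y$, (iii) no point of $X\cup Y$ lies strictly between the rows of $a$ and $b$, and (iv) the unique $X$-point on the row of $a$ (resp.\ $b$) belongs to $X^L$ (resp.\ $X^R$), at least one of the following holds: the row of $a$ contains a point of $Y^R$, or the row of $b$ contains a point of $Y^L$. The induction is on $|\rect_{a,b}\cap (X\cup Y)|$. Since $X\cup Y$ is satisfied, there is some $r\in \rect_{a,b}\setminus\{a,b\}$ lying in $X\cup Y$; by (iii), $r$ lies on the row of $a$ or on the row of $b$. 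If $r$ is a $Y$-point lying on the side of $L$ opposite to the $X$-point on its row, we are done. Otherwise, by the semi-permutation property of $X$, $r$ is either the unique $X$-point on its row or a $Y$-point on the same side of $L$ as that $X$-point; in either case I apply the induction hypothesis to the pair obtained from $(a,b)$ by replacing the endpoint on $r$'s row with $r$. Condition (iv) is preserved since the set of $X$-points on each row is unchanged, and the rectangle strictly shrinks (the replaced endpoint falls outside it), so the induction terminates. The base case, in which $\rect_{a,b}\cap(X\cup Y)=\{a,b\}$, cannot arise, since it would contradict that $X\cup Y$ is satisfied.

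Applying this claim to $a=p_j$, $b=p_{j+1}$ yields the desired dichotomy: in the first case the row of $p_j$ contains $p_j\in X^L$ (its unique $X$-point), a point of $Y^R$, and no point of $X^R$, so it belongs to $\rset^R$; in the second case the row of $p_{j+1}$ symmetrically belongs to $\rset^L$. Since each row of $X$ is the row of exactly one point $p_k\in X$, it can be assigned in this manner by at most the two crossings having $p_k$ as an endpoint, namely $(p_{k-1},p_k)$ and $(p_k,p_{k+1})$. Thus $\cost(v)\leq 2(|\rset^L|+|\rset^R|)$, as required.

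The main obstacle I expect is the careful case analysis in the inductive step: ensuring that every possible location of the satisfier $r$ either immediately yields the conclusion or reduces to a strictly smaller instance in which hypothesis (iv) still holds. The semi-permutation property of $X$, together with the absence of $Y$-points on $L$ and between the relevant rows, should constrain $r$ enough to enumerate the cases cleanly; the preservation of (iv) follows because we only replace an endpoint with another point on the same row, leaving the unique $X$-point of each row untouched.
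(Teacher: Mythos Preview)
Your proof is correct. It follows the same overall strategy as the paper—charge each crossing to a row in $\rset^L\cup\rset^R$, incurring a factor of two—but the implementation differs in two respects worth noting.

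First, for the key dichotomy (for a crossing $(p_j,p_{j+1})$ with $p_j\in X^L$, $p_{j+1}\in X^R$, either the row of $p_j$ lies in $\rset^R$ or the row of $p_{j+1}$ lies in $\rset^L$), the paper gives a one-shot extremal argument rather than an induction: let $p$ be the rightmost point of $X^L\cup Y^L$ on the row of $p_j$, and $p'$ the leftmost point of $X^R\cup Y^R$ on the row of $p_{j+1}$; if the dichotomy failed, $\rect_{p,p'}$ would contain no third point of $X\cup Y$, contradicting feasibility. Your shrinking-rectangle induction is correct but amounts to walking toward these extremal points one satisfier at a time; the direct extremal choice is shorter.

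Second, for the counting, the paper first restricts to the majority direction of crossings (say left-to-right), observes that two left-to-right crossings cannot share an endpoint, and thus obtains an \emph{injective} map from that half of the crossings into $\rset^L\cup\rset^R$. You instead keep all crossings and bound each row's charge by two. Both routes yield the same factor; yours is a valid alternative.
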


Before we	prove Observation \ref{obs number of inactive rows with points}, we show that Claim \ref{claim: ind step} follows from it. In order to do so, we will define a new feasible solution $\hat Y^L$ for instance $X^L$, containing at most $|Y^L|-|\rset^L|$ points, and similarly, we will define a new feasible solution $\hat Y^R$ for instance $X^R$, containing at most $|Y^R|-|\rset^R|$ points. This will prove that $\opt(X^L)\leq |Y^L|-|\rset^L|$ and $\opt(X^R)\leq |Y^R|-|\rset^L|$, so altogether, $\opt(X^L)+\opt(X^R) \leq |Y|-|\rset^L|-|\rset^R|\leq \opt(X)-\cost(v)/2$, thus proving Claim \ref{claim: ind step}.

We now show how to construct the solution $\hat Y^L$ for instance $X^L$. The solution $\hat Y^R$ for instance $X^R$ is constructed similarly.

In order to construct the solution $\hat Y^L$, we start with the solution $Y^L$, and then gradually modify it over the course of $|\rset^L|$ iterations, where in each iteration we reduce the number of points in the solution $Y^L$ by at least $1$, and we eliminate at most one row from $\rset^L$. In order to execute an iteration, we select two rows $R,R'$, with the following properties:

\begin{itemize}
	\item Row $R$ contains a point of $X^L$;
	\item Row $R'$ contains a point of $Y^L$ and it contains no points of $X^L$; and
	\item No point of $X^L\cup Y^L$ lies strictly between  rows $R$ and $R'$.
\end{itemize}

Note that, if $\rset^L\neq\emptyset$, such a pair of rows must exist. We then collapse the row $R'$ into the row $R$, obtaining a new modified solution to instance $X^L$ (we use Observation \ref{obs: collapsing columns}). We claim that the number of points in the new solution decreases by at least $1$. In order to show this, it is sufficient to show that there must be two points $p\in R$, $p'\in R'$ with the same $x$-coordinates; after the two rows are collapsed, these two points are mapped to the same point. Assume for contradiction that no such two points exist. Let $p\in R$, $p'\in R'$ be two points with smallest horizontal distance. Then it is easy to see that no point of $X^L\cup Y^L$ lies in the rectangle $\rect_{p,p'}$, contradicting the fact that $Y^L$ is a feasible solution for $X^L$.

In order to complete the proof of Claim \ref{claim: ind step}, it is now enough to prove Observation \ref{obs number of inactive rows with points}.

\begin{proofof}{Observation \ref{obs number of inactive rows with points}}
We denote $X=\set{p_1,\ldots,p_m}$, where the points are indexed in the increasing order of their $y$-coordinates. Recall that a pair $(p_i,p_{i+1})$ of points is a crossing, if the two points lie on opposite sides of the line $L$. We say that it is a \emph{left-to-right} crossing if $p_i$ lies to the left of $L$, and we say that it is a \emph{right-to-left} crossing otherwise. Clearly, either at least half the crossings of $L$ are left-to-right crossings, or at least half the crossings of $L$ are right-to-left crossings. We assume w.l.o.g. that it is the former. Let $\Pi$ denote the set of all left-to-right crossings of $L$, so $|\Pi|\geq \cost(v)/2$. Notice that every point of $X$ participates in at most one crossing in $\Pi$. We will associate, to each crossing $(p_i,p_{i+1})\in \Pi$, a unique row in $\rset^L\cup \rset^R$. This will prove that $|\rset^L|+|\rset^R|\geq |\Pi|\geq \cost(v)/2$.

Consider now some crossing $(p_i,p_{i+1})$. Assume that $p_i$ lies in row $R$, and that $p_{i+1}$ lies in row $R'$. Let $\rset_i$ be a set of all rows lying between $R$ and $R'$, including these two rows. We will show that at least one row of $\rset_i$ lies in $\rset^L\cup \rset^R$. In order to do so, let $H$ be the closed horizontal strip whose bottom and top boundaries are $R$ and $R'$, respectively. Let $H^L$ be the area of $H$ that lies to the left of the line $L$, and that excludes the row $R$ -- the row containing the point $p_i$, that also lies to the left of $L$. Similarly, let $H^R$ be the area of $H$ that lies to the right of the line $L$, and that excludes the row $R'$. Notice that, if any point $y\in Y^L$ lies in $H^L$, that the row containing $y$  must belong to $\rset^L$. Similarly, if any point $y'\in Y^R$ lies in $H^R$, then the row containing $y'$ belongs to $\rset^R$. Therefore, it is now sufficient to show that either $H^L$ contains a point of $Y^L$, or $H^R$ contains a point of $Y^R$. Assume for contradiction that this is false. Let $p\in X^L\cup Y^L$ be the point lying on the row $R$ furthest to the right (such a point must exist because we can choose $p=p_i$). Similarly, let $p'\in X^R\cup Y^R$ be the point lying on the row $R'$ furthest to the left (again, such a point must exist because we can choose $p'=p_{i+1}$.) But if $H^L$ contains no points of $Y^L$, and $H^R$ contains no points of $Y^R$, then no points of $X\cup Y$ lie in the rectangle $\rect_{p,p'}$, and so the pair $(p,p')$ of points is not satisfied in $X\cup Y$, a contradiction.
\end{proofof}	
\end{proofof}

\section{Proof of Lemma \ref{lem: GB is lb for opt}} \label{sec: bound for GB}

In this section we prove Lemma \ref{lem: GB is lb for opt}, by showing that for any point set $X$ that is a permutation, $\GB(X)\leq 2\opt(X)$. In order to do so, it is enough to prove that, for any point set $X$ that is a permutation, for any partitioning tree $T$ for $X$, $\GB_T(X) \leq 2\opt(X)$. 
The proof is by induction on the height of $T$, and it is almost identical to the proof of Claim \ref{claim: bounding WB by OPT} for the standard Wilber Bound. When the height of the tree $T$ is $1$, then $|X|=1$, so $\GB(X)=0$ and $\opt(X)=0$.

Consider now a partitioning tree $T$ whose height is greater than $1$. Let $T_1,T_2$ be the two sub-trees of $T$, obtained by deleting the root vertex $r$ from $T$. Let $(X_1,X_2)$ be the partition of $X$ into two subsets given by the line $L(r)$, such that $T_1$ is a partitioning tree for $X_1$ and $T_2$ is a partitioning tree for $X_2$. Notice that, from the definition of the $\GB$ bound:

\[\GB_T(X)=\GB_{T_1}(X_1)+\GB_{T_2}(X_2)+\cost(r).\]

Moreover, from the induction hypothesis, $\GB_{T_1}(X_1)\leq 2\opt(X_1)$ and $\GB_{T_2}(X_2)\leq 2\opt(X_2)$.
Using Claim \ref{claim: ind step} (that can be easily adapted to horizontal partitioning lines), we get that:

\[\opt(X)\geq \opt(X_1)+\opt(X_2)+\cost(r)/2.\]

Therefore, altogether we get that:

\[\GB_T(X)\leq 2\opt(X_1)+2\opt(X_2)+\cost(r)\leq 2\opt(X).  \]

\bibliographystyle{alpha}
\bibliography{ref}

\end{document}